\newtheorem{theorem}{Theorem}[section]
\newtheorem{lemma}[theorem]{Lemma}
\newtheorem{proposition}[theorem]{Proposition}
\newtheorem{claim}[theorem]{Claim}
\theoremstyle{definition}
\newtheorem{definition}[theorem]{Definition}
\theoremstyle{remark}
\newcommand\abs[1]{\lvert #1\rvert}
\newenvironment{proofofclaim}{\noindent \textsc{Proof of the Claim:}}{\hfill$\Diamond$\medskip}
\newcommand{\yes}{\textsc{Yes}}
\newcommand{\no}{\textsc{No}}
\newcommand{\Oh}{\mathcal{O}}
\newcommand{\A}{\mathcal{A}}
\newcommand{\B}{\mathcal{B}}
\newcommand{\F}{\mathcal{F}}
\newcommand{\N}{\mathbb{N}}
\renewcommand{\P}{\mathcal{P}}
\newcommand{\weight}{\mathbf{w}}
\newcommand{\maxarity}{b}
\newcommand{\idtwopositive}{\Delta^{2K_2}_{\maxarity}}
\newcommand{\isdpositive}{\Sigma^{2K_2}_{\maxarity}}
\newcommand{\flow}{\mathcal{P}}
\newcommand{\witnessflow}{\widehat{\flow}}
\newcommand{\corecut}[1]{\mathrm{core}(#1)}
\newcommand{\corecutG}[2]{\mathrm{core}_{#2}(#1)}
\newcommand{\MinSAT}[1]{\textup{\textsc{Min SAT}(\ensuremath{#1})}}
\newcommand{\WeightedMinSAT}[1]{\textup{\textsc{Weighted Min SAT}(\ensuremath{#1})}}
\newcommand{\MaxSAT}[1]{\textup{\textsc{Max SAT}($#1$)}}
\newcommand{\MinOnes}[1]{\textup{\textsc{Min Ones}($#1$)}}
\newcommand{\MaxOnes}[1]{\textup{\textsc{Max Ones}($#1$)}}
\newcommand{\ExactOnes}[1]{\textup{\textsc{Exact Ones}($#1$)}}
\newcommand{\SAT}[1]{\textup{\textsc{SAT}($#1$)}}
\newcommand{\CSP}[1]{\textup{\textsc{CSP}($#1$)}}
\newcommand{\MinCSP}[1]{\textup{\textsc{Min CSP}($#1$)}}
\newcommand{\VCSP}[1]{\textup{\textsc{VCSP}}($#1$)}
\newcommand{\PMstCl}{\textsc{Paired Minimum $s$,$t$-Cut($\ell$)}\xspace}
\newcommand{\MCCk}{\textsc{Multicolored Clique($k$)}\xspace}
\newcommand{\constraint}{\ensuremath{\boldsymbol{R}}\xspace}
\newcommand{\classFPT}{\ensuremath{\mathsf{FPT}}\xspace}
\newcommand{\classWone}{\ensuremath{\mathsf{W[1]}}\xspace}
\newcommand{\vars}[1]{\mathsf{Vars}_{#1}}
\newcommand{\ones}[1]{\mathsf{True}_{#1}}
\newcommand{\zeroes}[1]{\mathsf{False}_{#1}}
\newcommand{\rest}[1]{\mathsf{Rest}_{#1}}
\newcommand{\candef}[1]{\Phi^{\mathrm{can}}_{#1}}
\newcommand{\clauses}{\mathcal{C}}
\newcommand{\inst}{\mathcal{I}}
\newcommand{\cF}{\mathcal{F}}
\newcommand{\gdpcshort}{\textsc{GDPC}}
\newcommand{\gdpcfull}{\textsc{Generalized Digraph Pair Cut}}
\newcommand{\bundles}{\mathcal{B}}
\newcommand{\pairs}{\mathcal{C}}
\newcommand{\preim}{\mathcal{C}}
\newcommand{\vtx}{{\sf v}}
\newcommand{\karc}{\kappa}
\newcommand{\karcout}{\kappa_{out}}
\newcommand{\kclause}{\kappa_c}
\title{Flow-augmentation III: Complexity dichotomy for Boolean CSPs parameterized by the number of unsatisfied constraints%
\thanks{An extended abstract of this work has been presented
at SODA 2023~\cite{csp-soda}.\\
This research is a part of a project that has received funding from the European Research Council (ERC)
under the European Union's Horizon 2020 research and innovation programme
Grant Agreement 714704 (M. Pilipczuk). Eun Jung Kim is supported by the grant from French National Research Agency under JCJC program (ASSK: ANR-18-CE40-0025-01). A preliminary version of this work has been presented at SODA 2023.}
}
\date{}
\author[1]{Eun Jung Kim}
\author[2]{Stefan Kratsch}
\author[3]{Marcin Pilipczuk}
\author[4]{Magnus Wahlstr\"{o}m}
\affil[1]{Universit\'{e} Paris-Dauphine, PSL Research University, CNRS, UMR 7243, LAMSADE, 75016, Paris, France.}
\affil[2]{Humboldt-Universit\"at zu Berlin, Germany}
\affil[3]{University of Warsaw, Warsaw, Poland}
\affil[4]{Royal Holloway, University of London, TW20 0EX, UK}
\begin{document}

\begin{titlepage}
\def\thepage{}
\thispagestyle{empty}

\maketitle

\begin{textblock}{20}(0, 13.0)
\includegraphics[width=40px]{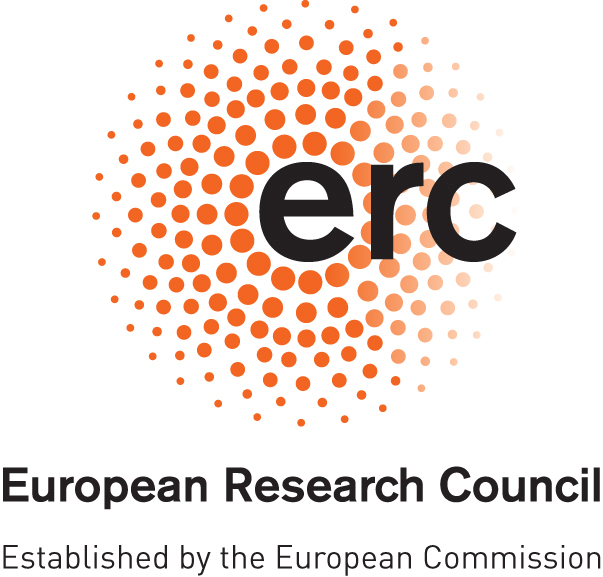}%
\end{textblock}
\begin{textblock}{20}(0, 13.9)
\includegraphics[width=40px]{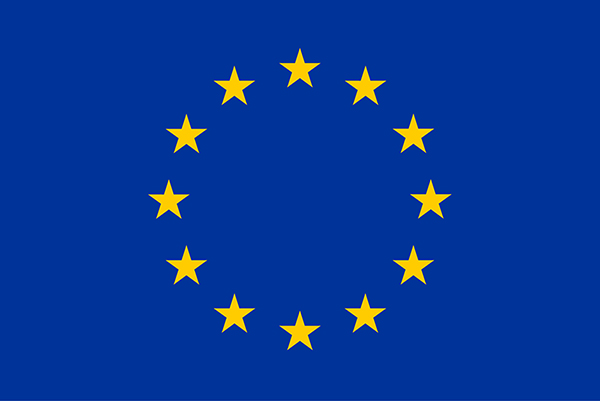}%
\end{textblock}

\begin{abstract}
  We study the parameterized problem of satisfying ``almost all''
  constraints of a given formula $\F$ over a fixed, finite Boolean constraint
  language $\Gamma$, with or without weights. More precisely, for each
  finite Boolean constraint language~$\Gamma$, we
  consider the following two problems. In \MinSAT{\Gamma}, the input
  is a formula~$\F$ over~$\Gamma$ and an
  integer~$k$, and the task is to find an assignment
  $\alpha \colon V(\F) \to \{0,1\}$ that satisfies all but at most $k$
  constraints of $\F$, or determine that no such assignment exists. 
  In \WeightedMinSAT{\Gamma}, the input additionally contains a
  \emph{weight function} $\weight \colon \F \to  \mathbb{Z}_+$
  and an integer $W$, and the task is to find an assignment $\alpha$ such that
  (1) $\alpha$ satisfies all but at most $k$ constraints of $\F$, and
  (2) the total weight of the violated constraints is at most $W$.
  We give a complete dichotomy for the fixed-parameter tractability of
  these problems: 
  We show that for every Boolean constraint language~$\Gamma$,
  either \WeightedMinSAT{\Gamma} is \classFPT; or \WeightedMinSAT{\Gamma} is
  \classWone-hard but \MinSAT{\Gamma} is \classFPT; or \MinSAT{\Gamma} is \classWone-hard.
  This generalizes recent work of Kim et al. (SODA 2021) which
  did not consider weighted problems, and only considered languages
  $\Gamma$ that cannot express implications $(u \to v)$ (as is used
  to, e.g., model digraph cut problems). Our result generalizes and subsumes
  multiple previous results, including the FPT algorithms for
  \textsc{Weighted Almost 2-SAT}, weighted and unweighted \textsc{$\ell$-Chain SAT},
  and \textsc{Coupled Min-Cut}, as well as weighted and directed versions of the latter. 
  The main tool used in our algorithms is the recently developed
  method of \emph{directed flow-augmentation} (Kim et al., STOC 2022).
\end{abstract}

\newpage
\tableofcontents
\end{titlepage}

\section{Introduction}

Constraint satisfaction problems (CSPs) are a popular, heavily studied
framework that allows a wide range of problems to be expressed and
studied in a uniform manner. Informally speaking, a CSP is defined by
fixing a domain $D$ and a \emph{constraint language} $\Gamma$ over $D$
controlling the types of constraints that are allowed in the problem.
The problem \CSP{\Gamma} then takes as input a conjunction of
such constraints, and the question is whether there is an
assignment that satisfies all constraints in the input. Some
examples of problems \CSP{\Gamma} for particular constraint languages
$\Gamma$ include 2-SAT, 
\textsc{$k$-Coloring}, 
linear equations over a finite field, and many more.
We use \SAT{\Gamma} for the special case of
constraints over the Boolean domain $D=\{0,1\}$.

More precisely, a constraint language over a domain $D$ is a set of
finite-arity relations $R \subseteq D^r$ (where $r$ is the arity of
$R$). A \emph{constraint} over a constraint language $\Gamma$ 
is formally a pair $(X,R)$, where $R \in \Gamma$ is a relation from the language,
say of arity $r$, and $X=(x_1,\ldots,x_r)$ is an $r$-tuple of variables 
called the \emph{scope} of the constraint. We typically write our
constraints as $R(X)$ instead of $(X,R)$, or $R(x_1,\ldots,x_r)$ when
the individual participating variables $x_i$ need to be highlighted.
Let $\alpha \colon X \to D$ be an assignment. Then $\alpha$
\emph{satisfies} the constraint $R(X)$ if $(\alpha(x_1),\ldots,\alpha(x_r)) \in R$,
and we say that $\alpha$ \emph{violates} the constraint otherwise.
A \emph{formula over $\Gamma$} is then a conjunction of constraints
over $\Gamma$, and the problem \CSP{\Gamma} is to decide, given a
formula $\F$ over $\Gamma$, whether $\F$ is satisfiable, i.e., if
there is an assignment that satisfies all constraints of $\F$.
To revisit the examples above, if $D=\{0,1\}$ is the Boolean domain
and $\Gamma$ contains only relations of arity at most 2 over $D$,
then \SAT{\Gamma} is polynomial-time decidable by reduction to \textsc{2-SAT}. 
Similarly, if each relation $R \in \Gamma$ can be defined via linear equations over GF(2),
e.g., $R(x_1,\ldots,x_r) \equiv (x_1 +\ldots + x_r =1 \pmod 2)$,
then \SAT{\Gamma} is polynomial-time decidable via Gaussian elimination.
Finally, \textsc{$k$-Coloring} corresponds to \CSP{\Gamma} over a
domain $D=\{1,\ldots,k\}$ of cardinality $k$, and with the constraint
language $\Gamma$ containing only the relation $R \subseteq D^2$
defined as $R(u,v) \equiv (u \neq v)$. 
Note that these reductions can also easily be turned into equivalences,
i.e., there is a specific constraint language $\Gamma$
such that \SAT{\Gamma} respectively \CSP{\Gamma} is effectively
equivalent to \textsc{2-SAT}, \textsc{$k$-Coloring}, 
linear equations over a fixed finite field, and so on.

By capturing such a range of problems in one framework, the CSP framework
also allows us to study these problems in a uniform manner.
In particular, it allows for the complete
characterisation of the complexity of every problem in the framework
--- so-called \emph{dichotomy theorems}. The most classical is by
Schaefer~\cite{Schaefer78}, who showed that for every finite Boolean language
$\Gamma$, either $\Gamma$ is contained in one of six maximal tractable
classes and \SAT{\Gamma} is in P, or else \SAT{\Gamma} is NP-complete.
Since then, many other dichotomy theorems have been settled (many of them mentioned later in this introduction).
Perhaps chief among them is the general
\emph{CSP dichotomy theorem}: For every finite constraint language
$\Gamma$ over a finite domain, the problem \CSP{\Gamma} is either in P
or NP-complete. This result was conjectured by Feder and Vardi in the
90's~\cite{FederV93}, and only fully settled a few years ago,
independently by Bulatov~\cite{Bulatov17CSP} and Zhuk~\cite{Zhuk20CSP}.

The existence of dichotomy theorems allows us to formally study the
question of what makes a problem in a problem category hard --- or
rather, since hardness appears to be the default state, what makes a
problem in a problem category tractable? From a technical perspective,
the answer is often phrased algebraically, in terms of algebraic
closure properties of the constraint language which describe abstract
symmetries of the solution space (see, for example, the collection
edited by Krokhin and Zivn\'y~\cite{dagstuhl2017CSPs}). But the answer
can also be seen as answering a related question: What algorithmic
techniques are required to handle all tractable members of a problem
class? In other words, what are the maximal ``islands of tractability''
in a problem domain, and what algorithmic techniques do they require?

Thus in particular, a dichotomy theorem requires you to both discover
all the necessary tools in your algorithmic toolbox, and to hone each
of these tools to the maximum generality required by the domain.

%

As a natural variation on the CSP problem, when a formula $\F$ is
not satisfiable, we might look for an assignment under which as few
constraints of $\F$ as possible are violated. This defines an
\emph{optimization problem} for every language $\Gamma$. Formally, for
a constraint language $\Gamma$, the problem \MinCSP{\Gamma} takes as
input a formula $\F$ over $\Gamma$ and an integer $k$, and asks if
there is an assignment under which at most $k$ constraints of $\F$ are
violated. Again, we use \MinSAT{\Gamma} to denote the special case
where $\Gamma$ is over the domain $\{0,1\}$. Equivalently, we may
consider the \emph{constraint deletion} version of \MinCSP{\Gamma} and
\MinSAT{\Gamma}: Given a formula $\F$ over $\Gamma$ and integer $k$,
is there a set $Z \subseteq \F$ of at most $k$ constraints such that
$\F-Z$ is satisfiable? This version tends to fit better with our
algorithms. We refer to such a set $Z$ as \emph{deletion set}.

Let us consider an example. Let $\Gamma=\{(x=1),(x=0),(x \to y)\}$.
Then \MinSAT{\Gamma} is effectively equivalent to finding a minimum
$st$-cut in a digraph. Indeed, let $\F$ be a formula over $\Gamma$
and define a digraph $G$ on vertex set $V(G)=V(\F) \cup \{s,t\}$,
with an arc $(s,v)$ for every constraint $(v=1)$ in $\F$, an
arc $(v,t)$ for every constraint $(v=0)$ in $\F$, and an arc $(u,v)$
for every constraint $(u \to v)$ in $\F$. Let $S \subseteq V(G)$
be a vertex set with $s \in S$ and $t \notin S$, and define
an assignment $\alpha_S \colon V(\F) \to \{0,1\}$ by
$\alpha(v)=1$ if and only if $v \in S$. Then the constraints of $\F$
violated by $\alpha_S$ are precisely the edges $\delta_G(S)$ leaving
$S$ in $G$, i.e., an $st$-cut in $G$. In particular, \MinSAT{\Gamma}
is solvable in polynomial time. Naturally, by the same mapping we can
also generate an instance $(\F,k)$ of \MinSAT{\Gamma} from a graph $G$
with marked vertices $s, t \in V(G)$, such that $(\F,k)$ is a
\yes-instance if and only if $G$ has an $st$-cut of at most $k$ edges,
justifying the claim that the problems are equivalent.

Unfortunately, for most languages $\Gamma$ the resulting problem
\MinSAT{\Gamma} is NP-hard. Indeed, Khanna et al.~\cite{KhannaSTW00}
showed that the above example is essentially the only non-trivial
tractable case; for every constraint language $\Gamma$, either
formulas over $\Gamma$ are always satisfiable for trivial reasons, or 
\MinSAT{\Gamma} reduces to $st$-Min Cut, or \MinSAT{\Gamma} is
APX-hard. (Furthermore, many interesting examples of \MinSAT{\Gamma}
do not appear to even allow constant-factor approximations; see
discussion below.)

However, \MinSAT{\Gamma} is a natural target for studies in
parameterized complexity. Indeed, taking $k$ as a natural parameter,
many cases of \MinSAT{\Gamma} have been shown to be \classFPT when
parameterized by $k$, including the classical problems of
\textsc{Edge Bipartization}, corresponding to a language $\Gamma=\{(x \neq y)\}$, 
and \textsc{Almost 2-SAT}, corresponding to a language $\Gamma$
containing all 2-clauses. The former is \classFPT by Reed et al.~\cite{ReedSV04},
the latter by Razgon and O'Sullivan~\cite{RazgonO09},
both classical results in the parameterized complexity literature. 
It is thus natural to ask for a general characterisation:
For which Boolean languages $\Gamma$ is \MinSAT{\Gamma} \classFPT
parameterized by $k$?

Indeed, following early FPT work on related CSP optimization
problems~\cite{Marx05CSP,BulatovM14SICOMP,KratschW10ICALP,KratschMW16TOCT}, the \MinSAT{\Gamma}
question was a natural next target. 
Unfortunately, for a long time this question appeared out of reach,
due to some very challenging open problems, specifically
\textsc{Coupled Min-Cut} and \textsc{$\ell$-Chain SAT}.
\textsc{Coupled Min-Cut} is a graph separation problem,
never publically posed as an open problem, but long known to be a
difficult question, unknown to be either \classFPT or \classWone-hard.
\textsc{$\ell$-Chain SAT} is a digraph cut problem posed by
Chitnis et al.~in 2013~\cite{ChitnisEM13,ChitnisEM17}, conjectured to be \classFPT,
and repeatedly asked as an open question since then. 
(Both problems are fully defined below.)
Since completing a \MinSAT{\Gamma} \classFPT/\classWone-hardness dichotomy is clearly
harder than settling the \classFPT or \classWone-hardness-status of the
individual problems \textsc{Coupled Min-Cut} and \textsc{$\ell$-Chain SAT},
these problems worked as roadblocks against a dichotomy result.
However, recently, using the new \emph{flow augmentation} technique,
both of the above problems have been proven to be \classFPT~\cite{ufl-soda,dfl-stoc},
along with multiple other long-open weighted FPT problems.
With the new technique in place, we find it is now time to attack
the \classFPT/\classWone dichotomy question for \MinSAT{\Gamma} directly.

\subsection{Our results}

As mentioned, we consider two variants of \MinSAT{\Gamma}, with and without
constraint weights. Let $\Gamma$ be a finite Boolean constraint language.
\MinSAT{\Gamma} is the problem defined above: Given input $(\F,k)$,
where $\F$ is a conjunction of constraints using relations of $\Gamma$ (we treat $\F$ as a set),
decide if there is a deletion set of
cardinality at most $k$, i.e., a set $Z \subseteq \F$
of at most $k$ constraints such that $\F-Z$ is satisfiable.
In the weighted version \WeightedMinSAT{\Gamma},
the input is $(\F, \weight, k, W)$, where the formula $\F$
comes equipped with a weight function $\weight \colon \F \to \mathbb{Z}_+$ 
assigning weights to the constraints of $\F$, and the goal is to
find a deletion set of cardinality at most $k$ and weight at most $W$.
Note that this is a fairly general notion of a weighted
problem; e.g., it could be that there is an assignment violating
strictly fewer than $k$ constraints, but that every such assignment 
violates constraints to a weight of more than $W$.

We give a full characterization of \MinSAT{\Gamma} and
\WeightedMinSAT{\Gamma} as being either \classFPT or \classWone-hard
when parameterized by the number of violated constraints $k$.
This extends the partial dichotomy for languages which cannot express directed reachability~\cite{ufl-soda} (which was obtained using the undirected version of the flow-augmentation technique),
and complements the approximate \classFPT dichotomy
of Bonnet et al.~\cite{BonnetEM16ESA,BonnetELM-arXiv}.

\begin{restatable}{theorem}{thmdich}\label{ithm:dichotomy}
  Let $\Gamma$ be a finite Boolean constraint language. Then one of the
  following applies for the parameterization by the number of unsatisfied constraints.
  \begin{enumerate}
  \item \WeightedMinSAT{\Gamma} is \classFPT.
  \item \MinSAT{\Gamma} is \classFPT but \WeightedMinSAT{\Gamma} is \classWone-hard.
  \item \MinSAT{\Gamma} is \classWone-hard.
  \end{enumerate}
\end{restatable}
We remark that the distinction between the cases of Theorem~\ref{ithm:dichotomy} based on the elements of $\Gamma$
is explicit and relatively simple. We describe it in the next few paragraphs.

Our characterization is combinatorial, and is given in terms of
graphs that encode the structure of each constraint. To state it, we
first need some terminology.
We say that a Boolean relation $R$ is \emph{bijunctive} if it is expressible as
a conjunction of 1- and 2-clauses
A relation $R$ is \emph{IHS-B-}
if it is expressible as a conjunction of negative clauses $(\neg x_1 \lor \ldots \lor \neg x_r)$, positive 1-clauses $(x)$, and implications $(x \to y)$.
Similarly, a relation $R$ is \emph{IHS-B+}
if it is expressible as a conjunction of positive clauses $(x_1 \lor \ldots \lor x_r)$, negative
1-clauses $(\neg x)$, and implications $(x \to y)$.
Here, IHS-B is an abbreviation
for \emph{implicative hitting set, bounded}
(i.e., implicative hitting set where there is a bound $r \in \N$ on the maximum arity of a clause).\footnote{The
  fact that clauses are \emph{bounded} is implicit in our assumption that the language $\Gamma$ is finite,
  but these are traditional terms; see, e.g., Khanna et al.~\cite{KhannaSTW00}.}
A constraint language $\Gamma$ is
bijunctive, IHS-B+, respectively IHS-B- if every relation
$R \in \Gamma$ is bijunctive, IHS-B+, respectively IHS-B-.
Finally, $\Gamma$ is IHS-B if it is either IHS-B+ or IHS-B-.
Note that this is distinct from every relation $R \in \Gamma$
being either IHS-B+ or IHS-B-, since the latter would allow
a mixture of, e.g., positive and negative 3-clauses, which defines an
NP-hard problem \SAT{\Gamma}~\cite{Schaefer78}.
From previous work, it is known that if $\Gamma$ is not IHS-B or
bijunctive, then either every formula over $\Gamma$ is satisfiable
and \MinSAT{\Gamma} is trivial, or \MinSAT{\Gamma} does not even allow
FPT-time constant-factor approximation algorithms unless \classFPT=\classWone~\cite{BonnetELM-arXiv}.
Hence we may assume that $\Gamma$ belongs to one of these cases.

We will characterise the structure of relations in two ways.
Let $R \subseteq \{0,1\}^r$ be a Boolean relation. 
First, slightly abusing terminology in reusing a term from the literature, we define the
\emph{Gaifman graph of $R$} as an undirected graph $G_R$ on
vertex set $[r]=\{1,\ldots,r\}$, where there is an edge $\{i,j\} \in E(G_R)$
if and only if the projection of $R$ onto arguments $i$ and $j$ is
non-trivial, i.e., if and only if not every pair of values of $t[i]$ and $t[j]$ is possible
(formally, if and only if there exist values $b_i, b_j \in \{0,1\}$ 
such that for every $t \in R$ it is not the case that both $t[i]=b_i$ and $t[j]=b_j$).
Second, we define the \emph{arrow graph} $H_R$ of $R$ as a directed
graph on vertex set $[r]$ where $(i,j) \in E(H_R)$ if $R(x_1,\ldots,x_r)$
implies the constraint $(x_i \to x_j)$ without also implying
$(x_i=0)$ or $(x_j=1)$.
Finally, we say that $G_R$ is \emph{$2K_2$-free} if there is no
induced subgraph of $G_R$ isomorphic to $2K_2$, i.e., consisting
of two vertex-disjoint edges. Similarly,
the arrow graph $H_R$ is $2K_2$-free if the underlying undirected
graph of $H_R$ is $2K_2$-free.

For an illustration, consider the relation $R(x,y,z) \equiv (x=1) \land (y=z)$.
Let us consider the full set of 2-clauses implied by $R(x,y,z)$, i.e.,
\[
  R(x,y,z) \models (x \lor y) \land (x \lor \neg y) \land (x \lor z)
  \land (x \lor \neg z) \land (\neg y \lor z) \land (y \lor \neg z).
\]
where (naturally) clauses such as $(\neg y \lor z)$ could also be
written $(y \to z)$.
Observe that every pair of variables is involved in some 2-clause
and this 2-clause forbids some pair of values for the variables
(e.g., the clause $(x \lor \neg y)$ forbids $(x,y) = (0,1)$). 
Consequently, $G_R$ is a clique.
(Indeed, for the readers
familiar with the term Gaifman graph from the literature, if $R$ is
bijunctive then the Gaifman graph $G_R$ is precisely the Gaifman graph
of the 2-CNF formula consisting of all 2-clauses implied by $R(X)$.)
The arrow graph $H_R$ contains the arcs $(2,3)$ and $(3,2)$, due to
the last two clauses. On the other hand, despite the 2-clauses
$(y \to x)$ and $(z \to x)$ being valid in $R(x,y,z)$, $H_R$ does not
contain arcs $(2,1)$  or $(3,1)$ since they are only implied by the
assignment $(x=1)$.

We can now present the \classFPT{} results.

\begin{theorem} \label{ithm:gaifman}
  Let $\Gamma$ be a finite, bijunctive Boolean constraint language.
  If for every relation $R \in \Gamma$ the Gaifman graph $G_R$ is
  $2K_2$-free, then \WeightedMinSAT{\Gamma} is \classFPT.
\end{theorem}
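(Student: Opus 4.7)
The plan is to reduce \WeightedMinSAT{\Gamma} to \textsc{Weighted Almost 2-SAT}, which is known to be FPT via the directed flow-augmentation toolkit developed in the prior installments of this series. Since $\Gamma$ is bijunctive, I would first replace every constraint $R(X) \in \F$ by the canonical conjunction of all $1$- and $2$-clauses implied by $R$; this list has size at most $\binom{\arity(R)}{2}+2\,\arity(R)$ and is computable in constant time because $\Gamma$ is fixed. Grouping all clauses originating from the same $R(X)$ into a \emph{bundle} $B_{R(X)}$ inheriting the weight $\weight(R(X))$, we obtain an instance of a \emph{bundled} weighted Almost 2-SAT problem: given a 2-CNF $\widehat\F$ whose clauses are partitioned into weighted bundles, find a set $Z$ of at most $k$ bundles of total weight at most $W$ such that $\widehat\F \setminus \bigcup_{B \in Z} B$ is satisfiable. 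By construction, an assignment $\alpha$ violates a bundle $B_{R(X)}$ if and only if it violates some clause in it, and this in turn happens if and only if $\alpha$ violates $R(X)$, so the reformulation is exact.

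The core step exploits the $2K_2$-freeness of each $G_R$ to argue that a bundle can be ``represented'' by a bounded amount of additional 2-SAT structure, so that the bundled and the unbundled parameters coincide. Concretely, I would establish a structural lemma pinning down the set of possible \emph{violation patterns} $\mathrm{Viol}(\alpha,R)\subseteq B_R$ that a non-satisfying assignment $\alpha$ can induce: the absence of an induced $2K_2$ in $G_R$ rules out two simultaneous ``independent modes of failure'' and forces each pattern to concentrate on a bounded-size \emph{witness set} of literals per bundle. This lets me encode the bundled instance as a genuine \textsc{Weighted Almost 2-SAT} instance by introducing a constant number of auxiliary variables and $2$-clauses per bundle (a guard variable $g_R$ together with $2$-clauses that force $g_R$ whenever the bundle's witnesses fail) and paying the bundle weight on $g_R$, all while keeping the parameter $k$ unchanged. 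Feeding the output into the FPT algorithm for \textsc{Weighted Almost 2-SAT} then finishes the job, and the translation of the solution back to bundle deletions is a straightforward bookkeeping step.

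The main obstacle is the structural lemma controlling how $2K_2$-freeness of $G_R$ restricts the violation patterns. The naive hope that ``all violated clauses share a common variable'' is too strong; for instance, the implied $2$-clauses of a bijunctive $R$ might form a $C_5$ in $G_R$ (which is $2K_2$-free but star-free), in which case a single violating assignment can falsify two or more clauses along a path in $G_R$. One must therefore isolate a weaker but algorithmically sufficient property of $2K_2$-free patterns, and then handle $1$-clauses, mixed $1$-clause/$2$-clause failures, and overlaps between bundles sharing variables uniformly. Once that structural content is isolated, the bundled-to-weighted-$2$-SAT translation becomes a careful but routine application of auxiliary variables, and the algorithmic heart of the proof is the invocation of the FPT algorithm for weighted Almost 2-SAT via directed flow-augmentation.
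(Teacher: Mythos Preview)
Your proposal has a genuine gap at the ``core step''. The guard-variable encoding you sketch cannot be carried out with $2$-clauses: to force $g_R$ when a bundle clause $(\ell_1 \lor \ell_2)$ fails you would need the implication $(\neg\ell_1 \land \neg\ell_2) \to g_R$, i.e., the $3$-clause $(\ell_1 \lor \ell_2 \lor g_R)$. No constant number of auxiliary variables and $2$-clauses can simulate this, and $2K_2$-freeness of $G_R$ does not help here. Take for instance the directed Coupled Min-Cut relation $R(a,b,c,d) \equiv (a \to b) \land (c \to d) \land (\neg a \lor \neg c)$, whose Gaifman graph is a $P_4$ (hence $2K_2$-free): encoding ``$(\neg a \lor \neg c)$ failed, so set $g_R$'' is already the $3$-clause $(\neg a \lor \neg c \lor g_R)$. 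The structural lemma you invoke remains unstated, and the obstacle you yourself flag (that violated clauses need not share a variable) is precisely what blocks any $2$-SAT encoding of bundle failure. Recall that the bundled version of Almost $2$-SAT is W[1]-hard without the $2K_2$-freeness assumption (this is exactly the \PMstCl hardness), so the entire technical content of the theorem must lie in an algorithm that exploits $2K_2$-freeness \emph{inside} the bundled problem, not in a reduction that eliminates the bundling.

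The paper accordingly does not reduce to plain \textsc{Weighted Almost 2-SAT}. It casts the instance (after iterative compression and variable renaming) as a $2K_2$-free instance of \gdpcfull{} and solves that directly in three stages: (i) via flow-augmentation and a delicate ``monotone sequence sampling'' argument, reduce to the case where the sought solution $Z$ is an $st$-\emph{mincut} rather than merely a star $st$-cut; (ii) branch on an auxiliary labeled graph on the flow paths to impose a bipartite structure, after which all negative $2$-clauses can be converted to arcs by reversing one side of the bipartition; (iii) solve the resulting clause-free bundled cut instance, where $2K_2$-freeness of each bundle guarantees that candidate bundles are linearly ordered along the flow paths, reducing to a weighted minimum $st$-cut computation. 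The $2K_2$-freeness is exploited in stages (ii) and (iii), and at no point does the argument pass through an unbundled $2$-SAT instance.
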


\begin{theorem} \label{ithm:arrow}
  Let $\Gamma$ be a finite, IHS-B Boolean constraint language.
  If for every $R \in \Gamma$ the arrow graph $H_R$ is $2K_2$-free,
  then \MinSAT{\Gamma} is \classFPT. 
\end{theorem}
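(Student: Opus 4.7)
By the standard duality between IHS-B+ and IHS-B- (flip the Boolean value of every variable, together with the role of $0$ and $1$ in each relation), I may assume without loss of generality that $\Gamma$ is IHS-B-. Then every relation $R \in \Gamma$ decomposes as a conjunction of three types of atomic clauses: positive units $(x_i)$, implications $(x_i \to x_j)$, and negative clauses $(\neg x_{j_1} \lor \ldots \lor \neg x_{j_s})$. The first step is to reduce \MinSAT{\Gamma} to a directed-cut style problem in the spirit of \gdpcfull{} (\gdpcshort{}). I build a digraph $G$ on vertex set $V(\F)\cup\{s\}$, insert an arc $(s,v)$ for each positive unit $(v)$, an arc $(u,v)$ for each implication $(u\to v)$, and model each negative clause as a requirement that at least one listed variable loses $s$-reachability. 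The atomic pieces originating from a \emph{single} input constraint $R(X) \in \F$ are grouped together into a \emph{constraint bundle}: the cost of deleting $R(X)$ is $1$ regardless of how many of its atomic pieces end up ``broken'' by the solution.

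The resulting problem is a \gdpcshort{}-like instance where the solution must (i)~cut $s$ from certain forbidden vertices coming from negative-clause requirements, (ii)~pay $1$ per bundle it intersects, and (iii)~declare, for each bundle it does not pay for, an assignment of its atomic constraints that is globally consistent. The hypothesis that every arrow graph $H_R$ is $2K_2$-free translates directly into a structural property of the bundles: the nontrivial implication arcs inside a single bundle form a digraph whose underlying undirected graph is $2K_2$-free. This forbids two independent implication arcs inside one constraint without any connecting edge, and will be the property that keeps the per-bundle branching bounded.

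The core of the algorithm is iterative compression combined with \emph{directed flow-augmentation} (Kim et al., STOC 2022). In the compression step we are given a deletion set of size $k+1$ and try to shrink it to size $k$. I guess the intersection of the unknown solution with the current deletion set (paying $2^{\Oh(k)}$) and with the bundles meeting the current deletion set, which reduces the task to finding a minimum bundled $s$-$T$ cut in a digraph with no solution arcs known in advance. At this point I apply directed flow-augmentation: with probability depending only on $k$, the augmented digraph has the property that the hidden deletion set is a minimum $s$-$T$ cut, and I can find this cut in polynomial time. The $2K_2$-freeness is used to branch, in $f(k)$ ways, on how each bundle straddles the cut that flow-augmentation uncovers; in each branch the remaining subproblem is an ordinary min-cut instance (after contracting bundle pieces on the source side and deleting bundle pieces on the sink side), solvable in polynomial time.

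The main obstacle is precisely the bundle handling. Standard directed flow-augmentation is formulated for instances whose solution is a set of \emph{single} arcs, whereas here the solution is a set of bundles, and one bundle can contribute implication arcs on both sides of the min-cut. Without structural control of the bundles, one cannot afford to enumerate their intersection patterns. The $2K_2$-freeness of $H_R$ supplies exactly the control that is needed: any bundle-intersection pattern is describable by a bounded ``interface'' of vertices, which bounds the number of branches per bundle and makes the iterative-compression/flow-augmentation scheme go through. Once this bundle analysis is in place, the overall running time is of the form $f(k)\cdot \lvert \F \rvert^{\Oh(1)}$ from the sampling success probability of directed flow-augmentation, the $2^{\Oh(k)}$ cost of iterative compression, and polynomial-time min-cut at the leaves, yielding the claimed FPT algorithm for \MinSAT{\Gamma}.
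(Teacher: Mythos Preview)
Your high-level plan---reduce to IHS-B-, build a digraph with bundles, apply directed flow-augmentation, and exploit $2K_2$-freeness of the arrow graphs---matches the paper's architecture. But several load-bearing steps are either missing or would not work as stated.

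First, the graph you build has no sink: you use $V(\F)\cup\{s\}$ and later speak of an ``$s$--$T$ cut'' without ever defining $T$. Directed flow-augmentation is an $st$-technique; the paper introduces an explicit vertex $t$, sends every $(x_i=0)$ implication (i.e., every unary negative clause) to an arc $(x_i,t)$, and sends every $(x_i=1)$ to an arc $(s,x_i)$. Without a sink there is no $st$-maxflow, no star $st$-cut, and no witnessing flow, so the augmentation theorem does not apply to your instance as described.

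Second, and more seriously, you do not say what happens to the non-binary negative clauses. These are \emph{not} arcs; they are constraints of the form ``at least one of $v_1,\ldots,v_r$ must be unreachable from $s$.'' You assert that after branching on bundle interfaces ``the remaining subproblem is an ordinary min-cut instance,'' but an ordinary min-cut has no mechanism for satisfying such disjunctive reachability constraints. The paper handles this with a separate idea borrowed from \textsc{Digraph Pair Cut}: after flow-augmentation and filtering, compute the $st$-mincut $X$ closest to $s$; if $X$ satisfies all clauses, accept; otherwise pick a violated clause $C$ and branch on (i)~deleting $C$'s bundle, or (ii)~for some $v\in C$, adding a crisp arc $(v,t)$, which strictly increases the max-flow. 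This branching is what makes the clauses disappear, and it is absent from your plan.

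Third, your account of how $2K_2$-freeness is used (``bounded interface'', ``bounds the number of branches per bundle'') is too vague to carry weight. The paper's concrete use is this: after guessing how the $\lambda$ flow-path cut edges are partitioned into bundles and color-coding which arc of each bundle lies on which flow path, $2K_2$-freeness forces any two cut edges of the same bundle on different flow paths to be ``linked'' (shared endpoint, or connecting arc in the bundle). This linkage implies that the surviving candidate bundles are \emph{linearly ordered} along every flow path simultaneously (Lemma~\ref{lem:clause-cut:order}), which is exactly what justifies taking the unique closest-to-$s$ mincut in the step above.

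Finally, iterative compression is not needed here and it is unclear what work it would do: guessing the intersection of the new solution with an old $(k{+}1)$-size deletion set still leaves you with the full bundled-cut-with-clauses problem on the rest of the instance. The paper applies flow-augmentation directly, and progress comes from either a bundle deletion (decreasing $k$) or the crisp arc $(v,t)$ (increasing $\lambda_G(s,t)$), giving a recursion depth of $\Oh(k^2\maxarity^2)$.
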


We note that Theorem~\ref{ithm:arrow} encompasses two language classes,
corresponding to IHS-B+ or IHS-B-. By symmetry of the problem,
the resulting \MinSAT{\Gamma} problems are essentially equivalent 
(e.g., by exchanging $x$ and $\neg x$ in all relation definitions);
hence it suffices to provide an FPT algorithm for one of the classes.
We focus on the IHS-B- case.
We also note that for any relation $R$ that is not bijunctive or IHS-B,
such as a ternary linear equation over GF(2) or a Horn clause
$R(z,y,z) \equiv (x \land y \to z)$, any problem \MinSAT{\Gamma}
with $R \in \Gamma$ is either trivially satisfiable or \classWone-hard~\cite{BonnetELM-arXiv}.

We remark that the algorithms of Theorems~\ref{ithm:gaifman}
and~\ref{ithm:arrow} are in fact \classFPT parameterized by the number of unsatisfied
constraints \emph{and} the maximum arity of a constraint in $\Gamma$.
That is, while the CSP framework treats the language $\Gamma$ as fixed, 
in the running time bounds of our algorithms, 
the degree of the polynomial in the factor depending of the input size
is a universal constant independent of $\Gamma$. 

The final dichotomy in Theorem~\ref{ithm:dichotomy} now follows
from showing that, except for a few simple cases, for any language
$\Gamma$ not covered by Theorem~\ref{ithm:gaifman} the problem
\WeightedMinSAT{\Gamma} is \classWone-hard, and if furthermore
Theorem~\ref{ithm:arrow} does not apply then \MinSAT{\Gamma} is \classWone-hard
(see Bonnet et al.~\cite{BonnetEM16ESA}).

Let us provide a few illustrative examples.
\begin{itemize}
\item First consider the problem \MinSAT{\Gamma} for the language
  $\Gamma=\{(x=1),(x=0),R_4\}$, where $R_4$ is the relation defined by
  $R_4(a,b,c,d) \equiv (a=b) \land (c=d)$.
  Then the Gaifman graph $G_{R_4}$ and the arrow graph $H_{R_4}$ both
  contain $2K_2$'s, hence \MinSAT{\Gamma} is \classWone-hard. In fact, 
  this problem, together with the directed version $(a \to b) \land (c \to d)$,
  are the fundamental \classWone-hard case of the dichotomy.
  
  On the other hand, consider $\Gamma'=\{(x=1),(x=0), (x=y)\}$.
  Similarly as \MinSAT{\{(x=1),(x=0),(x \to y)\}} is equivalent to
  the problem of finding a minimum cut in a directed graph,
  \MinSAT{\Gamma'} is equivalent to the problem of finding a minimum cut 
  in an undirected graph, and hence is in P. 
  Furthermore,
  \SAT{\Gamma} and \SAT{\Gamma'} are equivalent problems,
  since any constraint $R_4(a,b,c,d)$ can simply be split into $(a=b)$
  and $(c=d)$. For the same reason, \MinSAT{\Gamma} has a
  2-approximation, since breaking up a constraint over $R_4$ into
  separate constraints $(a=b)$ and $(c=d)$ at most
  doubles the number of violated constraints in any assignment.
  This illustrates the difference in the care that needs to be taken
  in an \classFPT/\classWone-dichotomy, compared to  approximability results.
  
\item The problem \textsc{Edge Bipartization} corresponds to
  \MinSAT{(x \neq y)} and \textsc{Almost 2-SAT} to
  \MinSAT{(x \lor y), (x \to y), (\neg x \lor \neg y)}.
  Since each relation $R$ here is just binary, the graph $G_R$
  has two vertices and hence is vacuously $2K_2$-free.
  Hence Theorem~\ref{ithm:gaifman}  generalizes the FPT
  algorithm for \textsc{Almost 2-SAT}~\cite{RazgonO09}. 
  
\item Let $(x_1 \to \ldots \to x_\ell)$ be shorthand for the constraint
  $R(x_1,\ldots,x_\ell) \equiv (x_1 \to x_2) \land \ldots \land (x_{\ell-1} \to x_\ell)$. 
  Then \MinSAT{(x=1),(x=0),(x_1 \to \ldots \to x_\ell)}
  is precisely the problem \textsc{$\ell$-Chain SAT}~\cite{ChitnisEM17}. 
  For our dichotomy, note that this constraint can also be written
  $
    R(x_1,\ldots,x_\ell) \equiv  \bigwedge_{1 \leq i < j \leq \ell} (x_i \to x_j).
  $
  Hence for this relation $R$, both the graphs $G_R$ and $H_R$ are
  cliques, and \textsc{$\ell$-Chain SAT} is contained in both of our
  tractable classes. This generalizes the very recent FPT algorithm for
  \textsc{$\ell$-Chain SAT}~\cite{dfl-stoc}.

\item Now consider a relation $R_{cmc}(a,b,c,d) \equiv (a=b) \land
  (c=d) \land (\neg a \lor \neg c)$. Then \MinSAT{(x=1),(x=0),R_{cmc}}
  is known as \textsc{Coupled Min-Cut}.
  Note that the Gaifman graph of $R_{cmc}$ is isomorphic to $K_4$,
  hence Theorem~\ref{ithm:gaifman} generalizes the result that
  \textsc{Coupled Min-Cut} is \classFPT~\cite{ufl-soda}.
  The same holds for natural directed variants such as
  $R'(a,b,c,d) \equiv (a \to b) \land (c \to d) \land (\neg a \lor \neg c)$. 
  Note that the Gaifman graph $G_{R'}$ is a $P_4$, i.e., $2K_2$-free.
  On the other hand, the arrow graphs of both these relations contain
  $2K_2$'s, hence, e.g.,
  \MinSAT{(x=1), (x=0), R_{cmc}, (\neg x \lor \neg y \lor \neg z)} is \classWone-hard
  since the language is no longer bijunctive.
  (Observe that adding more relations to a language
   makes it more expressive, and consequently makes
   the corresponding \textsc{Min SAT} problem harder.)
  
\item For an example in the other direction, consider a relation such as
  $
    R(a,b,c,d) \equiv (\neg a \lor \neg b) \land (c \to d).
  $
  Then the Gaifman graph $G_R$ is a $2K_2$, but the arrow graph $H_R$
  contains just one edge, showing that \MinSAT{(x=1),(x=0),R} is \classFPT
  but \WeightedMinSAT{(x=1),(x=0),R} is \classWone-hard. Similarly, adding a
  constraint such as $(x \neq y)$ to the language
  yields a \classWone-hard problem \MinSAT{\Gamma}, since $\Gamma$ is no
  longer IHS-B. Intuitively, this hardness comes about since having access to variable negation allows us
  to transform $R$ to the ``double implication'' constraint
  $R'(a,b,c,d) \equiv R(a, \neg b, c, d)$ from the \classWone-hard case
  mentioned in the first bullet.
  Indeed, a lot of the work of the hardness results
  in this paper is to leverage expressive power of \MinSAT{\Gamma}
  and \WeightedMinSAT{\Gamma} to ``simulate'' negations, in specific
  and restricted ways, when $(x \neq y)$ is not available in the
  language. 
\end{itemize}

\subsection{Previous dichotomies and related work}

This paper is the third part in a series that introduced the algorithmic technique of \emph{flow-augmentation}
and explores its applicability. 
The first part~\cite{dfl-arxiv,dfl-stoc}\footnote{Note that the order
  of publication, and to some extent the distribution of content, differs between the conference papers and the journal
versions. This paragraph describes the order between the journal versions.} introduced the technique and provided a number of applications,
    such as FPT algorithms for $\ell$-\textsc{Chain SAT} or \textsc{Weighted Directed Feedback Vertex Set}.
The second part~\cite{ufl-arxiv} shows improved guarantees in undirected graphs, compared
to the directed setting of~\cite{dfl-arxiv,dfl-stoc}.

Many variations of SAT and CSP with respect to decision and
optimization problems have been considered, and many of them are
relevant to the current work. 

Khanna et al.~\cite{KhannaSTW00} considered four optimization variants
of \SAT{\Gamma} on the Boolean domain, analyzed with respect to
approximation properties. They considered \MinOnes{\Gamma},
where the goal is to find a satisfying assignment with as few
variables set to 1 as possible; \MaxOnes{\Gamma},
where the goal is to find a satisfying assignment with as many
variables set to 1 as possible; \MinSAT{\Gamma}, where the goal is to
find an assignment with as few violated constraints as possible;
and \MaxSAT{\Gamma}, where the goal is to find an assignment
with as many satisfied constraints as possible. They characterized the
P-vs-NP boundary and the approximability properties of all problems in
all four variants.

Note that although, e.g., \MinSAT{\Gamma} and \MaxSAT{\Gamma} 
are equivalent with respect to the optimal assignments, from a
perspective of approximation they are very different. Indeed,
for any finite language $\Gamma$, you can (on expectation) produce a
constant-factor approximation algorithm for \MaxSAT{\Gamma}
simply by taking an assignment chosen uniformly at random.
For the same reason, \MaxSAT{\Gamma} parameterized by the number of
satisfied constraints $k$ is trivially \classFPT, and in fact has a linear
kernel for every finite language~$\Gamma$~\cite{KratschMW16TOCT}.
On the other hand, \MinSAT{\Gamma} is a far more challenging problem
from an approximation and fixed-parameter tractability perspective.
In fact, combining the characterisation of \MinSAT{\Gamma}
approximability classes of Khanna et al.~\cite{KhannaSTW00} with results
assuming the famous \emph{unique games conjecture} (UGC;
or even the weaker \emph{Boolean unique games conjecture}~\cite{EldanM22ITCS}), 
we find that if the UGC is true, then the only cases of \MinSAT{\Gamma} that 
admit a constant-factor approximation are when $\Gamma$ is IHS-B.

The first parameterized CSP dichotomy we are aware of is due to
Marx~\cite{Marx05CSP}, who considered the problem \ExactOnes{\Gamma}: 
Given a formula $\F$ over $\Gamma$, is there a satisfying assignment
that sets \emph{precisely} $k$ variables to 1? Marx gives a full
dichotomy for \ExactOnes{\Gamma} as being \classFPT or \classWone-hard
parameterized by $k$, later extended to general non-Boolean
languages with Bulatov~\cite{BulatovM14SICOMP}.
Marx also notes that \MinOnes{\Gamma}
is \classFPT by a simple branching procedure for every finite language
$\Gamma$~\cite{Marx05CSP}. However, the existence of so-called
\emph{polynomial kernels} for \MinOnes{\Gamma} problems is a
non-trivial question; a characterization for this was given by Kratsch
and Wahlstr\"om~\cite{KratschW10ICALP}, and follow-up work mopped
up the questions of FPT algorithms and polynomial kernels
parameterized by $k$ for all three variants \textsc{Min/Max/Exact Ones}$(\Gamma)$~\cite{KratschMW16TOCT}

Polynomial kernels for \MinSAT{\Gamma} have been considered; most
notably, there are polynomial kernels for \textsc{Edge Bipartization}
and \textsc{Almost 2-SAT}~\cite{KratschW20}. We are not aware of any
significant obstacles towards a kernelizability dichotomy of
\MinSAT{\Gamma}; however, a comparable result for more general
\MinCSP{\Gamma} appears difficult already for a domain of size 3
(due to \MinCSP{\Gamma} over domain size 3 capturing the problem
\textsc{Skew Multicut} with two terminal pairs~\cite{ChenLLOR08},
as well as the less-examined but still open \textsc{Unique Label Cover}
for an alphabet of size 3~\cite{ChitnisCHPP16}).

Another direct predecessor result of the current dichotomy is
the characterization of fixed-parameter constant-factor approximation
algorithms for \MinSAT{\Gamma} of Bonnet et al.~\cite{BonnetEM16ESA}.
Finally, we recall
that the problem \textsc{$\ell$-Chain SAT} was first published
by Chitnis et al.~\cite{ChitnisEM17}, who related its status
to a conjectured complexity dichotomy for the \textsc{Vertex Deletion List $H$-Coloring} problem.
This conjecture was subsequently confirmed with the FPT algorithm for
\textsc{$\ell$-Chain SAT} in the first paper in this series~\cite{dfl-arxiv,dfl-stoc}.

A much more ambitious optimization variant of CSPs are \emph{Valued CSP}, VCSP.
In this setting, instead of a constraint language one fixes a finite
set~$S$ of \emph{cost functions}, and considers the problem \VCSP{S},
of minimizing the value of a sum of cost functions from $S$. 
The cost functions can be either finite-valued or \emph{general},
taking values from $\mathbb{Q} \cup \{\infty\}$ to simulate
\emph{crisp}, unbreakable constraints. 
Both \MinOnes{\Gamma} and \MinCSP{\Gamma} (and, indeed, \textsc{Vertex
  Deletion List $H$-Coloring}) are special cases of VCSPs,
as are many other problems. The classical (P-vs-NP) complexity of VCSPs
admits a remarkably clean dichotomy: There is a canonical
LP-relaxation, the \emph{basic LP}, such that for any finite-valued $S$,
the problem \VCSP{S} is in P if and only if the basic LP is integral~\cite{ThapperZ16JACM}.
A similar, complete characterization for general-valued \VCSP{S}
problems is also known~\cite{KolmogorovKR17}.





\subsection{Technical overview}

The technical work of the paper is divided into three parts.
Theorem~\ref{ithm:gaifman}, i.e., the FPT algorithm for bijunctive
languages $\Gamma$ where every relation $R \in \Gamma$ has a
$2K_2$-free Gaifman graph, is proven in Section~\ref{sec:gaifman};
Theorem~\ref{ithm:arrow}, i.e., the FPT
algorithm for IHS-B languages $\Gamma$ where every relation
$R \in \Gamma$ has a $2K_2$-free arrow graph is proven in Section~\ref{sec:arrow};
and the completion of the
dichotomy, where we prove that all other cases are trivial or hard, is presented
in Section~\ref{sec:dichotomy}.
We begin the overview with the algorithmic results.




\paragraph{Graph problem.}
In both our algorithmic results, we cast the problem at hand as a graph separation problem that
   we call \textsc{Generalized Bundled Cut}.
An instance consists of 
\begin{itemize}[itemsep=0px]
\item a directed multigraph $G$ with distinguished vertices $s,t \in V(G)$;
\item a multiset $\pairs$ of subsets of $V(G)$, called \emph{clauses};
\item a family $\bundles$ of pairwise disjoint subsets of $E(G) \cup \pairs$,
  called \emph{bundles}, such that one bundle does not contain two copies of the same
  arc or clause;
\item a parameter $k$;
\item in the weighted variant, additionally a weight function $\weight \colon \bundles \to \mathbb{Z}_+$
and a weight budget $W \in \mathbb{Z}_+$.
\end{itemize}
We seek a set $Z \subseteq E(G)$ that is an $st$-cut (i.e., cuts all paths from $s$ to $t$).
An edge $e$ is \emph{violated} by $Z$ if $e \in Z$ and a clause $C \in \pairs$
is \emph{violated} by $Z$ if all elements of $C$ are reachable from $s$ in $G-Z$
(i.e., a clause is a request to separate at least one of the elements from $s$).
A bundle is \emph{violated} if at least one of its elements is violated.
An edge, a clause, or a bundle is \emph{satisfied} if it is not violated. 
An edge or a clause is \emph{soft} if it is in a bundle and \emph{crisp} otherwise.
We seek an $st$-cut $Z$ that satisfies all crisp edges and clauses,
and violates at most $k$ bundles
(and whose total weight is at most $W$ in the weighted variant).
An instance is \emph{$\maxarity$-bounded} if every clause is of size at most $\maxarity$ and, for every $B \in \bundles$,
   the set of vertices involved in the elements of $B$ is of size at most $\maxarity$.

\textsc{Generalized Bundled Cut}, in full generality,
 can be easily seen to be \classWone-hard when parameterized
by $k$, even with $\Oh(1)$-bounded instances; see Marx and Razgon~\cite{MarxR09}
and the problem \PMstCl{} defined in Section~\ref{ss:hard-cases}
and proved to be hard in Lemma~\ref{lemma:hardness:tightpairedstcut}.
In both tractable cases, the obtained instances 
are $\maxarity$-bounded for some $\maxarity$ depending on the language
and have some additional properties, related to $2K_2$-freeness of $G_R$ or $H_R$,
that allow for fixed-parameter algorithms when parameterized by $k + \maxarity$.

More precisely, in the bijunctive case, we have the following two properties.
First, the clauses are of arity $2$; let \gdpcfull{} (for short \gdpcshort)
be the \textsc{Generalized Bundled Cut} problem, restricted to clauses of size $2$. 
Second, the considered instances of \gdpcshort{} are $2K_2$-free in the following sense:
for every $B \in \bundles$, let $G_B$ be the undirected graph with vertex set consisting of all vertices
involved in an element of $B$ except for $s$ and $t$, and an edge $uv$ belongs to $E(G_B)$ if there is a clause $\{u,v\} \in B$, an arc $(u,v) \in B$, or an arc $(v,u) \in B$;
we assume that $G_B$ is $2K_2$-free for every $B \in \bundles$. 
In Section~\ref{ss:id2-to-graph} we prove the following (cf. Lemma~\ref{lem:compgdpc}).

\begin{lemma}\label{lem:intro:compgdpc}
There is a randomized polynomial-time algorithm that, 
given on input an instance $\inst$ to \MinSAT{\Gamma} with weights
where $\Gamma$ is bijunctive, of maximum arity $\maxarity$, and for every relation $R \in \Gamma$, the Gaifman graph of $R$ is $2K_2$-free, 
together with a constraint deletion set $Y$ for $\inst$,
outputs an instance $\inst'$ to \gdpcfull{} with not larger $k$ nor the weight budget larger than in $\inst$, 
such that $\inst'$ is $2K_2$-free and $\maxarity$-bounded, and 
\begin{itemize}
\item if $\inst'$ is a \yes-instance, then $\inst$ is a \yes-instance, and
\item if $\inst$ is a \yes-instance, then $\inst'$ is a \yes-instance with probability at least $2^{-\maxarity|Y|}$.
\end{itemize}
Furthermore, there exists a deterministic counterpart of the procedure that, with the same input, runs in time $2^{\maxarity|Y|} \cdot \mathrm{poly}(|\inst|)$
and outputs at most $2^{\maxarity|Y|}$ instances to \gdpcfull{} as above, such that the input instance is a \yes-instance if and only if one of the output instances is a \yes-instance.
\end{lemma}

Recall the bijunctive case covers the \textsc{Almost 2-SAT} problem, generalizing
the algorithm of~\cite{RazgonO09}. We remark that, in Lemma~\ref{lem:intro:compgdpc},
the case of \textsc{Almost 2-SAT} corresponds to the resulting \gdpcfull{} instance 
containing only bundles being singletons. 

\medskip

In the IHS-B case, the clauses can be larger, but the $2K_2$-free assumption
now applies to the arrow graph. A \textsc{Generalized Bundled Cut} instance
is a \emph{$\maxarity$-bounded $2K_2$-free \textsc{Clause Cut}} instance if 
every bundle or clause involves at most $\maxarity$ vertices and
for every $B \in \bundles$,
the following undirected graph $G_B'$ is $2K_2$-free:
$V(G_B')$ consists of all vertices involved in an element of $B$ 
except for $s$ and $t$, and $uv \in E(G_B')$ if there is
an arc $(u,v) \in B$ or an arc $(v,u) \in B$.
In Section~\ref{ss:isd-to-graph}, we prove the following (cf. Lemma~\ref{lem:isd-to-graph}):

\begin{lemma}\label{lem:intro:isd-to-graph}
Given an instance $\inst$ to \MinSAT{\Gamma} (without weights)
  where $\Gamma$ is IHS-B-, of maximum arity $\maxarity$, and for every
  relation $R \in \Gamma$, the arrow graph of $R$ is $2K_2$-free, 
one can in polynomial time compute an equivalent instance of $\maxarity$-bounded
$2K_2$-free \textsc{Clause Cut} with not larger value of $k$.
\end{lemma}

In the reduction from \MinSAT{\Gamma} to \textsc{Generalized Bundled Cut}
(e.g., of Lemmata~\ref{lem:intro:compgdpc} and~\ref{lem:intro:isd-to-graph}),
 the source vertex $s$ should be interpreted
as ``true'' and the sink vertex $t$ as ``false''; other vertices
are in 1-1 correspondence with the variables of the input instance.
Furthermore, arcs are implications that 
correspond to parts of the constraints of the input instance.
The sought $st$-cut $Z$ corresponds
to implications violated by the sought assignment in the CSP instance; a vertex
is assigned $1$ in the sought solution if and only if it is reachable from $s$ in $G-Z$.

Thus, in terms of constraints, arcs $(u,v)$ correspond to implications $(u \to v)$ in the input formula, 
and clauses $\{v_1,\ldots,v_r\}$ correspond to negative clauses $(\neg v_1 \lor \ldots \lor \neg v_r)$
in the input formula. An arc $(s \to v)$ corresponds to a positive 1-clause $(v)$.
Thereby, each bundle naturally encodes an IHS-B- constraint.
Capturing bijunctive constraints requires a little bit more work, since they can also involve
positive 2-clauses $(u \lor v)$, but this can be reduced to the IHS-B- case with clauses of arity 2
via standard methods (e.g., iterative compression followed by variable renaming~\cite{KratschW20}).

Thus, we proceed with the corresponding graph problem: \gdpcfull{} for the bijunctive case
and \textsc{Clause Cut} for the IHS-B- case (and note that the IHS-B+ case follows by symmetry). 
The bulk of Sections~\ref{sec:gaifman} and~\ref{sec:arrow} are devoted to the proofs
of the following theorems.

\begin{restatable}{theorem}{alggdpc}\label{thm:alg-gdpc}
For every integer $\maxarity \geq 2$, 
for the \gdpcfull{} problem restricted to $\maxarity$-bounded $2K_2$-free instances,    
    there exists 
\begin{itemize}
\item a randomized polynomial-time algorithm that never accepts a \no-instance
and accepts a \yes-instance 
with probability $2^{-\mathrm{poly}(k,\maxarity)}$;
\item a deterministic algorithm 
with running time bound $2^{\mathrm{poly}(k,\maxarity)} n^{\Oh(1)}$.
\end{itemize}
\end{restatable}

\begin{restatable}{theorem}{algclausecut}\label{thm:alg-clause-cut}
For every integer $\maxarity \geq 2$, 
for the \textsc{Clause Cut} problem restricted to $\maxarity$-bounded $2K_2$-free instances,    
    there exists 
\begin{itemize}
\item a randomized polynomial-time algorithm that never accepts a \no-instance
and accepts a \yes-instance 
with probability
$2^{-\Oh(k^6 \maxarity^{10} \log(k\maxarity))}$;
\item a deterministic algorithm 
with running time bound
$2^{\Oh(k^6 \maxarity^{10} \log(k\maxarity))} n^{\Oh(1)}$.
\end{itemize}
\end{restatable}

\paragraph{Flow-augmentation.}
In both cases, the first step is to apply \emph{flow-augmentation}~\cite{dfl-arxiv,dfl-stoc}.
Recall that in \textsc{Generalized Bundled Cut},
we are interested in a deletion set $Z \subseteq E(G)$ that separates $t$ and possibly some more vertices of $G$ from $s$.
Formally, $Z$ is a \emph{star $st$-cut} if it is an $st$-cut and additionally
for every $(u,v) \in Z$, $u$ is reachable from $s$ in $G-Z$ but $v$ is not.
That is, $Z$ cuts all paths from $s$ to $t$ and every edge of $Z$ is essential to separate some vertex of $G$ from $s$.
For a star $st$-cut $Z$, its \emph{core}, denoted $\corecutG{Z}{G}$, is the set of those edges $(u,v) \in Z$ such that $t$
is reachable from $v$ in $G-Z$. That is, $\corecutG{Z}{G}$ is the unique inclusion-wise minimal subset of $Z$ that is an $st$-cut.
A simple but crucial observation is that in \textsc{Generalized Bundled Cut}
any inclusion-wise minimal solution $Z$ is a star $st$-cut.

Considered restrictions of 
\textsc{Generalized Bundled Cut} turn out to be significantly simpler if the sought
star $st$-cut $Z$ satisfies the following additional property:
$\corecutG{Z}{G}$ is actually an $st$-cut of minimum possible cardinality. 
This is exactly the property that the flow-augmentation technique provides.

\begin{theorem}[directed flow-augmentation, Theorem~3.1 of~\cite{dfl-arxiv}]\label{thm:dir-flow-augmentation-intro}
There exists a polynomial-time algorithm that, given a directed graph $G$, vertices $s,t \in V(G)$,
and an integer $k$, returns a set $A \subseteq V(G) \times V(G)$
and a maximum flow $\witnessflow$ from $s$ to $t$ in $G+A$ such that 
for every star $st$-cut $Z$ in $G$ of size at most $k$,
with probability $2^{-\Oh(k^4 \log k)}$, the sets of vertices
reachable from $s$ in $G-Z$ and $(G+A)-Z$ are equal (in particular, $Z$ remains
a star $st$-cut in $G+A$),
$\corecutG{Z}{G+A}$ is an $st$-cut of minimum possible cardinality, and
every flow path of $\witnessflow$ contains exactly one edge of $Z$.
\end{theorem}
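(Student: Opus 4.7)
The plan is to prove the theorem via an iterative randomized augmentation procedure. At each round we maintain a current super-graph $G_i$ of $G$ (with $G_0 = G$), a maximum $st$-flow $\witnessflow_i$ in $G_i$ of value $\lambda_i$, and we try to add a single arc so that $\lambda_{i+1} = \lambda_i + 1$ while preserving, with good probability, the invariant that every fixed star $st$-cut $Z$ of $G$ with $|Z|\le k$ remains a star $st$-cut in $G_i$ with the same source side $R_i(Z) := \{v : v \text{ is reachable from } s \text{ in } G_i - Z\}$. Since the maximum flow is bounded by $|\corecutG{Z}{G_i}| \leq k$, after at most $k$ successful rounds we have $\lambda_i = |\corecutG{Z}{G_i}|$, at which point the core of $Z$ is a minimum $st$-cut in $G+A$, and by LP-duality every flow path of the resulting $\witnessflow$ crosses $Z$ in exactly one arc.

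The core of the argument is the single augmentation step. At round $i$, look at the min cut $C_i$ closest to $s$ (defined by the residual graph of $\witnessflow_i$). If $\lambda_i < |\corecutG{Z}{G_i}|$, then the source side $R_i(Z)$ strictly extends past the source side $S_i$ of $C_i$; consequently some flow path of $\witnessflow_i$ enters $R_i(Z)$ after leaving $S_i$, and a combinatorial analysis of flow paths vs.\ $Z$ gives a pair of vertices $(u,v)$, where $u \in R_i(Z)$ lies ``above'' some arc of $C_i$ on its flow path and $v \notin R_i(Z)$ lies ``below'' another arc of $C_i$ on its flow path, such that adding $(u,v)$ to $G_i$ strictly raises the max flow and does not alter $R_i(Z)$. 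The algorithm does not know $Z$, so it samples the pair $(u,v)$ randomly from a cleverly restricted polynomial-sized candidate set tied to $C_i$ and $\witnessflow_i$ (informally: choose one flow path, then a prefix point on that path, and a suffix point on another flow path, restricted to the part that remains reachable after a guessed subset of arcs of $C_i$ is removed).

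I would then quantify the probability: since the candidate set has size $2^{O(k^2 \log k)}$ per round, a correct guess occurs with probability $2^{-O(k^2\log k)}$, and iterating at most $k$ rounds with independent random choices yields overall success probability at least $2^{-O(k^3\log k)}$; a more careful counting that groups rounds by which arcs of $Z$ are ``resolved'' pays the extra factor leading to the stated $2^{-O(k^4\log k)}$ bound. I would also verify closure under the operation: if a guessed arc $(u,v)$ happens to violate the invariant (e.g.\ $u \notin R_i(Z)$ or $v \in R_i(Z)$), the round may still succeed harmlessly or simply fail, but conditioned on a correct guess the invariants persist by construction.

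The main obstacle will be the structural lemma underlying the guessing step: proving that, whenever the current flow is strictly smaller than $|\corecutG{Z}{G_i}|$, there exists a \emph{safe augmenting arc} $(u,v)$ (one that both increases the flow and preserves $Z$ as a star cut with source side $R_i(Z)$) lying in a candidate set of size polynomial in $|V(G)|$ but parameterized only by $k$. This requires a careful analysis of how a star $st$-cut of size $\leq k$ can sit relative to the closest min cut $C_i$, exploiting the ``star'' property — every arc of $Z$ is essential for separating some vertex from $s$ — to confine the relevant vertices $u, v$ to neighborhoods of $\witnessflow_i$ controlled by $k$. Once this lemma is in hand, the algorithmic wrapper (sampling, iterating, bounding the flow) follows routinely.
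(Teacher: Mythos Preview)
The paper does not actually prove this theorem: it is imported as a black box from~\cite{dfl-stoc} (restated as Theorem~\ref{thm:dir-flow-augmentation} in the preliminaries), so there is no ``paper's own proof'' to compare against. That said, your proposal does not constitute a proof sketch one could complete, for two concrete reasons.

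First, the final step is wrong. You claim that once $\lambda_i = |\corecutG{Z}{G_i}|$, ``by LP-duality every flow path of the resulting $\witnessflow$ crosses $Z$ in exactly one arc.'' Duality only gives that each flow path contains exactly one edge of $\corecutG{Z}{G+A}$; it says nothing about edges of $Z \setminus \corecutG{Z}{G+A}$. The paper itself highlights this distinction right after the theorem statement and, in the preliminaries, gives an explicit five-vertex example where $\corecut{Z}$ is an $st$-mincut yet no witnessing flow for $Z$ exists. So the witnessing-flow property is an \emph{additional} guarantee that must be engineered during the augmentation, not a free consequence of the core becoming a mincut. Your iterative scheme has no mechanism to avoid routing flow through $Z \setminus \corecut{Z}$.

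Second, the heart of the argument --- your ``structural lemma'' that a safe augmenting arc exists in a $2^{O(k^2\log k)}$-sized candidate set --- is asserted rather than argued. In directed graphs this is genuinely hard: unlike the undirected case, there is no obvious reason why a single arc whose tail lies in $R_i(Z)$ and whose head lies outside $R_i(Z)$ need not exist, and if it does not, adding any single arc either fails to raise the flow or destroys the invariant. The actual construction in~\cite{dfl-stoc} is substantially more intricate than ``guess a prefix point on one path and a suffix point on another,'' and the $k^4\log k$ exponent reflects that complexity rather than a bookkeeping slack as you suggest.
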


Furthermore,~\cite{dfl-arxiv} provides a deterministic counterpart of Theorem~\ref{thm:dir-flow-augmentation-intro}
that outputs a family of $2^{\Oh(k^4 \log k)} (\log n)^{\Oh(k^3)}$ candidates for the set $A$ with the guarantee that for every star $st$-cut $Z$, at least one output candidate set $A$ is as in the statement.

We call such a flow $\witnessflow$ a \emph{witnessing flow}. 
Note that each path of $\witnessflow$ is obliged to contain exactly one edge
of $\corecutG{Z}{G+A}$ as $\corecutG{Z}{G+A}$ is an $st$-cut of minimum cardinality
and $\witnessflow$ is a maximum flow. However, we additionally guarantee
that $\witnessflow$ does not use any edge of $Z \setminus \corecutG{Z}{G+A}$.

\paragraph{Bijunctive case.}
The algorithm for the bijunctive case, i.e., Theorem~\ref{thm:alg-gdpc},
can be seen as a wide generalization of the algorithm for \textsc{Weighted Almost 2-SAT}
that was announced in the conference version~\cite{dfl-stoc},
although we will not assume familiarity with that algorithm in this exposition.

Suppose that the given \gdpcshort{} instance $\inst=(G,s,t,\pairs,\bundles,\weight,k,W)$ 
that 
is a \yes-instance with a solution $Z$ of weight at most $W$ and let $\karc:=\abs{Z}$, i.e. the number of edges violated by $Z$, 
and let $\kclause$ be the number of clauses violated by $Z$.
Since $Z$ violates at most $k$ bundles, and every bundle is $\maxarity$-bounded 
and contains at most two arcs and one 
clause connecting the same pair of vertices, we have 
\[ \karc + \kclause \leq k \cdot \binom{\maxarity}{2} \cdot 3 \leq 2k\maxarity^2. \]
As previously observed, we can assume that 
$Z$ is a star $st$-cut and via~\cref{thm:dir-flow-augmentation-intro}, we can further assume (with good enough probability) that $\corecutG{Z}{G}$ 
is an $st$-cut with minimum cardinality and additionally that an $st$-maxflow $\witnessflow$ (with $\lambda_G(s,t) = |\witnessflow|$ flow paths) that is a witnessing flow for $Z$ is given. 
The ethos of the entire algorithm is that  we use the witnessing flow $\witnessflow$ at hand as a guide in search for $Z$ and impose more structure on the search space. The algorithm is randomized and at each step, the 
success probability is at least $2^{-\mathrm{poly}(b,k)}$. Below, we assume that all the guesses up to that point are successful.
(While we describe here the algorithm in the more natural ``random choice'' language, it is straightforward to make it deterministic via branching and the deterministic counterpart of Theorem~\ref{thm:dir-flow-augmentation-intro}.)

The algorithm has three stages, with three important ideas that we want to highlight.
\begin{enumerate}
\item First, we reduce the instance so that $Z$ is an $st$-mincut, rather than just a star $st$-cut.
  Perhaps counterintuitively, we do so by adding additional $st$-paths to $G$, increasing both $\lambda_G(s,t)$ and $|Z|$, but we do so in a way that more of $Z$ is contained in $\corecutG{Z}{G}$ afterwards. 
\item Once $Z$ is assumed to be an $st$-mincut, we introduce a \emph{bipartite structure} to the instance,
  where the paths of $\witnessflow$ are expected to partition into two sets, so that
  clauses only exist between the two sets, and paths consisting of edges not from $\witnessflow$
  only exist between paths in the same set. We show that if the instance does not have this bipartite form,
  then we can branch (with success probability $\Omega(1/\mathrm{poly}(\maxarity,k))$) to simplify the instance. 
\item Finally, once the instance has such a bipartite structure, we show that the problem reduces
  to a weighted, bundled $st$-mincut problem (i.e., without any clauses), which
  can be solved using results from previous work~\cite{dfl-stoc}.
\end{enumerate}

The first stage above is encapsulated in the following lemma, proven in Section~\ref{subsec:2mincut}.
\begin{restatable}{lemma}{tomincut} \label{lem:2mincut}
There is a randomized polynomial-time algorithm which takes as input an  instance $\inst=(G,s,t,\pairs,\bundles,\weight,k,W)$ of \gdpcshort\ 
and outputs an instance $\inst'=(G',s,t,\pairs,\bundles',\weight',k',W')$ with $k'\leq 2k\maxarity^2$ and $W' < 2^{2k\maxarity^2} (W+1)$
such that 
\begin{itemize}
\item if $\inst$ is $2K_2$-free, then $\inst'$ is $2K_2$-free, too,
\item if $\inst'$ is a \yes-instance, then $\inst$ is a \yes-instance, and 
\item if $\inst$ is a \yes-instance, then $\inst'$ has a solution $Z$ with $\weight'(Z')\leq W'$ 
which is an $st$-mincut of $G'$ with probability at least $2^{-\Oh((2k\maxarity^2)^5 \log (2k\maxarity^2))}.$
\end{itemize}
\end{restatable}

Let us sketch the proof of Lemma~\ref{lem:2mincut}.
As $\karc+\kclause \leq 2k\maxarity^2$, each of these integers can be 
correctly guessed with high probability. We may assume that $\karc >\lambda_G(s,t)=\abs{\corecutG{Z}{G}}$; if not, we may either output a trivial \no-instance 
or proceed with the current instance $\inst$ as the desired instance.
When $Z\setminus \corecutG{Z}{G}\neq \emptyset$, this is because there is a clause $p\in \pairs$ that is violated by $\corecutG{Z}{G}$ and some of the extra edges in $Z\setminus \corecutG{Z}{G}$ are used to separate an endpoint  (called an \emph{active vertex}) $v$ of $p$ from $s$ to satisfy $p.$
Note that if we added a path $P$ from $v$ to $t$ while staying on the $t$-side of $Z$ then any arc of $Z$ that serves to separate $v$ from $s$
would now (by definition) belong to $\corecutG{G+P}{Z}$. 
Even though we cannot do this directly, since we cannot sample an active vertex 
with high enough probability, i.e. $1/f(k,b)$ for some $f$ (which would be necessary for an FPT overall success probability and running time bound), we are able to sample a {\sl monotone} sequence of vertices $u_1,\ldots , u_\ell$ with high probability in the following sense: 
There exists a unique active vertex $u_a$ among the sequence, and all vertices before $u_a$ are reachable from $s$ in $G-Z$ and all others are unreachable from $s$ in $G-Z$. 
Once such a sequence is sampled, the $st$-path visiting (only) $s,u_1,\ldots, u_\ell,t$ in order is added as soft arcs, each  forming a singleton bundle of weight $W+1$, 
and we increase the budgets $k$ and $W$ to $k+1$ and $2W+1$ respectively. Let $\inst'$ be the new instance.
Then $Z':=Z\cup \{(u_{a-1},u_a)\}$ (where $u_0=s$) is a solution to $\inst'$ violating at most $k+1$ bundles with weight at most $2W+1$. The key improvement here as above is that $u_a$ is connected to $t$ with a directed path in $G-Z'$, implying that at least one of the extra edges in $Z\setminus \corecutG{Z}{G}$ used to cut $u_a$ from $s$ (in addition to $(u_{a-1},u_a)$) 
is now incorporated into $\corecutG{Z'}{G'}$, thus 
$\abs{Z\setminus \corecutG{Z}{G}} > \abs{Z' \setminus \corecutG{Z'}{G'}}$.
After performing this procedure of ``sampling a (monotone) sequence then adding it as an $st$-path" a bounded number of times, we get 
a \yes-instance that has an $st$-mincut as an optimal solution with high enough probability.

To properly describe the method of sampling such a sequence would go into too much depth for this overview,
but we can describe one critical ingredient. Let $\witnessflow = \{P_1,\ldots,P_\lambda\}$, and for
$i \in [\lambda]$ and a vertex $v \in V(G)$ define the \emph{projection} $\pi_i(v)$ of $v$ to $P_i$ 
as the earliest (i.e., closest to $s$) vertex $u$ on $P_i$ such that there is a $uv$-path $P$ in $G$ where the only
intersection of $P$ and $V(\witnessflow)$ is the starting vertex $u$.
Refer to $P$ as an \emph{attachment path} from $P_i$ to $v$, and let $\pi_i(v)=t$ if no such vertex exists.
We can then define an order on $V(G)$ for every $i \in [\lambda]$, where $u <_i v$ if and only if $\pi_i(u)$
is closer to $s$ on $P_i$ than $\pi_i(v)$. We make two crucial observations.
First, assume that $v <_i v' <_i t$ for $v, v' \in V(G) \setminus V(\witnessflow)$
and let $P$ and $P'$ be the respective attachment paths. Then $P$ and $P'$ are vertex-disjoint.
Second, let $v_1 <_i v_2 <_i \ldots <_i v_\ell <_i t$ be a sequence of vertices 
and assume that $Z$ cuts $P_i$ after $\pi_i(v_j)$ for some $j \in [\ell]$.
Then for all but at most $\karc-\lambda$ vertices $v_a$, $1 \leq a \leq j$,
we have that $v_a$ is in the $s$-side of $Z$ (by the first observation).
Thus, given a sequence with one (unknown) active vertex we can sample
a subsequence such that the prefix up to the active vertex is entirely
in the $s$-side of $Z$.
The full sampling of a monotone sequence now follows by considering a pair of paths $(P_i,P_j)$
and a set of clauses $p=\{u,v\} \in \pairs$ with their projections onto $P_i$ and $P_j$,
chosen under a minimality condition so that the clauses form an \emph{antichain}
--- if $\{u,v\}$ and $\{u',v'\}$ are chosen clauses, with some
orientations $(u,v)$ and $(u',v')$, then $u <_i u'$ if and only if $v' <_j v$. 
From all this it follows that if the antichain contains a clause $p$
with an active endpoint $u$ as above, then with good success probability
the entire antichain forms a monotone sequence in $Z$, as required. 

\medskip

With Lemma~\ref{lem:2mincut} in hand, it suffices to prove the following statement,
encapsulating the second and third phase above. (Theorem~\ref{thm:ID2-mincut}
is proven formally in Section~\ref{ss:ID2-mincut}.)

\begin{restatable}{theorem}{algidmincut}\label{thm:ID2-mincut}
There exists a polynomial-time algorithm that, given a $2K_2$-free \gdpcshort{} instance $\inst = (G,s,t,\pairs,\bundles,\weight,k,W)$, never accepts a \no-instance 
and accepts a \yes-instance that admits a solution $Z$ that is an $st$-mincut with probability $2^{-\Oh(k^{4} \maxarity^{8} \log(k\maxarity))}$.
\end{restatable}

In the second stage of the algorithm (being the first part of the proof of Theorem~\ref{thm:ID2-mincut}), we branch into $f(k)$ instances each of which is either an outright \no-instance 
or {\sl ultimately bipartite} in the following sense. Suppose that an instance $\inst$ admits a vertex bipartition $V(G)\setminus \{s,t\}=V_0\uplus V_1$ such that 
(i) for every flow path $P_i$ there exists $\iota \in \{0,1\}$ such that the vertices of $P_i$
are fully contained in $V_\iota \cup \{s,t\}$,
(ii) for every edge $(u,v)\in E(G)$ there exists $\iota \in \{0,1\}$ such that 
$u,v\in V_\iota\cup \{s,t\}$,
and (iii) for every clause $\{u,v\}\in \pairs$ there is no $\iota \in \{0,1\}$ such that 
 $\{u,v\}\subseteq V_\iota \cup\{s,t\}$.
If the instance at hand is ultimately bipartite, we can eliminate the clauses altogether
(see Figure~\ref{fig:intro-reverse}): 
\begin{enumerate}
\item Break the graph into $G_0 := G[V_0 \cup \{s,t\}]$ and $G_1 = G[V_1 \cup \{s,t\}]$.
\item Reverse the orientations of all edges in $G_1$ and swap the labels
of $s$ and $t$ in this reversed graph (so that the flow paths in $G_1$ still go from $s$ to $t$).
\item Merge back $G_0$ and (reversed) $G_1$ by identifying the two copies of $s$ and 
identifying the two copies of $t$. 
\item For any clause $\{u,v\}$ with $u\in V_0$ and $v\in V_1,$ replace the clause $\{u,v\}$ by the arc $(u,v)$. 
\end{enumerate}
Note that 
the reversal of $G_1$ converts the property (in $G-Z$) of ``$v$ being reachable / unreachable from $s$" to ``$v$ reaching / not reaching $t$" for all $v\in V(G_1)$.
Hence, converting a clause $\{u,v\}$ to an arc $(u,v)$ preserves the same set of violated vertex pairs, except that the violated clauses now become violated edges.
(Another way of viewing the same property is to recall that clauses
correspond to 2-clauses $(\neg u \lor \neg v)$ and arcs to
2-clauses $(\neg u \lor v)$. It is then clear that negating both
variables of an arc, or precisely one endpoint of a clause, 
yields an arc.)

\begin{figure}[tbh]
\begin{center}
\includegraphics{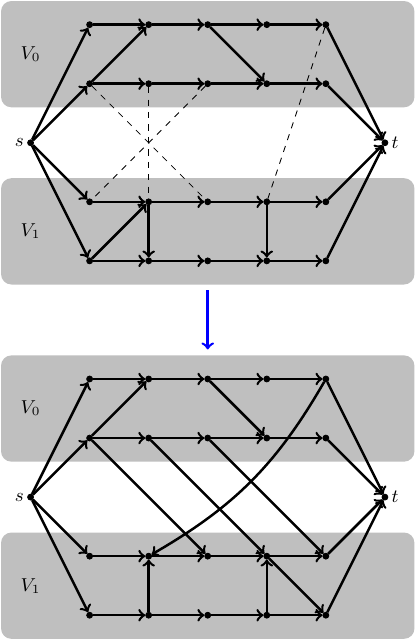}
\caption{An example of the reversing process of the ultimately bipartite 
  instance, where we reverse
the arcs in $V_1$ (bottom half of the graph). Clauses going across become
arcs.}\label{fig:intro-reverse}
\end{center}
\end{figure}

In~\cref{subsub:clean} up to~\cref{subsub:nonzero}, we present the branching strategies and preprocessing steps to reach ultimately bipartite instances. 
In short, we study obstacles to being ultimately bipartite and show that
any such obstacle can be used as a pivot for a branching step.

After obtaining an ultimately bipartite instance and after 
the aforementioned transformation into an instance without clauses (see~\cref{sss:ID2-reversal}), the bundles maintain $2K_2$-freeness.
Now, we observe that 
$2K_2$-freeness is a special case of the property called \emph{pairwise linked deletable edges},
  introduced in~\cite{dfl-stoc},
which makes \textsc{Generalized Bundled Cut} without clauses tractable. 
Hence, the third phase of the algorithm boils down to an application of the corresponding
algorithmic result of~\cite{dfl-stoc}.

\paragraph{IHS-B case.}
The algorithm for the graph counterpart of the IHS-B- case,
   i.e., the algorithm of Theorem~\ref{thm:alg-clause-cut},
   can be seen as a mix of the \textsc{Chain SAT} algorithm
of~\cite{dfl-arxiv,dfl-stoc} and the
\textsc{Digraph Pair Cut} algorithm of~\cite{KratschW20}.
Recall that here the clauses can be larger than $2$ (but of size at most $\maxarity$),
but in a bundle all arcs neither incident with $s$ nor $t$ form a $2K_2$-free graph.
By applying flow-augmentation, we can assume that for the sought solution $Z$ it holds that $\corecutG{Z}{G}$ is an $st$-cut of minimum cardinality
and we have access to a witnessing flow $\witnessflow$.
Every path $P \in \witnessflow$ contains a unique edge $e_P \in E(P) \cap Z$.

We make a number of branching and color-coding steps that make the instance 
(and the sought solution) more regular. 

We first perform the following color-coding step.
For every bundle $B$ and every $P \in \witnessflow$ we randomly guess a value $e(B,P) \in \{\bot\} \cup (E(P) \cap B)$.
We aim for the following: For every $B$ violated by $Z$, and every $P \in \witnessflow$, we want $e(B,P)$ to be the unique edge of $B \cap E(P) \cap Z$, or $\bot$ if there is no such edge.
We also branch into how the edges $e_P$ are partitioned into bundles. That is, for every two distinct $P,P' \in \witnessflow$, we guess if $e_P$ and $e_{P'}$ are from the same bundle.
Note that this guess determines the number of bundles that contain an edge of $\corecutG{Z}{G}$; we reject a guess if this number is larger than $k$.

Assume that we guessed that $e_P,e_{P'} \in B$ for some $P,P' \in \witnessflow$ and a bundle $B$ violated by $Z$. 
Then, as $G_B'$ is $2K_2$-free, either $e_P$ or $e_{P'}$ is incident with $s$ or $t$, or they have a common endpoint,
or there is an arc of $B$ from an endpoint of $e_P$ to an endpoint of $e_{P'}$, 
or there is an arc of $B$ from an endpoint of $e_{P'}$ to an endpoint of $e_P$. 
We guess which cases apply. 
If $e_P$ or $e_{P'}$ is incident with $s$ or $t$, there is only a constant number of candidates for $B$: we guess which bundle is $B$, delete $B$, decrease $k$ by one, and restart the algorithm.
All later cases are very similar to each other; let us describe here the case that $B$ contains an arc $f$ from an endpoint of $e_P$ to an endpoint of $e_{P'}$.

Let $B_1$ and $B_2$ be two arbitrarily chosen bundles that are candidates for bundle $B$,
i.e., they are consistent with the guesses so far.
That is, for $i=1,2$ we have $e_i := e(B_i,P) \neq \bot$, $e_i' := e(B_i,P') \neq \bot$,
and $B_i$ contains an arc $f_i$ that is a candidate for $f$: has its tail in $e_i$ and its head in $e_i'$.
Assume that $e_1$ is before $e_2$ on $P$ but $e_1'$ is after $e_2'$ on $P'$.
The crucial observation (and present also in the algorithm for \textsc{Chain SAT} of~\cite{dfl-stoc}) is that it cannot hold that $B = B_2$, as if we cut $e_2$ and $e_2'$, the
edge $f_1 \in B_1$ will provide a shortcut from a vertex on $P$ before the cut to a vertex on $P'$ after the cut. 
Thus, we may not consider $B_2$ as a candidate for the bundle $B$ violated by $Z$. 
Furthermore, the remaining candidates for the bundle $B$ are linearly ordered along all flow paths $P$ where $B \cap Z \cap E(P) \neq \emptyset$.

This allows the following filtering step: For every bundle $B$, we expect that $\{ P\in \witnessflow~|~e(B,P) \neq \bot\}$ is consistent with the guessing
and there is no other bundle $B'$ with the same set $\{P \in \witnessflow~|~e(B,P) \neq \bot\}$ that proves that $B$ is not violated by $Z$ as in the previous paragraph.
We also expect that $\{e(B,P)~|~P \in \witnessflow\}$ extends to a minimum $st$-cut. 
If $B$ does not satisfy these conditions, we delete it from $\bundles$ (making all its edges and clauses crisp).

Now, a simple submodularity argument shows that the first (closest to $s$) minimum $st$-cut $Z_0$ (where only edges $e(B,P)$ for some $B \in \bundles$ and $P \in \witnessflow$ are deletable)
has the correct structure: its edges are partitioned among bundles as guessed. If $Z_0$ is a solution, we return \yes; note that this is the step where we crucially rely on the instance
being unweighted. 
Otherwise, there is a clause $C \in \pairs$ violated by $Z_0$. 
It is either indeed violated by the sought solution $Z$
or there is $v \in C$ that is not reachable from $s$ in $G-Z$.
We guess which option happens: In the first case, we delete the bundle containing $C$, decrease $k$ by one, and restart the algorithm.
In the second case, we guess $v$ and add a crisp arc $(v,t)$, increasing the size of a minimum $st$-cut, and restart the algorithm. 
This concludes the overview of the algorithm; note that the last branching step is an analog of the core branching step
of the \textsc{Digraph Pair Cut} algorithm of~\cite{KratschW20}.

\paragraph{Dichotomy completion.}
Finally, in order to complete the dichotomy we need to show that the
above two algorithms cover all interesting cases.
In this task, we have significant help from the structure of
\emph{Post's lattice}~\cite{PostsLattice41}.
This is a structural result that precisely characterizes 
the expressive power of every Boolean language under a notion of
expressive power known as \emph{pp-definitions}. While this structure
is too coarse to preserve the FPT status of \MinSAT{\Gamma},
it is very useful as a starting point. Indeed, previous work on CSP
dichotomies for Boolean languages, such as Bonnet et al.~\cite{BonnetELM-arXiv}, 
have frequently used this tool; see Creignou et al.~\cite{CreignouKV08post}.

Section~\ref{sec:dichotomy} contains full proofs for completeness, 
but for the purposes of this overview we can start from a result from
the extended preprint version of a paper of Bonnet et al.~\cite{BonnetELM-arXiv}.
Specifically, they show that for any language $\Gamma$ that is not
IHS-B or bijunctive, \MinSAT{\Gamma} does not even admit a
constant-factor approximation in FPT time, parameterized by $k$.
Clearly, there in particular cannot exist exact FPT algorithms for
such languages, hence we may assume that $\Gamma$ is bijunctive or
IHS-B. By a structural observation, we show that either $\Gamma$
implements positive and negative assignments, i.e., constraints $(x=1)$ and $(x=0)$, or
\WeightedMinSAT{\Gamma} is trivial in the sense that setting all
variables to 1 (respectively to 0) is always optimal. 
Hence we assume that $\Gamma$ implements assignments.
Furthermore, recall that our basic \classWone-hard constraints
are $R_4(a,b,c,d) \equiv (a=b) \land (c=d)$ or its variants
with one or both equalities replaced by implications. 

First assume that $\Gamma$ is bijunctive and not IHS-B. In particular,
every relation $R \in \Gamma$ can be expressed as a conjunction of
1- and 2-clauses, but it does not suffice to use only conjunctions
over $\{(x \lor y), (x \to y), (\neg x)\}$ or
over $\{(\neg x \lor \neg y), (x \to y), (x)\}$ (because $\Gamma$ is not IHS-B). 
It follows from Post's lattice that $\Gamma$ can express the crisp relation $(x \neq y)$.
Furthermore, we assume that there is a relation
$R \in \Gamma$ such that the Gaifman graph $G_R$ contains a $2K_2$.
In fact, assume for simplicity that $R$ is 4-ary and that $G_R$ has
edges $\{1,2\}$ and $\{3,4\}$. Then $R$ must be a ``product''
$
R(a,b,c,d) \equiv R_1(a,b) \land R_2(c,d),
$
where furthermore neither $R_1$ nor $R_2$ implies an assignment,
as such a case would imply further edges of the Gaifman graph.
It is now easy to check that each of $R_1$ and $R_2$ is either
$R_i(x,y) \equiv (\sim\! x = \sim\! y)$
or $R_i(x,y) \equiv (\sim\! x \to \sim\! y)$,
where $\sim\! v$ represents either $v$ or $\neg v$.
It is now not difficult to use $R$ in combination with $\neq$-constraints
to implement a hard relation such as $R_4$, implying \classWone-hardness.

Next, assume that $\Gamma$ is IHS-B, say IHS-B-, but not bijunctive.
Then, again via Post's lattice, we have access to negative 3-clauses $(\neg x \lor \neg y \lor \neg z)$.
We first show that either $\Gamma$ implements equality constraints $(x=y)$,
or \WeightedMinSAT{\Gamma} has a trivial FPT branching algorithm.
We then need to show that
\WeightedMinSAT{(x=1),(x=0),(x=y),(\neg x \lor \neg y \lor \neg z)} is \classWone-hard,
and that \MinSAT{\Gamma} is \classWone-hard if Theorem~\ref{ithm:arrow}
does not apply. 
To describe these  \classWone-hardness proofs, we need a more
careful review of the hardness reduction for \MinSAT{(x=1),(x=0),R_4}.
As is hopefully clear from our discussions, this problem corresponds
to finding an $st$-cut in an auxiliary multigraph $G$, where the edges
of $G$ come in pairs and the cut may use edges of at most $k$
different pairs. We show \classWone-hardness of a further restricted
problem, \textsc{Paired Minimum $st$-cut}, where furthermore the edges
of the graph $G$ are partitioned into $2k$ $st$-paths, i.e., the
$st$-flow in $G$ is precisely $2k$ and any solution needs to cut every
path in precisely one edge.
The remaining hardness proofs now all use the same basic idea. Say that a
pair of $st$-paths $P$ being $(s=x_1=\ldots=x_n=t)$
and $P'$ being $(s=x_n'=\ldots=x_1'=t)$ in a formula $\F$ over $\Gamma$
are \emph{complementary} if any min-cost solution to $\F$
cuts between $x_i$ and $x_j$ if and only if it cuts between $x_j'$ and $x_i'$.
In other words, for a min-cost assignment $\alpha$, we have
$\alpha(x_i) \neq \alpha(x_i')$ for every $i \in [n]$. 
This way, for the purposes of a hardness
reduction from \textsc{Paired Minimum $st$-Cut} only, we can
act as if we have access to $\neq$-constraints
by implementing every path in the input instance
as a pair of complementary paths over two sets of variables $x_v$,
$x_v'$ in the output formula. Indeed, consider a pair $\{\{u,v\},\{p,q\}\}$
of edges in the input instance, placed on two distinct $st$-paths.
To force that the pair is cut simultaneously, we wish to use 
crisp clauses such as $(u \land \neg v \to p)$
and $(u \land \neg v \to \neg q)$, enforcing that if $\{u,v\}$ is cut,
i.e., $\alpha(u)=1$ and $\alpha(v)=0$ for the corresponding assignment
$\alpha$, then $\alpha(p)=1$ and $\alpha(q)=0$ as well.
This is now equivalent to the negative 3-clauses
$(\neg u \lor \neg v' \lor \neg p')$ and
$(\neg u \lor \neg v' \lor \neg q)$.

We can implement such complementary path pairs in two ways,
either with equality, negative 2-clauses, and carefully chosen
constraint weights, for hardness of \WeightedMinSAT{\Gamma},
or with a relation $R$ such that the arrow graph $H_R$ contains a $2K_2$,
for the unweighted case. Here, although the truth is a bit more
complex, we can think of such a relation $R$ as representing
either $R_4$ or a constraint such as the coupled min-cut constraint
$R(a,b,c,d) \equiv (a \to b) \land (c \to d) \land (\neg a \lor \neg c)$.
Note that the construction of complementary path pairs using such a
constraint is straightforward. 

The final case, when $\Gamma$ is both bijunctive and IHS-B-,
works via a similar case distinction, but somewhat more complex
since we need to consider the interaction of
Theorem~\ref{ithm:gaifman} and Theorem~\ref{ithm:arrow}. 
The full characterization of  \MinSAT{\Gamma}
and \WeightedMinSAT{\Gamma} as \classFPT or \classWone-hard,
including explicit lists of the \classFPT cases,
is found in Lemma~\ref{lemma:dich:full-list}
at the end of the paper. 

\paragraph{Structure of the paper.} We review some technical
background in Section~\ref{sec:prel}. We prove
Theorem~\ref{ithm:gaifman} in Section~\ref{sec:gaifman}
and Theorem~\ref{ithm:arrow} in Section~\ref{sec:arrow}.
We then complete the dichotomy with Theorem~\ref{ithm:dichotomy} in Section~\ref{sec:dichotomy}.

\section{Preliminaries}
\label{sec:prel}
\subsection{Cuts and directed flow augmentation}

All graphs considered in this paper are multigraphs unless stated otherwise.
We follow standard graph terminology, e.g., see~\cite{Diestel_book}. 
In this subsection, we introduce some terminology related to cuts and flows.

Let $G$ be a directed graph with two prescribed vertices $s$ and $t.$ 
Each arc is either $\emph{soft}$ or $\emph{crisp}$ (exclusively). An \emph{$st$-path} of $G$ is a directed path from $s$ to $t.$ 
An \emph{$st$-flow} $\witnessflow$ in $G$ is a collection of $st$-paths such that no two paths of $\witnessflow$ 
share a soft arc. The \emph{value} of an $st$-flow $\witnessflow$ is defined to be $\infty$ if it contains an $st$-path consisting of crisp arcs only, 
and $\abs{\witnessflow}$ otherwise. Therefore, one may construe a soft arc as an arc of  capacity 1 and a crisp arc as equivalent to 
infinite copies of capacity 1 arcs. The maximum value of an $st$-flow in $G$ is denoted as $\lambda_G(s,t)$ and 
an $st$-flow $\witnessflow$ whose value equals $\lambda_G(s,t)$ is called an \emph{$st$-maxflow} in $G.$ 
We often refer to a member of an $st$-maxflow $\witnessflow$ as a \emph{flow path} $P_i$ of $\witnessflow$, 
where the index $i$ ranges between 1 and $\lambda_G(s,t).$ For a flow path $P\in \witnessflow$, 
the set $E(P)$ (resp.\ $V(P)$) denotes the set of arcs (resp.\ vertices) which appear in the $st$-path $P$ as an alternating sequence of vertices and arcs.
The sets $E(\witnessflow)$ and $V(\witnessflow)$ are defined as $\bigcup_{i\in [\lambda_G(s,t)]}E(P_i)$ and $\bigcup_{i\in [\lambda_G(s,t)]}V(P_i)$ respectively.

A set $Z\subseteq E(G)$ of {\sl soft} arcs is an \emph{$st$-cut} if it hits every 
$st$-path in $G.$ Note that if $\lambda_G(s,t)=\infty$, there is no $st$-cut in $G.$ 
When $\lambda_G(s,t)$ is finite, the minimum cardinality of an $st$-cut equals $\lambda_G(s,t)$ by Menger's theorem and 
an $st$-cut attaining $\lambda_G(s,t)$ is called an \emph{$st$-mincut}. 
For an $st$-cut $Z$, a vertex $v$ of $G$ said to be \emph{in the $s$-side of $Z$} if it is reachable from $s$ in $G-Z$, and 
\emph{in the $t$-side of $Z$} otherwise. 
We will need the following standard corollary of submodularity of cuts.
\begin{lemma}\label{lem:intersect-cuts}
Let $G$ be a directed graph, let $s,t \in V(G)$, and let $X_1,\ldots,X_\ell$
be a sequence of $st$-mincuts. For $i \in [\ell]$, let $A_i$ be the set of vertices
on the $s$-side of $X_i$. Then the following set is an $st$-mincut as well:
   \[ X := \{(u,v) \in E(G)~|~u \in \bigcap_{i \in [\ell]} A_i \wedge v \in V(G) \setminus \bigcap_{i \in [\ell]} A_i\}.\]
\end{lemma}

Two other types of $st$-cuts play an important role in this paper. An $st$-cut $Z$ 
is said to be \emph{minimal} if for every $e=(u,v)\in Z$, the set $Z-e$ is not an $st$-cut. Note that 
for any minimal $st$-cut $Z$ and $e\in Z$, the tail of $e$ is reachable from $s$ in $G-Z$, i.e. there is a directed 
path from $s$ to the tail of $e$ in $G-Z$, and $t$ is reachable from the head of $e$ in $G-Z.$ Conversely, if  
every edge $e$ of an $st$-cut $Z$ satisfies this property, namely the tail of $e$ is reachable from $s$ and $t$ is reachable from the head of $e$ in $G-Z$, then
$Z$ is a minimal $st$-cut. 

We say that an $st$-cut $Z$ is a \emph{star $st$-cut} if for every edge $e\in Z$, the tail of $e$ is reachable from $s$ in $G-Z$ while 
the head of $e$ is not reachable from $s$ in $G-Z.$ It is easy to see that by discarding arcs of $Z$ from whose head $t$ is not reachable in $G-Z,$ 
one can obtain a (unique) minimal $st$-cut of $G$ as a subset of $Z$. Specifically, the set $\corecutG{Z}{G}\subseteq Z$ of a star $st$-cut $Z$ 
is defined as the subset of arcs $(u,v)\in Z$ such that there is a directed path from $v$ to $t$ in $G-Z.$ Observe that $\corecutG{Z}{G}$ is a minimal $st$-cut of $G.$ 
We frequently omit the subscript $G$ if it is clear from context.

Consider an $st$-maxflow $\witnessflow$ in $G$ and a star $st$-cut $Z\subseteq E(G)$.  We say that 
$\witnessflow$ is a \emph{witnessing flow for $Z$} if $\corecut{Z}=E(\witnessflow)\cap Z$ and $\abs{Z\cap E(P_i)}=1$ for every $i\in [\lambda_G(s,t)]$. 
Notice that the two conditions combined imply that $\corecut{Z}$ is an $st$-mincut whenever there is a witnessing flow for $Z.$
However, in the other direction, even if $\corecut{Z}$ is an $st$-mincut for a star $st$-cut $Z$, a witnessing flow may not exist.
For example, consider the graph $G$ on $V(G)=\{s,a,b,c,t\}$ with the edge set 
$\{(s,a),(a,t), (s,b), (b,c), (c,t), (a,c)\}$ and $Z=\{(s,b),(a,t),(c,t)\}$.
Note that in this example $Z$ is a star $st$-cut, but not a minimal $st$-cut.

The \emph{flow-augmentation technique} proposed in~\cite{ufl-soda} is a machinery 
that delivers a set $A\subseteq V(G)\times V(G)$ of vertex pairs, seen as fresh crisp arcs, that does \emph{not disturb} $Z$ and the augmentation of $G$ with $A$ allows 
a witnessing flow for $Z.$ Formally, let $A\subseteq V(G)\times V(G)$ and we write $G+A$ to denote the graph obtained from $G$ by adding each $(u,v)\in A$ as a crisp arc 
with $u$ as the tail and $v$ as the head.
We say that $A$ is \emph{compatible} with a star $st$-cut $Z$ if every $v\in V(G)$ is in  the same side of $Z$ in both $G$ and $G+A.$ Notice that $A$ 
is  compatible with $Z$ if and only if there is no pair $(u,v)\in A$ such that $u$ is in the $s$-side of $Z$ and $v$ is in the $t$-side of $Z.$ It is easy to verify that, 
for any $A$ compatible with a star $st$-cut $Z$ of $G$, $Z$ remains a star $st$-cut in $G+A$ and $\corecutG{Z}{G}\subseteq \corecutG{Z}{G+A}.$

Trivially $A=\emptyset$ is compatible with $Z$, but this is not a useful ``augmentation.'' The flow-augmentation technique, presented in a black-box manner in the next two statements, 
provides a set $A$ that not only lifts the value $\lambda_{G+A}(s,t)$ to match  $\abs{\corecutG{Z}{G+A}}$ 
but further, $G+A$ now admits a witnessing flow for $Z.$ 
This technique is a fundamental engine behind the positive results obtained in this work. 
(The first, randomized statement is a rephrasing of Theorem~\ref{thm:dir-flow-augmentation-intro}
using the notation already introduced in this section.)

\begin{theorem}[Theorem~3.1 of~\cite{dfl-arxiv}]\label{thm:dir-flow-augmentation}
There exists a polynomial-time algorithm that, given a directed graph $G$, vertices $s,t \in V(G)$,
     and an integer $k$, returns a set $A \subseteq V(G) \times V(G)$
and an $st$-maxflow $\witnessflow$ in $G+A$ such that 
for every star $st$-cut $Z$ of size at most $k$,
with probability $2^{-\Oh(k^4 \log k)}$ $A$ is compatible with $Z$ and 
$\witnessflow$ is a witnessing flow for $Z$ in $G+A$.
\end{theorem}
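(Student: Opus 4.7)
The plan is to prove the statement by iteratively growing the current $st$-maxflow $\witnessflow$ until, with sufficiently high probability, the sought star cut $Z$ satisfies $|\corecutG{Z}{G+A}| = \lambda_{G+A}(s,t)$. The outer loop maintains a set $A$ of augmenting crisp arcs (initially empty) and a maxflow $\witnessflow$ in $G+A$. At each step, if $\lambda_{G+A}(s,t) \ge |Z|$, then a standard submodularity argument lets me select a maxflow whose paths each meet $Z$ in exactly one soft arc, producing a witnessing flow. Otherwise $|Z| > \lambda_{G+A}(s,t)$, and I need a randomized subroutine that returns a single new arc $(u,v)$ to add to $A$ satisfying (i)~$(u,v)$ is compatible with $Z$, and (ii)~$\lambda_{(G+A)+\{(u,v)\}}(s,t) > \lambda_{G+A}(s,t)$. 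After at most $k$ successful iterations the flow value catches up with $|Z|$; multiplying the per-iteration success probability $2^{-\mathrm{poly}(k)}$ over $\Oh(k)$ iterations gives the final $2^{-\Oh(k^4 \log k)}$ bound.

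The heart of the proof, and what I expect to be the main obstacle, is the sampling subroutine. To design it I would decorate each flow path $P_i \in \witnessflow$ with the order $<_i$ induced by attachment points $\pi_i(v)$ (as in the technical overview) and, on each $P_i$, partition its arcs into $\Oh(k)$ \emph{blocks} delimited by possible cut positions. When $|Z| > \lambda_{G+A}(s,t)$ a counting argument forces some $P_i$ to contain at least two arcs of $Z$, and a careful combinatorial analysis exploiting the star-cut property together with the vertex-disjointness of attachment paths (from the first observation recalled in the overview) exhibits a target pair $(u,v)$ whose addition yields a new $st$-path while keeping both endpoints on the same side of $Z$. I then guess $(u,v)$ by independently sampling $\Oh(k)$ discrete coordinates --- a flow-path index, a block index, and a vertex attached to that block --- and a union bound over the polynomially many correct choices yields a per-iteration success probability $2^{-\mathrm{poly}(k)}$. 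The main technical difficulty is controlling how the guarantees on $<_i$ interact across different flow paths so that the ``useful'' arc always exists in a sampleable location of size at most some function of $k$; this is where the $k^4 \log k$ exponent comes from, since roughly each of the $\Oh(k)$ iterations incurs an $\Oh(k^3 \log k)$-bit entropy cost to pin down the relevant coordinates.

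Finally, for the deterministic variant alluded to after the theorem, I would replace each uniform sample over a universe of size $n$ by enumeration over an appropriate splitter/universal family of size $(\log n)^{\Oh(k)}$, and take the product across the $\Oh(k)$ outer iterations; this produces the $2^{\Oh(k^4 \log k)}(\log n)^{\Oh(k^3)}$ candidate sets $A$. Correctness follows because for any fixed adversarial $Z$ at least one element of the product family happens to make every sampling decision correctly at every stage, so the subroutine's guarantee triggers on that candidate.
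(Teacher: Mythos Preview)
This theorem is quoted from \cite{dfl-stoc,dfl-arxiv} and is not proved in the present paper; it is used here purely as a black box. So there is no ``paper's own proof'' to compare against, only the original proof in the cited works. That said, your sketch does not reproduce that proof, and it contains a concrete error.

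The specific gap is the sentence ``When $|Z| > \lambda_{G+A}(s,t)$ a counting argument forces some $P_i$ to contain at least two arcs of $Z$.'' This is false. A star $st$-cut $Z$ with $|Z| > \lambda$ typically has its extra arcs \emph{off} every flow path: by definition the arcs in $Z \setminus \corecut{Z}$ have heads that do not reach $t$ in $G-Z$, so they need not lie on any $st$-path, let alone on a path of the current maxflow $\witnessflow$. Each $P_i$ may very well meet $Z$ in exactly one arc while the surplus $|Z|-\lambda$ arcs sit in branches dangling off the flow. Your sampling subroutine is built on this false premise, so the ``target pair $(u,v)$'' you describe need not exist in the location you look for it. You also appear to be borrowing the attachment-path and projection machinery from the technical overview of Section~1.3; note that those tools belong to the \gdpcshort{} algorithm of Section~\ref{sec:gaifman}, which \emph{uses} flow-augmentation as an oracle, not to the proof of flow-augmentation itself.

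More broadly, the actual argument in \cite{dfl-stoc,dfl-arxiv} does not proceed by adding one compatible arc per iteration until $\lambda$ catches up with $|Z|$. It is a recursive scheme that, given the current maxflow, partitions the graph into pieces along the flow structure, randomly colors or contracts parts of the instance, and recurses into subinstances with a carefully controlled potential; the $k^4 \log k$ exponent arises from the depth and branching of that recursion, not from ``$\Oh(k)$ iterations each with $\Oh(k^3 \log k)$ bits of entropy.'' If you want to reconstruct the proof, the right starting point is the decomposition of $G$ relative to a fixed maxflow into ``before the cut / after the cut / unreachable'' regions and the analysis of how an unknown star cut can interact with that decomposition; the cited papers develop this in detail.
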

\begin{theorem}[Theorem~3.2 of~\cite{dfl-arxiv}]\label{thm:dir-flow-augmentation-det}
There exists an algorithm that, given a directed graph $G$, vertices $s,t \in V(G)$,
     and an integer $k$, runs in $2^{\Oh(k^4 \log k)} n^{\Oh(1)}$ time
     and returns a set $\mathcal{F}$ of $2^{\Oh(k^4 \log k)} (\log n)^{\Oh(k^3)}$ pairs $(A,\witnessflow)$ where $A \subseteq V(G) \times V(G)$
and $\witnessflow$ is an $st$-maxflow in $G+A$ such that 
for every star $st$-cut $Z$ of size at most $k$,
 there exists $(A,\witnessflow) \in \mathcal{F}$ such that $A$ is compatible with $Z$ and $\witnessflow$ is a witnessing flow for $Z$ in $G+A$.
\end{theorem}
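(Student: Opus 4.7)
The plan is to derandomize the algorithm behind Theorem~\ref{thm:dir-flow-augmentation} by inspecting the random choices it makes and replacing each of them with enumeration over a deterministic family of bounded size. The randomized algorithm is most naturally viewed as a recursion in which each invocation makes $\Oh(1)$ random choices of specific types (typically: picking a small subset of flow paths or arcs, or picking a bounded bisection of a ground set), and the recursion has depth $\Oh(k^3)$ with each level contributing a success probability of $2^{-\Oh(k \log k)}$ to the overall $2^{-\Oh(k^4 \log k)}$ bound.

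Each such random choice can be rephrased as selecting, for some $\ell = \Oh(k)$, an $\ell$-sized ``signature'' from a ground set of size $m \le n^{\Oh(1)}$ so that a prescribed bounded-size ``target'' is separated/hit/partitioned in the desired way. The key derandomization tool is the construction of $(m,\ell)$-perfect hash families by Naor, Schulman and Srinivasan, and the related splitters of Alon, Yuster and Zwick, both of which produce deterministic families of size $2^{\Oh(\ell \log \ell)} \log m = 2^{\Oh(k \log k)} \log n$ in polynomial time per entry. Any single random choice of the algorithm is replaced by enumeration over such a family, and a branch is kept for every choice.

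Third, we iterate this replacement at every node of the recursion tree. The tree has depth $\Oh(k^3)$ and fans out by a factor of $2^{\Oh(k \log k)} \log n$ at each node, so the total number of leaves is $2^{\Oh(k^4 \log k)} (\log n)^{\Oh(k^3)}$; each leaf outputs one pair $(A,\witnessflow)$. By the correctness of the randomized algorithm of Theorem~\ref{thm:dir-flow-augmentation}, for every star $st$-cut $Z$ of size at most $k$ there is a sequence of ``good'' random choices that, with the stated probability, produces a compatible $A$ and a witnessing flow $\witnessflow$; under the derandomization, that same sequence of signatures appears in the enumerated family, so the corresponding leaf yields the desired pair. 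Running time is dominated by the tree size together with polynomial-time work per node, giving $2^{\Oh(k^4 \log k)} n^{\Oh(1)}$.

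The main obstacle will be the reformulation step: one must verify, for each kind of random choice in the underlying recursive procedure, that the ``success event'' depends only on a signature of size $\Oh(k)$, so that the Naor--Schulman--Srinivasan / splitter machinery applies with a $\log n$ (rather than $n^{\Oh(1)}$) blow-up. For steps that pick an arc from a large flow path this requires showing that only a bounded number of ``distinguished'' arcs per flow path are relevant, which is argued from the structure of the partial cut already fixed by earlier branching; once this is established, the remainder of the derandomization is mechanical.
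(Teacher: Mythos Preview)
The paper does not prove this theorem: it is quoted verbatim from~\cite{dfl-arxiv} (the first paper in the flow-augmentation series) and used here as a black box. There is therefore no ``paper's own proof'' to compare your proposal against in the present manuscript.

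Your sketch is the standard derandomization template (replace each random choice by enumeration over a splitter / perfect hash family of Naor--Schulman--Srinivasan, multiply the branching factors along a recursion of bounded depth), and this is indeed the mechanism that produces the $2^{\Oh(k^4 \log k)}(\log n)^{\Oh(k^3)}$ bound stated in the theorem. However, the substantive content --- that the randomized flow-augmentation procedure of~\cite{dfl-stoc,dfl-arxiv} really decomposes into $\Oh(k^3)$ recursion levels, each making choices whose success depends on an $\Oh(k)$-sized signature --- lives entirely inside~\cite{dfl-arxiv} and cannot be verified from the present paper. Your final paragraph correctly identifies this as the crux, but you should be aware that you are not filling a gap in this paper: you are re-deriving (at a sketch level) a result that the authors explicitly import.
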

We recall that the $(\log n)^{\Oh(k^3)}$ factor of Theorem~\ref{thm:dir-flow-augmentation-det}
will not create any troubles for the running time analysis of the algorithms, as for any function $f$ we have
\[ (\log n)^{f(k)} = 2^{f(k) \log \log n} \leq 2^{(f(k))^2 + (\log \log n)^2} = 2^{(f(k))^2} n^{o(1)}. \]

\subsection[$2K_2$-freeness]{$\boldsymbol{2K_2}$-freeness}

For a graph $G$ and two vertices $u,v\in V(G)$, the operation of \emph{identification (or contraction) of $u$ and $v$} replaces $u$ and $v$ with a new vertex adjacent to $(N_G(u) \cup N_G(v)) \setminus \{u,v\}$. 

A graph $G$ is \emph{$2K_2$-free} if it does not contain $2K_2$ as an induced subgraph,
that is, it does not contain two edges $e$ and $e'$ with distinct endpoints and has no other edge 
with one endpoint in $e$ and one endpoint in $e'$.
A directed graph $G$ is $2K_2$-free if and only if its underlying undirected graph is $2K_2$-free.
An important property of $2K_2$-free graphs is that they are closed not only under
vertex deletion, but also under vertex identification (cf. Lemma~6.1 of~\cite{ufl-arxiv}).

\begin{lemma}\label{lem:2k2-free-contraction}
For every positive integer $r$, the class of $rK_2$-free graphs is closed under vertex
identification.
\end{lemma}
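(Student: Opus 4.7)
The plan is to prove the contrapositive: if the identified graph $G'$ contains an induced $rK_2$, then so does $G$. Let $u,v$ be the vertices of $G$ that are identified, and let $w$ denote the resulting vertex in $G'$. Suppose for contradiction that $G'$ contains an induced $rK_2$ on edges $\{a_i,b_i\}$ for $i \in [r]$, where the $2r$ vertices are pairwise distinct and no other edges among them exist in $G'$.

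First I would dispose of the easy case: if $w \notin \{a_1,b_1,\ldots,a_r,b_r\}$, then all $2r$ vertices survive identically in $G$, and both edges and non-edges among them in $G'$ coincide with those in $G$ (because the incidence relations of vertices distinct from $w$ are unchanged by the identification). Hence the same vertex set induces $rK_2$ in $G$, contradicting $rK_2$-freeness.

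Otherwise, without loss of generality $w = a_1$. By definition of identification, the edge $\{w,b_1\}$ of $G'$ comes from an edge incident to $u$ or $v$ in $G$; relabel so that $\{u,b_1\} \in E(G)$. Define $H' := (\{a_1,b_1,\ldots,a_r,b_r\} \setminus \{w\}) \cup \{u\}$, with candidate edges $\{u,b_1\}$ and $\{a_i,b_i\}$ for $i \geq 2$. The main verification is that $H'$ induces $rK_2$ in $G$, and the key point that needs checking carefully is the absence of additional edges. Edges between vertices in $\{b_1,a_2,b_2,\ldots,a_r,b_r\}$ behave the same in $G$ and $G'$ (none involve $w$), so non-edges are preserved. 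Edges between $u$ and some $x \in \{a_2,b_2,\ldots,a_r,b_r\}$ would survive identification as an edge $\{w,x\}$ in $G'$, which contradicts that $w = a_1$ is non-adjacent to $x$ in the assumed induced $rK_2$. Thus $H'$ is indeed an induced $rK_2$ in $G$, giving the required contradiction.

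There is no serious technical obstacle; the only subtlety is that each "non-edge" in $G'$ involving $w$ must be translated into two non-edges in $G$ (one for $u$ and one for $v$), which is precisely what the definition of identification gives us: since $N_{G'}(w) = (N_G(u) \cup N_G(v)) \setminus \{u,v\}$, a vertex $x$ non-adjacent to $w$ in $G'$ is non-adjacent to both $u$ and $v$ in $G$. This is why replacing $w$ by either $u$ or $v$ (whichever retains the one required edge to $b_1$) safely reconstructs an induced $rK_2$ in $G$.
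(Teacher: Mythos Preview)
Your proof is correct. The paper does not actually provide its own proof of this lemma; it merely states the result and refers the reader to Lemma~6.1 of the companion paper~\cite{ufl-arxiv}. Your contrapositive argument is the standard one: lifting an induced $rK_2$ from $G'$ back to $G$ by replacing the merged vertex $w$ with whichever of $u,v$ supplies the needed edge to $b_1$, and using $N_{G'}(w)=(N_G(u)\cup N_G(v))\setminus\{u,v\}$ to certify all required non-edges. There is nothing missing.
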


\subsection{Constraint Satisfaction Problems}

We recall the basic definitions of CSPs from the introduction.
A constraint language $\Gamma$ over a domain $D$ is a set of
finite-arity relations $R \subseteq D^{r(R)}$ (where $r(R)$ is the arity of
$R$). A \emph{constraint} over a constraint language $\Gamma$ 
is formally a pair $(X,R)$, where $R \in \Gamma$ is a relation from the language,
say of arity $r$, and $X=(x_1,\ldots,x_r)$ is an $r$-tuple of variables 
called the \emph{scope} of the constraint. We typically write our
constraints as $R(X)$ instead of $(X,R)$, or $R(x_1,\ldots,x_r)$ when
the individual participating variables $x_i$ need to be highlighted.
Let $\alpha \colon X \to D$ be an assignment. Then $\alpha$
\emph{satisfies} the constraint $R(X)$ if $(\alpha(x_1),\ldots,\alpha(x_r)) \in R$,
and we say that $\alpha$ \emph{violates} the constraint otherwise.
A \emph{formula over $\Gamma$} is then a conjunction of constraints
over $\Gamma$, and the problem \CSP{\Gamma} is to decide, given a
formula $\F$ over $\Gamma$, whether $\F$ is satisfiable, i.e., if
there is an assignment to all variables that satisfies all constraints of $\F$.
The \emph{cost} of an assignment is the number of violated constraints.
If additionally a \CSP{\Gamma} formula $\F$ is equipped with a weight
function $\weight$ that assigns to every constraint $(X,R)$ a weight $\weight((X,R)) \in \mathbb{Z}_+$, then the \emph{weight} of an assignment is the total weight of all violated
constraints.

A constraint language is \emph{Boolean} if its domain is $D = \{0,1\}$. 
For a Boolean constraint language $\Gamma$, the \MinSAT{\Gamma} problem asks for an assignment of a given formula $\cF$ over $\Gamma$
 that violates at most $k$ constraints, where $k$ is given on input. We are considering the parameterized complexity of \MinSAT{\Gamma}
parameterized by $k$.
In the weighted variant, \WeightedMinSAT{\Gamma},
every constraint has a positive integer weight and we ask for an assignment
of cost at most $k$ and weight at most $W$, where $W \in \mathbb{Z}_+$ is also on the input.

Let us formally define the problem \WeightedMinSAT{\Gamma} for a Boolean constraint language $\Gamma$.
A set of constraints $Z\subseteq \cF$ is a \emph{constraint deletion set}, or simply a \emph{deletion set} if $\cF - Z$ is satisfiable.

\begin{quote}
  \textbf{Problem:} \WeightedMinSAT{\Gamma},
  \\ \noindent \textbf{Input:} A formula $\cF$ over $\Gamma$, a weight function $\weight \colon \cF \to \mathbb{Z}_+$, and integers $k$ and $W$.
  \\ \noindent \textbf{Parameter:} $k$.
  \\ \noindent \textbf{Question:} Is there a constraint deletion set $Z\subseteq \cF$ with $\abs{Z}\leq k$ and $\weight(Z)\leq W$?
  \end{quote}

  It is often convenient to annotate some constraints as not possible to violate.
This can be modelled in \MinSAT{\Gamma} as having $k+1$ copies of the same constraint
or in \WeightedMinSAT{\Gamma} as giving the constraint a prohibitively large weight.
Hence, in what follows we allow the instances to have \emph{crisp} constraints;
an assignment is forbidden to violate a crisp constraint. A constraint that is
not crisp is \emph{soft}.

The next two definitions capture the classes of languages $\Gamma$ that cover our two main
tractable cases.

\begin{definition}
 The \emph{Gaifman graph} of a Boolean relation $R$ of arity $r$ is the undirected graph $G_R$ with vertex set $V(G_R)=[r]$ and an edge $\{i,j\}$ for all $1\leq i<j\leq r$ for which there exist $a_i,a_j\in\{0,1\}$ such that $R$ contains no tuple $(t_1,\ldots,t_r)$ with $t_i=a_i$ and $t_j=a_j$. (In other words, there is an edge $\{i,j\}$ in $G_R$ if and only if constraints $R(x_1,\ldots,x_r)$ have no satisfying assignment $\alpha $ with $\alpha(x_i)=a_i$ and $\alpha(x_j)=a_j$ for some $a_i,a_j\in\{0,1\}$.)
 \end{definition}
 For each integer $\maxarity\geq 2$, the language $\idtwopositive$ consists of all Boolean relations $R$ of arity at most $b$ that are equivalent to a conjunction of two-ary clauses and assignments, and whose Gaifman graph $G_R$ is $2K_2$-free.

%

\begin{definition}
 The \emph{arrow graph} of a Boolean relation $R$ of arity $r$ is the directed graph $H_R$ with vertex set $V(H_R)=[r]$ and an edge $(i,j)$ for all $1\leq i,j\leq r$ with $i\neq j$ such that $R$ contains no tuple $(t_1,\ldots,t_r)$ with $t_i=1$ and $t_j=0$ but does contain such tuples with $t_i=t_j=0$ respectively $t_i=t_j=1$. (In other words, there is an edge $(i,j)$ in $H_R$ if and only if constraints $R(x_1,\ldots,x_r)$ have no satisfying assignment $\alpha$ with $\alpha(x_i)=1$ and $\alpha(x_j)=0$ but do have satisfying assignments $\alpha$ with $\alpha(x_i)=\alpha(x_j)=0$ respectively $\alpha(x_i)=\alpha(x_j)=1$.)
 
\end{definition}
 For each integer $\maxarity\geq 2$, the language $\isdpositive$ consists of all Boolean relations $R$ of arity at most $b$ that are equivalent to a conjunction of implications, negative clauses (of arity up to $\maxarity$), and assignments, and whose arrow graph $H_R$ is $2K_2$-free, i.e., the underlying undirected graph is $2K_2$-free.

%

\section{Bijunctive positive case}
\label{sec:gaifman}

Recall that the constraint language $\idtwopositive$ consists of all Boolean relations $R$ of arity at most $b$ such that the Gaifman graph $G_R$ is $2K_2$-free and $R$ can be expressed as a conjunction of 2-clauses, i.e. clauses on two variables, and  the unary relations $\{(x=1),(x=0)\}$. In other words, each constraint $R(x_1,\ldots ,x_r)$ of a formula $\cF$ over $\idtwopositive$ can be written as a 2-CNF formula $\phi_R$ on the 
variables $\{x_1,\ldots, x_r\}$.
This section is devoted to establishing our first new tractable case, captured by the following theorem.

\begin{theorem}\label{thm:alg-ID2}
For every integer $\maxarity \geq 2$, there exists
\begin{itemize}
\item a randomized polynomial-time algorithm for \WeightedMinSAT{\idtwopositive}
that never accepts a \no-instance and accepts a \yes-instance with probability $2^{-\mathrm{poly}(k,b)};$
\item a deterministic algorithm solving \WeightedMinSAT{\idtwopositive} in time $2^{\mathrm{poly}(k,b)} n^{\Oh(1)}$.
\end{itemize}
\end{theorem}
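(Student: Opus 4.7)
The plan is to first recast \WeightedMinSAT{\idtwopositive} as an instance of \gdpcfull. Since every relation $R \in \idtwopositive$ is bijunctive, each constraint decomposes into 2-clauses and assignments; via iterative compression followed by variable renaming one may assume that no positive 2-clause survives, so that each constraint becomes a conjunction of implications, negative 2-clauses and assignments, grouped into bundles. The resulting multigraph $G$ with distinguished $s,t$ encodes implications $(u\to v)$ as arcs $(u,v)$, negative 2-clauses as size-$2$ clauses in $\pairs$, and assignments as arcs incident to $s$ or $t$; $2K_2$-freeness of each Gaifman graph $G_R$ carries through to each bundle. After this reduction, I would invoke \cref{thm:dir-flow-augmentation} (or its derandomised version \cref{thm:dir-flow-augmentation-det}) to obtain, with probability $2^{-\Oh(k^4\log k)}$, an augmentation $A$ compatible with the unknown solution $Z$ together with a witnessing flow $\witnessflow$ in $G+A$.

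Next I would execute Stage~1, whose aim is to promote $Z$ from a mere star $st$-cut to an actual $st$-mincut. Write $Z=\corecut{Z}\cup (Z\setminus\corecut{Z})$; the extra arcs in $Z\setminus\corecut{Z}$ exist only to separate active endpoints of violated clauses from $s$. If I could identify an active vertex $v$ and attach a path from $v$ to $t$ inside the current $t$-side, the corresponding extra arcs would migrate into the core of the new instance. Since direct sampling of an active vertex is too lossy, I would instead sample a \emph{monotone} sequence $u_1,\ldots,u_\ell$ that contains a unique active $u_a$, with the prefix lying in the $s$-side of $Z$ and the suffix in the $t$-side. The engine for this is the projection $\pi_i(v)$ of a vertex onto the flow path $P_i\in\witnessflow$: attachment paths whose projections are distinct are vertex-disjoint, and all but at most $\karc-\lambda_G(s,t)$ of the prefix vertices of an ordered sequence must lie in the $s$-side. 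Choosing a pair $(P_i,P_j)$ and a minimal antichain of clauses projecting onto them yields the desired monotone sequence with probability $2^{-\mathrm{poly}(k,\maxarity)}$; after repeating this at most $\Oh(k)$ times, adding each sampled path as a high-weight singleton bundle, and increasing the budgets accordingly, the resulting instance admits an optimum solution that is an $st$-mincut.

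Once $Z$ is an $st$-mincut, the plan of Stage~2 is to branch into $f(k,\maxarity)$ subinstances so that each is either trivially \no or \emph{ultimately bipartite}, meaning that $V(G)\setminus\{s,t\}=V_0\uplus V_1$, each flow path and each arc lies on one side, and every clause of $\pairs$ has one endpoint in each side. The branching uses colour-coding of which flow paths go to which side together with combinatorial case analysis on the remaining (non-flow) arcs, and I expect the $2K_2$-freeness of each bundle to be the decisive structural property that forces such a partition to be reachable after a bounded amount of branching. On an ultimately bipartite instance, I would then reverse the orientation of $G[V_1\cup\{s,t\}]$ and swap the roles of $s$ and $t$ on that side; a clause $\{u,v\}$ with $u\in V_0,\ v\in V_1$ becomes an arc $(u,v)$, because negating exactly one literal of a negative 2-clause produces an implication. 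This step eliminates $\pairs$ entirely while preserving the set of violated bundles, and, by \cref{lem:2k2-free-contraction} together with the bijunctive representation, preserves $2K_2$-freeness inside each bundle.

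Finally, Stage~3 solves the clause-free, bundled, $2K_2$-free instance. The $2K_2$-freeness inside each bundle, combined with $Z$ being an $st$-mincut and $\witnessflow$ a witnessing flow, forces the edges of each bundle appearing in $Z$ to occur in a predictable order along the flow paths; after a colour-coding and cleaning pass the task reduces to \textsc{Weighted Minimum $st$-Cut} on an auxiliary weighted digraph of prescribed cardinality $\lambda$, which is polynomial-time solvable. I expect Stage~1 to be the main obstacle: the combinatorial bookkeeping of projections, antichains and the $2^{-\mathrm{poly}(k,\maxarity)}$-success sampling of monotone sequences has to be carried out while the bundled and $2K_2$-free structure is preserved under the successive augmentations, and this is where the technical weight of the proof sits. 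The deterministic $2^{\mathrm{poly}(k,\maxarity)}n^{\Oh(1)}$ bound then follows by replacing every random choice with branching of bounded degree and invoking \cref{thm:dir-flow-augmentation-det} in place of \cref{thm:dir-flow-augmentation}, using the estimate $(\log n)^{f(k)}\le 2^{f(k)^2}n^{o(1)}$ noted after the latter.
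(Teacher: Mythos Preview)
Your three-stage outline matches the paper's architecture essentially verbatim: reduce to a $2K_2$-free $\maxarity$-bounded \gdpcshort{} instance (Lemma~\ref{lem:compgdpc}), promote the sought solution to an $st$-mincut via the projection/antichain sampling of monotone sequences (Section~\ref{subsec:2mincut}), reach an ultimately bipartite instance and eliminate all clauses by the reversal trick (Section~\ref{ss:ID2-mincut}), and finish with the clause-free bundled cut (Section~\ref{ss:ID2-no-clauses}).

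Two points where the paper's execution differs from your sketch are worth flagging. First, in Stage~2 you propose to colour-code directly which flow paths go to each side of the bipartition and then clean up by ``combinatorial case analysis''. The paper instead builds an auxiliary $\mathbb{Z}_2$-labelled mixed graph $H_\inst$ on $[\lambda]$ (an arc of label $0$ whenever $E_{i,j}\neq\emptyset$, an edge of label $1$ whenever $\pairs_{i,j}\neq\emptyset$) and branches to destroy every \emph{non-zero cycle} of $H_\inst$; only once $H_\inst$ is balanced does a consistent $\{0,1\}$-labelling, and hence the bipartition $V_0\uplus V_1$, fall out automatically (Lemmas~\ref{lem:01classification}--\ref{lem:01split}). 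The branching on a non-zero cycle (Section~\ref{subsub:nonzero}) is the technical core you are hand-waving over, and it is where $2K_2$-freeness is \emph{not} yet used; it is a pure flow-path/projection argument. You also identify Stage~1 as the main obstacle, but Stage~2 is comparably delicate. Second, for Stage~3 the paper does not re-derive a reduction to plain weighted $st$-mincut; it observes that a clause-free $2K_2$-free bundle has \emph{pairwise linked deletable arcs} and invokes the black box Theorem~\ref{thm:ID2-no-clauses} from~\cite{dfl-arxiv}. Finally, a minor bootstrap detail: the initial deletion set $Y$ fed to Lemma~\ref{lem:compgdpc} is obtained not by iterative compression but by an $\Oh(\sqrt{\log \mathrm{OPT}})$-approximation (or FPT algorithm) for \textsc{Almost 2-SAT} applied to the $2$-CNF expansion of $\F$.
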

Throughout the whole section, the main narrative is in the first, more natural, randomized setting. 
The deterministic counterpart follows via straightforward modification: replace every random choice step by branching and every call to Theorem~\ref{thm:dir-flow-augmentation} with a call to Theorem~\ref{thm:dir-flow-augmentation-det}.

\subsection{Reducing to a graph problem}\label{ss:id2-to-graph}
The first step in the proof of Theorem~\ref{thm:alg-ID2} is to reduce to a cut problem in digraphs:
a variant of \textsc{Generalized Bundled Cut} with a few extra properties,
  which we call \gdpcfull{} (in short \gdpcshort).

An instance of \gdpcfull{} consists of 
\begin{itemize}
\item a directed multigraph $G$ with two distinguished vertices $s$ and $t$, 
\item a multiset $\pairs \subseteq V(G) \times V(G)$ of unordered vertex pairs called the \emph{clauses},
\item a collection $\bundles$ of pairwise disjoint subsets, called \emph{bundles}, of $E(G)\cup \pairs$, 
\item a weight function $\weight \colon \bundles \to \mathbb{Z}_+$, and
\item integers $k$ and $W$.
\end{itemize}

We remark that the disjointness of bundles can be always assumed even if it is not part of the problem description. Indeed, 
if the same copy of an arc $(u,v)$ or a clause $\{u,v\}$ appears in multiple bundles, then treating them as distinct copies does not change the solution. 
This is because if the original copy is satisfied (resp. violated) by $Z$ before duplication, then after duplication all the copies in distinct bundles will be satisfied (resp. must be violated) by $Z'$ 
where all the copies of violated arcs of $Z$ are included, and vice versa. 
Therefore, we may always assume that the instance $\inst$ keeps multiple copies of the same arc or clause to maintain the disjointness and it is convenient to add 
this assumption as part of the problem definition.
Note also that if a bundle contains multiple copies of the same arc or the same clause then we can remove all copies except for one. 

For a bundle $B$, we define a graph $G_B$ as the (undirected, simple) graph on the vertex set $V(B)\setminus \{s,t\}$ 
whose edge set consists of all pairs $uv$, with $u,v \in V(G) \setminus \{s,t\}$ such that $B$ contains an arc $(u,v)$, an arc $(v,u)$, or a clause $\{u,v\}$.
An instance $\inst$ is said to be \emph{$2K_2$-free} if $G_B$ does not contain $2K_2$ as an induced subgraph for every bundle $B$. 
Similarly, we say that $\inst$ is \emph{$\maxarity$-bounded} for an integer $\maxarity$ if $\abs{V(B)}\leq \maxarity$ for every bundle $B$.
A (copy of an) arc or a clause $p$ is \emph{soft} if $p \in \bigcup_{B \in \bundles}B$, i.e., if it is contained in some bundle $B\in\bundles$, and \emph{crisp} otherwise.
 
An $st$-cut $Z\subseteq E(G)$ of $G$ is said to \emph{violate} an arc $(u,v) \in E(G)$ if $Z$ contains $(u,v)$, and to \emph{satisfy} $(u,v)$ otherwise. 
Moreover, $Z$ \emph{violates} a clause $p\in \pairs$ if both endpoints of $p$ are in the $s$-side of $Z$, and \emph{satisfies} $p$ otherwise. A bundle $B\in \bundles$ 
is said to be \emph{violated} by an $st$-cut $Z$ if $Z$ violates at least one member of $B,$ otherwise it is satisfied by $Z$. 
A set $Z\subseteq E(G)$ is a \emph{solution} to $\inst$ if it is an $st$-cut that satisfies all crisp arcs and clauses
and violates at most $k$ bundles. The \emph{weight} $\weight(Z)$ of a solution $Z$ is the sum of the weights of bundles violated by $Z$.

\begin{quote}
  \textbf{Problem:} \gdpcfull\ (\gdpcshort\ in short)
  \\ \noindent \textbf{Input:} A (directed) graph $G$ with two vertices $s,t$, a multiset $\pairs \subseteq \binom{V(G)}{2}$ of unordered vertex pairs, 
  a collection $\bundles$ of pairwise disjoint subsets of $E(G)\cup \pairs$, a weight function $\weight \colon \bundles \to \mathbb{Z}_+$, integers $k$ and $W$,
  \\ \noindent \textbf{Parameter:} $k$
  \\ \noindent \textbf{Question:} Is there a solution $Z$ with $\weight(Z)\leq W$? 
  \end{quote}

We point out that the \textsc{Digraph Pair Cut} problem, introduced and shown to be fixed-parameter tractable in~\cite{KratschW20},
is a special case of \gdpcshort\ in which bundles are the singletons corresponding to the arcs of $G$.
The following lemma (being a precise statement of Lemma~\ref{lem:intro:compgdpc}) 
  shows how to reduce \WeightedMinSAT{\idtwopositive} equipped with a constraint deletion set to \gdpcshort.

\begin{lemma}\label{lem:compgdpc}
There is a randomized polynomial-time algorithm that, 
      given on input an instance $\inst=(\cF,\weight,k,W)$ to \WeightedMinSAT{\idtwopositive}
together with a constraint deletion set $Y$ for $\cF$, outputs an instance $\inst'=(G,s,t,\pairs,\bundles,\weight',k',W')$ to \gdpcfull{} with $k' \leq k$, $W' \leq W$
that is $2K_2$-free and $\maxarity$-bounded and 
such that 
\begin{itemize}
\item if $\inst'$ is a \yes-instance, then $\inst$ is a \yes-instance, and
\item if $\inst$ is a \yes-instance, then $\inst'$ is a \yes-instance with probability at least $2^{-\maxarity|Y|}$.
\end{itemize}
Furthermore, there exists a deterministic counterpart of the procedure that, with the same input, runs in time $2^{\maxarity|Y|} \cdot \mathrm{poly}(|\inst|)$
and outputs at most $2^{\maxarity|Y|}$ instances to \gdpcfull{} as above, such that the input instance is a \yes-instance if and only if one of the output instances is a \yes-instance.
\end{lemma}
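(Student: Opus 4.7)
The reduction encodes each soft constraint $C \in \cF$ as a single bundle $B_C$ in the output \gdpcshort{} instance, with one bundle member per atom of a 2-CNF decomposition of $\tilde C$: an implication $(x \to y)$ becomes a soft arc $(x,y)$, a negative 2-clause $(\neg x \lor \neg y)$ becomes a soft clause $\{x,y\}$, and a unary assignment $(x=1)$ or $(x=0)$ becomes a soft arc $(s,x)$ or $(x,t)$, respectively. The only 2-CNF atom without a clean encoding is the positive 2-clause $(u \lor v)$, and the entire construction is designed to avoid these. The bundles will be pairwise disjoint by creating fresh copies for each $C$, using that $E(G)$ is a multigraph and $\pairs$ a multiset.

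Compute $\beta$, any satisfying assignment of $\cF - Y$, in polynomial time since $\cF - Y$ is a 2-SAT formula. Rename every variable $v$ to $\tilde v := v \oplus \beta(v)$, so that $\beta$ becomes identically zero in the renamed variables. For each $C \in \cF - Y$, the constraint $\tilde C$ is satisfied by the all-zero assignment, and therefore cannot imply any positive 1-clause or positive 2-clause; hence the canonical 2-CNF representation of $\tilde C$ consists only of negative 1-clauses, implications and negative 2-clauses, each encodable as above. Load $B_C$ with fresh copies of \emph{all} such implied atoms, so that $G_{B_C}$ equals $G_{\tilde C}$ (rather than only a subgraph, which would not in general preserve $2K_2$-freeness). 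Since literal flipping preserves the Gaifman graph, $G_{\tilde C} = G_C$ remains $2K_2$-free; also $|V(B_C) \setminus \{s,t\}| \leq |\text{var}(C)| \leq \maxarity$, so the constructed instance is $2K_2$-free and $\maxarity$-bounded. Put $\weight'(B_C) := \weight(C)$.

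To cope with $Y$, randomly guess $\gamma : V_Y \to \{0,1\}$ uniformly; since $|V_Y| \leq \maxarity|Y|$, this agrees with the restriction (in renamed variables) of a target optimal assignment with probability at least $2^{-\maxarity|Y|}$. Install $\gamma$ by adding a crisp arc $(s, \tilde v)$ whenever $\gamma(\tilde v) = 1$ and a crisp arc $(\tilde v, t)$ whenever $\gamma(\tilde v) = 0$; together these force the side of every $\tilde v \in V_Y$ in any \gdpcshort-solution, and no crisp $st$-path is created because $\gamma$ is single-valued. Because $\text{var}(C) \subseteq V_Y$ for every $C \in Y$, each such $C$ is now fully evaluated: let $k_Y$ and $c_Y$ be the number and total weight of the $C \in Y$ violated by $\gamma$, set $k' := k - k_Y$ and $W' := W - c_Y$, and reject this branch if either is negative; no bundle is introduced for a constraint of $Y$. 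The deterministic counterpart simply enumerates all $2^{|V_Y|} \leq 2^{\maxarity|Y|}$ guesses and outputs the corresponding list of instances.

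Correctness in both directions rests on the dictionary ``$\tilde v$ is reachable from $s$ in $G - Z'$'' $\Leftrightarrow$ ``$\alpha(\tilde v) = 1$''. Forward: a solution $Z'$ to $\inst'$ yields an assignment $\alpha$ of $\cF$ extending $\gamma$, and the atom-wise semantics of arcs and clauses match the bijunctive 2-CNF semantics of each $C$, so $\alpha$ violates at most $k_Y + k' = k$ constraints of total weight at most $c_Y + W' = W$. Backward: if $\alpha$ is an optimum solution of $\inst$ and $\gamma = \alpha|_{V_Y}$, then $Z' := \{(u,v) \in E(G) \text{ soft} : \alpha(\tilde u) = 1,\ \alpha(\tilde v) = 0\}$ is an $st$-cut compatible with all crisp arcs and violating exactly the bundles $B_C$ for which $\alpha$ violates $C$, hence a valid \gdpcshort-solution within the budgets $k',W'$. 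The main technical obstacle is precisely the handling of positive 2-clauses; renaming by $\beta$ exterminates them in $\cF - Y$, while the $V_Y$-guess reduces them in $Y$ to constants, and the $2^{-\maxarity|Y|}$ probability (respectively branch factor) is the inherent cost of guessing the target on $V_Y$.
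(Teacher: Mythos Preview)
Your proposal is correct and follows essentially the same approach as the paper's proof: compute a satisfying assignment $\beta$ of $\cF-Y$, flip variables so that the all-zero assignment satisfies $\cF-Y$ (eliminating positive 2-clauses from the canonical 2-CNF of every surviving constraint), guess the restriction of an optimal assignment to the at most $\maxarity|Y|$ variables of $Y$, install this guess via crisp arcs to $s$ or $t$, adjust the budgets, and encode each remaining constraint as a bundle whose members are all implied 0-assignments, implications, and negative 2-clauses.

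Two small differences worth noting. First, the paper deletes the larger set $Y_0\cup Y_1$ (all constraints fully determined by the guess $\gamma$), whereas you only omit bundles for $Y$ itself; both are correct, yours is simply slightly less aggressive. Second, your $2K_2$-freeness argument asserts $G_{B_C}=G_{\tilde C}$ directly, but when a constraint repeats a variable the bundle graph is really the Gaifman graph of the \emph{identified} relation, which is not literally $G_R$; the paper handles this by observing that $G_B$ arises from $G_R$ by vertex identifications (and deletions) and invoking Lemma~\ref{lem:2k2-free-contraction}. Your parenthetical remark about loading \emph{all} implied atoms is exactly the right instinct here, but you should make the appeal to closure under identification explicit rather than claiming equality.
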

\begin{proof}
We start with the following preprocessing on $\inst$.
\begin{claim}\label{cl:preprocess-all-0}
Given $\inst$ and $Y$ as in the lemma statement, one
can in polynomial time compute an equivalent \WeightedMinSAT{\idtwopositive}
instance $\inst_0$ together with a constraint deletion set $Y_0$ of size $|Y|$
such that the assignment 
that assigns $0$ to every variable satisfies all constaints of $\inst_0$ outside $Y_0$.
\end{claim}
\begin{proofofclaim}
We find an assignment $\alpha$ that satisfies all constraints of $\inst$ except for $Y$; note that this is possible
in polynomial time as every constraint in $\inst$ can be expressed as a conjunction of 2-CNF clauses and assignments, and hence it boils down to solving a 2-CNF SAT instance.
Observe that the set $\idtwopositive$ is closed under negating variables:
Given an $r$-ary relation $R \in \idtwopositive$ and an index $1 \leq i \leq r$,
there is a relation $R' \in \idtwopositive$ that is $R$ with ``negated $i$-th variable'', i.e., $(a_1,\ldots,a_r) \in R \Leftrightarrow (a_1,\ldots,a_{i-1},1-a_i, a_{i+1},\ldots,a_r) \in R'$. 
Indeed, if $\phi_R$ is a definition of $R$ as a conjunction of assignments and binary clauses, one can construct a formula $\phi_{R'}$ witnessing $R' \in \idtwopositive$
by negating all occurences of the $i$-th coordinate of $R$ (i.e., negating a literal in a binary clause and swapping assingment $x=0$ with $x=1$).
Hence, we can construct the instance $\inst_0$ from $\inst$ by replacing every variable $x$ that is assigned to $1$ by $\alpha$ by its negation.
That is, for every variable $x$ with $\alpha(x) = 1$, we replace every constraint involving $x$ with its version with negated $x$.
After this operation, an assignment that assigns $0$ to every variable satisfies all constraints outside $Y$. We return the modified instance and the modified set $Y$ as $\inst_0$ and $Y_0$.
\end{proofofclaim}

By somewhat abusing the notation, we denote the instance and constraint deletion set
produced by Claim~\ref{cl:preprocess-all-0}
again by $\inst = (\cF, \weight,k, W)$ and $Y$. That is,
in what follows we assume that an assignment $\alpha$ that
sets $\alpha(x) = 0$ for each variable $x$ satisfies all constraints of $\inst$ outside $Y$.

Let $X$ be the set of variables that appear in constraints of $Y$; note that $|X| \leq \maxarity|Y|$. 
We randomly sample an assignment $\beta \colon X \to \{0,1\}$, aiming for the following: If there is an assignment $\beta'$ of all variables of $\inst$ that violates at most $k$ constraints of $\inst$
of total weight at most $W$, then there is one that extends $\beta$. Note that if $\inst$ is a \yes-instance, then we are successful with probability at least $2^{-|X|} \geq 2^{-\maxarity|Y|}$. 
(In the deterministic counterpart, this step is replaced with branching.)
Let $Y_0$ be the set of constraints of $\cF$ that are violated by all extensions of $\beta$ and let $Y_1$ be the set of all constraints that are satisfied by all extensions of $\beta$. (Note that both sets can be efficiently computed because each constraint over $\idtwopositive$ is equivalent to a 2-SAT formula; furthermore, this formula can be hardcoded in the algorithm or computed by brute-force as $\maxarity$ is a constant.)
By choice of $X$, every constraint of $Y$ has all its variables in the domain of
$\beta$ and thus $\beta$ fully determines if it is satisfied.
Consequently, we have $Y \subseteq Y_0 \cup Y_1$.
Let $\cF' = \cF \setminus (Y_0 \cup Y_1)$, $k' = k - |Y_0|$, and $W' = W - \weight(Y_0)$. 
If $k' < 0$ or $W' < 0$, then we return a trivial \no-instance.

We create a \gdpcfull{} instance $\inst'=(G,s,t,\pairs,\bundles,\weight',k',W')$ as follows.
The values $k'$ and $W'$ are as defined in the previous paragraph.
The vertex set $V(G)$ consists of special vertices $s$ and $t$ and all variables of $\inst$.
For every variable $x \in X$, we add a crisp arc $(s,x)$ if $\beta(x) = 1$ and a crisp arc $(x,t)$ if $\beta(x) = 0$.

For each constraint $\constraint=R(x_1,\ldots,x_r)$ of $\cF'$, we proceed as follows.
As $R\in\idtwopositive$, the constraint $\constraint$ can be expressed as a conjunction of assignments and two-ary clauses. Let $\phi_{\constraint}$ be the formula over variables $x_1,\ldots,x_r$ that is the conjunction of all assignments and binary clauses that hold for every satisfying assignment of $\constraint$.
Since $R$ is itself definable as a conjunction of assignments and two-ary clauses, we have
\[
\forall_{a_1,\ldots,a_r \in \{0,1\}} \left((a_1,\ldots,a_r) \in R \iff \phi_{\constraint}(a_1,\ldots,a_r) = 1\right).
\]
Recall that every constraint of $\cF'$ is satisfied by the all-0 assignment $\alpha$.
Therefore, each $\phi_{\constraint}$ is in fact a conjunction of:
\begin{itemize}
\item 0-assignments, i.e., $(x=0)$, which can be seen as $(x \to 0)$,
\item negative $2$-clauses, i.e., $(\neg x \vee \neg y)$, and
\item implications, i.e., $(\neg x \vee y)$, which we rewrite as $(x \rightarrow y)$.
\end{itemize}
For every constraint $\constraint$ of $\cF'$
we add to $G$
every implication $(x \to y)$ in $\phi_{\constraint}$ as an arc $(x,y)$, every negative $2$-clause $(\neg x \vee \neg y)$ as a clause $\{x,y\}$,
and every assignment $(x=0)$ as an arc $(x,t)$. Furthermore, 
we create a new bundle consisting of all arcs and clauses added for \constraint; the weight of the bundle is set to the weight of \constraint.
This finishes the description of the output \gdpcshort{} instance $\inst'=(G,s,t,\pairs,\bundles,\weight',k',W')$.

Clearly, the output instance is $\maxarity$-bounded, as every constraint in the input instance has arity at most $\maxarity$.
To see that the output instance is $2K_2$-free, note that for every output bundle $B$ created for a constraint $\constraint=R(x_1,\ldots,x_r)$, the graph $G_B$ can be obtained
from the Gaifman graph $G_R$ of $R$ through a number of vertex deletions and vertex identifications
(i.e., whenever $x_i$ and $x_j$ for some $i,j \in [r]$ is the same variable in $\constraint$,
 the vertices of $G_R$ corresponding to $x_i$ and $x_j$  are identified in $G_B$, and whenever $x_i$ is a constant $0$ or $1$ in $\constraint$, the corresponding vertex of $G_R$ is deleted in $G_B$),
     so $2K_2$-freeness follows from Lemma~\ref{lem:2k2-free-contraction}.

In the following two claims we check the promised equivalence of instances $\inst$ and $\inst'$.
\begin{claim}
If $\inst'$ is a \yes-instance, then so is $\inst$.
\end{claim}
\begin{proofofclaim}
Let $Z$ be an $st$-cut in $\inst'$ that violates at most $k'$ bundles of total weight at most $W'$ and does not violate any crisp arc or clause. 
Define $\beta'(x) = 1$ if $x$ is reachable from $s$ in $G-Z$ and $\beta'(x) = 0$ otherwise. 
Due to the crisp arcs $(s,x)$ or $(x,t)$ added for every variable $x \in X$, and since $t$ is not reachable from $s$ in $G-Z$, 
we have that $\beta'$ extends $\beta$. Hence, $\beta$ satisfies $Y_1$ and violates $Y_0$ and it remains to show that if $\beta'$
violates some constraint $\constraint$ in $\cF'$, then $Z$ violates the corresponding bundle. 
To see this, note that if $\constraint$ is violated by $\beta'$, then one of the clauses or assignments of $\phi_{\constraint}$ is violated by $\beta'$,
   which in turn corresponds to an arc or a clause violated by $Z$ in $\inst'$, implying violation of the corresponding bundle.
\end{proofofclaim}

\begin{claim}
If $\inst$ is a \yes-instance that admits
an assignment $\beta'$ extending $\beta$ that violates at most $k$ constraints of total weight at most $W$, then $\inst'$ is a \yes-instance.
\end{claim}
\begin{proofofclaim}
Let $\beta'$ be as in the statement.
We define a set $Z$ of arcs and $C$ of pairs in $\inst'$ as follows.
For every constraint $\constraint$ in $\cF'$ that is violated by $\beta'$, 
we look at the formula $\phi_{\constraint}$ and add to $Z$ and $C$ all arcs and pairs, respectively,
corresponding to the assignments and clauses of $\phi_{\constraint}$ that are violated by $\beta'$. 
It suffices to show that $Z$ is an $st$-cut and that every clause violated by $Z$ is contained in $C$.

Recall that every arc $(x,y)$ of $G$ corresponds to an implication $(x \to y)$,
to an assignment $(x=0)$
in the formula $\phi_{\constraint}$ for some constraint $\constraint$, to $y \in X$ with $\beta(y) = 1$ while $x=s$, or to $x \in X$ with $\beta(x) = 0$ while $y=t$. 
In particular, for every arc $(s,x) \notin Z$ we have $\beta(x)=1$ and for every arc $(x,t) \notin Z$
we have $\beta(x)=0$.
Consequently, for every vertex $x$ reachable from $s$ in $G-Z$, we have $\beta(x)=1$,
  and for every vertex $x$ from which $t$ is reachable in $G-Z$, we have $\beta(x)=0$.
Since $G$ does not contain an arc $(s,t)$, this implies that
$t$ is not reachable from $s$ in $G-Z$. 
Furthermore, if a pair $\{x,y\}$ is violated by $Z$, then both $x$ and $y$ are reachable from $s$ in $G-Z$, and hence $\beta(x)=\beta(y)=1$, and the corresponding
clause is in $C$. This concludes the proof of the claim.
\end{proofofclaim}

This finishes the proof of the lemma.
\end{proof}

Lemma~\ref{lem:compgdpc} allows us to focus on \gdpcfull.
The proof of Theorem~\ref{thm:alg-gdpc}, restated below,
    spans the remainder of this section.

\alggdpc*

Here, we formally check that~\cref{lem:compgdpc} and~\cref{thm:alg-gdpc} together yield the claimed algorithm for \WeightedMinSAT{\idtwopositive}.

\begin{proof}[Proof of \cref{thm:alg-ID2}]
Let $\inst=(\cF,\weight,k,W)$ be an instance to \textsc{Weighted Min SAT$(\idtwopositive)$}. 

The \textsc{Almost 2-SAT} problem is the \textsc{Min SAT$(\Gamma)$} problem where
$\Gamma$ is the set of all unary and binary clauses.
We create an instance $\inst_0 = (\Phi, k')$ of \textsc{Almost 2-SAT} as follows.
The set of variables is the same as in $\inst$.
For every constraint $\constraint$ used in $\inst$, we consider its 2-CNF definition $\phi_{\constraint}$, and 
we add all clauses of $\phi_{\constraint}$ to $\Phi$. Note that as $\constraint$ has arity at most $\maxarity$, the formula $\phi_{\constraint}$ has at most
$4 \cdot \binom{\maxarity}{2} + 2\maxarity = 2\maxarity^2$ clauses. We set $k' = 2k\maxarity^2$. 
Note that if $\inst$ is a \yes-instance, so is $\inst_0$.

We solve $\inst_0$ with the (randomized) polynomial-time $\Oh(\sqrt{\log \mathrm{OPT}})$-approximation algorithm
of~\cite[Theorem~1.7]{KratschW20} or the fixed-parameter algorithm of~\cite[Theorem~7]{RazgonO09} 
with running time bound $\Oh(15^{k'} \cdot k' \cdot |\Phi|^3)$ in the deterministic case.
We expect a set $Y_0$ of size $\Oh(k' \sqrt{\log k'})$ of clauses
whose deletion makes $\Phi$ satisfiable, as otherwise 
$\inst_0$ is a \no-instance and we can report that $\inst$ is a \no-instance, too.
(Note that with exponentially small probability we can reject a \yes-instance here, because the approximation algorithm is randomized.)

Hence, the set $Y$ of $\Oh(k' \sqrt{\log k'})$ constraints of $\inst$ that contain
the clauses of $Y_0$ is a constraint deletion set of $\inst$.
We plug it into Lemma~\ref{lem:compgdpc} and pass the resulting
\gdpcshort{} instance to Theorem~\ref{thm:alg-gdpc}.
%
\end{proof}

%
%
%
%

%
%
%
In the narrative that follows, for clarity we focus only on the randomized part of Theorem~\ref{thm:alg-gdpc}.
For the deterministic counterpart, the proof is the same, but every randomized step needs to be replaced
either by branching in the straightforward manner, or is in fact a color-coding step and can be derandomized
using the standard framework of \emph{splitters}~\cite{NaorSS95} (see also~\cite{the-book}),
or requires the usage of Theorem~\ref{thm:dir-flow-augmentation-det} instead of Theorem~\ref{thm:dir-flow-augmentation}.

\subsection{Reducing to a mincut instance}\label{subsec:2mincut}

The first step in the algorithm of Theorem~\ref{thm:alg-gdpc} is to reduce to the case
where the sought solution $Z$ is an $st$-mincut. 
Formally, in this section we prove Lemma~\ref{lem:2mincut}.

\tomincut*

A solution $Z\subseteq E(G)$ is a \emph{minimal solution} to a \gdpcshort{} instance $\inst=(G,s,t,\pairs,\bundles,\weight,k,W)$ if for every arc $e\in Z$, 
either $Z-e$ is not an $st$-cut or there is a bundle violated by $Z-e$ that is not violated by $Z$.
It is easy to see that a minimal solution $Z$ is a star $st$-cut. Indeed, suppose that there is an arc $e=(u,v)$ of $Z$ such that 
$u$ is in the $t$-side of $Z$ or $v$ is in the $s$-side of $Z$. Then $Z-e$ is an $st$-cut and any vertex which is in the $t$-side of $Z$ is also in the $t$-side of $Z-e$, 
that is, $Z-e$ is a solution to $\inst,$ contradicting the minimality of $Z$.

An \emph{optimal solution} to $\inst$ is a solution of minimum weight of violated bundles that is additionally minimal.
Clearly, if $\inst$ is a \yes-instance, then there exists a solution $Z$ that is optimal.

The fact that a minimal solution is a star $st$-cut allows us to apply Theorem~\ref{thm:dir-flow-augmentation} and augment the value of an $st$-maxflow. 
Let $\karc:=\abs{Z}$ and let $\kclause$ be the number of clauses violated by $Z$. 
Notice that $Z$ as well as the clauses violated by $Z$ are multisets, and hence the multiplicities of arcs and 
clauses are counted for the values of $\karc$ and $\kclause$.
Since a single bundle does not contain repeated arcs nor repeated clauses, each violated
bundle contributes at most $2\binom{\maxarity}{2}$ to $\karc$ and $\binom{\maxarity}{2}$ to $\kclause$.
Consequently, 
\begin{equation}\label{eq:karc-bound}
\karc \leq 2k\binom{\maxarity}{2},\qquad \kclause \leq k\binom{\maxarity}{2}, \qquad \karc + \kclause \leq 3k\binom{\maxarity}{2} \leq 2k\maxarity^2.
\end{equation}

The next lemma is a restatement of Theorem~\ref{thm:dir-flow-augmentation} for the problem \gdpcshort, 
with $\karc$ to be the parameter $k$ in the theorem.

\begin{lemma} \label{lem:mincutcore}
There is a polynomial-time algorithm which, given an instance $\inst=(G,s,t,\pairs,\bundles,\weight,k,W)$ of \gdpcshort{}
and an integer $\karc$,
outputs a set $A \subseteq V(G) \times V(G)$ and an $st$-maxflow $\witnessflow$ in $G+A$ such that 
for every optimal solution $Z$ to $\inst$ with $|Z| \leq \karc$,
 with probability $2^{-\Oh(\karc^4 \log \karc)},$ $Z$ is an optimal solution to $\inst'=(G+A,s,t,\pairs,\bundles,\weight,k,W)$, 
$\corecutG{Z}{G+A}$ is an $st$-mincut of $G+A$ and $\witnessflow$ is a witnessing flow for $Z$ in $G+A$. 
\end{lemma}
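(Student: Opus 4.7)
The plan is to invoke Theorem~\ref{thm:dir-flow-augmentation} directly on $(G,s,t)$ with parameter $\karc$, and then verify that the three properties claimed about $\inst'$ follow from the conclusion of that theorem together with the crispness of the arcs of $A$.

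First, I would recall the observation made just before the lemma statement: any minimal solution (and hence any optimal solution) $Z$ to $\inst$ is a star $st$-cut of $G$. So if $Z$ is an optimal solution with $|Z|=\karc$, then $Z$ is a star $st$-cut of size $\karc$. Applying Theorem~\ref{thm:dir-flow-augmentation} with parameter $\karc$ yields, in polynomial time, a set $A\subseteq V(G)\times V(G)$ and an $st$-maxflow $\witnessflow$ in $G+A$ such that, with probability $2^{-\Oh(\karc^4\log \karc)}$, $A$ is compatible with $Z$ and $\witnessflow$ is a witnessing flow for $Z$ in $G+A$. This is precisely the random event we condition on.

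Next, I would derive the three properties. For the mincut property, by the definition of witnessing flow, $\corecutG{Z}{G+A}=E(\witnessflow)\cap Z$ and $|Z\cap E(P_i)|=1$ for every flow path $P_i\in\witnessflow$, so $|\corecutG{Z}{G+A}|=|\witnessflow|=\lambda_{G+A}(s,t)$; being an $st$-cut of size $\lambda_{G+A}(s,t)$, it is an $st$-mincut of $G+A$. For the optimality property, I would argue in two steps. Compatibility of $A$ with $Z$ means every vertex of $G$ lies on the same side of $Z$ in $G$ and in $G+A$. Consequently a clause $\{u,v\}\in\pairs$ is violated by $Z$ in $\inst$ iff it is violated by $Z$ in $\inst'$, and thus the set (and weight) of bundles violated by $Z$ is identical in $\inst$ and $\inst'$. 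Since the arcs of $A$ are crisp, $Z$ does not touch them, so $Z$ is an $st$-cut in $G+A$ satisfying all crisp arcs; hence $Z$ is a solution to $\inst'$ of the same weight as in $\inst$. For optimality, let $Z'$ be any solution to $\inst'$; then $Z'$ is an $st$-cut of $G+A$, which forces $Z'$ to be an $st$-cut of $G$ as well, and the $s$-side of $Z'$ in $G$ is contained in its $s$-side in $G+A$. Hence every clause violated by $Z'$ in $\inst$ is also violated by $Z'$ in $\inst'$, so $\weight_{\inst}(Z')\le \weight_{\inst'}(Z')$ and the number of bundles violated in $\inst$ is at most that in $\inst'$. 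Combined with the equality for $Z$ and the optimality of $Z$ in $\inst$, this shows $Z$ is optimal in $\inst'$.

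There is no real technical obstacle here beyond translating the conclusion of Theorem~\ref{thm:dir-flow-augmentation} into the \gdpcshort{} vocabulary. The one subtle point to be careful about is checking both inclusions in the clause-violation correspondence: compatibility gives equality of sides for $Z$, but for a potentially different solution $Z'$ one only has the monotone inclusion ``$s$-side in $G$ $\subseteq$ $s$-side in $G+A$'', which is fortunately in the direction needed to conclude that the weight in $\inst$ does not exceed the weight in $\inst'$.
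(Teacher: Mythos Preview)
Your proposal is correct and matches the paper's approach exactly: the paper gives no separate proof but simply declares Lemma~\ref{lem:mincutcore} to be ``a restatement of Theorem~\ref{thm:dir-flow-augmentation} for the problem \gdpcshort{}, with $\karc$ to be the parameter $k$ in the theorem,'' and your write-up carefully unpacks precisely this restatement. One small point you skipped: the paper's definition of \emph{optimal} solution also requires minimality, but this is routine to verify---for each $e\in Z$, if $Z-e$ fails to be an $st$-cut in $G$ it certainly fails in $G+A$, and if a bundle is violated by $Z-e$ but not by $Z$ in $\inst$, the same holds in $\inst'$ since the $s$-side of $Z-e$ can only grow when passing to $G+A$ while the $s$-side of $Z$ is unchanged by compatibility.
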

Notice that the procedure of Lemma~\ref{lem:mincutcore} does not create a new solution which is not a solution to the initial instance. 
We will only use the promise of Lemma~\ref{lem:mincutcore} for optimal solutions
$Z$ of size exactly $\karc$. 


The next lemma formalizes one step of the augmentation procedure we use in this section.
\begin{lemma}\label{lem:absorb}
There is a polynomial-time algorithm which takes as input an instance $\inst=(G,s,t,\pairs,\bundles,\weight,k,W)$ of \gdpcshort, an $st$-maxflow $\witnessflow$ of $G$ 
and an integer $\karc$ such that $\karc-\lambda_G(s,t)>0$ and $\lambda_G(s,t)>0$
and outputs an instance $\inst^+=(G^+,s,t,\pairs,\bundles^+,\weight^+,k+1,W^+)$ with $W^+=2W+1$, and $st$-maxflow $\witnessflow^+$ of $G^+$ and an integer $\karc^+:=\karc+1$ such that 
\begin{itemize}
\item if $\inst$ is $2K_2$-free, then $\inst^+$ is $2K_2$-free, too,
\item if $\inst^+$ is a \yes-instance, then $\inst$ is a \yes-instance, and 
\item if $\inst$ is a \yes-instance with an optimal solution $Z$ with $\abs{Z}= \karc$ for which $\witnessflow$ is a witnessing flow, then 
$\inst^+$ is a \yes-instance with an optimal solution $Z^+$ such that $\abs{Z^+}= \karc^+$ and $\abs{Z^+}-\lambda_{G^+}(s,t) < \abs{Z}-\lambda_G(s,t)$, 
and for which $\witnessflow^+$ is a witnessing flow
with probability at least $2^{-\Oh(\karc^4 \log \karc)}$.
\end{itemize}
\end{lemma}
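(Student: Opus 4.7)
The plan is to use Lemma~\ref{lem:detectactive} to sample a monotone sequence $u_1,\ldots,u_\ell$ with a unique active vertex $u_a$; to build the intermediate instance $\inst'$ over the graph $G'$ obtained from $G$ by adding a new $st$-path $s\to u_1\to u_2\to\cdots\to u_\ell\to t$, with each new arc placed in its own singleton bundle of weight $W+1$; and finally to invoke Theorem~\ref{thm:dir-flow-augmentation} on $G'$ with parameter $\karc+1$ to obtain $G^+=G'+A$ together with $\witnessflow^+$. Set $k^+=k+1$, $W^+=2W+1$, $\karc^+=\karc+1$. The preconditions of Lemma~\ref{lem:detectactive} hold: since $\witnessflow$ witnesses $Z$, $\abs{\corecutG{Z}{G}}=\lambda_G(s,t)$, so $\karc>\lambda_G(s,t)$ forces $Z\setminus\corecutG{Z}{G}\neq\emptyset$ and Lemma~\ref{lem:activeclause} produces an active clause in $Z$, while $\abs{\witnessflow}=\lambda_G(s,t)>0$.

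The candidate solution for $\inst^+$ is $Z':=Z\cup\{(u_{a-1},u_a)\}$ with $u_0:=s$. Since by Lemma~\ref{lem:detectactive} the vertices $u_1,\ldots,u_{a-1}$ lie in the $s$-side of $Z$ and $u_a,\ldots,u_\ell$ in the $t$-side, the set of vertices reachable from $s$ does not change when one passes from $G-Z$ to $G'-Z'$: the surviving prefix $s,u_1,\ldots,u_{a-1}$ adds only shortcuts to already-reachable vertices, and the surviving suffix $u_a,\ldots,u_\ell,t$ is rooted at the unreachable vertex $u_a$. Consequently $Z'$ is an $st$-cut of $G'$ of size $\karc+1$ violating exactly the old bundles already violated by $Z$ in $\inst$ plus the single new singleton bundle $\{(u_{a-1},u_a)\}$; the total weight is at most $\weight(Z)+(W+1)\le W^+$ and the bundle count is at most $k+1$. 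Minimality of every $e\in Z$ within $Z'$ is inherited from that of $Z$, and for the new arc $e=(u_{a-1},u_a)$ the set $Z'-e=Z$ is not an $st$-cut of $G'$ since $Z\subseteq E(G)$ misses the new path. An analogous restriction-to-$G$ argument (as in the converse direction below) confirms $Z'$ has minimum weight, hence is optimal in $\inst'$ and, after applying the next step, in $\inst^+$.

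With probability $2^{-\Oh(\karc^4\log\karc)}$, Theorem~\ref{thm:dir-flow-augmentation} returns $(A,\witnessflow^+)$ with $A$ compatible with $Z'$ (so $Z'$ stays an optimal solution when passing from $\inst'$ to $\inst^+$) and with $\witnessflow^+$ a witnessing flow for $Z'$ in $G^+$, yielding $\lambda_{G^+}(s,t)=\abs{\corecutG{Z'}{G^+}}$. The desired gap inequality thus reduces to $\abs{\corecutG{Z'}{G^+}}\ge\lambda_G(s,t)+2$. Beyond the trivial inclusion $\corecutG{Z}{G}\subseteq\corecutG{Z'}{G^+}$ (reachability to $t$ only improves when moving from $G-Z$ to $G^+-Z'$), two further arcs enter the core. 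First, $(u_{a-1},u_a)$ enters because its head $u_a$ reaches $t$ in $G^+-Z'$ through the intact suffix $u_a\to u_{a+1}\to\cdots\to u_\ell\to t$. Second, by Lemma~\ref{lem:activevtx}, some arc $(x,y)\in Z\setminus\corecutG{Z}{G}$ lies on an attachment path from $\pi_i(u_a)$ to $u_a$, where $P_i$ activates $u_a$; although $y$ could not reach $t$ in $G-Z$, in $G^+-Z'$ it first reaches $u_a$ along the remainder of that attachment path and then reaches $t$ via the new arcs, so $(x,y)\in\corecutG{Z'}{G^+}$.

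For the converse, any solution $Z^+$ of $\inst^+$ of weight at most $W^+=2W+1$ must cut the new $st$-path and hence violate at least one new singleton bundle; two would contribute $2(W+1)>W^+$, so exactly one is violated. Therefore $Z^+\cap E(G)$ is an $st$-cut of $G$ violating at most $k$ old bundles of total weight at most $W^+-(W+1)=W$, yielding a valid solution to $\inst$ (using that reachability from $s$ in $G-(Z^+\cap E(G))$ is contained in that in $G^+-Z^+$, so clause-violation among old bundles only shrinks). The main obstacle is the reachability invariance in the second paragraph: without it, adding the new path could silently activate further old clauses in $\inst^+$ (inflating the weight of $Z'$ past $W^+$) or destroy minimality, and with it both the weight bookkeeping and the identification of $(x,y)$ as an extra core arc go through cleanly. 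This invariance depends crucially on the monotonicity guarantee of Lemma~\ref{lem:detectactive} placing the active vertex at the precise boundary index $a$, so that removing $(u_{a-1},u_a)$ splits the new path exactly along the $s$/$t$-side interface of $Z$.
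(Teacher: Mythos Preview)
Your proposal is correct and follows essentially the same approach as the paper: invoke Lemma~\ref{lem:detectactive} to obtain the monotone sequence, add the fresh $st$-path $s,u_1,\ldots,u_\ell,t$ with each new arc in a weight-$(W+1)$ singleton bundle, take $Z'=Z\cup\{(u_{a-1},u_a)\}$, and re-augment via Theorem~\ref{thm:dir-flow-augmentation}/Lemma~\ref{lem:mincutcore}. Two small remarks: (i) in the paper the final arc $(u_\ell,t)$ is left crisp rather than bundled, but this is immaterial since it is never cut; (ii) when you exhibit the extra core arc $(x,y)\in Z\setminus\corecutG{Z}{G}$ on an attachment path to $u_a$, you must choose the one \emph{closest to $u_a$} so that the remainder of the path from $y$ to $u_a$ is disjoint from $Z'$---otherwise ``reaches $u_a$ along the remainder'' is not justified. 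The paper makes this choice explicit.
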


Before we prove Lemma~\ref{lem:absorb}, let us show how to use it 
to prove Lemma~\ref{lem:2mincut}.
\begin{proof}[Proof of Lemma~\ref{lem:2mincut}.]
Suppose that $\inst$ is a \yes-instance with an optimal solution $Z.$ 
As already announced earlier in this section,
we start with guessing two numbers $\karc,\kclause\in [2k\maxarity^2]$.
With probability at least $1/(2k\maxarity^2)^2$, $Z$ violates exactly $\karc$ arcs and $\kclause$ clauses of $\inst.$ 
Note that $\karc+\kclause\leq 2k\maxarity^2$.
We fix the initial values of $\karc$, 
and $\kclause$ as $\karc^{(0)}$ and $\kclause^{(0)}$ respectively throughout the proof. 

As an initialisation step, assume that $\lambda_G(s,t)=0$. 
We bootstrap the flow as follows. 
Let $\mathcal{C}$ be the set of all clauses $C$ such that both endpoints are reachable from $s$.
If $\mathcal{C}$ contains a crisp clause, let $C$ be such a clause. Otherwise, we select a clause 
$C \in \mathcal{C}$ uniformly at random. Note that $C$ is satisfied by $Z$ with probability
$\Omega(1/\kclause)$. We assume that this happens.  Let $C=\{u,v\}$. 
Now, randomly add either a crisp arc $(u,t)$ or a crisp arc $(v,t)$.
This arc is compatible with $Z$ with probability at least $1/2$. Assume that this happens,
and w.l.o.g.\ that the added arc was $(u,t)$. Then the path from $s$ to $u$ plus the arc $(u,t)$ witness 
that $\lambda_G(s,t)>0$. Note that none of the guesses in this step affect $Z$ or the values $\kclause$ and $\karc$.

Having guaranteed $\lambda_G(s,t)>0$, 
we apply the algorithm of Lemma~\ref{lem:mincutcore} with the value $\karc^{(0)}$ and obtain an instance $\inst_0$ with a non-empty $st$-maxflow $\witnessflow_0.$
The nonemptiness of $\witnessflow_0$ is required for the later usage of Lemma~\ref{lem:absorb}.

For $i\geq 0,$ given an instance $\inst_i=(G_i,s,t,\pairs,\bundles_i,\weight_i,k_i,W_i)$ of \gdpcshort, an $st$-maxflow $\witnessflow_i$ of $G_i$ and an integer $\karc_i$ 
we recursively perform the following.
\begin{itemize}
\item If $\karc_i-\lambda_{G_i}(s,t)<0$, then reject the current guess (formally, return a trivial \no-instance).
\item If $\karc_i-\lambda_{G_i}(s,t)=0$, then output the current instance $\inst_i$ and $\karc_i$ as $\inst'$ and $\karc'$. 
\item If $\karc_i-\lambda_{G_i}(s,t)>0,$ we apply the algorithm of Lemma~\ref{lem:absorb} and proceed with the resulting output 
$\inst_{i+1}=(G_{i+1},s,t,\pairs,\bundles_{i+1},\weight_{i+1},k_{i+1},W_{i+1})$ with $k_{i+1}=k_i+1$, $W_{i+1}=2W_i+1$, $st$-maxflow $\witnessflow_{i+1}$ of $G_{i+1}$ and $\karc_{i+1}:=\karc_i+1.$
\end{itemize}

Assume that the initial call for the flow-augmentation in Lemma~\ref{lem:mincutcore} and the subsequent calls to the algorithm of Lemma~\ref{lem:absorb} 
make correct guesses and let $Z_i$ be a solution which satisfies the condition of Lemma~\ref{lem:absorb} for each $\inst_i.$ 
Because the value of  $\abs{Z_i}-\lambda_{G_i}(s,t)$ strictly drops at each iteration due to Lemma~\ref{lem:absorb} and $\abs{Z_0}=\karc^{(0)},$ 
 the recursive calls return an instance $\inst'$ after at most $\karc^{(0)}$ iterations (otherwise we reject the current guess). 
Consequently, by~\eqref{eq:karc-bound} it holds that
\[
k'\leq k+\karc^{(0)} \leq k+3k{\maxarity \choose 2}\leq 2k\maxarity^2, \qquad W' < 2^{2k\maxarity^2} (W+1), \qquad \text{and} \qquad \abs{Z_i}\leq 2\karc^{(0)} \text{ for every $i$}
\]
and the probability that the at most $\karc^{(0)}$ recursive calls to Lemma~\ref{lem:absorb} are successful is at least 
\[
2^{-\Oh(\karc^{(0)}\cdot (2\karc^{(0)})^4 \log (2\karc^{(0)}))}=2^{-\Oh((2k\maxarity^2)^5 \log (2k\maxarity^2))}.
\]
That the first two statements hold is clear from the construction and Lemma~\ref{lem:absorb}. This completes the proof.
\end{proof}

It remains to prove Lemma~\ref{lem:absorb}.

\begin{proof}[Proof of Lemma~\ref{lem:absorb}.]
We will make many subsequent random guesses about a fixed hypothetical
optimal solution $Z$
and, often implicitly, proceed forward with the assumption that all guesses so far were correct.

Hereinafter, we assume that the instance $\inst=(G,s,t,\pairs,\bundles,\weight,k,W)$ and the $st$-maxflow $\witnessflow$ 
are at hand, and $Z$ is an optimal solution to $\inst$ such that $Z_{s,t}:=\corecutG{Z}{G}$ is an $st$-mincut of $G$, and 
$\witnessflow$ is a witnessing flow for $Z.$ 

Let  $\lambda:=\lambda_G(s,t),$ and $\karcout:=\karc-\lambda.$
Note that we have $\lambda=\abs{Z_{s,t}} = \abs{\witnessflow}$
and $\karcout=\abs{Z\setminus Z_{s,t}}.$ 
The assumptions of the lemma imply that $\karcout,\lambda > 0$. 

A vertex $u$ is \emph{active} in $Z$ if it is in the $s$-side of $Z_{s,t}$ and $t$-side of $Z.$ 
We say that a clause $\{u,v\}$ is \emph{active} in $Z$ if it is violated by $Z_{s,t}$ and satisfied by $Z$. 

\begin{claim}\label{claim:activeclause}
There exists a clause $\{u,v\}$ active in $Z.$ 
Moreover, in an active clause $\{u,v\}$ at least one of $u$ and $v$ is active in $Z$ and is contained in $V(G)\setminus V(\witnessflow)$. 
\end{claim}
\begin{proofofclaim}
Observe that $\karcout>0$ means that $Z\setminus Z_{s,t}$ contains at least 
one arc, say $e$. If there is no active clause in $Z$, then $Z-e$ is a solution to $\inst$ as 
$Z-e$ is an $st$-cut and $Z-e$ satisfies the same set of bundles satisfied by $Z$, and possibly more. Therefore, there exists 
an active clause $\{u,v\}$ in $Z$. 
Recall that a clause is satisfied if at least one of its vertices is not reachable from $s$.
Hence, 
at least one of $u$ and $v$ must be active in $Z$ and such a vertex does not lie on a flow path; 
indeed if $v$ is active in $Z$ and lies on some $P_i$, then $v$ is in the $s$-side of $Z_{s,t}\cap E(P_i)$. 
As $\witnessflow$ is a witnessing flow for $Z$, $Z_{s,t} \cap E(P_i) = Z \cap E(P_i)$, $\abs{Z_{s,t}\cap E(P_i)}=1$ and $v$ is still reachable from $s$ in $G-Z.$ 
This contradicts the assumption that $v$ is active in $Z$ and thus in the $t$-side of $Z.$
\end{proofofclaim}

For a vertex $v$ and a path $P_i \in \witnessflow$, a path $Q$ is an \emph{attachment path from $P_i$ to $v$} if 
$Q$ leads from a vertex $w \in V(P_i) \setminus \{t\}$ to $v$ and $w$ is the only vertex of $\witnessflow$ on $Q$.
The \emph{projection $\pi_i(v)$ of $v$ onto $P_i$} is the earliest (closest to $s$) vertex on $P_i$
such that there is an attachment path from $P_i$ to $v$ that starts in $\pi_i(v)$.
We denote $\pi_i(v) = t$ if there is no such vertex. 

For a vertex $u$ which is in the $s$-side of $Z_{s,t},$ 
note that there exists a flow path $P_i$ such that the projection of $u$ onto $P_i$ is in the $s$-side of $Z_{s,t}$. 
Such a flow path is said to \emph{activate} $u$ in $Z_{s,t}$. Notice that if
$\pi_i(u)=s$ for some flow path $P_i$, then for every flow path $P_j$ we have $\pi_j(u)=s$ 
and thus every flow path in $\witnessflow$ activates $u.$
For an active clause $\{u,v\}$ there exists a pair of flow paths $P_i,P_j\in \witnessflow$ such that 
$P_i$ activates $u$ and $P_j$ activates $v$ in $Z_{s,t}$. 
We shall often omit to mention $Z$ or $Z_{s,t}$ for a vertex, a clause active in $Z$ 
or a flow path activating a vertex in $Z_{s,t}$.

The next observation is immediate.
\begin{claim}\label{claim:activevtx}
Let $u$ be an active vertex in $Z$
activated by flow path $P_i$. 
Then $Z\setminus Z_{s,t}$ hits every attachment path from $\pi_i(u)$ to $u.$ 
\end{claim}

For a vertex $u$ as in Claim~\ref{claim:activevtx}, there is no path from $u$ to $t$ in $G-Z$ since otherwise 
there is an arc $e$ of $Z\setminus Z_{s,t}$ and a directed path from the head of $e$ to $t$ in $G-Z$ traversing $u$, thus rendering $e$ to be included in $Z_{s,t}(=\corecutG{Z}{G}).$ 
The key idea to elevate $\lambda$ is to add to the current instance such a missing path from an active vertex to $t$, thus absorbing at least one arc $e$ of 
$Z\setminus Z_{s,t}$ in the $\corecutG{Z}{G'}$. Claim~\ref{claim:activeclause} suggests that one can use the active clauses to build such a missing path. 

\begin{claim}\label{claim:almostmonotone}
Let $x_1, x_2,\ldots , x_\ell$ be a sequence of vertices in $V(P_i)\setminus \{t\}$ for some $P_i\in \witnessflow$ such that 
$x_j$ strictly precedes $x_{j+1}$ on $P_i$ for every $j\in [\ell-1]$. 
Then there are at most $\karcout$ indices $b\in [\ell]$ such that 
$\pi_i^{-1}(x_b) = \{v \in V(G)~|~\pi_i(v) = x_b\}$ contains an active vertex.
\end{claim}
\begin{proofofclaim}
Suppose that there is a sequence of vertices $u_0,\ldots , u_{\karcout}$ in $V(G)\setminus V(\witnessflow)$ and indices $1 \leq a_0 < a_1 < \ldots a_{\karcout} \leq \ell$ with $\pi_i(u_j)=x_{a_j}$ 
and let $Q_j$ be an attachment path from $x_{a_j}$ to $u_j$ for each $0 \leq j \leq \karcout.$ 
We first argue that $Q_j$'s are pairwise vertex disjoint. 
If a pair of attachment paths, say, $Q_j$ and $Q_{j'}$ intersect for some $j<j'$ 
then the prefix of $Q_j$ and suffix of $Q_{j'}$ can be combined into a directed path $Q$ from $x_{a_j}$ to $u_{j'}$.
Observe that $Q$ is an attachment path from $P_j$ to $u_{j'}$ starting in $x_{a_j}$, 
contradicting that $\pi_i(u_{j'})=x_{a_{j'}}$.

Now the claim follows because otherwise, $Z\setminus Z_{s,t}$ needs to hit at least $\karcout+1$ attachment paths by Claim~\ref{claim:activevtx}, 
and these paths are pairwise vertex-disjoint due to the observation in the previous paragraph. This leads to a contradiction as 
we deduce $\abs{Z\setminus Z_{s,t}}\geq \karcout+1>\abs{Z\setminus Z_{s,t}}.$ 
\end{proofofclaim}

\begin{claim}\label{claim:detectactive}
We can in polynomial-time sample a sequence of vertices $u_1,\ldots , u_\ell$ and a path $P_i \in \witnessflow$,
such that $\pi_i(u_j) \neq t$ for every $j \in [\ell]$, 
$\pi_i(u_j)$ strictly precedes $\pi_i(u_{j+1})$ on $P_i$ for every $j\in [\ell-1]$, 
and with probability $\Omega(1/(\lambda^2\kclause \karcout))$
\begin{itemize}
\item $\ell > 0$, and
\item there exists $a \in [\ell]$ such that 
\begin{itemize}
\item $u_a$ is active in $Z$, and 
\item for each $b\neq a$, $u_b$ is in the $s$-side of $Z$ if and only if  $b<a$.
\end{itemize}
\end{itemize}
\end{claim}
\begin{proofofclaim}
Note that we are allowed to sometimes return an empty sequence (i.e., $\ell = 0$),
and the algorithm below actually uses this option in moments when it discovers
that some of the random choices made so far were incorrect.

Recall  $\lambda = \abs{\witnessflow}>0$.
We fix an active clause $\{u^*,v^*\}$.

Recall that $\kclause$ is the number of clauses violated by $Z$. 
We sample a subset $\pairs'$ of clauses in $\pairs$; take a clause $p\in  \pairs$ into $\pairs'$
with probability $1/(2\kclause)$.
We have $\{u^*,v^*\} \in \pairs'$ with probability $1/(2\kclause)$ and, independently,
a fixed clause that is not satisfied by $Z$ is not in $\pairs'$ with probability $(1-1/(2\kclause))$.
Since there are $\kclause$ clauses not satisfied by $Z$
(and $(1-1/x)^x = \Omega(1)$ for $x \geq 2$), we have 
\[
{\sf Prob}[\{u^*,v^*\}\in \pairs' \text{ and every clause in $\pairs'$ is satisfied by $Z$}]\geq (1/(2\kclause))\cdot (1-1/(2\kclause))^{\kclause}=\Omega(1/\kclause).
\]

Henceforth, we proceed with the selected clauses $\pairs'$ and assume that all of them are satisfied by $Z$.

For every path $P_i \in \witnessflow$, by $\leq_i$ we denote the relative order of vertices on $P_i$ with $s$ being the minimum element and $t$ being the maximum element.

Second, choose an ordered pair $(i,j)\in [\lambda]\times[\lambda]$ uniformly at random. There are two possibilities.

\smallskip

\noindent {\bf Case 1.} Suppose that that there exists a flow path in $\witnessflow$ which activates $u^*$ and $v^*$ simultaneously.
In particular, this happens if one of $u^*$ and $v^*$ is reachable from $s$ avoiding $V(\witnessflow)$. 
Then the probability that for the chosen pair $(i,j)\in [\lambda]\times [\lambda]$, $i=j$ and $P_i$ activates both $u^*$ and $v^*$ is at least $1/\lambda^2$;
we expect this to happen.
Let
\[ \preim_{i,i} = \{(u,v)~|~\{u,v\} \in \pairs'\wedge\pi_i(u)\neq t\wedge\pi_i(v)\neq t\wedge\pi_i(u) \leq_i \pi_i(v) \}.\]
We expect $\preim_{i,i} \neq \emptyset$ and that $(u^*,v^*)$ or $(v^*,u^*)$ is in $\preim_{i,i}$; by symmetry, assume $(u^*,v^*) \in \preim_{i,i}$.
(Formally, if $\preim_{i,i} = \emptyset$, we return an empty sequence.)

Let $w_1$ be the earliest vertex on $P_i$ such that $w_1=\pi_i(v)$ for some $(u,v) \in \preim_{i,i}$. 
Let $(u_1,v_1) \in \preim_{i,i}$ be an arbitrary clause with $w_1 = \pi_i(v_1)$. 
We claim that $\{u_1,v_1\}$ is an active clause in $Z$.
Indeed, the success event indicates  $(u^*,v^*)\in \preim_{i,i}$ and thus $w_1 \leq_i \pi_i(v^*)$.
It follows that $\pi_i(u_1) \leq_i \pi_i(v_1) = w_1 \leq_1 \pi_i(v^*)$ on $P_i$. As $\pi_i(v^*)$ is in the $s$-side of $Z_{s,t}$, 
both $\pi_i(u_1)$ and $\pi_i(v_1)$ are in the $s$-side of $Z_{s,t}.$ As $(u_1,v_1)\in \preim_{i,i}$ is satisfied by $Z$ in a successful sampling of $\pairs'$, 
at least one of $u_1$ and $v_1$ is in the $t$-side of $Z$ and in particular at least one of them is in $V(G)\setminus V(\witnessflow)$ by Claim~\ref{claim:activeclause}; 
otherwise, we return an empty sequence.
By choosing a random vertex from $\{u_1,v_1\}\setminus V(\witnessflow)$, we output a singleton sequence $u_1$. 
It is clear that $u_1$  meets the conditions  in the statement. The probability of success is $\Omega(\kclause^{-1} \lambda^{-2})$

\medskip

\noindent {\bf Case 2.} Suppose that Case 1 does not apply.
Then, there exist distinct flow paths $P_i, P_j \in \witnessflow$ where $P_i$ activates $u^*$ and $P_j$ activates $v^*$ or vice versa.
The probability that we sample such a pair $(i,j)\in [\lambda]\times [\lambda]$ is at least $2/\lambda^2$ (both pairs $(i,j)$ and $(j,i)$ would work).

We define the set of clauses as ordered pairs as
\[ \preim_{i,j} = \{(u,v)~|~\{u,v\}\in \pairs'\wedge \pi_i(u)\neq s,t\wedge\pi_j(v)\neq s,t\} .\]
Note that at least one of $(u^*,v^*)$ and $(v^*,u^*)$ belongs to $\preim_{i,j}.$ By symmetry, assume $(u^*,v^*)\in \preim_{i,j}.$
We also remark that as Case 1 does not apply, there is no clause $\{u,v\}\in \pairs'$ with $\pi_i(u)=s$ or $\pi_j(v)=s$.

For two ordered pairs $(u_1,v_1),(u_2,v_2) \in \preim_{i,j}$,
we say that $(u_1,v_1)$ \emph{dominates} $(u_2,v_2)$ if 
$\pi_i(u_1) \leq_i \pi_i(u_2)$ and $\pi_j(v_1) \leq_i \pi_j(v_2)$ on $P_j$,
  and \emph{strictly dominated} if at least one of these inequalities is strict. 
Let $\preim''_{i,j}$ be the set of clauses in $\preim_{i,j}$ which are not strictly dominated by any other clause in $\preim_{i,j}$
and let $\preim'_{i,j}$ be a maximal set of clauses of $\preim''_{i,j}$ such that no pair of $\preim'_{i,j}$ dominates
another pair in $\preim'_{i,j}$. That is, for every $(x,y)$ such that there is $\{u,v\} \in \preim''_{i,j}$ with $\pi_i(u)=x$ and
    $\pi_i(v) = y$, we choose exactly one such pair $\{u,v\}$ into $\preim'_{i,j}$ and we choose arbitrarily. 
It is clear that $\preim_{i,j}$, $\preim''_{i,j}$, and $\preim'_{i,j}$ can be computed in polynomial time. 

We argue that there exists an active clause in the set $\preim'_{i,j}.$ Indeed, if $(u^*,v^*) \notin \preim'_{i,j}$ 
then there exists a clause $\{u^o,v^o\}$ with $(u^o,v^o)\in \preim'_{i,j}$ such that $(u^o,v^o)$ dominates $(u^*,v^*)$.
Because $\pi_i(u^*)$ is in the $s$-side of $Z_{s,t}$ and $\pi_i(u^o) \leq_i \pi_i(u^*)$,
$\pi_i(u^o)$ is in the $s$-side of $Z_{s,t}.$ Likewise, $\pi_j(v^o)$ is in the $s$-side of $Z_{s,t}$ on $P_j$. 
Since $Z$ satisfies all clauses of $\pairs'$, $\{u^o,v^o\}$ is active.
Henceforth, we fix an arbitrary active pair contained in $\preim'_{i,j}$ and denote it by $\{u^o,v^o\}.$

By the construction of $\preim'_{i,j}$ as the set of pairwise non-dominating clauses, 
the set of projections of clauses in $\preim'_{i,j}$
\[
\mathcal{D} := \{(\pi_i(u),\pi_j(v))~|~(u,v)\in \preim'_{i,j}\}
\]
forms an antichain; that is, 
$\mathcal{D}$ can be enumerated as $(x'_1,y'_1),\ldots , (x'_{\ell'},y'_{\ell'})$ 
such that $x'_1 <_i x'_2 <_i \ldots <_i x'_{\ell'}$ and
$y'_1 >_j y'_2 >_j \ldots >_j y'_{\ell'}$. See also Figure~\ref{fig:3.2}.
The sequences $(x_b')_{b=1}^{\ell'}$ and $(y_b')_{b=1}^{\ell'}$ can be computed in polynomial time from $\preim'_{i,j}.$
Also there exists some index ${a'}\in [\ell']$ 
such that $x'_{a'}=\pi_i(u^o)$ and $y'_{a'}=\pi_j(v^o).$ 

An index $b\neq a'$ is a \emph{bad} index if there exists $(u,v) \in \preim'_{i,j}$ 
with $(x'_b,y'_b)=(\pi_i(u),\pi_j(v))$ such that at least one of $u$ and $v$ is active in $Z.$ 
Due to Claim~\ref{claim:almostmonotone}, 
there are at most $2\karcout$ bad indices in $[\ell'].$ Indeed, if there are 
more than  $2\karcout$ bad indices, then w.l.o.g.
there are more than $\karcout$ indices $b$ such that $\pi_i^{-1}(x'_b)$ contains an active vertex,
contradicting Claim~\ref{claim:almostmonotone}.

Now, sample   
each index of $[\ell']$ with probability $1/(2\karcout)$ and let $J\subseteq [\ell']$ be the output indices. 
From the observation in the previous paragraph, 
\[{\sf Prob}[\text{$a' \in J$ and there is no bad index in $J$}]
\geq \frac{1}{2\karcout}\cdot \left(1-\frac{1}{2\karcout}\right)^{2\karcout}=\Omega(1/{2\karcout}). 
\]

Assume that the event above happens. 
Let $\ell:=\abs{J}$ and we rewrite the sequence in $\{x'_b~\mid~b\in J\}$ as $x_1,\ldots , x_\ell$
so that  $x_1 <_i x_2 <_i \ldots <_i x_{\ell}$  on $P_i$. 
Likewise $\{y_b~\mid~b\in J\}$ is rewritten as $y_1,\ldots , y_\ell$
so that $y_\ell <_j \ldots <_j y_2 <_j y_1$ on $P_j$. 
Let $a\in [\ell]$ be the index such that $x_a=\pi_i(u^o)$ (thus $y_a=\pi_j(v^o)$). 

Let $\{u_b,v_b\} \in \preim'_{i,j}$ be the clause with $(x_b,y_b)=(\pi_i(u_b),\pi_j(v_b))$ for each $b\in [\ell]$.
Thanks to the assumption on the sampled set $J$, namely that there is no bad index in $J$, 
for every $b \neq a$, neither $u_b$ nor $v_b$ is active in $Z$, and hence
the clause $\{u_b, v_b\}$ is satisfied by $Z_{s,t}$. 

We claim that the sequence $u_b$ is in the $s$-side of $Z$ if and only if $b<a.$ 
Consider $u_b$ with $b<a.$ As $x_b=\pi_i(u_b)$ strictly precedes $x_a=\pi_i(u_a)$ on $P_i$, 
$x_a=\pi_i(u^o)$ is in the $s$-side of $Z_{s,t}$ and $u_b$ is {\sl not active} in $Z$,  
it is deduced that $u_b$ is in the $s$-side of $Z.$ On the other side, for any $b>a$, the vertex $y_b=\pi_j(v_b)$ strictly precedes $y_a=\pi_j(v^o)$ on $P_j$.
As $y_a=\pi_j(v^o)$ is in the $s$-side of $Z_{s,t}$ and $v_b$ is not active in $Z$, this implies that $v_b$ is in the $s$-side of $Z$. 
The assumption that $b$ is not a bad index means that $\{u_b,v_b\}$ is satisfied by $Z$ while $u_b$ is not active in $Z$, that is, $u_b$ is  in the $t$-side of $Z.$ 
Hence the claim holds.

Finally, we claim that $u_a$ is an active vertex with probability at least $1/4.$ Indeed, notice that for every clause $\{u,v\}\in \preim''_{i,j}$ with $\pi_i(u)=x_a$ and $\pi_j(v)=y_a$,
  the clause $\{u,v\}$ is satisfied by $Z.$ Furthermore, as $x_a=\pi_i(u^o)$ and $y_a=\pi_j(v^o)$ and $\{u^o,v^o\}$ is active in $Z,$ and thus in the $s$-side of $Z_{s,t}$,
 any such clauses is active in $Z.$ In particular, for each such $\{u,v\}$ at least one of $u$ and $v$ is active. 
If for at least half of such pairs $\{u,v\}$, the vertex $u$ is active, then the random choice of $\{u_a,v_a\}$ gives active $u_a$ with probability $1/2$.
Otherwise, for at least half of such pairs $\{u,v\}$, the vertex $v$ is active;
note that in the probability space there is a symmetric event where all the choices were made the same, but we considered pair $(j,i)$ instead of $(i,j)$.
Hence, overall $u_a$ is an active vertex with probability at least $1/4$.

All in all, with probability $\Omega(\lambda^{-2} \kclause^{-1} \karcout^{-1})$ we output a desired sequence $u_1,\ldots , u_\ell$.
\end{proofofclaim}

\begin{figure}[tb]
\begin{center}
\includegraphics{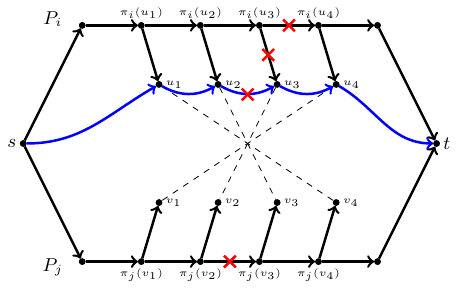}
\caption{An exemplary situation as in Case 2 of Claim~\ref{claim:detectactive} and
 its later usage. 
The dashed lines depict (crisp) clauses, whose projections on $P_i$ and $P_j$
form an antichain. Red crosses depict a solution $Z$. 
The clause $\{u_3,v_2\}$ is an active clause, with $u_3$ being an active vertex. 
Adding the blue arcs to the graph increases the size of a solution by one 
(the arc $(u_2,u_3)$ needs to be cut as well), but moves the arc $(\pi_i(u_3),u_3)$
from $Z \setminus Z_{s,t}$ to $Z_{s,t}$, decreasing $\abs{Z \setminus Z_{s,t}}$.}\label{fig:3.2}
\end{center}
\end{figure}

Claim~\ref{claim:detectactive} immediately turns into a procedure to lift the value $\lambda$ so as to match $\abs{Z}$ eventually, allowing us to wrap up the proof of Lemma~\ref{lem:absorb}.

First, note that there exists an active clause in $Z$ by Claim~\ref{claim:activeclause}.
Then the sampling of Claim~\ref{claim:detectactive} 
is applicable and the output sequence $u_1,\ldots , u_\ell$ on some $P_i\in \witnessflow$ satisfies the following with probability $\Omega(\lambda^{-2} \kclause^{-1} \karcout^{-1})$, 
where $\lambda:=\lambda_G(s,t)$ and $\karcout:=\karc-\lambda.$
\begin{itemize}
\item $u_a$ is active in $Z$ for some $a\in [\ell]$, and 
\item for each $b\neq a$, $u_b$ is in the $s$-side of $Z$ if and only if  $b<a$.
\end{itemize}

Now construct the graph $G^+$ by adding $(u_{j-1},u_j)$ for every $j\in [\ell]$ (using $u_0=s$)
as well as $(u_\ell,t)$ as fresh arcs. 
For every arc $(u_{j-1},u_j)$ for $j \in [\ell]$ we create a new bundle of weight $W+1$,
consisting of this arc only. (The arc $(u_\ell,t)$ is not contained in any bundle.)
Then 
for the instance $\inst^+=(G^+,s,t,\pairs,\bundles^+,\weight^+,k+1,2W+1)$ and $\karc^+:=\karc+1$, 
we apply the flow-augmentation of Lemma~\ref{lem:mincutcore}; let $A\subseteq V(G^+)\times V(G^+)$ and $\witnessflow^+$ be the output. 
We return the instance $\inst^+=(G^++A, s,t, \pairs, \bundles^+, \weight^+, k+1,2W+1)$, the $st$-maxflow $\witnessflow^+$ of $G^++A$ and $\karc^+$. 

The first two statements of Lemma~\ref{lem:absorb} are clear from the construction. To see the last statement, 
consider $Z^+\subseteq E(G^+)$, where $Z^+:=Z\cup \{(u_{a-1},u_a)\}$ (recall $u_0=s$).
We argue that $Z^+$ is an optimal solution to $\inst^+$ with $\abs{Z^+}-\corecutG{Z^+}{G^+} < \abs{Z}-\lambda_G(s,t)$. 
Indeed, it is an optimal solution because any solution to $\inst^+$ with less weight must hit the new path $s,u_1,\ldots , u_\ell,t$ which costs $W+1$ 
and the remaining part is a solution to $\inst$ with lesser weight than $Z$, a contradiction. 

To verify $\abs{Z^+}-\corecutG{Z^+}{G^+} < \abs{Z}-\lambda_G(s,t)$, observe that by Claim~\ref{claim:activevtx} 
$Z\setminus Z_{s,t}$ hits all attachment paths from $\pi_i(u_a)$ to $u_a$, and there is at least one such path. Moreover, at least 
one such arc, in particular the arc of $Z\setminus Z_{s,t}$ which is closest to $u_a$ on some attachment path 
is included in $\corecutG{Z^+}{G^+}$. Note that $(u_{a-1},u_a)$ is also included in $\corecutG{Z^+}{G^+}.$ The inequality follows. 

After performing a flow-augmentation on $\inst^+$ with $\karc^+$, we have $\corecutG{Z^+}{G^+}=\lambda_{G^+}(s,t)$ 
with the claimed probability. This completes the proof of Lemma~\ref{lem:absorb}.
\end{proof}


\subsection{Solving a mincut instance}\label{ss:ID2-mincut}

Lemma~\ref{lem:2mincut} allows us to focus on the following case. 
We are given a $2K_2$-free \gdpcshort{} instance $\inst = (G,s,t,\pairs,\bundles,\weight,k,W)$ together
with a witnessing flow $\witnessflow$. Our goal is to find any solution to $\inst$ if there exists a solution
$Z$ that is an $st$-mincut.

Formally, in this section we prove Theorem~\ref{thm:ID2-mincut}.

\algidmincut*

Before we proceed, let us check that Theorem~\ref{thm:ID2-mincut} indeed concludes the proof
of Theorem~\ref{thm:alg-gdpc} (and thus also of Theorem~\ref{thm:alg-ID2}).
\begin{proof}[Proof of Theorem~\ref{thm:alg-gdpc}.]
Given an instance $\inst=(G,s,t,\pairs,\bundles,\weight,k,W)$ of \gdpcshort{},
we first apply the algorithm of Lemma~\ref{lem:2mincut},
obtaining an instance $\inst' = (G',s,t,\pairs,\bundles',\weight',k',W')$
such that if $\inst$ is a \yes-instance, then (with good probability) 
$\inst'$ is a \yes-instance that admits a solution that is an $st$-mincut of $G'$. 
Then, Theorem~\ref{thm:ID2-mincut} solves correctly such an instance with good probability. 
\end{proof}

For Theorem~\ref{thm:ID2-mincut},
we provide an elaborate guessing scheme whose aim is to get rid of all clauses from the input instance $\inst$.
Once the instance does not contain any clauses,
we argue in Section~\ref{ss:ID2-no-clauses} that the instance reduces
to a \textsc{Weighted Bundles Cut} instance with the property called \emph{pairwise linked deletable edges} that
was shown to be FPT in the first paper of this series, namely~\cite{dfl-arxiv}.

We assume $\lambda \leq k \maxarity^2$, as otherwise we can safely return a negative answer.
Let $\witnessflow$ be any $st$-maxflow of $G$; note that any $st$-maxflow is a witnessing flow
for a solution $Z$ that is an $st$-mincut.

We say that an arc $e=(u,v)$ is \emph{undeletable} if it is crisp or there is a crisp parallel arc $(u,v)$, and \emph{deletable} otherwise. 
Note that any solution $Z$ that is an $st$-mincut contains exactly one deletable edge per flow path in $\witnessflow$ and nothing more.
This allows the following cleanup step: for every deletable arc $e$ that is not on $\witnessflow$, we add a crisp copy of $e$, making it undeletable.

We perform the following randomized step.
For every $P_i \in \witnessflow$ and for every bundle $B \in \bundles$, if $B$ contains at least one deletable edge of $P_i$, randomly guess either $\bot$ or one deletable edge
of $E(P_i) \cap B$. Add a crisp copy of every edge of $E(P_i) \cap B$ that has not been guessed. 
In this guessing step, we aim that for every $B \in \bundles$ violated by $Z$ precisely the edges of $Z \cap B$ have been guessed and no others; i.e., if $Z \cap B \cap E(P_i) = \emptyset$, then we guess $\bot$ for $P_i$ in $B$, and otherwise $Z \cap B \cap E(P_i) = \{e\}$ for some edge $e$ and we guess $e$ for $P_i$ in $B$. 
Note that the success probability of this step is at least $\maxarity^{-\lambda k} = 2^{-\Oh(k^2 \maxarity \log \maxarity)}$. 
As a result, every bundle $B$ contains at most one deletable edge on every flow path $P_i \in \witnessflow$
and we are now looking for a solution $Z$ that is an $st$-mincut such that, for every $B$ violated by $Z$, $Z$ contains all deletable edges of $B$. We call such a solution $Z$ \emph{proper}.

\subsubsection{Operations and cleaning}\label{subsub:clean}
We will need the following operations on $\inst$. By \emph{breaking the bundle $B$} we mean just deleting $B$ from $\bundles$; 
all its clauses and arcs become crisp. Clearly, this operation cannot turn a \no-instance into a \yes-instance and if $Z$ is a proper solution that satisfies $B$,
$Z$ remains a proper solution after breaking $B$.

By \emph{making a deletable edge $e$ undeletable} we mean the operation of breaking
the bundle containing $e$. Again, if $Z$ is a proper solution and $e \notin Z$,
then $Z$ remains a proper solution.

For a crisp edge $(s,v)$, by \emph{contracting $(s,v)$} we mean the operation of identifying
$s$ and $v$, keeping the name $s$ for the resulting vertex, and removing possible multiple
crisp arcs or clauses. If $v \in V(P_i)$ for some $P_i \in \witnessflow$, we shorten $P_i$
to start from the edge succeeding $v$; we make all deletable arcs on $P_i$
between $s$ and $v$ undeletable.
For a crisp edge $(v,t)$, the operation of contracting $(v,t)$ is defined analogously.
Note that these operations do not break the assumption of $2K_2$-freeness of bundles
and any proper solution $Z$ remains a proper solution.

For a path $P_i \in \witnessflow$ and $v \in V(P_i) \setminus \{t\}$, by \emph{contracting $P_i$ up to $v$ onto $s$} we mean the following operation: we break all bundles that 
contain deletable edges of $P_i$ that lie between $s$ and $v$ and then exhaustively
contract crisp edges with tails in $s$. Note that, as a result, the entire subpath of $P_i$
from $s$ to $v$ will be contracted onto $s$. 
The operation of \emph{contracting $P_i$ from $v$ onto $t$} is defined analogously.
We observe that for a proper solution $Z$, if the unique edge of $Z \cap E(P_i)$ is after $v$,
then contracting $P_i$ up to $v$ onto $s$ retains $Z$ as a proper solution,
while if the unique edge of $Z \cap E(P_i)$ is before $v$
then contracting $P_i$ from $v$ onto $t$ retains $Z$ as a proper solution.

For a bundle $B$, by \emph{deleting $B$} we mean the following operation.
We delete all arcs and clauses of $B$, decrease $k$ by one, and decrease $W$
by the weight of $B$. Recall that for any undeletable arc or clause removed this way,
a crisp parallel copy remains in the instance.
Furthermore, for every path $P_i$ that contains a deletable arc of $B$, say an arc $(u,v)$,
we contract $P_i$ up to $u$ onto $s$, contract $P_i$ from $v$ onto $t$, and delete $P_i$
from $\witnessflow.$ 
Let $\inst'$ be the resulting instance and $\witnessflow'$ be a resulting flow.
Clearly, if $Z$ is a proper solution that violates $B$, then $Z \setminus B$
remains a proper solution in the resulting pair $(\inst',\witnessflow')$. 
In the other direction, if $Z'$ is a solution to $\inst'$, then
$Z'$, together will all deletable arcs of $B$, is a solution to $\inst$. 

A \emph{cleanup} consists of exhaustively performing the following steps:
\begin{enumerate}
\item Contract any crisp arc $(s,v)$ onto $s$ or $(v,t)$ onto $t$.
\item Delete any vertex not reachable from $s$ in $G-\{t\}$.
\item Delete any clause containing $t$, an arc with a tail in $t$, or an arc with a head in $s$. 
\item If $k < 0$, $W < 0$, or 
$\inst$ contains a crisp clause $\{s,s\}$ or a crisp arc $(s,t)$, return a negative answer.
\item If $\witnessflow = \emptyset$, then return a positive answer
if $Z = \emptyset$ is a solution and a negative answer otherwise.
\item If $\inst$ contains a soft clause $\{s,s\}$ or a soft arc $(s,t)$, delete
the bundle containing this clause/arc.
\end{enumerate}

We start by performing a cleanup operation on the given instance.

\subsubsection{Projections}

Since every arc not contained in $\witnessflow$ is assumed to be undeletable,
we can compute a \emph{projection} of the clauses and crisp paths of the instance
down to $V(\witnessflow)$. Let us define carefully how this is done.

Recall the notion of a projection, used also in the previous section.
For a vertex $v$ and a path $P_i \in \witnessflow$, let $\pi_i(v)$ be the earliest (closest
to $s$) vertex on $P_i$ such that there is a path from $\pi_i(v)$ to $v$
whose only vertex on $\witnessflow$ is $\pi_i(v)$
(in particular, such a path consists of crisp arcs only); 
we denote $\pi_i(v) = t$ if there is no such vertex.
We often refer to $\pi_i(v)$ as \emph{the projection of $v$ onto $P_i$}. 
For $1 \leq i,j \leq \lambda$, the projection of a pair $(u,v) \in V(G) \times V(G)$ onto $(P_i,P_j)$
is the ordered pair $(\pi_i(u),\pi_j(v))$.
Note that if $\{u,v\}$ is a clause
and an $st$-mincut $Z$ violates $\{\pi_i(u), \pi_j(v)\}$ treated as a clause,
then it also violates the clause $\{u,v\}$.

For $1 \leq i \leq \lambda$, we define
\[ \pairs_{i,i} = \{(\pi_i(u),\pi_i(v))~|~\{u,v\} \in \pairs\}
\cap \left((V(G) \setminus \{t\}) \times (V(G) \setminus \{t\})\right). \]
(That is, we exclude pairs that contain the vertex $t$, but keep the ones containing $s$.)
For $1 \leq i \neq j \leq \lambda$, we define
\[ \pairs_{i,j} = \{(\pi_i(u),\pi_j(v)), (\pi_i(v),\pi_j(u))~|~\{u,v\} \in \pairs\}
\cap \left((V(G) \setminus \{s,t\}) \times (V(G) \setminus \{s,t\})\right). \]
(That is, this time we exclude pairs that contain either $s$ or $t$.)
Note that the notion of $\pairs_{i,i}$ and $\pairs_{i,j}$ are here a bit different than the ones used inside the proof of Claim~\ref{claim:detectactive}.

Fix $1 \leq i,j \leq \lambda$.
We say that an $st$-mincut $Z$ \emph{satisfies} a pair $(x,y) \in \pairs_{i,j}$
if it satisfies $\{x,y\}$ treated as a clause (equivalently, cuts $P_i$ before $x$
    or cuts $P_j$ before $y$) and \emph{violates} $(x,y)$ otherwise
(equivalently, cuts $P_i$ after $x$ and cuts $P_j$ after $y$).
For an element $(x,y) \in \pairs_{i,j}$, the \emph{preimage}
of $(x,y)$ is the set of all clauses $\{u,v\}$ such that 
$(x,y) \in \{(\pi_i(u),\pi_j(v)), (\pi_i(v),\pi_j(u))\}$. 
An element $(x,y)$ is \emph{crisp} if its preimage contains a crisp clause, and \emph{soft}
otherwise. 
Observe that $\pairs_{i,j}$ and $\pairs_{j,i}$ consist of the same pairs,
but reversed, that is, $\pairs_{j,i} = \{(y,x)~|~(x,y) \in \pairs_{i,j}\}$.

Next, for $1 \leq i \neq j \leq \lambda$, we define
$E_{i,j}$ to be the set of all pairs $(x,y)$ such that $x \in V(P_i) \setminus \{s,t\}$, $y \in V(P_j) \setminus \{s,t\}$,
and $G$ contains a path $Q$ from $x$ to $y$ such that $V(\witnessflow) \cap V(Q) = \{x,y\}$, $V(P_i) \cap V(Q) = \{x\}$ and $V(P_j) \cap V(Q) = \{y\}$.
Observe that if $x \in V(P_i) \cap V(P_j) \setminus \{s,t\}$, then $(x,x) \in E_{i,j}$, as witnessed by a zero-length path.
Let $Z'$ be a set consisting of exactly one deletable edge on each flow path of $\witnessflow$.
We say that $Z'$ \emph{violates} a pair $(x,y)\in E_{i,j}$ if $Z'$ contains an edge of $P_i$ after $x$
and an edge of $P_j$ before $y$, and \emph{satisfies} $(x,y)$ otherwise. 
Note that an $st$-mincut cannot violate any pair of $E_{i,j}$.

For $i \in [\lambda]$, define an order $\leq_i$ on $V(P_i)$
as $u \leq_i v$ if and only if $u$ comes before $v$ on $P_i$, i.e., $u$ is closer to $s$.
For two pairs $(x_1,y_1),(x_2,y_2)\in \pairs_{i,j}$,
we say that $(x_1,y_1)$ \emph{dominates} $(x_2,y_2)$ if 
$x_1 \leq_i x_2$ and $y_1 \leq_j y_2$. 
A  pair $(x,y) \in \pairs_{i,j}$ is \emph{minimal} if it is not dominated by any other pair
in $\pairs_{i,j}$, and 
the set of all minimal elements of $\pairs_{i,j}$ will be written as $\pairs'_{i,j}.$ 

\begin{claim}\label{claim:clausedom}
Let $Z'$ be an $st$-mincut.
Let $p_1,p_2 \in \pairs_{i,j}$. If $p_1$ dominates $p_2$ and $p_1$ is satisfied by
$Z'$, then $p_2$ is satisfied by $Z'$.
\end{claim}

Similarly, given two pairs $(x_1,y_1),(x_2,y_2)\in E_{i,j}$, the arc $(x_1,y_1)$ \emph{dominates} $(x_2,y_2)$ if $x_1 \leq_i x_2$
and $y_2 \leq_j y_1$. (Note the reversed order here!)
A pair of $E_{i,j}$ is \emph{minimal} if it is not dominated by any other pair.
The set of all minimal elements of $E_{i,j}$ will be written as $E'_{i,j}.$ 

\begin{claim}\label{claim:arcdom}
Let $Z'$ be a set consisting of exactly one deletable edge on every flow path of $\witnessflow$.
Let $a_1,a_2\in E_{i,j}$. If $a_1$ dominates $a_2$ and $a_1$ is satisfied by $Z'$, then $a_2$ is satisfied by $Z'$.
\end{claim}

Observe that $\pairs'_{i,j}$ forms an antichain and $E'_{i,j}$ forms a chain, i.e.,
for $(x_1,y_1), (x_2,y_2) \in \pairs'_{i,j}$ we have $x_1 \leq_i x_2$ if and only if $y_2 \leq_j y_1$,
and for $(x_1,y_1), (x_2,y_2) \in E'_{i,j}$ we have $x_1 \leq_i x_2$ if and only if $y_1 \leq_i y_2$
(in both cases with equality only if the pairs are identical).

\subsubsection{Auxiliary labeled graph}

Recall that our goal is to reduce to an instance without clauses. 
So far, we have reduced to the case where the instance consists of the flow $\witnessflow$
and, for every $i,j \in [\lambda]$,
a set $E'_{i,j}$ of crisp edges from $V(P_i)$ to $V(P_j)$
forming a chain, and a set of clauses $\pairs'_{i,j}$ between $V(P_i)$ and $V(P_j)$
forming an antichain.
We are looking for a solution that cuts exactly one edge per flow path
that satisfies the following constraints for every $i,j \in [\lambda]$: 
\begin{itemize}
\item for every $(u,v) \in E'_{i,j}$, $u$ is after the cut arc on $P_i$ or
$v$ is before the cut arc on $P_j$;
\item for every $\{u,v\} \in \pairs'_{i,j}$, $u$ is after the cut arc on $P_i$ 
or $v$ is after the cut arc on $P_j$. 
\end{itemize}

\begin{figure}[t] 
\begin{center}
\includegraphics{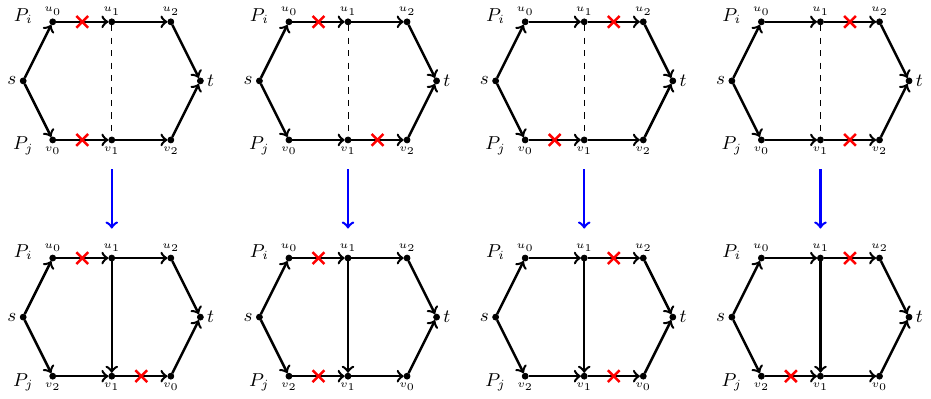}
\caption{Reversing a path $P_j$ and its influence on a pair $\{u_1,v_1\} \in \pairs'_{i,j}$.
To maintain the role of $\{u_1,v_1\}$ as a constraint meaning ``cut $P_i$ before $u_1$
or cut $P_j$ before $v_1$'', after reversing $P_j$ the pair $\{u_1,v_1\}$
should become an arc $(u_1,v_1)$ in $E'_{i,j}$. 
The four columns above correspond to four cases where the paths $P_i$ and $P_j$ are cut 
in the sought solution; in the last column the pair $\{u_1,v_1\}$ 
and the corresponding arc $(u_1,v_1)$ are violated.}\label{fig:3.3}
\end{center}
\end{figure}

We make the following observation; see also Figure~\ref{fig:3.3}.
Fix $j \in [\lambda]$ and \emph{reverse} the path $P_j$.
That is, if the vertices of $P_j$ are $s,v_1,\ldots,v_\ell,t$, replace $P_j$
with a path $s,v_\ell,v_{\ell-1},\ldots,v_1,t$. Then, if we fix a solution $Z$ in the original
graph and, after reversing $P_j$, 
we cut an arc between the same two vertices on $P_j$, the sets of vertices on the $s$-side
and $t$-side on $V(P_j) \setminus \{s,t\}$ swap with each other. 
As a result, if we want to maintain the meaning of $\pairs'_{i,j}$ as constraints,
for every $i \in [\lambda] \setminus \{j\}$, every 
clause in $\pairs'_{i,j}$ should become
an arc with the same endpoints that is put into $E_{i,j}$.
In this manner, we get rid of the clauses of $\pairs'_{i,j}$. 

At the same time, however, during such a reverse operation the arcs of $E_{i,j}$ should become
clauses again while it is not clear what to do with the arcs of $E_{j,i}$, 
so we need to apply the operation carefully. 
We observe that if we reverse both $P_i$ and $P_j$ at the same time,
then any arc in $E_{i,j}$ should change its direction and become an arc in $E_{j,i}$. 

Consequently, to use the above observations
to remove clauses from the current instance, we need to identify
a set $I \subseteq [\lambda]$ of indices of paths that we would like to reverse in such a manner that
(a) every nonempty $\pairs'_{i,j}$ satisfies $|\{i,j\} \cap I| = 1$ and
(b) every nonempty $E'_{i,j}$ satisfies $|\{i,j\} \cap I| \neq 1$. 
This, of course, is not always possible; we develop a branching strategy that aims
at removing obstacles towards the existence of such a set $I$.

\newcommand{\HedgeL}{\xi}
\newcommand{\HvtxL}{\Xi}

We observe that the existence of such a set $I$ corresponds to labeling 
$[\lambda]$ with the elements of the two-element field so that $\pairs'_{i,j} \neq \emptyset$ 
implies that the labels of $i$ and $j$ are distinct 
while $E'_{i,j} \neq \emptyset$ implies that the labels of $i$ and $j$ are equal. 
More formally, let $\mathbb Z_2=\{0,1\}$ where $1 + 1=0$. 
We construct a mixed graph $H_\inst$ (i.e., a graph containing undirected and directed edges)
on the vertex set $[\lambda]$ with a labeling function $\HedgeL \colon E(H)\rightarrow \mathbb Z_2$ as follows. 
For every $1 \leq i,j \leq \lambda$, 
    \begin{itemize}
    \item if $i \neq j$ and $E'_{i,j} \neq \emptyset$, then we add an arc $(i,j)$ to $E(H_\inst)$ with label $0$;
    \item if $\pairs'_{i,j} \neq \emptyset$ (equivalently, $\pairs'_{j,i} \neq \emptyset$),
    then we add an undirected edge $ij$ to $E(H_\inst)$ with label $1$.
    \end{itemize}
We remark that $E'_{i,j} \neq \emptyset$ if and only if $E_{i,j} \neq \emptyset$
and similarly $\pairs'_{i,j} \neq \emptyset$ if and only if $\pairs_{i,j} \neq \emptyset$.

Note that there may be a loop at a vertex $i \in V(H_\inst)$ with label $1$ if $\pairs_{i,i} \neq \emptyset$,
     and for $1 \leq i \neq j \leq \lambda$, there can be both an arc $(i,j)$ with label $0$,
     an arc $(j,i)$ with label $0$, and
     an edge $ij$ with label $1$.
     
Observe that if $V(P_i) \cap V(P_j) \setminus \{s,t\} \neq \emptyset$, then 
there is both an arc $(i,j)$ and an arc $(j,i)$ (both with labels $0$).
Note also that any clause of the form $(s,v)$ with $v\notin \{s,t\}$ imposes a loop to some vertex of $H_\inst$, as due to the cleanup operation $v$ is reachable from $s$
and $\pi_i(v) \neq t$ for some $i$. 

For a cycle $C$ of the underlying undirected multigraph of
$H$ (i.e., ignoring edge orientations), we define $\Lambda(C)=\sum_{e\in C} \HedgeL(e)$. 
Notice that the output element $\Lambda(C)$ does not depend on the order of group operation 
because $\mathbb Z_2$ is an abelian group.
A \emph{non-zero} cycle $C$ of $H_\inst$ is a cycle in
the underlying undirected graph
with the value $\Lambda(C)$ not equal the identity element (0) of $\mathbb Z_2$. 
A labeling function $\HvtxL \colon V(H_\inst)\rightarrow \mathbb Z_2$ is said to be \emph{consistent} 
if for every edge $e=(u,v)$ of $H_\inst$, it holds that $\HvtxL(v)=\HvtxL(u) + \HedgeL(e).$
It is known, for example~\cite{GUILLEMOT201161}, that 
$H_\inst$ admits a consistent labeling if and only if it does not contain a non-zero cycle. 

Going back to our intuition, our goal is to make $H_\inst$ admit a consistent labeling
by designing branching steps on non-zero cycles. When $H_\inst$ admits a consistent labeling,
we set $I$ to be the set of indices $i \in [\lambda]$ with label $1$ and reverse
the paths $P_i$ for $i \in I$.

We start with the following simple observation that ensures that $H_{\inst}$
does not become significantly richer as we branch.

\begin{claim}\label{claim:H-does-not-grow}
Assume we perform an operation on $\inst$ of breaking a bundle, contracting
a crisp $(v,t)$ edge, or contracting a path $P_i$ from $v$ onto $t$, obtaining an instance $\inst'$,
then $H_{\inst'}$ is a subgraph of $H_{\inst}$.
If we additionally perform an operation of contracting a crisp $(s,v)$ edge,
contracting a path $P_i$ up to $v$ onto $s$, or a cleanup operation that does not result
in deleting a bundle,
then $H_{\inst'}$ is a subgraph of $H_{\inst}$ with possibly some new loops with label $1$
added.
\end{claim}
\begin{proofofclaim}
Breaking a bundle turns a number of deletable edges of $\witnessflow$ into crisp edges,
which does not change the sets $E_{i,j}$ or $\pairs_{i,j}$.
If $G'$ is a result of contracting a crisp edge $(v,t)$, then any path or clause in $G'-\{t\}$ exists also in $G-\{t\}$.
Hence, the operation of contracting a crisp edge $(v,t)$ does not add any new element to neither of the sets
$E_{i,j}$ or $\pairs_{i,j}$.
The operation of contracting a path $P_i$ from $v$ onto $t$ is a sequence of operations
of breaking a bundle and contracting a crisp edge with a head in $t$.
This completes the proof of the first part of the claim.

If $G'$ is a result of contracting a crisp edge $(s,v)$, then any path or clause in $G'-\{s,t\}$ exists also in $G-\{s,t\}$.
Hence, the operation of contracting a crisp edge $(s,v)$ does not add any new element to neither of the sets
$E_{i,j}$ or $\pairs_{i,j}$ for $i \neq j$. 
The operation of contracting a path $P_i$ up to $v$ onto $s$ is a sequence of operations
of breaking a bundle and contracting a crisp edge with a tail in $s$.

We are left with the last claim about a cleanup operation that does not delete a bundle. 
To this end, let us go over steps that a cleanup operation can do that are not yet discussed.

If $v$ is not reachable from $s$ in $G-\{t\}$, then $v$ is also not reachable in $G-\{t\}$ from any vertex
of $V(\witnessflow) \setminus \{t\}$. Hence, deleting such $v$
does not change neither of the sets $E_{i,j}$ or $\pairs_{i,j}$.
Also, no edge with a tail in $t$ or head in $s$ participates in a path that is used in the definitions of projections
or $E_{i,j}$. Hence, deleting such an edge does not change either of the sets $E_{i,j}$ or $\pairs_{i,j}$.
Finally, since $\pi_i(t) = t$ for every $i \in [\lambda]$, a clause that involves $t$ is not taken into account
in any set $\pairs_{i,j}$ and can be deleted. This completes the proof of the claim.
\end{proofofclaim}

\subsubsection{Branching on a non-zero cycle}\label{subsub:nonzero}

In this section,
when the  instance $\inst$ in reference is clear in the context,
we omit the subscript in $H_\inst.$

Our goal is to provide a guessing step that, given a non-zero cycle $C$ of 
(the underlying undirected multigraph of) $H$ that
is not a single label-1 loop,
randomly chooses from a small number of subcases, in each case either deleting at least
one bundle (thus decreasing $k$ by at least one) or, while keeping $k$ the same,
decreasing the number of non-loop edges of $H$.
We remark that such a cycle $C$ can be found using standard methods
(cf.~\cite{GUILLEMOT201161}).
At the end we will deal with label-1 loops.

If $H$ has no non-zero cycle, we proceed directly to Section~\ref{sss:ID2-reversal}.
Otherwise, the algorithm performs the said guessing and recurses,
usually starting with a cleanup operation from the beginning of this subsection. 
Thus, the depth of the recursion is at most $k|E(H)| \leq 3k\lambda^2$.

Let $J=j_1,e_1,\ldots , e_{d-1},j_d$ be a path in (the underlying graph of) $H$. An ordered pair $p=(x,y)$ of vertices in $G$ 
is said to have \emph{type $e_i$} if $x$ is on the flow path $P_{j_i}$, $y$ on $P_{j_{i+1}}$ and 
$p$ is an element of $E_{j_i,j_{i+1}}\cup E_{j_{i+1},j_i}$ (respectively of $\pairs_{j_i,j_{i+1}}$) whenever $e_i$ is labeled by 0 
(respectively, 1); 
when $e_i$ is labeled by 0, the orientation of $p$ 
matches the orientation of the edge $e_i$ in $H$.
We also called the pair $p$an \emph{edge pair} if $e_i$ has label $0$
and a \emph{clause pair} if $e_i$ has label $1$.
Furthermore, a pair is \emph{minimal} if it is minimal in its corresponding set $E_{j_i,j_{i+1}}$, $E_{j_{i+1},j_i}$, or $\pairs_{j_i,j_{i+1}}$. 

 A set $f=\{p_1,\ldots , p_{d-1}\}$ of ordered vertex pairs $p_i=(x_i,y_{i+1})$ of $G$ for $i=1,\ldots , d-1$  is 
called a \emph{connection of type $J$} if 
\begin{itemize}
\item for each $i\in [d-1],$ the pair $(x_i,y_{i+1})$ is minimal and has type $e_i$.
\item for each $2\leq i \leq d-1,$ there is no minimal pair $(x'_i,y'_{i+1})$ of type $e_i$ such that $x'_i$ is between $x_i$ (exclusive) and $y_i$ (inclusive) on the flow path $P_i$;
furthermore, if $x_i$ is strictly before $y_i$ on $P_i$, then we require that
there is no minimal pair $(x'_i,y'_{i+1})$ of type $e_i$ such that $x'_i$ is not earlier
than $y_i$ on $P_i$.
\end{itemize} 
In other words, as $(x_i,y_{i+1})$ we take a minimal pair of type $e_i$ where $x_i$
is the first possible vertex on $P_i$ not earlier than $y_i$ and, if no such pair exists,
we ask for $x_i$ being the last possible vertex on $P_i$ earlier than $y_i$.
Note that the choice of $p_1$ determines the choices of the remaining 
pairs in a connection of type $J$.

We say that a path $J$ of $H$ is \emph{even} (respectively, \emph{odd})
if it contains an even (respectively, odd) number of edges labeled by $1$. 
A connection $f$ of type $J$ is even (resp. odd) if $J$ is even (resp. odd). 
We denote $x_1$ as $\vtx_1(f)$ and $y_d$ as $\vtx_d(f).$
Observe that a non-zero cycle in $H$ corresponds to an odd connection back to the same vertex.

Assume that we have a non-zero cycle $C=j_1,e_1,\ldots , e_{d-1},j_d,e_d,j_1$ 
in $H$ consisting of at least two vertices. Because it is non-zero, at least one edge has label $1$ and without loss of generality 
assume that $e_d$ has label $1$. Let $J=j_1,e_1,\ldots , e_{d-1},j_d$ be the subpath $C-e_d,$ which is clearly even. 
For brevity, for $i \in [d-1]$, by $\leq_i$ we denote the relative order from $s$ to $t$ of vertices
on the flow path $P_{j_i}$.

Let $F$ be the set of connections of type $J$, that is,
   for every minimal pair $p_1 = (x_1,y_2)$ of type $e_1$ we
 put into $F$ the connection of type $J$ starting with $p_1$.
Clearly, for any distinct connections $f,f' \in F$ 
we have $\vtx_1(f)\neq \vtx_1(f')$.
We order the elements of $F$ according to the $\leq_1$-order of their starting vertices $\vtx_1(f)$.
With the next straightforward claim, we observe that this order also sorts the endpoints $\vtx_d(f)$.

\begin{claim}
Let $f,f'\in F$ be two connections of type $J$. 
Then $\vtx_1(f)\leq_1 \vtx_1(f')$ implies $\vtx_d(f)\leq_d \vtx_d(f').$
\end{claim}

For a connection $f=\{p_1,\ldots , p_{d-1}\}$ in $F$ with $p_i=(x_i,y_{i+1})$ for each $1\leq i\leq d-1$, 
by \emph{branching on $f$} we mean randomly guessing which of the following cases applies according to the relation between $Z$ and the pairs of $f$:

\begin{description}
\item[Clause on the $s$-side.] For some \emph{clause} pair $p_i=\{x_i,y_{i+1}\}$, both $x_i$ and $y_{i+1}$ are in the $s$-side (of $Z$). Then we contract $P_{j_i}$ up to $x_i$ onto $s$
and $P_{j_{i+1}}$ up to $y_{i+1}$ onto $s$,
    perform cleanup and recurse. 
    Note that all bundles in the preimage of $p_i$
    will be deleted, as they will contain a clause $(s,s)$, so
    the budget $k$ is decreased by at least one.
\item[Clause on the $t$-side.] For some \emph{clause} pair $p_i=\{x_i,y_{i+1}\}$, both $x_i$ and $y_{i+1}$ are in the $t$-side (of $Z$). Then we contract $P_{j_i}$ from $x_i$ and $P_{j_{i+1}}$ from $y_{i+1}$ onto $t$, 
  perform cleanup and recurse.
This has the effect of replacing at least one endpoint of any pair in $\pairs_{j_i,j_{i+1}}$ by $t$, and emptying $\pairs_{j_i,j_{i+1}}$. 
Hence, the number of non-loop edges of $H$ decreases in the recursive call.
\item[Arc $t$-side to $s$-side, forward.] For some pair $p_i=\{x_i,y_{i+1}\}$ whose type $e_i$ is an \emph{arc} from $j_i$ to $j_{i+1}$, the vertex $x_i$ is in the $t$-side and $y_{i+1}$ is in the $s$-side. 
Then contract $P_{j_i}$ from $x_i$ onto $t$ and $P_{j_{i+1}}$ up to $y_{i+1}$ onto $s$.
Notice that all pairs $(x,y) \in E_{j_i,j_{i+1}}$ either have $x$ not earlier than $x_i$ on $P_{j_i}$ or $y$ not later than $y_{i+1}$ on $P_{j_{i+1}}$.
Hence, this operation will empty $E_{j_i,j_{i+1}}$, decreasing the number of non-loop edges
of $H$ in the recursive call.
\item[Arc $t$-side to $s$-side, backward.] For some pair $p_i=\{x_i,y_{i+1}\}$ whose type $e_i$ is an \emph{arc} from $j_{i+1}$ to $j_i$, the vertex $y_{i+1}$ is in the $t$-side and $x_i$ is in the $s$-side. 
Then we proceed analogously as in the previous branching.
\item[Arc $s$-side to $t$-side, impossible.] For some pair $p_i=\{x_i,y_{i+1}\}$ whose type $e_i$ is an \emph{arc}, the tail is in the $s$-side and the head is in the $t$-side. This is impossible, but noted here for completeness of
the case analysis.
\item[Cut between $x_i$ and $y_i$.]
None of the previous five cases apply and $P_{j_i}$ for some $2\leq i\leq d-1$ is the first flow path such that $x_i$ and $y_i$ are not in the same side of $Z$, i.e. 
$x_i$ is in the $s$-side and $y_i$ is in the $t$-side (or vice versa which can be handled similarly). 
Furthermore, since none of the previous five cases apply to the pair $\{x_i,y_{i+1}\}$,
$x_i$ is in the $s$-side of $Z$ and $y_i$ is in the $t$-side of $Z$, 
we can infer in which side $y_{i+1}$ lies: in the same side as $x_i$ if $e_i$ is of label $0$ and in the opposite side 
if $e_i$ is of label $1$. We contract $P_{j_i}$ up to $x_i$ onto $s$ and from $y_i$ onto $t$, contract $P_{j_{i+1}}$ up to $y_{i+1}$ onto $s$
or from $y_{i+1}$ onto $t$, depending on the inferred side of $y_{i+1}$, perform a cleanup, and recurse.

For the analysis of the progress in this case, we perform the following distinction
into subcases.
\begin{description}
\item[Clause $e_i$, $x_i$ before $y_i$.] If $e_i$ is a \emph{clause} and $x_i <_i y_i$, then
  we contract $x_i$ onto $s$ and $y_i$ and $y_{i+1}$ onto $t$ on $P_{j_i}$ and $P_{j_{i+1}}$, respectively.
  Consider any projection $p=(x,y)=(\pi_{j_i}(u),\pi_{j_{i+1}}(v))$ of a clause $\{u,v\}$ to paths $(P_{j_i}, P_{j_{i+1}})$.
  Note that $p$ is eliminated (i.e., has one endpoint contracted into $t$) if $y_i \leq_i x$ or $y_{i+1} \leq_{i+1} y$.
  Furthermore, if $x \leq_i x_i$ then $y_{i+1} \leq_{i+1} y$ by minimality of $p_i$. Hence a pair $p$ that is not
  eliminated must have $x_i <_i x <_i y_i$ and $y <_{i+1} y_{i+1}$.
  But then $p$ would have been a preferred choice over $p_i$ in the connection.
  Hence no such pair can exist, and the number of non-loop edges of $H$ decreases in the recursive call.
\item[Clause $e_i$, $y_i$ before $x_i$.] If $e_i$ is a \emph{clause} and $y_i <_i x_i$, then
  we contract $y_i$ and $y_{i+1}$ onto $s$ and $x_i$ onto $t$. As in the previous case,
  let $p=(x,y)=(\pi_{j_i}(u),\pi_{j_{i+1}}(v))$ be the projection of a clause $\{u,v\}$
  to paths $(P_{j_i}, P_{j_{i+1}})$ and assume neither endpoint of $p$ is contracted into $t$.
  Then $x <_i x_i$, which implies $y >_{i+1} y_{i+1}$.
  Let $p$ be chosen to be such a projection that is minimal
  and as close to $x_i$ as possible on $P_{j_i}$. Then we branch into two options for $p$:
  Either $y$ is in the $s$-side of $Z$ and $p$ and its bundle are deleted;
  or we contract $y$ onto $t$ on $P_{i_{j+1}}$ and by choice of $p$ we get $\pairs_{j_i,j_{i+1}}=\emptyset$,
  decreasing the number of non-loop edges in $H$.
\item[Arc $e_i$, $x_i$ before $y_i$.] If $e_i$ is an \emph{arc} (in either direction) and $x_i <_i y_i$,
  then we contract $x_i$ and $y_{i+1}$ onto $s$ and $y_i$ onto $t$.
  Observe that any crisp arc with head or tail in $s$ or $t$ is discarded or contracted in the cleanup step.
  Assume there is an arc $e$ in $E'_{j_i,j_{i+1}}$ or $E'_{j_{i+1},i}$ (as appropriate,
  depending on the direction of $e_i$) that remains after cleanup. 
  Then its head must lie after $p_i$, which by minimality of $p_i$ means that its tail lies
  after $p_i$ as well (on the respective path). Furthermore $e$ must connect to $P_{j_i}$
  strictly before $y_i$. Thus $e$ connects to $P_{j_i}$ strictly between $x_i$ and $y_i$.
  But then $e$ would have been a preferred choice over $p_i$ in the connection $f$.
  Thus all minimal arcs in the class $E_{j_i,j_{i+1}}$ or $E_{j_{i+1},j_i}$ are removed in cleanup, which implies
  that all arcs in the class are removed in cleanup and $H$ has fewer non-loop edges in the recursion.
\item[Arc $e_i$, $y_i$ before $x_i$.] If $e_i$ is an \emph{arc} (in either direction) and $y_i <_i x_i$,
  then we contract $y_i$ onto $s$ and $x_i$ and $y_{i+1}$ onto $t$.
  Assume as in the previous item that $e$ is an arc of $E'_{j_i,j_{i+1}}$
  respectively $E'_{j_{i+1},j_i}$ (according to direction of $e_i$) 
  which remains after cleanup. Then the tail of $e$ must lie before $p_i$,
  so by minimality of $p_i$ the head of $e$ lies before $p_i$ as well.
  If $e$ connects to $P_{j_i}$ between $x_i$ and $y_i$, then $e$ is a better choice than
  $p_i$ for the connection $f$. Otherwise $e$ has an endpoint contracted into $s$,
  and is either discarded or contracted in cleanup. 
\end{description}
\item[Canonical case.] None of the previous six cases applies. Observe that
\begin{itemize}
\item for every $e_i$ labeled by 1, the endpoints of $p_i$ are in the distinct sides of $Z$, 
\item for every $e_i$ labeled by 0, the endpoints of $p_i$ are in the same side of $Z$, and
\item for every $2\leq i\leq d-1$, $x_i$ and $y_i$ are in the same side of $Z$ (if $d\geq 3$).
\end{itemize}
The assignment on the endpoints of $f$ to $s$ or $t$ is called a \emph{canonical assignment on $f$} if it satisfies these conditions. 
\end{description}
We observe that a branch on $f$ makes a correct guess with probability $\Omega(d^{-1}) = \Omega(\lambda^{-1}) = \Omega((k\maxarity)^{-1})$. 

Note that there are two canonical labelings of $f$, namely the ones which assign $\vtx_1(f)$ to $s$ and $t$ respectively. The next claim is immediate 
from that $J$ contains an even number of edges labeled by 1 and the endpoints of a pair $p_i$ have distinct assignments 
under a canonical assignment if and only if $e_i$ is labeled by 1. 

\begin{claim}\label{claim:connsame}
Let $f\in F$ be a connection of even type $J$.
Then in a canonical assignment on $f$, $\vtx_1(f)$ is assigned $s$
if and only if $\vtx_d(f)$ is assigned $s.$ 
\end{claim}

Now we present the guessing step.
Let $f'\in F$ be the first (leftmost) connection such that there exists at least one pair $(u,v) \in \pairs'_{j_1,j_d}$ 
with $u\leq_1 \vtx_1(f')$ and $v\leq_d \vtx_d(f')$; in case no such $f'$ exists, let $f'$ be the vertex $t$ (with a convention $\vtx_1(t) = \vtx_d(t) = t$).
Let $f$ be the immediate predecessor of $f'$ in $F$ (or the last element of $F$, if $f'=t$, 
 or  $f=s$ if $f'$ is the first element of $F$, with again a convention
 $\vtx_1(s) = \vtx_d(s) = s$) 
and note that $\vtx_1(f) <_1 \vtx_1(f')$ and $\vtx_d(f)\leq_d \vtx_d(f').$ 

First, we branch on $f$ and $f'$ (unless they are equal $s$ or $t$, respectively) and end up in the case where both $f$ and $f'$ have canonical assignments
(we use a convention that $t$ is a canonical assignment of $f'=t$ and $s$ is a cannonical assignment
 of $f=s$). 
If $f'$ has the canonical assignment in which $\vtx_1(f')$ (and $\vtx_d(f')$ by Claim~\ref{claim:connsame}) 
is assigned with $s$, the pair $(u,v)$ is not satisfied; in particular, it must be soft.
Therefore, each bundle in the preimage of $(u,v)$ can be deleted (thus decreasing the budget $k$ by 1 each time).
We guess if this is the case, and if yes then we proceed as above, cleanup, and recurse. 

Henceforth, we consider the case when the canonical assignment on $f'$ which sets $\vtx_1(f')$ and $\vtx_d(f')$ with $t.$ 
The algorithm guesses whether the cannonical assignment of $f$ sets $\vtx_1(f)$ and $\vtx_d(f)$ to $s$ or to $t$.
We consider a few mutually exclusive subcases, that is, for each subcase it is assumed that none of the preceding subcases are applicable.

%
%
\begin{description}
\item[Case A.] $f$ has the canonical assignment in which $\vtx_1(f)$ (and $\vtx_d(f)$ as well) is assigned with $s$. 
Consider the pairs $p_1=(x_1,y_1)$ of $f$ and $p'_1=(x'_1,y'_1)$ of $f'$ and note that there is no other minimal pair 
of type $e_1$ whose endpoints are between $x_1$ and $x'_1$ on $P_{j_1}.$ This also implies that there is no other minimal pair 
of type $e_1$ whose endpoints are between $y_1$ and $y'_1$ on $P_{j_2}.$
Observe that any solution $Z$ that complies with 
the canonical assignment on (both endpoints of) $p_1,p'_1$ satisfy all arcs or clauses of type $e_1.$
Furthermore, contraction according to the canonical assignment
(i.e., we contract $P_{j_1}$ up to $x_1$ onto $s$ and from $x'_1$ onto $t$, and contract $P_{j_2}$ up to / from $y'_1$ and $y_1$, depending on the inferred sides of $y'_1$ and $y_1$)
will eliminate all arcs or clauses of type $e_1$. Hence, we cleanup and recurse,
    either decreasing $k$ (if there was a bundle deleted during the cleanup)
 or decreasing the number of non-loop edges of $H$.

\item[Case B.] There exists  a minimal pair $(u',v') \in \pairs'_{j_1,j_d}$ such that $\vtx_1(f)\leq_1 u'$ and $\vtx_d(f)\leq_d v'.$ 
As Case A is not applicable, $\vtx_1(f)$ and $\vtx_d(f)$ is assigned with $t$ in the canonical assignment; in particular, $f \neq s$.
This means that both $u'$ and $v'$ are in the $t$-side, and thus all pairs
of $\pairs_{j_1,j_d}$ 
are satisfied by the sought solution $Z$. 
We contract $P_{j_1}$ from $u'$ and $P_{j_d}$ from $v'$ onto $t$, cleanup and recurse.
In the recursive call, either the cleanup deleted a bundle, or $\pairs_{j_1,j_d}$ become empty,
   decreasing the number of non-loop edges of $H$.

\item [Case C.] We say that a pair $(u,v)\in \pairs'_{j_1,j_d}$ \emph{crosses} $f$ if 
$u <_1 \vtx_1(f)$ and $\vtx_d(f) <_d v$ (\emph{left-to-right}) or $\vtx_1(f) <_1 u$ and $v <_d\vtx_d(f)$ (\emph{right-to-left}). 
Because Case B is not applicable 
and also by the choice of $f'$ and $f$, every  pair in $\pairs'_{j_1,j_d}$ crosses $f.$ 
There are three subcases.

\begin{description}
\item[C-1.] There exist minimal pairs crossing $f$ both left-to-right and right-to-left. Let $(u_1,v_1)$ 
be the minimal pair crossing $f$ left-to-right with $u_1$ as close to $\vtx_1(f)$ as possible. Likewise, let $(u_2,v_2)$ 
be the minimal pair crossing $f$ right-to-left with $u_2$ as close to $\vtx_1(f)$ as possible. Notice that there is no other minimal pair between 
$P_{j_1}$ and $P_{j_d}$ with an endpoint between $u_1$ and $u_2$ on $P_{j_1}$. Now that 
both $v_1$ and $u_2$ are on the $t$-side on $P_{j_1}$ and $P_{j_d}$ respectively,  every minimal pair between 
$P_{j_1}$ and $P_{j_d}$ has an endpoint in the $t$-side. 
Hence, we can contract $P_{j_1}$ from $u_2$ onto $t$ and $P_{j_d}$ from $v_1$ onto $t$,
cleanup and recurse, again either deleting a bundle in the cleanup phase
or decreasing the number of non-loop edges of $H$ (as $\pairs_{j_1,j_d}$ empties).
\item[C-2.] Every minimal clause crosses $f$ left-to-right. Then, every minimal clause between 
$P_{j_1}$ and $P_{j_d}$ has an endpoint in the $t$-side of $\vtx_1(f)$, thus in the $t$-side of any solution $Z$ 
which conforms the canonical assignment on $f$.
Hence, we can contract $P_{j_1}$ from $\vtx_1(f)$ onto $t$, cleanup, and recurse.
\item[C-3.] Every minimal clause crosses $f$ right-to-left. This subcase is symmetric to the case C-2.
\end{description}
This finishes the description of the guessing step
when $H$ contains a non-zero cycle of length greater than $1$.

%
%
%
\end{description}

Consider now a case when 
$H_\inst$ contains no non-zero cycle of length greater than $1$, but possibly contains
some loops. 

Let $I \subseteq [\lambda]$ be the set of vertices where $H_\inst$ has a loop.
For every $i \in I$,
choose a pair $(x_i,y_i) \in \pairs_{i,i}$ such that the later (on $P_i$) vertex
from $\{x_i,y_i\}$ is as early (close to $s$) as possible.

If $I \neq \emptyset$, guess whether there exists an index $i \in I$
such that both $x_i$ and $y_i$ are in the $s$-side of $Z$.
Note that this is only possible if $(x_i,y_i)$ is soft and then all clauses
in the preimage of $(x_i,y_i)$ are violated by $Z$.
Hence, in this case we can contract $P_i$ up to the later vertex of $x_i$ or $y_i$
onto $s$, cleanup, and recurse; the cleanup phase will delete all bundles in the preimage
of $(x_i,y_i)$. 

In the remaining case, if no such $i \in I$ exists, for every $i \in I$
we contract $P_i$ from the later of the vertices $x_i$ or $y_i$ onto $t$, obtaining an
instance $\inst'$.
Note that this operation deletes all loops from $H_{\inst'}$, while not introducing
any new edge to $H_{\inst'}$. Thus, $H_{\inst'}$ has no non-zero cycles.
We pass the instance to a postprocessing step described in Section~\ref{sss:ID2-reversal}.

Observe that a single guessing step is successful with probability $\Omega((k\maxarity)^{-1})$.
As discussed, the depth of the recursion is $\Oh(k\lambda^2) = \Oh(k^3 \maxarity^2)$.
Hence, the success probability of the algorithm so far is $2^{-\Oh(k^3 \maxarity^2 \log(k\maxarity))}$.


\subsubsection{From a mincut instance without non-zero cycles to an instance without clauses}\label{sss:ID2-reversal}

We now formalize the aforementioned idea of reversing some of the flow paths to get rid
of the clauses.

As in the previous subsection,
we are given a $2K_2$-free \gdpcshort{} instance $\inst = (G,s,t,\pairs,\bundles,\weight,k,W)$ together
with a witnessing flow $\witnessflow$. Our goal is to find any solution to $\inst$ if there exists a proper solution $Z$: a solution that is an $st$-mincut
that, if it violates a bundle $B$, then it contains all deletable edges of $B$.

The additional assumption we have is that the auxiliary graph $H_\inst$
has no non-zero cycles.
Also, $\witnessflow \neq \emptyset$, as otherwise the algorithm from the previous section
would have returned an answer.
Hence, $[\lambda]$ can be partitioned as $I_0 \uplus I_1$ such 
that (a) for every nonempty $E_{i,j}$, both $i$ and $j$ are in the same set $I_\eta$;
(b) for every nonempty $\pairs_{i,j}$, $i \neq j$ and $i$ and $j$ are in distinct sets
$I_\eta$. See Figure~\ref{fig:3.3.5}. 

\begin{figure}[tbh]
\begin{center}
\includegraphics{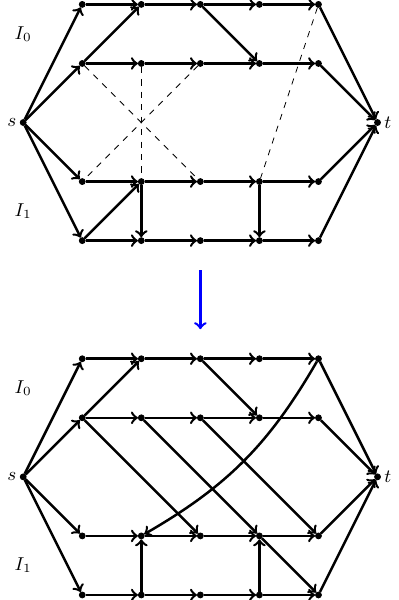}
\caption{An example of the reversing process, where we reverse
the paths in $I_1$ (bottom half of the graph). Clauses going across become
arcs and arcs in the bottom half reverse.}\label{fig:3.3.5}
\end{center}
\end{figure}

We perform the following variant of the cleanup operation. 
Most of the steps below are also present in the cleanup operation 
defined in Section~\ref{subsub:clean}. However,
we need to refrain from contracting crisp arcs $(s,v)$ where $v$
lies on a vertex of $\witnessflow$, as such an operation may add an arc to $H_\inst$.

\begin{enumerate}
\item 
We delete from $G$ any vertex $v$ where there is no path from $v$
to $t$ nor to any endpoint of a clause in $G-\{s\}$. 
\item 
We also delete from $G$ any vertex $v$ that is not reachable from $s$ in $G-\{t\}$.

Note that the two steps above do not change 
the space of all $st$-mincuts nor whether an $st$-mincut is a (proper) solution.
In particular, this operation does not delete any vertex on a flow path of $\witnessflow$.
Since $\witnessflow \neq \emptyset$, the vertex $t$ is not deleted.

\item 
We also exhaustively contract every crisp edge $(v,t)$, delete any clause that
contains $t$ or an arc that has a tail in $t$, and terminate the current branch
with a negative answer
if no $st$-mincut of size $\lambda$ exists (this in particular happens if 
    there is a crisp arc $(s,t)$).

\item \label{cleanup2:sv}
Since $H_\inst$ has no loops, there is no vertex $v$ that is on one hand an endpoint of
a clause, and on the other hand admits a path from $s$ to $v$ that does not contain
any vertex on a flow path of $\witnessflow$ except for $s$ because for a clause $\{u,v\}$ with such $v$,
$\pairs_{i,i}$ would contain $(\pi_i(u),\pi_i(v))$ for some $i$ and thus adding a loop on $H_\inst.$
Consequently, if $(s,v)$ is a crisp arc with $v$ not on $\witnessflow$, we can contract
$(s,v)$ without creating any new edge of $H_\inst$.
We perform this process exhaustively.
\end{enumerate}

After the above cleaning, we observe the following.
\begin{claim}\label{claim:01classification}
Every $v \in V(G) \setminus \{s,t\}$ satisfies exactly one of the following:
\begin{enumerate}
\item There exists $i \in I_0$ and a path $P(i,v)$ from a vertex
of $V(P_i) \setminus \{s,t\}$ to $v$ that does not contain any vertex
of $\witnessflow$ except for the starting vertex.
\item There exists $j \in I_1$ and a path $P(j,v)$ from a vertex
of $V(P_j) \setminus \{s,t\}$ to $v$ that does not contain any vertex
of $\witnessflow$ except for the starting vertex.
\end{enumerate}
\end{claim}
\begin{proofofclaim}
We first show that at least one of the above options hold. 
By the cleaning step, $v$ is reachable from $s$ in $G-\{t\}$; let $P$ be a path from $s$ to $v$
that avoids $t$.
If $P$ contains a vertex of $\witnessflow$ distinct from $s$, then the subpath of $P$
from the last such vertex to $v$ witnesses that $v$ falls into one of the above cases.
Otherwise, the first edge of $P$ would be contracted in Point~\ref{cleanup2:sv}
of the cleaning process described above, a contradiction.

Assume now that both paths $P(i,v)$ and $P(j,v)$ exist for some $i \in I_0$ and $j \in I_1$.
If $v$ lies on a flow path $P_\zeta \in \witnessflow$, then either if $\zeta \in I_1$
and then $P(i,v)$ witnesses $E_{i,\zeta} \neq \emptyset$,
or $\zeta \in I_0$ and $P(j,v)$ witnesses $E_{j,\zeta} \neq \emptyset$, both 
being a contradiction with the properties of the partition $I_0 \uplus I_1$.
So $v$ does not lie on a flow path of $\witnessflow$.
Since $v$ is not reduced in the cleaning phase, there is a path $Q$ in $G-\{s\}$ 
from $v$ to either $t$ or to an endpoint of a clause.

In the first case, let $Q'$ be the prefix of $Q$ from $v$ to the first intersection with
a path of $\witnessflow$. The endpoint of $Q'$ is not $t$, as otherwise the last edge of
$Q'$ should have been contracted in the cleaning phase. 
Thus, the endpoint of $Q'$ is in $V(P_\zeta) \setminus \{s,t\}$ for some $\zeta \in [\lambda]$.
The concatenation of $P(i,v)$ and $Q'$ implies $E_{i,\zeta} \neq \emptyset$
and the concatenation of $P(j,v)$ and $Q'$ implies $E_{j,\zeta} \neq \emptyset$.
This is a contradiction with the fact that $i$ and $j$ lie in different sets $I_0,I_1$.

In the second case, let $\{u,u'\} \in \pairs$ be such that $u$ is the endpoint of $Q$.
Furthermore, we can assume that there is no path from $v$ to $t$ in $G-\{s\}$.
In particular, this implies that $Q$ has no vertex of $\witnessflow$, as otherwise we could
redirect $Q$ from the first intersection with $\witnessflow$, along a flow path to $t$.
We also observe that both $\pi_i(u)$ and $\pi_j(u)$ are distinct from $t$.
Furthermore, the assumption that $u'$ is reachable from $s$ in $G-\{t\}$ implies
that $\pi_\zeta(u') \neq t$ for at least one index $\zeta \in [\lambda]$.
This implies that either $H_\zeta$ has a loop (if any of $\pi_i(u)$, $\pi_j(u)$, or $\pi_\zeta(u')$ equals $s$) or both $\pairs_{i,\zeta}$ and $\pairs_{j,\zeta}$ are nonempty.
Both conclusions yield a contradiction, as $i$ and $j$ lie in different sets $I_0,I_1$.
\end{proofofclaim}

Claim~\ref{claim:01classification} motivates us to define a partition $V(G)\setminus \{s,t\} = V_0 \uplus V_1$; $v \in V_0$ if the first case of Claim~\ref{claim:01classification}
happens and $v \in V_1$ otherwise.
We have the following observation.
\begin{claim}\label{claim:01split}
For every clause $\{u,v\} \in \pairs$, we have $u,v \in V(G) \setminus \{s,t\}$
and $u$ and $v$ lie in two distinct sets among $V_0$ and $V_1$.
For every arc $(u,v) \in E(G-\{s,t\})$, either $u,v \in V_0$ or $u,v \in V_1$.
\end{claim}
\begin{proofofclaim}
For the statement about a clause $\{u,v\}$, first note that if $u=t$ or $v=t$, then the clause
would have been deleted in the cleaning phase, and if $u=s$ or $v=s$, then
the clause would yield a loop in $H_\inst$, a contradiction.
Furthermore, if $u,v \in V_\eta$ for $\eta \in \{0,1\}$, then 
the corresponding paths $P(i_u,u)$ and $P(i_v,v)$ for $i_u,i_v \in I_\eta$
would certify $\pairs_{i_u,i_v} \neq \emptyset$, and an edge $i_ui_v$ of label $1$
in $H_\inst$.
As $i_u,i_v \in I_\eta$, this is a contradiction
to the properties of the partition $I_0 \uplus I_1$.

Take now an edge $(u,v)$ with $u \in V_\eta$ and $v \neq \{s,t\}$ for $\eta \in \{0,1\}$.
Let $P(i,u)$ for $i \in I_\eta$ be a path witnessing $u \in V_\eta$.
If $v$ lies on $P(i,u)$, then the prefix of $P(i,u)$ until $v$ certifies $v \in V_\eta$.
Otherwise, the concatenation of $P(i,u)$ and $(u,v)$ certifies $v \in V_\eta$.
This proves the second stament.
\end{proofofclaim}

For $\eta=0,1$, let $G_\eta$ be the subgraph of $G$ induced by $V_\eta \cup \{s,t\}$,
with the following exception on edges between $s$ and $t$.
$G$ does not contain an arc $(t,s)$ (as it would be deleted in the cleaning phase)
and a crisp arc $(s,t)$ (as it would yield a negative answer in the cleaning phase).
Any soft arc $(s,t)$ necessarily belongs to a flow path of $\witnessflow$;
for every $i \in I_\eta$ such that $P_i$ consists of a soft arc $(s,t)$, we put this
edge to $G_\eta$.
In this manner, by Claim~\ref{claim:01split}, $G_0$ and $G_1$ are edge-disjoint
and each edge of $G$ belongs to exactly one of $G_0$ and $G_1$.
Also, by Claim~\ref{claim:01split}, every clause of $\inst$ has one vertex
in $G_0$ and one vertex in $G_1$.

We now perform the following operation.
First, we reverse all edges of $G_1$ and swap the labels of vertices $s$ and $t$, obtaining
a graph $G_1'$.
Second, we replace every clause $\{u,v\}$ with $u \in V_0$ and $v \in V_1$
with an arc $(u,v)$. Third, we identify the two vertices $s$ of $G_0$ and $G_1'$
into a vertex $s$ and the two vertices $t$ of $G_0$ and $G_1'$ into a vertex $t$.
Let $G'$ be the resulting graph. Every edge and clause of $\inst$
has its corresponding edge in $G'$. The operation naturally defines bundles $\bundles'$
in $G'$, now consisting of edges only; we extend the weight function $\weight$
naturally to $\bundles'$. See Figure~\ref{fig:3.3.5} for an illustration.

Clearly, every bundle of $\bundles'$ is $2K_2$-free.
Hence, we obtained a new \gdpcshort{} instance $\inst' = (G',s,t,\emptyset,\bundles',k,\omega,W)$
that is $2K_2$-free and has no clauses. We solve such instances in Section~\ref{ss:ID2-no-clauses}
via Theorem~\ref{thm:ID2-no-clauses}.
Recall that the success probability so far in the algorithm of Theorem~\ref{thm:ID2-mincut} is $2^{-\Oh(k^2 \maxarity^3 \log(k\maxarity))}$. 
Together with the success probability of Theorem~\ref{thm:ID2-no-clauses}, it gives the desired success probability promised in Theorem~\ref{thm:ID2-mincut}.
Here, it remains to formally check that it suffices to solve $\inst'$.

\begin{claim}
Let $Z$ be a proper solution to $\inst$ and let $X$ be the set of clauses violated
by $Z$. Then, $Z \cup X$ is an $st$-cut in $G'$ that violates
at most $k$ bundles of $\bundles'$ of total weight at most $W$.
\end{claim}
\begin{proofofclaim}
Let $S$ consist of the vertices reachable from $s$
in $G_0-Z$ and not reachable from $s$ in $G_1-Z$, except for $t$.
We claim that $S$ contains the set of vertices reachable from $s$ in $G'-(Z \cup X)$.
Indeed, if for a clause $\{u,v\} \in \pairs$ with $u \in V_0$ and $v \in V_1$, 
  the tail of the corresponding arc $(u,v) \in E(G')$ is in $S$, then 
  $u$ is reachable from $s$ in $G-Z$, and thus either $\{u,v\} \in X$
  or $v$ is not reachable from $s$ in $G-Z$. Hence, in the latter case $v \in S$ as well.
Hence, $Z \cup X$ is an $st$-cut in $G'$.

The claim about the cardinality and cost of violated bundles follows immediately from
the fact that if $Z$ violates a bundle $B \in \bundles$, then $Z \cup X$ contains an edge
of the corresponding bundle in $\bundles'$.
\end{proofofclaim}
\begin{claim}
Let $Z'$ be an $st$-cut in $G'$ that violates at most $k$ bundles of $\bundles'$
of total weight at most $W$.
Then the set of edges of $G$ that correspond to edges of $Z'$ is a solution
to $\inst$.
\end{claim}
\begin{proofofclaim}
Assume $Z'$ corresponds to edges $Z$ and clauses $D$ in $\inst$.
First, observe that $Z$ is an $st$-cut in $G$: any $st$-path in $G$ is either contained
in $G_0$ or in $G_1$, so either it is cut by $Z'$ (if it is in $G_0$) or its reversal
is cut in $G_1'$. 
Second, assume that a clause $\{u,v\}$ with $u \in V_0$, $v \in V_1$ is violated
by $Z$ in $\inst$. Then, $u$ is reachable from $s$ in $G_0-Z$ and $v$
is reachable from $s$ in $G_1-Z$. Let $P$ be a path from $s$ to $u$ in $G_0-Z$,
 let $Q$ be a path from $s$ to $v$ in $G_1-Z$, and let $Q'$ be the reversal
 of $Q$ in $G_1' - Z'$ (that is now a path from $v$ to $t$). 
Then, the concatenation of $P$, $(u,v)$, and $Q'$ is a path from $s$ to $t$ in $G'-Z'$
unless $(u,v) \in Z'$, that is, unless $\{u,v\} \in D$. 
Consequently, every bundle violated by $Z$ corresponds to a bundle violated by $Z'$,
  finishing the proof.
\end{proofofclaim}

\subsubsection{Solving an instance without clauses}\label{ss:ID2-no-clauses}

Thus, we are left with solving a \gdpcshort{} instance
$\inst=(G,s,t,\emptyset,\bundles,k,\omega,W)$ that has no clauses and is $2K_2$-free.
We observe that we end up with the \textsc{Weighted Bundled Cut} instance with the property called \emph{pairwise linked deletable edges} that
was shown to be FPT in the first paper of this series, namely~\cite{dfl-arxiv}.

Let $\inst=(G,s,t,\emptyset,\bundles,k,\omega,W)$ be a \gdpcshort{} instance without clauses.
Recall that a soft arc $e=(u,v)$ is \emph{deletable} if there is no parallel crisp arc $(u,v)$, and \emph{undeletable} otherwise.
We say that a bundle $B \in \bundles$ has \emph{pairwise linked deletable arcs} if for every two deletable arcs $e_1,e_2 \in B$
that are not incident with either $s$ or $t$, there is a path from an endpoint of one of the edges to an endpoint of the other
that does not use an edge of another bundle (i.e., uses only edges of $B$ and crisp edges).
Note that a $2K_2$-free bundle without clauses has pairwise linked deletable arcs, but the latter property is slightly more general. 

One of the applications of flow-augmentation of~\cite{dfl-arxiv} can be stated as follows.
\begin{theorem}[Theorem~4.1 of~\cite{dfl-arxiv}]\label{thm:ID2-no-clauses}
There exists a randomized polynomial-time algorithm that, given a \gdpcshort{} instance
$\inst=(G,s,t,\emptyset,\bundles,k,\omega,W)$ with no clauses and whose every bundle has pairwise linked deletable arcs, 
never accepts a \no-instance and accepts a \yes-instance with probability $2^{-\Oh(k^4 d^4 \log(kd))} n^{\Oh(1)}$, where $d \leq \maxarity^2$ is the maximum
number of deletable arcs in a single bundle.
\end{theorem}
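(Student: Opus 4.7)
The plan is to establish this by combining directed flow-augmentation with a color-coding/filtering scheme that reduces the problem to weighted minimum $st$-cut. First, I would apply Theorem~\ref{thm:dir-flow-augmentation} to the sought star $st$-cut $Z$, obtaining an augmenting set $A$ and a witnessing flow $\witnessflow$ in $G+A$ such that (with probability $2^{-\Oh(k^4 \log k)}$) $\corecutG{Z}{G+A}$ is an $st$-mincut and each flow path $P \in \witnessflow$ contains exactly one edge $e_P \in Z$. Since $|\witnessflow| = \lambda \leq k$, we may safely assume $\lambda \leq k$ in what follows.

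Next, I would perform two color-coding guesses. For every bundle $B \in \bundles$ and every $P \in \witnessflow$, randomly sample a value $e(B,P) \in \{\bot\} \cup (E(P) \cap B)$; the target event is that for every bundle $B$ violated by $Z$ and every flow path $P$, $e(B,P)$ is precisely the unique edge of $Z \cap B \cap E(P)$ (or $\bot$ if that intersection is empty). Independently, for every unordered pair of flow paths $P,P'$, guess whether $e_P$ and $e_{P'}$ belong to the same bundle of $Z$. The success probability is $(d+1)^{-k\lambda} \cdot 2^{-\binom{\lambda}{2}} = 2^{-\Oh(k^2 d \log d)}$, and we immediately reject any guess declaring more than $k$ distinct bundles intersecting $\corecutG{Z}{G+A}$.

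The central step is to leverage the pairwise linked deletable arcs property to carry out a filtering of candidate bundles. Suppose we guessed that $e_P,e_{P'}$ lie in the same bundle $B$ violated by $Z$, and let $B_1, B_2$ be two distinct candidates for this slot: for $i=1,2$ we have $e_i := e(B_i,P) \neq \bot$ and $e_i' := e(B_i,P') \neq \bot$. Assume $e_1$ precedes $e_2$ on $P$ while $e_1'$ succeeds $e_2'$ on $P'$. The pairwise linked property yields an undeletable-or-$B_1$ path connecting an endpoint of $e_1$ to an endpoint of $e_1'$; if $B = B_2$, cutting $e_2$ and $e_2'$ leaves this path intact, producing an $st$-walk that crosses the cut, contradicting that $Z$ is a solution. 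So $B_2$ can be discarded as the true identity of $B$. Iterating this pruning leaves, for each slot, a family of candidate bundles that is linearly ordered along every flow path $P$ where the slot has a non-$\bot$ entry.

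Once candidates are linearly ordered within each slot, I would reduce to a standard polynomial-time-solvable problem. Build an auxiliary edge-weighted graph by retaining, from each flow path, only the collection of guessed candidate edges, assigning each edge the weight of its bundle (divided among the $\lambda$ paths where the bundle has an entry, or equivalently managed by shared capacities), and adding back all other (crisp) edges with infinite weight. Any $st$-cut of cardinality exactly $\lambda$ in this auxiliary graph that respects the linear order picks exactly one candidate per slot, and its total weight equals the weight of the corresponding violated bundle set. Since the cardinality and weight constraints decouple once the linear order is fixed, standard weighted minimum $st$-cut with a cardinality constraint $\lambda$ solves the remaining instance in polynomial time. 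The main obstacle is the structural filtering step: one must verify carefully that the pruning never discards the correct bundle and that after filtering the weighted cut problem indeed models the original bundle-violation cost, but this is exactly the content of the flow-augmentation framework developed in~\cite{dfl-arxiv}, which is invoked here as a black box.
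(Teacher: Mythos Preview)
The paper does not prove this theorem; it is imported as a black-box result from~\cite{dfl-arxiv} and used to close Section~\ref{ss:ID2-no-clauses}, so there is no in-paper proof to compare your attempt against. You correctly note this at the end of your proposal.

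Your outline (flow-augmentation, bundle/path color-coding, a filtering step exploiting the linking property, then a weighted cut computation) is in the right spirit, but there are genuine gaps. First, $\lambda\leq k$ is incorrect: a minimal solution touches at most $k$ bundles, each contributing at most $d$ deletable arcs, so $|Z|\leq kd$ and hence $\lambda\leq kd$; this is precisely why the flow-augmentation overhead is $2^{-\Oh((kd)^4\log(kd))}$, matching the stated bound. Second, and more seriously, your filtering argument does not go through as written: the pairwise-linked property only guarantees a directed path between \emph{some} endpoint of $e_1$ and \emph{some} endpoint of $e_1'$, in \emph{some} direction. A path oriented from the $t$-side to the $s$-side of the putative cut yields no shortcut, so the crossing candidate $B_2$ cannot be discarded on that basis alone. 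The analogous step in the paper's own algorithm for the $2K_2$-free case (the ``relation between $e_i^Z$ and $e_j^Z$'' in Section~\ref{sec:arrow} and Lemma~\ref{lem:clause-cut:order}) resolves this by additionally guessing, for each pair of flow paths sharing a bundle, the specific type and orientation of the link; your sketch needs the same ingredient. Third, the final reduction is too vague: ``divided among the $\lambda$ paths\ldots or equivalently managed by shared capacities'' is not a construction, and ensuring that a single bundle's weight is charged exactly once across several flow paths---while still obtaining an honest minimum-weight $st$-cut of prescribed cardinality $\lambda$---is nontrivial and is where the actual work of the cited result lies.
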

As Theorem~\ref{thm:ID2-no-clauses} solves the remaining case, the proof of Theorem~\ref{thm:ID2-mincut} is completed, which implies Theorem~\ref{thm:alg-ID2}.

\section{IHS-B positive case}
\label{sec:arrow}
This section is devoted to the proof of the following theorem, which captures the second new tractable case. (This statement is a detailed version of Theorem~\ref{ithm:arrow}.)

\begin{theorem}\label{thm:alg-IS}
For every fixed $\maxarity \geq 2$, there exists 
\begin{itemize}
\item a randomized polynomial-time algorithm that, given an instance of \MinSAT{\isdpositive} with parameter $k$, never accepts a \no-instance and accepts a \yes-instance with probability at least $2^{-\Oh(k^6 \maxarity^{10} \log(k\maxarity))}$;
\item a deterministic algorithm that solves \MinSAT{\isdpositive}
in time $2^{\Oh(k^6 \maxarity^{10} \log(k\maxarity))} n^{\Oh(1)}$. 
\end{itemize}
\end{theorem}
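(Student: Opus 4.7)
The plan is to follow the template established for the bijunctive case in Theorem~\ref{thm:alg-ID2} but with a simpler overall structure, because here we do not need a first-stage reduction forcing $Z$ to be an $st$-mincut. First, I would establish an analog of Lemma~\ref{lem:compgdpc} for \MinSAT{\isdpositive}: after extracting an approximate deletion set $Y$ via the known IHS-B$-$ approximation, guessing the assignment $\alpha$ on $Y$'s variables and rewriting surviving constraints so that the all-zero assignment satisfies every non-$Y$ constraint, I rewrite each remaining constraint $R(x_1,\ldots,x_r) \in \isdpositive$ as its canonical conjunction of implications $(x_i \to x_j)$, negative clauses, and assignments. Each constraint becomes a single bundle in a \gdpcshort{} instance: implications become arcs, negative clauses become clauses in $\pairs$ of size at most $\maxarity$, and assignments become crisp arcs incident with $s$ or $t$. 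Because the arrow graph $H_R$ is $2K_2$-free, Lemma~\ref{lem:2k2-free-contraction} ensures that the arc-only bundle graph $G_B'$ is $2K_2$-free, although clauses may involve up to $\maxarity$ vertices.

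Next, I apply directed flow-augmentation (Theorem~\ref{thm:dir-flow-augmentation}) so that, for the target solution $Z$, $\corecutG{Z}{G}$ is an $st$-mincut of $G$ with a witnessing flow $\witnessflow$. I then perform the color-coding step from the technical overview: for each bundle $B$ and each $P \in \witnessflow$, independently guess $e(B,P) \in \{\bot\} \cup (E(P) \cap B)$, aiming for $e(B,P)$ to be the unique edge of $Z \cap B \cap E(P)$ or $\bot$ if none exists. Simultaneously, for every pair $P, P' \in \witnessflow$, I guess whether the two flow-path cut edges $e_P$ and $e_{P'}$ lie in the same bundle of $Z$, rejecting any guess that forces more than $k$ distinct bundles in the core.

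The crucial combinatorial engine is the filtering step enabled by $2K_2$-freeness of the arrow graph. Suppose two bundles $B_1, B_2$ are both candidates for the ``true'' bundle $B$ containing $e_P$ and $e_{P'}$, in the sense that each $B_i$ is consistent with the color-coding guesses and contains an arc $f_i$ from an endpoint of $e(B_i, P)$ to an endpoint of $e(B_i, P')$. I argue, as in the \textsc{Chain SAT} algorithm of~\cite{dfl-stoc}, that if $B_1$'s guessed edges lie earlier on $P$ but later on $P'$ than $B_2$'s, then cutting only $B_2$'s edges leaves $f_1$ as a short-circuit from the $s$-side of $P$ to the $t$-side of $P'$; hence $B = B_2$ is impossible. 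This gives a linear order along the flow paths on the surviving candidates, and any bundle failing the consistency check or being dominated is deleted from $\bundles$ so its edges and clauses become crisp. The cases where $e_P$ or $e_{P'}$ is incident with $\{s, t\}$, or where $2K_2$-freeness enforces a common endpoint rather than an arc $f_i$, contribute only $\Oh(1)$ candidates per bundle and are dispatched by bounded branching.

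After filtering, a submodularity argument (the closest-to-$s$ minimum $st$-cut is canonical) shows that the first $st$-mincut $Z_0$ built from the edges $\{e(B, P)\}$ respects the guessed partition into bundles. If $Z_0$ violates no clause then $Z_0$ is a valid solution and we output \yes; this is precisely where we rely on the problem being unweighted. Otherwise, a clause $C \in \pairs$ violated by $Z_0$ triggers the \textsc{Digraph Pair Cut}-style branching of~\cite{KratschW20}: either $C$'s bundle is genuinely violated by $Z$, in which case we delete it and decrement $k$, or some $v \in C$ is not reachable from $s$ in $G - Z$, in which case we guess $v$ and add a crisp arc $(v, t)$, strictly increasing $\lambda_G(s, t)$. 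Both branches decrease the natural measure $2k - \lambda_G(s,t)$, so the recursion has bounded depth, and re-applying flow-augmentation at each recursive call yields the claimed $2^{-\mathrm{poly}(k, \maxarity)}$ success probability. The deterministic counterpart follows by replacing random choices with branching, the color-coding with standard splitters, and Theorem~\ref{thm:dir-flow-augmentation} with Theorem~\ref{thm:dir-flow-augmentation-det}. The principal obstacle I anticipate is the interaction between the arrow-graph filtering argument and larger-arity clauses: while the core shortcut argument concerns only arcs, one must verify that the color-coded guesses together with the clause-branching step correctly handle bundles carrying arcs (constrained by $2K_2$-freeness) simultaneously with negative clauses of arity up to $\maxarity$ (only bounded by the arity, not $2K_2$-freeness).
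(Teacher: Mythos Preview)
Your overall plan matches the paper's: reduce to a graph cut problem (the paper calls it \textsc{Clause Cut}), apply flow-augmentation, color-code the interaction of bundles with flow paths, exploit $2K_2$-freeness of the arrow graph to linearly order candidate bundles along each pair of flow paths, take the closest-to-$s$ mincut via submodularity, and branch \textsc{Digraph Pair Cut}-style on a violated clause. The shortcut argument you describe is exactly the paper's Lemma~\ref{lem:clause-cut:order}, and the final branching is exactly Section~4.5.

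There is, however, a genuine issue in your first step. You propose an analog of Lemma~\ref{lem:compgdpc}: find an approximate deletion set $Y$, guess an assignment on its variables, and ``rewrite surviving constraints so that the all-zero assignment satisfies every non-$Y$ constraint.'' In the bijunctive proof this rewriting is achieved by negating variables, which is licit because $\idtwopositive$ is closed under variable negation. But $\isdpositive$ is \emph{not}: negating $x$ in $(x \to y)$ produces the positive clause $(x \lor y)$, which is outside IHS-B$-$. So the normalization you invoke is unavailable here. The paper avoids this entirely: it gives a \emph{direct} polynomial-time reduction to \textsc{Clause Cut} with no iterative compression and no approximation preprocessing. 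The canonical definition $\candef{\constraint}$ already writes each constraint as implications (arcs), negative clauses (clauses), and assignments (arcs to $s$ or $t$), and an assignment $\alpha$ corresponds to the cut $\{(x,y): \alpha(x)=1,\alpha(y)=0\}$ with no renaming needed. This is precisely why the IHS-B case is structurally simpler than the bijunctive case, and your attempt to import the bijunctive preprocessing is both unnecessary and, as stated, broken.

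One minor correction: your progress measure $2k-\lambda_G(s,t)$ does not work, since $\lambda$ may grow up to $k\maxarity^2$ (a violated bundle may contribute up to $\maxarity^2$ arcs to $Z$). The paper uses the lexicographic pair $(k,\, k\maxarity^2-\lambda)$, yielding recursion depth $\Oh(k^2\maxarity^2)$, which is what drives the exponent in the stated bound.
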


We start with analyzing the constraints over $\isdpositive$
in Section~\ref{ss:isd-analysis} to establish a canonical way of expressing a constraint
as a conjunction of implications and negative clauses. 
This allows us in Section~\ref{ss:isd-to-graph}
to cast our problem as a cut problem in digraphs: 
a variant of \textsc{Generalized Bundled Cut} that we call \textsc{Clause Cut}.
Compared to the \gdpcfull{} problem of the previous section,
\textsc{Clause Cut} features larger negative clauses, but
the $2K_2$-freeness assumption applies now to the set of arcs (i.e., ignoring the clauses). 

We deal with \textsc{Clause Cut} in Section~\ref{ss:basic-clause-cut}
and~\ref{ss:clause-cut}.
First, in Section~\ref{ss:basic-clause-cut} we set up the scene with a number of basic
operations that simplify the instance. 
In Section~\ref{ss:clause-cut} we provide an algorithm for \textsc{Clause Cut}.
Here, we first make the instance more organized
and regular by a number of guessing and color-coding steps.
This regularization allows us to make a crucial greedy observation
(Claim~\ref{claim:clause-cut:X}), which in some sense says that there is only
one ``canonical'' candidate for a solution to the current instance.
This observation allows to either conclude (if the candidate is indeed a solution)
or perform a subsequent branching step (otherwise) and recurse.

\subsection[Constraints over $\isdpositive$]{Constraints over $\boldsymbol{\isdpositive}$}\label{ss:isd-analysis}

We start with an analysis of the structure of constraints over $\isdpositive.$ 
Consider any $R \in \isdpositive$ with $R\neq\emptyset$, i.e., such that $R$-constraints have at least one satisfying assignment.
(The never-satisfied constraints can be deleted from the input instance and the 
 parameter reduced accordingly, so we henceforth assume that in the input instance
 every constraint has at least one satisfying assignment.)
Let $\constraint=R(x_1,\ldots,x_r)$ be any $R$-constraint. We distinguish the following subsets of variables of \constraint:
\begin{itemize}
\item $\vars{\constraint}=\{x_1,\ldots,x_r\}$ is the set of all variables used in $\constraint=R(x_1,\ldots,x_r)$;
\item $\ones{\constraint}$ is the set of these variables $x_i \in \vars{\constraint}$ such that every satisfying
assignment of $R(x_1,\ldots,x_r)$ sets $x_i$ to $1$;
\item $\zeroes{\constraint}$ is the set of these variables $x_i \in \vars{\constraint}$ such that every satisfying
assignment of $R(x_1,\ldots,x_r)$ sets $x_i$ to $0$;
\item $\rest{\constraint} := \vars{\constraint} \setminus (\ones{\constraint} \cup \zeroes{\constraint})$ is the set of remaining
variables, that is, variables $x_i$ for which there exists both a satisfying assignment setting
$x_i$ to $1$ and a satisfying assignment setting $x$ to $0$. 
\end{itemize}

Our goal now is to express \constraint as a conjunction of implications and negative clauses in a canonical way. 
A fixed $R \in \isdpositive$ can have many equivalent representations as a conjunction
of implications and negative clauses; we choose to use the ``most rich one'', consisting
of all possible implications and negative clauses implied by $R$. 
In this manner, we will be able to translate the $2K_2$-freeness assumption of the arrow graph
onto the set of implications. 

\begin{definition}
Let $R\in\isdpositive$ of arity $r\leq b$ and with $R\neq\emptyset$.
The \emph{canonical definition} of a constraint $\constraint=R(x_1,\ldots,x_r)$ is the following set $\candef{\constraint}$ of constraints on variables $\{x_1,\ldots,x_r\}$ and constants $0$ and~$1$:
\begin{itemize}
\item For every $x_i \in \ones{\constraint}$, an implication $(1 \rightarrow x_i)$.
\item For every $x_i \in \zeroes{\constraint}$, an implication $(x_i \rightarrow 0)$.
\item For every $x_i,x_j \in \rest{\constraint}$, with $i\neq j$, such that $(i,j)$ is an edge of the arrow graph $H_R$ of $R$, an implication $(x_i \rightarrow x_j)$.
\item All negative clauses on variables of $\rest{\constraint}$ that are true in every satisfying assignment
of $\constraint$. 
\end{itemize}
\end{definition}

\begin{lemma}
Let $R\in\isdpositive$ of arity $r\leq b$ and with $R\neq\emptyset$.
An assignment $\alpha$ satisfies a constraint $\constraint=R(x_1,\ldots,x_r)$ if and only if it satisfies all constraints in the canonical definition $\candef{\constraint}$ of \constraint.
\end{lemma}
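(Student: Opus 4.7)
The forward direction is essentially by construction. If $\alpha$ satisfies $\constraint$, then for every $x_i\in\ones{\constraint}$ we have $\alpha(x_i)=1$ and for every $x_i\in\zeroes{\constraint}$ we have $\alpha(x_i)=0$, so the unary implications in $\candef{\constraint}$ hold. For the implications coming from edges of $H_R$, by definition there is no tuple in $R$ with $t_i=1$ and $t_j=0$, so $\alpha(x_i)=1\Rightarrow\alpha(x_j)=1$. The negative clauses in $\candef{\constraint}$ hold in every satisfying assignment by definition.

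For the reverse direction, I would use the assumption $R\in\isdpositive$ to fix an equivalent formula $\phi_R$ that is a conjunction of assignments, implications, and negative clauses on $\{x_1,\dots,x_r\}$. Given $\alpha$ satisfying all of $\candef{\constraint}$, the plan is to verify that $\alpha$ satisfies each constraint of $\phi_R$. Every assignment $(x_i=1)$ occurring in $\phi_R$ forces $x_i\in\ones{\constraint}$, so the corresponding $(1\rightarrow x_i)$ lies in $\candef{\constraint}$ and $\alpha(x_i)=1$; the case $(x_i=0)$ is symmetric.

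The interesting step is handling an implication $(x_i\rightarrow x_j)$ appearing in $\phi_R$. I would argue by cases on the classes $\ones{\constraint},\zeroes{\constraint},\rest{\constraint}$ containing $x_i,x_j$. If $x_i\in\zeroes{\constraint}$ or $x_j\in\ones{\constraint}$ the implication holds under $\alpha$ automatically from $\candef{\constraint}$. If $x_i\in\ones{\constraint}$ then applying the implication to an arbitrary satisfying assignment forces $x_j\in\ones{\constraint}$, so $\alpha(x_j)=1$; dually if $x_j\in\zeroes{\constraint}$. The remaining case, $x_i,x_j\in\rest{\constraint}$, is where I would verify that $(i,j)\in E(H_R)$: condition (i) of the $H_R$-definition is immediate from $(x_i\rightarrow x_j)\in\phi_R$; for condition (ii), a satisfying assignment $\beta$ with $\beta(x_j)=0$ (which exists since $x_j\in\rest{\constraint}$) must have $\beta(x_i)=0$ by the implication; for condition (iii), a satisfying assignment $\gamma$ with $\gamma(x_i)=1$ forces $\gamma(x_j)=1$. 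Hence $(x_i\rightarrow x_j)\in\candef{\constraint}$ and $\alpha$ satisfies it.

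Finally, for a negative clause $C=(\neg x_{i_1}\vee\cdots\vee\neg x_{i_m})$ in $\phi_R$, partition its variables into those in $\ones{\constraint}$, $\zeroes{\constraint}$, and $\rest{\constraint}$. If a $\zeroes{\constraint}$-variable is present, $\alpha$ already makes the clause true via $\candef{\constraint}$. Otherwise, since $\ones{\constraint}$-literals are always false in every satisfying assignment, the reduced clause $C'$ over only the $\rest{\constraint}$-variables must be satisfied by every satisfying assignment of $\constraint$ (as must $C$, being in $\phi_R\equiv R$); note $C'$ is nonempty, for otherwise $R=\emptyset$. Hence $C'\in\candef{\constraint}$ by definition, so $\alpha$ satisfies $C'$ and therefore the weaker $C$. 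The main (mild) obstacle is just the $\rest{\constraint}$-implication case, where one must carefully check all three conditions defining edges of $H_R$.
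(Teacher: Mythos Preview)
Your proof is correct and follows essentially the same approach as the paper: both directions proceed by fixing a defining formula $\phi_R$ for $R$ in terms of assignments, implications, and negative clauses, and then verifying constraint-by-constraint via the same case analysis on $\ones{\constraint}$, $\zeroes{\constraint}$, and $\rest{\constraint}$. Your write-up is actually slightly more explicit than the paper's in two places (checking all three conditions for an $H_R$-edge, and noting that the restricted clause $C'$ is nonempty since $R\neq\emptyset$), but the structure and ideas are the same.
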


\begin{proof}
The forward implication is immediate by definition of $\candef{\constraint}$. 
For the backward definition, consider any assignment $\alpha$
that satisfies all constraints in $\candef{\constraint}$. 
Let $\phi$ be a set of implications, negative clauses, and assignments on variables $x_1,\ldots,x_r$ whose conjunction is equivalent to $\constraint=R(x_1,\ldots,x_r)$, according to the definition of $\isdpositive$.
Let $C$ be an arbitrary constraint in $\phi$; we want to show that it is satisfied by $\alpha$.
In what follows, we treat an assignment $(x = 1)$ as an implication $(1 \to x)$ and an assignment $(x = 0)$ as an implication $(x \to 0)$. 

Assume first that $C$ is an implication $(x \rightarrow y)$, where $x$ and $y$ are variables
or constants.
If $x \equiv 1$ or $x \in \ones{\constraint}$, then $y$ is a variable and $y \in \ones{\constraint}$.
Symmetrically, if $y \equiv 0$ or $y \in \zeroes{\constraint}$, then $x$ is a variable and $x \in \zeroes{\constraint}$.
If $x \in \zeroes{\constraint}$, then $x \rightarrow 0 \in \candef{\constraint}$, so $\alpha(x) = 0$ and $C$ is satisfied.
Symmetrically, $C$ is satisfied if $y \in \ones{\constraint}$. 
In the remaining case, $x,y \in \rest{\constraint}$. Then, the implication $C = (x \rightarrow y)$ 
forbids all assignments with $x=1$ and $y=0$, but as $x,y \in \rest{\constraint}$, there is an assignment
with $x=1$ (and hence $y=1)$ and an assignment with $y=0$ (and hence $x=0$). 
Consequently, $(x,y)$ corresponds to an edge of the arrow graph $H_R$, and thus $C \in \candef{\constraint}$ and
is satisfied by $\alpha$. 

Consider now the case where $C$ is a negative clause.
If $C$ contains a variable of $\zeroes{\constraint}$, then $C$ is satisfied by $\alpha$.
Otherwise, let $C'$ be the clause $C$ restricted to the literals with variables in $\rest{\constraint}$,
that is, with literals with variables in $\ones{\constraint}$ deleted. 
Then, $C'$ is also true in every satisfying assignment of $\constraint$. Hence, $C' \in \candef{\constraint}$
and thus $C$ is satisfied by $\alpha$.

Thus, $\alpha$ satisfies all constraints in $\phi$, and therefore satisfies $\constraint=R(x_1,\ldots,x_r)$, as claimed.
\end{proof}

\subsection{Graph representation}\label{ss:isd-to-graph}

Our goal now is to translate the input instance $(\F,k)$ for \MinSAT{\isdpositive} into an equivalent instance of a graph separation problem, called \textsc{Clause Cut}, using the canonical definitions of all constraints \constraint of $\F$. 

A \textsc{Clause Cut} instance consists of:
\begin{itemize}
\item A directed (multi)graph $G$ with distinguished vertices $s,t\in V(G)$. 
\item A family (possibly a multiset) $\clauses$ of subsets of vertices of $V(G)$,
  called \emph{clauses}. 
\item A family $\bundles$ of pairwise disjoint subsets of $\clauses \cup E(G)$, called 
\emph{bundles}, such that each bundle $B$ contains at most one copy of each arc and each clause.
\item An integer $k$, called the \emph{budget}. 
\end{itemize}
The main difference between \textsc{Clause Cut} and \gdpcfull{}, considered in the previous section, is that here we allow clauses of arity higher than two. On the other hands, we do not consider weights here. 

For a \textsc{Clause Cut} instance $\inst = (G,s,t,\clauses,\bundles,k)$, its \emph{arity}
is the maximum over: size of any clause and the number of vertices involved in any
bundle (a vertex is involved in a bundle if it is an endpoint of an arc of the bundle or an element of a clause of the bundle).
An arc or a clause of $\inst$ is \emph{soft} if it is contained in a bundle and \emph{crisp}
otherwise. 
A \emph{cut} in $\inst$ is an $st$-cut $Z$ in $G$ that contains only soft arcs. 
A clause $C\in\clauses$ is \emph{satisfied} by $Z$ if at least one element of $C$ is not reachable
from $s$ in $G-Z$, and \emph{violated} otherwise.
A cut $Z$ is a \emph{feasible solution} if every crisp clause is satisfied by $Z$. 
A bundle $B$ is \emph{satisfied} by a feasible solution $Z$ if no arc of $B$ is in $Z$
and no clause of $B$ is violated by $Z$, and \emph{violated} otherwise.
The \emph{cost} of a feasible solution is the number of violated bundles. 
A \textsc{Clause Cut} instance $\inst$ asks for the existence of a feasible solution $Z$ of cost
at most $k$. 
A \textsc{Clause Cut} instance $\inst$ is \emph{$2K_2$-free} if for every bundle $B$,
the arcs of $B$ in $G-\{s,t\}$ form a $2K_2$-free graph, that is, for every two arcs $e,f$ of $B$,
one of the following cases holds:
(1)~$e$ and $f$ share an endpoint, (2)~$e$ or $f$ is incident with  $s$ or $t$, (3)~$e$ or $f$ is a loop,
or (4)~there is another
arc of $B$ with one endpoint shared with $e$ and the other endpoint shared with $f$.

Note that if $\inst = (G,s,t,\clauses,\bundles,k)$ is of arity at most $\maxarity$
and if $Z$ is a feasible solution to $\inst$ of cost at most $k$, then
$|Z| \leq k\maxarity^2$, as a bundle cannot contain multiple copies of the same arc.
Canonical definitions allow us to phrase the input instance to \MinSAT{\isdpositive}
as a \textsc{Clause Cut} instance with parameter and arity bounded by $k$ and $\maxarity$.
The next lemma is a restatement of Lemma~\ref{lem:intro:isd-to-graph} with the current
notation.

\begin{lemma}\label{lem:isd-to-graph}
Given a \MinSAT{\isdpositive}
instance $(\F,k)$
one can in polynomial time output an equivalent $2K_2$-free \textsc{Clause Cut} instance
of arity at most $\maxarity$ and budget at most $k$. 
\end{lemma}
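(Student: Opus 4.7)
The plan is to construct the \textsc{Clause Cut} instance by placing a vertex for every variable of $\F$ (together with two fresh vertices $s$ and $t$), and turning each constraint of $\F$ into a single bundle whose contents are read off from its canonical definition. Concretely, for each constraint $\constraint = R(x_1, \dots, x_r)$ of $\F$, I compute $\candef{\constraint}$ (this is a constant-sized computation, since $\Gamma$ is fixed), and I create a bundle $B_\constraint$ containing: a fresh copy of the arc $(s, x_i)$ for every $x_i \in \ones{\constraint}$; a fresh copy of the arc $(x_i, t)$ for every $x_i \in \zeroes{\constraint}$; a fresh copy of the arc $(x_i, x_j)$ for every implication $(x_i \to x_j) \in \candef{\constraint}$ with $x_i, x_j \in \rest{\constraint}$; and a fresh copy of the clause $\{x_{i_1}, \dots, x_{i_m}\}$ for every negative clause of $\candef{\constraint}$ (all on variables of $\rest{\constraint}$). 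The budget is kept as $k$, and all arcs/clauses of the output are soft (they all lie in some bundle). Using fresh copies everywhere ensures that $\bundles$ consists of pairwise disjoint subsets, as required.

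The two structural checks are immediate. Every bundle $B_\constraint$ involves at most $r \leq \maxarity$ variables, and every clause has size at most $\maxarity$, so the arity bound holds. For $2K_2$-freeness, note that the arcs of $B_\constraint$ lying in $G-\{s,t\}$ correspond exactly to the implications $(x_i \to x_j) \in \candef{\constraint}$ with $x_i, x_j \in \rest{\constraint}$, which by construction is the edge set of $H_R$ restricted to the vertices in $\rest{\constraint}$. Since $H_R$ is $2K_2$-free by definition of $\isdpositive$, and the class of $2K_2$-free graphs is closed under vertex deletion, this arc set is $2K_2$-free.

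For equivalence, I show a correspondence between deletion sets of $\F$ and feasible solutions of the output instance. In one direction, given an assignment $\alpha$ of $V(\F)$ whose associated deletion set $Y$ has $|Y| \leq k$, extend $\alpha$ by $\alpha(s)=1$, $\alpha(t)=0$ and let $Z$ be the set of arcs whose tail is mapped to $1$ and whose head is mapped to $0$. Then $Z$ is an $st$-cut, and I verify bundle-by-bundle that if $\constraint$ is satisfied by $\alpha$ then $B_\constraint$ is not violated by $Z$: the $(s,x_i)$- and $(x_i,t)$-arcs are not in $Z$ because of the values forced on $\ones{\constraint}$ and $\zeroes{\constraint}$; the $(x_i,x_j)$-arcs are not in $Z$ because $\alpha$ satisfies the corresponding implications; and every negative clause has some literal with $\alpha(x)=0$, so the corresponding clause is satisfied in the graph. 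In the reverse direction, given a feasible solution $Z$ of cost at most $k$, I define $\alpha(x)=1$ iff $x$ is reachable from $s$ in $G-Z$ and check that whenever $B_\constraint$ is satisfied by $Z$, the assignment $\alpha$ satisfies every element of $\candef{\constraint}$, and hence $\constraint$ itself.

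The only non-routine step is guaranteeing that the forward-direction arcs of the form $(s,x_i)$ really force $\alpha(x_i)=1$ in the reverse direction: since $(s, x_i)$ is not in $Z$, the single-edge path $s \to x_i$ survives in $G-Z$, so $x_i$ is reachable from $s$, which gives $\alpha(x_i)=1$; the analogous check for $(x_i,t)$-arcs uses that $Z$ is an $st$-cut. Modulo these straightforward verifications, the construction runs in polynomial time and produces an equivalent $2K_2$-free \textsc{Clause Cut} instance of arity at most $\maxarity$ and budget $k$.
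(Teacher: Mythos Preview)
Your construction and equivalence argument follow the paper's proof essentially verbatim: build one bundle per constraint from its canonical definition, and translate between assignments and $st$-cuts in the obvious way.

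There is one genuine gap in your $2K_2$-freeness argument. You write that the arcs of $B_{\constraint}$ in $G-\{s,t\}$ form ``the edge set of $H_R$ restricted to the vertices in $\rest{\constraint}$'' and conclude $2K_2$-freeness from closure under \emph{vertex deletion}. But the vertex set of $H_R$ is the set of \emph{positions} $[r]$, not the set of variables; the scope $(x_1,\dots,x_r)$ of a constraint is an $r$-tuple and may repeat variables. When two positions $i,j$ carry the same variable, the arcs you lay down live in a graph obtained from $H_R$ by \emph{identifying} those positions (and then deleting the positions outside $\rest{\constraint}$). Closure of $2K_2$-free graphs under vertex identification is exactly what is needed here (Lemma~\ref{lem:2k2-free-contraction} in the paper); vertex deletion alone does not cover this. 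The fix is immediate once you invoke that lemma, but as written the argument is incomplete.

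A second, smaller omission: the canonical definition $\candef{\constraint}$ is only defined for $R\neq\emptyset$, so you should first strip out never-satisfiable constraints (decreasing $k$ accordingly and returning a trivial no-instance if $k$ becomes negative), as the paper does.
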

\begin{proof}
First, we delete all constraints of $\F$ that do not have any satisfying assignments and decrease
$k$ accordingly. If $k$ becomes negative as a result, we return a trivial \no-instance. 

The vertex set $V(G)$ consists of the variable set of $\F$ together with additional vertices $s$ and $t$.
Intuitively, $s$ will represent the value $1$ and $t$ will represent the value $0$. 

For every constraint $\constraint=R(x_1,\ldots,x_r)$ in $\F$, we proceed as follows; note that $R\in\isdpositive$. 
For every implication $(x_i\rightarrow x_j)$ of $\candef{\constraint}$, we add the corresponding arc $(x_i,x_j)$ to $G$, noting that this will create a separate copy of the arc each time it is applied to a new constraint but only a single arc per constraint.
For every implication $(1\rightarrow x_i)$ of $\candef{\constraint}$, we add an arc $(s,x_i)$ and for every implication $(x_i\rightarrow 0)$ of $\candef{\constraint}$ we add an arc $(x_i,t)$, again creating separate copies each time but only one arc per constraint.
For every negative clause of $\candef{\constraint}$, we add the corresponding clause to $\clauses$; that is, 
whenever there is a clause $\bigvee_{i\in J} (\neg x_i)$ in $\candef{\constraint}$ for some $J\subseteq [r]$, the vertex set $\{x_i:i\in J\}$ is created as a clause in $\clauses.$
Finally, we add to $\bundles$ a bundle consisting of all arcs and clauses created for constraint \constraint.

The budget of the output instance equals the current value of $k$ (i.e., after the deletions
of never-satisfied constraints). This completes the description of the output \textsc{Clause Cut}
instance. 
The bound on the arity of the output instance follows immediately from the bound $\maxarity$
on the arity of constraints of the input instance. 
To see that the output instance is $2K_2$-free, note that for every $\constraint=R(x_1,\ldots,x_r)$ in $\F$ (hence $R \in \isdpositive$)
the set of implications of $\candef{\constraint}$ on $\rest{\constraint}$ is $2K_2$-free because the arrow graph $H_R$ of $R$ is $2K_2$-free and, further,
the class of $2K_2$-free graphs is closed under the operation of vertex identification (Lemma~\ref{lem:2k2-free-contraction}).

It remains to prove equivalence of the instances. 
\begin{claim}
The input instance $(\F,k)$ and the constructed \textsc{Clause Cut} instance are equivalent.
\end{claim}
\begin{proofofclaim}
In one direction, consider a feasible solution $Z$ to the output instance. 
Let $\alpha$ be a variable assignment that assigns $\alpha(x)=1$ if and only if the vertex
corresponding to $x$ is reachable from $s$ in $G-Z$. 
For every constraint $\constraint=R(x_1,\ldots,x_r)$ of $\F$ (minus those removed for being never satisfiable), consider 
its canonical definition $\candef{\constraint}$. 
Observe that if an implication 
of $\candef{\constraint}$ is violated by $\alpha$, then the corresponding
arc of $G$ is in $Z$, and if a negative clause of $\candef{\constraint}$ is violated by $\alpha$, 
then the corresponding clause of $\clauses$ is violated by $Z$. 
Consequently, if a constraint of $\F$ is violated by $\alpha$, then its corresponding
bundle is violated by $Z$. Thus, $\alpha$ violates at most $k$ constraints. 

In the other direction, let $\alpha$ be a variable assignment that violates
at most $k$ constraints of $\F$. 
For each constraint $\constraint=R(x_1,\ldots,x_r)$, with $R\in\isdpositive$, 
consider its canonical definition $\candef{\constraint}$.
Define the set $Z$
to be the set of all arcs of $G$ that correspond to the implications 
of the canonical definitions
that are violated by $\alpha$ (over all constraints of $\F$). 
Observe that $Z$ is an $st$-cut, as a path from $s$ to $t$ in $G$ corresponds to a sequence
of implications from $1$ to $0$, and thus one of these implications is violated by $\alpha$.
Since there are no crisp clauses, $Z$ is a feasible solution.
Finally, observe that if $Z$ violates a bundle, then $\alpha$ violates the corresponding
constraint of $\F$. Hence, the cost of $Z$ is at most $k$, as desired. 
\end{proofofclaim}

This finishes the proof of the lemma.
\end{proof}

\subsection{Basic operations on a \textsc{Clause Cut} instance}\label{ss:basic-clause-cut}

Lemma~\ref{lem:isd-to-graph} ensures that, in order
to prove Theorem~\ref{thm:alg-IS}, it suffices to solve \textsc{Clause Cut}. 

Let $\inst = (G,s,t,\clauses,\bundles,k)$ be a \textsc{Clause Cut} instance. 
We will use the following four basic operations on $G$. 

For $x,y \in V(G)$, 
by \emph{adding a (crisp) arc $(x,y)$}, we mean adding the arc $(x,y)$ to $G$
(and not adding it to any bundle, so it becomes crisp). 
For $x,y \in V(G)$, by \emph{identifying $x$ and $y$} we mean contracting $x$ and $y$
into a single vertex of $G$ (if $x$ or $y$ equals $s$ or $t$, the new vertex is also called
    $s$ or $t$; we will never identify $s$ with $t$).
Recall that $2K_2$-freeness is preserved under vertex identification (Lemma~\ref{lem:2k2-free-contraction}).
For $B \in \bundles$, by \emph{breaking $B$} we mean deleting $B$ from $\bundles$
(thus making all arcs and clauses of $B$ crisp). 
For $B \in \bundles$, by \emph{deleting $B$} we mean deleting $B$ from $\bundles$, 
deleting all its arcs and clauses from $E(G)$ and $\clauses$, and decreasing $k$ by one. 
These operations have the following straightforward properties.

\begin{proposition}
Let $\inst = (G,s,t,\clauses,\bundles,k)$ be a $2K_2$-free \textsc{Clause Cut} instance
of arity at most $\maxarity$. 
\begin{itemize}
\item Let $x,y \in V(G)$ and let $\inst'$ be the result of adding the arc $(x,y)$ to $G$.
  If $\inst'$ is a \yes-instance, so is $\inst$.
  In the other direction,
  if $\inst$ is a \yes-instance with a feasible solution $Z$ of cost at most $k$ 
  such that either $x$ is not reachable from $s$ in $G-Z$ or $y$ is reachable from $s$ in $G-Z$,
  then $\inst'$ is a \yes-instance.
\item Let $x,y \in V(G)$, $\{x,y\} \neq \{s,t\}$,
  and let $\inst'$ be the result of identifying $x$ and $y$.
  If $\inst'$ is a \yes-instance, so is $\inst$.
  In the other direction,
  if $\inst$ is a \yes-instance with a feasible solution $Z$ of cost at most $k$ 
  such that $x$ is reachable from $G-Z$ if and only if $y$ is,
  then $\inst'$ is a \yes-instance.
\item Let $B \in \bundles$ and let $\inst'$ be the result of breaking $B$.
  If $\inst'$ is a \yes-instance, so is $\inst$.
  In the other direction,
  if $\inst$ is a \yes-instance with a feasible solution $Z$ of cost at most $k$ 
    that does not violate $B$,
  then $\inst'$ is a \yes-instance.
\item Let $B \in \bundles$ and let $\inst'$ be the result of deleting $B$.
  If $\inst'$ is a \yes-instance, so is $\inst$.
  In the other direction,
  if $\inst$ is a \yes-instance with a feasible solution $Z$ of cost at most $k$ 
    that violates $B$,
  then $\inst'$ is a \yes-instance.
\end{itemize}
\end{proposition}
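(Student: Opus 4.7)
The plan is to verify each of the four bullets independently; each requires two directions, namely that the operation cannot turn a no-instance into a yes-instance, and that under the stated hypothesis a feasible $Z$ in $\inst$ transfers to a feasible solution in $\inst'$. In every case the core observation is the same: one controls how the set $S(Z) \subseteq V(G)$ of vertices reachable from $s$ in $G-Z$ compares between the two instances, since $S(Z)$ determines both the $st$-cut property and the status of every clause and bundle.

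For the arc-addition item, I would first note that $(x,y)$ is crisp in $\inst'$, so any feasible solution $Z'$ to $\inst'$ lives entirely in the original soft arcs of $\inst$; moreover the set of vertices reachable from $s$ in $(G+(x,y))-Z'$ is contained in that of $G-Z'$, since a simple path using the new arc must first reach $x$ via a simple subpath that avoids $(x,y)$. This yields the forward direction. For the backward direction, the hypothesis is used as follows: if $x \notin S(Z)$ then $(x,y)$ can never be traversed from $s$, while if $y \in S(Z)$ then traversing $(x,y)$ reveals no new vertex. In both cases the reachable set is preserved exactly, so all clause, bundle, and cut statuses carry over.

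The identification case is symmetric: identifying two vertices that lie on the same side of $Z$ does not alter the reachable set (modulo the identification) nor the edge set, so $Z$ is feasible in $\inst'$ with the same cost; conversely reachability in the identified graph is a quotient of the original reachability, so any feasible $Z'$ in $\inst'$ transfers back. Breaking $B$ is even more immediate because $G$ and $\clauses$ are unchanged: a solution $Z$ in $\inst$ that does not violate $B$ uses no arc of $B$ and leaves every clause of $B$ satisfied, which is exactly what the newly-crisp elements of $B$ require in $\inst'$; and any solution $Z'$ to $\inst'$ automatically meets the weaker crispness conditions of $\inst$ with no change in cost, since bundle $B$ cannot contribute a violation when all its arcs/clauses are already forced to be non-violated.

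The one item that requires a small construction is deleting $B$, which I expect to be the main (mild) obstacle because the underlying graph actually changes. In the backward direction, given a feasible $Z$ in $\inst$ that violates $B$, I would set $Z' := Z \setminus \mathrm{arcs}(B)$; removing arcs of $B$ from the graph only shrinks the $s$-reachable set, so every clause satisfied by $Z$ in $\inst$ remains satisfied by $Z'$ in $\inst'$, and the one violation of $B$ is traded for the decrement of the budget. In the forward direction, given $Z'$ feasible in $\inst'$, I would let $S^-$ be the $s$-reachable set in $(G-\mathrm{arcs}(B))-Z'$ and put $Z := Z' \cup \{(u,v) \in \mathrm{arcs}(B) : u \in S^-,\, v \notin S^-\}$. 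A direct check shows that the $s$-reachable set in $G-Z$ equals $S^-$, so $Z$ is an $st$-cut in $G$ with identical clause-satisfaction pattern, and the at most one additional violated bundle (namely $B$) is absorbed by the budget increase. The only subtlety throughout is to keep in mind that deleting arcs of $B$ can only help clauses become satisfied, which is precisely what makes the two transfer steps compatible.
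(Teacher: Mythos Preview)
The paper does not actually prove this proposition; it is stated with the comment that the listed properties are ``straightforward'' and left to the reader. Your overall plan is the right one, and in particular your treatment of the second, third, and fourth bullets is correct. The construction you give for the forward direction of bundle deletion (taking $Z := Z' \cup \{(u,v)\in\mathrm{arcs}(B): u\in S^-,\,v\notin S^-\}$ and checking that the reachable set in $G-Z$ is exactly $S^-$) is precisely the clean way to handle the one case where the underlying graph changes.

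There is, however, a genuine slip in the first bullet. You claim that the set of vertices reachable from $s$ in $(G+(x,y))-Z'$ is \emph{contained in} the set reachable in $G-Z'$, justified by ``a simple path using the new arc must first reach $x$ via a subpath that avoids $(x,y)$''. This is backwards: adding an arc can only \emph{enlarge} reachability, so the correct containment is
\[
\mathrm{Reach}_{G-Z'}(s)\ \subseteq\ \mathrm{Reach}_{(G+(x,y))-Z'}(s).
\]
Your justification only shows that $x$ is reachable in $G-Z'$ whenever it is reachable in the larger graph, but says nothing about $y$ or vertices downstream of $y$. Fortunately, the \emph{correct} direction is exactly what the forward implication needs: since $Z'$ consists of soft arcs of $\inst$ and $t\notin\mathrm{Reach}_{(G+(x,y))-Z'}(s)$, the smaller set $\mathrm{Reach}_{G-Z'}(s)$ also misses $t$ (so $Z'$ is an $st$-cut in $G$); and any clause violated by $Z'$ in $\inst$ has all its vertices in the smaller reachable set, hence also in the larger one, so it was already violated in $\inst'$. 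With the containment flipped, your argument for the first bullet goes through without further change.
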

\begin{proof}
The proofs are straightforward and very similar to each other; we provide here only the proof of the last point
as an example (as it is probably the most complicated).

Let $G$ ($G'$) be the graph of the instance $\inst$ (resp. $\inst'$). 

In one direction, assume that $\inst'$ is a yes-instance and let $Z'$ be a feasible solution to $\inst'$ of cost at most $k-1$. 
Let $Z_B$ be the set of those arcs $(u,v) \in B$ such that, in $G'$, the vertex $u$ is in the $s$-side of $Z'$ but $v$ is in the $t$-side of $Z'$.
Let $Z = Z' \cup Z_B$. Observe that, by the definition of $Z_B$, $Z$ is a star $st$-cut in $G$
and a vertex $u \in V(G) = V(G')$ is in the $s$-side of $Z$ in $G$ if and only if it is in the $s$-side of $Z'$ in $G'$.
Consequently, every bundle violated by $Z$ in $\inst$ is either equal to $B$ or is also violated by $Z'$ in $\inst'$. Hence, $Z$ is a feasible solution of cost at most $k$ in $\inst$, as desired.

In the other direction, assume that $\inst$ is a yes-instance that admits a feasible solution $Z$ of cost at most $k$ that violates $B$. 
Note that if $u \in V(G)$ is reachable from $s$ in $G'-(Z \setminus B)$, then it is also reachable from $s$ in $G-Z$, but the implication in the other direction may not be true.
Let $Z'$ be the set of those arcs $(u,v) \in Z \setminus B$ such that $u$ is reachable from $s$ in $G'-(Z \setminus B)$. 
Observe that, as $Z$ is a star $st$-cut in $G$, the vertex $v$ is not reachable from $s$ in $G'-(Z \setminus B)$. 
Hence, $Z'$ is a star $st$-cut in $G'$ with the following property: if $v \in V(G) = V(G')$ is in the $s$-side of $Z'$ in $G'$, then $v$ is also in the $s$-side of $Z$ in $G$.
Consequently, if $Z'$ violates a bundle $B'$ of $\inst'$, then $Z$ violates the same bundle in $\inst$. 
We infer that $Z'$ is a feasible solution to $\inst'$ of cost at most $k-1$, as desired. 
\end{proof}

\subsection{Solving \textsc{Clause Cut}}\label{ss:clause-cut}

Armed with the basic operations of the previous section, we are now ready to solve
\textsc{Clause Cut}, that is, to prove Theorem~\ref{thm:alg-clause-cut}.

\algclausecut*
\begin{proof}
As in the previous section, for clarity we focus only on the randomized part of Theorem~\ref{thm:alg-clause-cut}.
Again, the deterministic counterpart follows directly by replacing every randomized step by either simple
branching or, in the single color-coding step of the algorithm with an appropriate use of splitters~\cite{NaorSS95}. 

Let $\inst = (G,s,t,\clauses,\bundles,k)$ be a \textsc{Clause Cut} instance. 

The algorithm will make modifications to $\inst$, usually in the form of adding
(crisp) arcs to $G$ or identifying two vertices of $G$. It will be immediate that
neither of the operations can turn a \no-instance into a \yes-instance. 
The algorithm will make a number of randomized choices; we will make sure that
if the input instance is a \yes-instance, then it will stay a \yes-instance with sufficiently high
probability. Accordingly, we henceforth work under the assumption that $\inst$ is a \yes-instance
and we want to certify this fact with good probability. 

The algorithm will make a number of random choices, either finding a feasible solution of cost
at most $k$ to the instance at hand, or recursing. In the recursion, we will always guarantee
that either the budget $k$ decreases, or the budget $k$ stays while the value of the 
maximum $st$-flow $\lambda_G(s,t)$ increases.
As $\lambda_G(s,t) > k\maxarity^2$ implies that we are dealing with a \no-instance, this will 
guarantee a bound of $\Oh(k^2 \maxarity^2)$ on the depth of the recursion.
At every step of the recursion, any random choice will be correct with probability 
$2^{-\Oh(k^4 \maxarity^8 \log(k\maxarity))}$.
Together with the recursion depth bound, this will give the desired success probability of Theorem~\ref{thm:alg-clause-cut}.

We start by observing that we can restrict ourselves to feasible solutions $Z$
that are star $st$-cuts: if $Z$ is a feasible solution, $(x,y) \in Z$, but either $x$
is not reachable from $s$ in $G-Z$ or $y$ is reachable from $s$ in $G-Z$, then 
$Z \setminus \{(x,y)\}$ is a feasible solution as well, and, since
the set of vertices reachable from $s$ is the same in $G-Z$ as in $G-(Z \setminus \{(x,y)\})$, 
the cost of $Z \setminus \{(x,y)\}$ is not larger than the cost of $Z$.
Together with the bound $|Z| \leq k\maxarity^2$, this allows us to invoke directed flow-augmentation (Theorem~\ref{thm:dir-flow-augmentation}),
obtaining a pair $(A,\witnessflow)$. 
(In the deterministic setting, we use here Theorem~\ref{thm:dir-flow-augmentation-det}
 and iterate over all returned pairs $(A,\witnessflow)$.)
We add $A$ as crisp arcs to $G$. 
Observe that if $\inst$ is a \no-instance, the resulting instance is still a \no-instance,
while if $\inst$ is a \yes-instance,
with probability $2^{-\Oh(k^4 \maxarity^8 \log (k \maxarity))}$
the resulting instance has a feasible solution $Z$ of cost at most $k$
that is a star st-cut, $\corecutG{Z}{G+A}$ is an $st$-mincut, and $\witnessflow$
is a witnessing flow for $Z$.
We proceed further with the assumption that this is the case.

\paragraph{Structure of cut edges on flow paths.}
Let $\lambda = \lambda_G(s,t)$ and let $\witnessflow = \{P_1,P_2,\ldots,P_\lambda\}$. For $i \in [\lambda]$, let $e_i^Z$ be the unique edge of $Z$ on $P_i$.
Recall that $\lambda \leq |Z| \leq k\maxarity^2$. 
Randomly sample a function $f \colon [\lambda] \to [k]$, aiming for the following: For every $1 \leq i,j \leq \lambda$, 
we have $f(i) = f(j)$ if and only if $e_i^Z$ and $e_j^Z$ are in the same bundle.
Note that this happens with probability at least $k^{-\lambda} = 2^{-\Oh(k\maxarity^2 \log k)}$. 
We henceforth assume that this is the case.

Let $R_f \subseteq [k]$ be the range of $f$. 
For every $a \in R_f$ and for every two distinct $i,j \in f^{-1}(a)$, proceed as follows.
Recall that $e_i^Z$ and $e_j^Z$ are in the same bundle, denote it $B_a^Z$.
Since we are working with a $2K_2$-free instance and $Z$ is a star $st$-cut, one of the following options holds:
\begin{itemize}
\item $e_i^Z$ or $e_j^Z$ starts in $s$ or ends in $t$;
\item $e_i^Z$ and $e_j^Z$ share their tail;
\item $e_i^Z$ and $e_j^Z$ share their head;
\item there is another arc of $B_a^Z$ with one endpoint in $e_i^Z$ and second endpoint in $e_j^Z$. 
\end{itemize}
Note that the head of $e_i^Z$ cannot coincide with the tail of $e_j^Z$ (or vice versa) since both are edges of an $st$-mincut.
We randomly guess which of the four options above happens; if more than one option applies, we aim at guessing the earliest one applicable.
Additionally, in the first option we guess which of the four sub-options happen and in the last case we guess whether
the extra arc has its head in $e_i^Z$ and tail in $e_j^Z$ or vice versa. 
Our guess (for every $a$ in the range of $f$) is correct with probability $2^{-\Oh(\lambda^2)} = 2^{-\Oh(k^2 \maxarity^4)}$ and we continue under this assumption.
We henceforth call the above guess the \emph{relation between $e_i^Z$ and $e_j^Z$}. 

We also perform the following sampling: Every arc becomes red with probability $1/2$, independently of the other arcs. 
We aim for the following: For every bundle $B$ that is violated by $Z$, the red arcs of $B$ are exactly the arcs of $B$ that are in $Z$.
Since every bundle contains at most $\maxarity^2$ arcs, this happens with probability $2^{-\Oh(k \maxarity^2)}$. 
Furthermore, if our guess is correct, the following step maintains that $Z$ is a solution: For every soft arc $(u,v)$ that is not red, we add a crisp copy $(u,v)$, unless it is already
present in the graph.
(We remark in passing that the derandomization of this step 
 requires standard color-coding derandomization tools.)

\begin{definition}\label{def:candidate}
For every $a \in R_f$, we say that a bundle $B$ is a \emph{candidate for $B^Z_a$} if the following holds:
\begin{itemize}
\item for every $i \in f^{-1}(a)$, $B$ contains exactly one red arc on $P_i$, denoted henceforth $e(B,i)$;
\item for every $i \in [\lambda] \setminus f^{-1}(a)$, $B$ contains no red arc on $P_i$;
\item for every two distinct $i,j \in f^{-1}(a)$, the arcs $e(B,i)$ and $e(B,j)$ are in relation as guessed;
\item the set $\{e(B,i)~|~i \in f^{-1}(a)\}$ is a subset of some $st$-mincut in $G$ that uses only soft arcs.
\end{itemize}
\end{definition}
For every $i \in [\lambda]$, we break every bundle that contains a red edge of $P_i$ but is not a candidate for $B^Z_{f(i)}$. 
Note that if our guesses are correct, $Z$ remains a feasible solution.

We remark that, because of the last condition in the definition of a candidate for $B^Z_a$, the above step can be iterated and we perform it exhaustively.
That is, breaking a bundle $B$ makes some arcs crisp, which in turn can shrink the space of all $st$-mincuts that use only soft edges, which can break the last condition above for a different bundle. 

If for some $i \in [\lambda]$, we have guessed that $e^Z_i$ has its tail in $s$ or head in $t$, there is only one remaining candidate for
$B^Z_{f(i)}$: $B^Z_{f(i)}$ is the unique bundle that contains the first or last, respectively, edge of $P_i$. 
If this is the case, we delete the corresponding bundle and recurse. 
Henceforth we assume that this option does not happen for any $i \in [\lambda]$. 

Now, the crucial observation is that for every $a \in R_f$, the candidates for $B^Z_a$ are linearly ordered along all flow paths $P_i$, $i \in f^{-1}(a)$. 

\begin{claim}\label{claim:clause-cut:order}
Let $a \in R_f$, let $B$ and $B'$ be two candidates for $B^Z_a$, and let $i,j \in f^{-1}(a)$ be distinct.
Then $e(B,i)$ is before $e(B',i)$ on $P_i$ if and only if $e(B,j)$ is before $e(B',j)$ on $P_j$. 
\end{claim}

\begin{proofofclaim}
Assume the contrary. 
Note that the roles of $B$ and $B'$ are symmetrical and the roles of $i$ and $j$ are symmetrical. As a result, without loss of generality we can assume the following:
$e(B,i)$ is before $e(B',i)$ on $P_i$, $e(B,j)$ is after $e(B',j)$ on $P_j$,
and, furthermore, 
either $e_i^Z$ and $e_j^Z$ share a tail, $e_i^Z$ and $e_j^Z$ share a head, or there is another arc of $B_a^Z$ with a tail in $e_i^Z$ and head in $e_j^Z$. 

We claim that there is no $st$-mincut in $G$ that contains both $e(B',i)$ and $e(B',j)$, which is a contradiction to the assumption that $B'$ is a candidate for $B^Z_a$.
Indeed, the guessed relation between $e_i^Z$ and $e_j^Z$ and the fact that $B$ is a candidate for $B^Z_a$
implies that the following path from $s$ to $t$ would not be cut by such an $st$-mincut: follow $P_i$ from $s$ to $e(B,i)$, 
move to $e(B,j)$ via shared tail, head, or assumed arc with a tail in $e(B,i)$ and head in $e(B,j)$, and continue along $P_j$ to $t$. 
This finishes the proof.
\end{proofofclaim}

Claim~\ref{claim:clause-cut:order} allows us to enumerate, for every $a \in R_f$, the candidates for $B^Z_a$ as $B_{a,1}, B_{a,2},\ldots,B_{a,n_a}$, where
for every $i \in f^{-1}(a)$ and for every $1 \leq p < q \leq n_a$, the edge $e(B_{a,p},i)$ appears on $P_i$ before $e(B_{a,q},i)$. 

\paragraph{Solution closest to $\boldsymbol{s}$ and branching.}
The breaking of some of the bundles, performed in the previous paragraphs,
turned some soft arcs into crisp arcs.
Furthermore, we added crisp copies of all soft arcs that are not red. Both these operations potentially 
restricted the space of $st$-mincuts as crisp arcs are considered undeletable.
If, as a result, $\witnessflow$ is no longer an $st$-maxflow (i.e., $\lambda_G(s,t) > |\witnessflow|$ at this point), then we recurse on the current instance.
Otherwise, we observe that the $st$-mincut closes to $s$ plays an important role.
\begin{claim}\label{claim:clause-cut:X}
Let $X$ be the $st$-mincut closest to $s$. Then,
$$X = \{e(B_{f(i),1},i)~|~i \in [\lambda]\}.$$
\end{claim}
\begin{proofofclaim}
For every $a \in R_f$, let $X_a$ be an $st$-mincut that contains the edges $\{e(B_{a,1},i)~|~i \in f^{-1}(a)\}$.
(Such an $st$-mincut exists as $B_{a,1}$ is a candidate for $B^Z_a$ and, in particular, 
satisfies the last property of Definition~\ref{def:candidate}.)
Let $S_a$ be the set of vertices reachable from $s$ in $G-X_a$.
Then, by submodularity (Lemma~\ref{lem:intersect-cuts}), $X' := \{(x,y)\in E(G): x\in \bigcap_{a \in R_f} S_a, y\in V(G)\setminus \big ( \bigcap_{a \in R_f} S_a\big )\}$ is an $st$-mincut as well.

Note that, for $i \in [\lambda]$, the edge $e(B_{f(i),1},i)$ is the first soft edge on $P_i$ that does not have a crisp copy. 
Hence, the whole prefix of $P_i$ from $s$ to the tail of $e(B_{f(i),1},i)$ is in $S_a$ for every $a \in R_f$.
Furthermore, the suffix of $P_i$ from the head of $e(B_{f(i),1},i)$ to $t$ is not in $S_{f(i)}$. 
Consequently, $X' \supseteq \{e(B_{f(i),1},i)~|~i \in [\lambda]\}$. 
As $X'$ is an $st$-mincut, $|X'| = \lambda$, and thus
$X' = \{e(B_{f(i),1},i)~|~i \in [\lambda]\}$. 

Finally, as the edge $e(B_{f(i),1},i)$ is the first soft edge on $P_i$ that does not have a crisp copy for every $i \in [\lambda]$, for every $st$-mincut, its 
edge on $P_i$ is not earlier than $e(B_{f(i),1},i)$. 
Consequently, $X'$ is the $st$-mincut closest to $s$, as desired.
\end{proofofclaim}

The algorithm computes $X$, the $st$-mincut closest to $s$.

Note that the arcs of $X$ are contained in exactly $|R_f| \leq k$ bundles.
We check if $X$ is a feasible solution of cost at most $k$. If this is the case, we return that we are dealing with a \yes-instance.
Otherwise, regardless of whether $X$ is not a feasible solution (it violates a crisp clause) or is a feasible solution of cost more than $k$, 
there is a clause (crisp or soft) that is violated by $X$.

Fix one such clause $C$ and randomly branch into the following (at most $\maxarity+1$) directions.
If $C$ is soft, one option is that $C$ is violated by the sought solution $Z$; if we guess that this is the case, we delete the bundle containing $C$ and recurse.
In the remaining $|C| \leq \maxarity$ options, we guess a vertex $x \in C$ that is not reachable from $s$ in $G-Z$, add an arc $(x,t)$ and recurse. 
If such a vertex $x$ is guessed correctly, $Z$ remains a feasible solution after the arc $(x,t)$ is added. Furthermore, since $C$ is violated by $X$,
there is a path from $s$ to $x$ in $G-X$; since $X$ is the $st$-mincut closest to $s$, adding the arc $(x,t)$ strictly increases the value of maximum $st$-flow in the recusive call. 

This finishes the description of the recursive algorithm.

\medskip

In total, the success probability of all guesses made by a single recursive
call is $2^{\Oh(k^4 \maxarity^8 \log(k\maxarity))}$
(with the application of the flow-augmentation being the bottleneck). 
The depth of the recursion is bounded by $\Oh(k^2 \maxarity^2)$: 
the value of $k$ can only decrease and, between subsequent decreases of $k$,
the value of $\lambda_G(s,t)$ can increase up to the value of $k\maxarity^2$. 
The success probability promised in Theorem~\ref{thm:alg-clause-cut} follows.
This finishes the proof of Theorem~\ref{thm:alg-clause-cut}.
\end{proof}

Theorem~\ref{thm:alg-IS} follows immediately from Theorem~\ref{thm:alg-clause-cut} and
Lemma~\ref{lem:isd-to-graph}.


\section{Min SAT($\boldsymbol{\Gamma}$) dichotomy}
\label{sec:dichotomy}
\subsection{Statement of results}

We can now state the full classification of \MinSAT{\Gamma} and \WeightedMinSAT{\Gamma} into being either FPT or \classWone-hard. Throughout, and in the following, we consider parameterization by solution size $k$.
Our goal is to show Theorem~\ref{ithm:dichotomy}, restated below.

\thmdich*


\subsection{CSP preliminaries}

We need some additional terminology for CSPs.

Let $\Gamma$ be a Boolean constraint language and $R \subseteq \{0,1\}^r$ a Boolean relation.
A \emph{pp-definition}, or \emph{primitive positive definition} of $R$ over $\Gamma$
is a formula $\cF$ over $\Gamma \cup \{=\}$ on variables $X \cup Y$ such that
\[
  R(X) \equiv \exists Y \colon  \cF(X,Y).
\]
In other words, for every tuple $\alpha \in \{0,1\}^{|X|}$, $\alpha \in R$ if and only if
there exists $\beta \in \{0,1\}^{|Y|}$ such that $\cF(\alpha,\beta)$ is satisfied.
A \emph{qfpp-definition}, or \emph{quantifier-free primitive positive definition} of $R$ over $\Gamma$
is a pp-definition without existential quantification, i.e., $Y=\emptyset$.
The qfpp-definition of a relation provides fine-grained explicit information
about the relation, which we use frequently in the analysis below.

Let $\cF$ be a formula over a Boolean language $\Gamma$, on a variable set $V=V(\cF)$.
Recall that the \emph{cost} in $\cF$ of an assignment $\alpha \colon V \to \{0,1\}$ is the
number of constraints in $\cF$ that are false under $\alpha$. 
More formally, for a Boolean relation $R \subseteq \{0,1\}^r$
we define a \emph{cost function} $f_R \colon 2^r \to \mathbb{N}$ by
\[
  f_R(x_1,\ldots,x_r) =
  \begin{cases}
    0 & (x_1,\ldots,x_r) \in R \\
    1 & \textrm{otherwise}.
  \end{cases}
\]
We can then treat a (weighted or unweighted) \MinSAT{\Gamma} instance
as a sum of cost functions, i.e., for a formula $\cF$ over $\Gamma$
we define the cost function
\[
f_{\cF}(\alpha) = \sum_{R(x_1,\ldots,x_r) \in \cF} f_R(\alpha(x_1),\ldots,\alpha(x_r)).
\]
Note that for weighted problems, we distinguish between the \emph{cost} 
and the \emph{weight} of an assignment, i.e., the cost function $f_{\cF}$
is defined as above even if the input instance comes with a
set of weights $\weight$.

Let $f \colon 2^r \to \mathbb{N}$ be a cost function and $\Gamma$ a constraint language.
A \emph{proportional implementation} of $f$ over $\Gamma$ 
is a formula $\cF$ over $\Gamma$, on a variable set $V(\cF)=X \cup Y$ where $|X|=r$,
such that
\[
c \cdot f(X) = \min_Y f_{\cF}(X,Y)
\]
holds for every $X \in 2^r$, for some constant $c \in \mathbb{N}_+$. 
(The case when $c=1$ has been referred to as a \emph{strict and perfect implementation}
in earlier CSP literature~\cite{CreignouKSbook01,KhannaSTW00}.)
Informally, for a relation $R$ we will say \emph{proportional implementation of $R$} 
to mean a proportional implementation of $f_R$. 

Finally, recall that a \emph{crisp} constraint is a constraint whose
violation is forbidden. This is typically modeled in the CSP literature
as a cost function $f \colon 2^r \to \{0,\infty\}$, taking the value
$f(\alpha)=\infty$ for forbidden assignments $\alpha$. However, since
we are dealing with purely finite-valued cost functions, we will choose
another model. We say that \emph{\MinSAT{\Gamma} supports crisp constraints $R(X)$}
for a relation $R \subseteq \{0,1\}^r$ if, for every $k \in \mathbb{N}$,
there exists a formula $\cF$ over $\Gamma$ such that $f_{\cF}(\alpha)=0$
for any $\alpha \in R$ and $f_{\cF}(\alpha) > k$ otherwise. 

We summarize the immediate consequences of the above. 

\begin{proposition}\label{prop:reduction-basics}
  Let $\Gamma$ be a Boolean constraint language.  The following holds.
  \begin{enumerate}
  \item If $R$ has a pp-definition over $\Gamma$, then
    \MinSAT{\Gamma} supports crisp constraints $R(X)$
  \item If $R$ has a proportional implementation over $\Gamma$,
    then \MinSAT{\Gamma \cup \{R\}} has an FPT-reduction to \MinSAT{\Gamma}.
  \end{enumerate}
\end{proposition}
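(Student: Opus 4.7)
The plan is to dispatch each claim separately using standard constraint-replacement arguments, adapted to our finite-valued setting where ``crispness'' is enforced by taking many copies of a constraint.

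For claim (1), I would first unpack the pp-definition $R(X) \equiv \exists Y\,\mathcal{G}(X,Y)$, where $\mathcal{G}$ is a conjunction of constraints over $\Gamma \cup \{=\}$. The first step is to remove each equality constraint $(u=v)$ of $\mathcal{G}$ by identifying $u$ and $v$ in the whole formula; after this preprocessing I obtain an equivalent formula $\mathcal{G}_0$ over $\Gamma$ alone (on a possibly smaller variable set, since equality is semantically transparent). Then, for a target parameter $k$, the witness formula $\cF_k$ is obtained by taking $k+1$ copies of every constraint of $\mathcal{G}_0$. Verification splits naturally into two cases: if $\alpha \in R$, pick a witness $\beta$ from the existential quantifier so that $\mathcal{G}_0$ is entirely satisfied, giving $f_{\cF_k}(\alpha,\beta)=0$; if $\alpha \notin R$, every extension $\beta$ violates at least one constraint of $\mathcal{G}_0$, hence all $k+1$ of its duplicated copies in $\cF_k$, yielding $f_{\cF_k}(\alpha,\beta) \geq k+1 > k$ as required.

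For claim (2), let $\cF_R$ be the proportional implementation of $R$ over $\Gamma$ with scaling constant $c>0$, so that $c\cdot f_R(X) = \min_Y f_{\cF_R}(X,Y)$ for every $X \in \{0,1\}^r$. Given an instance $(\cH,k)$ of \MinSAT{\Gamma \cup \{R\}}, the reduction produces an instance $(\cH',ck)$ of \MinSAT{\Gamma} in two steps: first, place $c$ copies of every $\Gamma$-constraint of $\cH$ into $\cH'$; second, for every constraint $R(X_i)$ of $\cH$, introduce a fresh disjoint copy $Y_i$ of the auxiliary variable set of $\cF_R$ and add $\cF_R(X_i,Y_i)$ to $\cH'$. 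The crucial identity to check is that for any assignment $\alpha$ to $V(\cH)$, optimally extending it to the auxiliaries satisfies $\min_{\beta} f_{\cH'}(\alpha,\beta) = c\cdot f_{\cH}(\alpha)$; this holds because the auxiliary sets $Y_i$ are pairwise disjoint, so the minimization decomposes additively across the $R$-constraint replacements, while the duplicated $\Gamma$-constraints contribute exactly $c$ times their original violation count. Therefore $(\cH,k)$ is a yes-instance if and only if $(\cH',ck)$ is, and since $c$ depends only on $\Gamma$ and $R$ (not on the input), the mapping $k \mapsto ck$ preserves parameter-boundedness, giving a valid FPT-reduction.

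Neither claim poses a serious technical obstacle; both are essentially direct applications of the respective definitions combined with constraint duplication to convert the requisite counts of violated clauses into the desired arithmetic. The only point warranting care is the disjointness of the auxiliary variable sets $Y_i$ in part (2), which is what ensures that the overall minimum over auxiliaries decomposes additively rather than introducing spurious couplings between the replacements of different $R$-constraints.
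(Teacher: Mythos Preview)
Your proposal is correct and matches the paper's approach closely: for (1) the paper simply says to repeat the pp-definition $k+1$ times, and for (2) it replaces each $R$-constraint by the implementation while duplicating the remaining $\Gamma$-constraints by the implementation factor. Your treatment is more explicit (handling equality by identification, insisting on fresh disjoint auxiliary sets $Y_i$), but these are exactly the details one fills in when expanding the paper's two-sentence proof.
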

\begin{proof}
  The former is immediate, by repeating each pp-definition $k+1$
  times for a parameter value of $k$. For the latter, assume that $R$ has a proportional
  implementation with factor $c > 0$.  Replace every constraint $R(X)$ in
  the input by a proportional implementation, and duplicate every other
  constraint $R'(X)$, $R' \in \Gamma$, $c$ times in the output.
  Correctness is immediate. 
\end{proof}

For a Boolean relation $R \subseteq \{0,1\}^r$, the \emph{dual} of $R$
is the relation $\{(1-t[1], \ldots, 1-t[r]) \mid t \in R\}$
produced by taking item-wise negation of every tuple of $R$.
For a Boolean constraint language $\Gamma$,
the \emph{dual of $\Gamma$} is the language that
contains the dual of $R$ for every $R \in \Gamma$.

\begin{proposition} \label{prop:dual}
  Let $\Gamma$ be a finite Boolean constraint language
  and let $\Gamma'$ be its dual.
  Then \MinSAT{\Gamma} (respectively \WeightedMinSAT{\Gamma}) is FPT
  if and only if \MinSAT{\Gamma'} (respectively \WeightedMinSAT{\Gamma'}) is FPT.
\end{proposition}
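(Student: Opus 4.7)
The plan is to exhibit a simple parameter-preserving reduction based on the observation that complementing every variable converts any formula over $\Gamma$ into an equivalent formula over the dual language $\Gamma'$. Concretely, given an instance $(\cF,k)$ of \MinSAT{\Gamma} on variable set $V(\cF)$, I would build $(\cF',k)$ by replacing each constraint $R(x_1,\ldots,x_r)$ of $\cF$ with the constraint $R'(x_1,\ldots,x_r)$, where $R'\in\Gamma'$ is the dual of $R$; the scope stays the same. In the weighted setting, I would additionally copy the weight function and the budget $W$ unchanged.

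The key step is a bijection on assignments: for $\alpha\colon V(\cF)\to\{0,1\}$, define $\bar\alpha(v)=1-\alpha(v)$. By the very definition of the dual relation, $(\alpha(x_1),\ldots,\alpha(x_r))\in R$ if and only if $(\bar\alpha(x_1),\ldots,\bar\alpha(x_r))\in R'$, so $\alpha$ satisfies $R(x_1,\ldots,x_r)$ in $\cF$ exactly when $\bar\alpha$ satisfies $R'(x_1,\ldots,x_r)$ in $\cF'$. Consequently, for every assignment $\alpha$,
\[
f_{\cF}(\alpha)=f_{\cF'}(\bar\alpha),
\]
and, in the weighted setting, the total weight of constraints violated by $\alpha$ in $\cF$ equals the total weight of constraints violated by $\bar\alpha$ in $\cF'$. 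Hence $(\cF,k)$ is a yes-instance of \MinSAT{\Gamma} if and only if $(\cF',k)$ is a yes-instance of \MinSAT{\Gamma'}, and the same holds verbatim for \WeightedMinSAT{\Gamma} versus \WeightedMinSAT{\Gamma'} with budgets $k$ and $W$.

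This reduction clearly runs in polynomial time, leaves the parameter $k$ unchanged, and does not blow up the instance size. The converse direction needs no separate argument: since $\Gamma$ is the dual of $\Gamma'$ (dualisation is an involution), the same construction in reverse provides a reduction from \MinSAT{\Gamma'} to \MinSAT{\Gamma}. Combining both directions yields the claimed equivalence for FPT status, and likewise for the weighted variant. I do not anticipate any genuine obstacle here; the only thing to be careful about is to verify that the bijection $\alpha\mapsto\bar\alpha$ interacts correctly with the weight function, which it does trivially since weights are attached to constraints rather than to assignments.
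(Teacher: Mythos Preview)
Your proposal is correct and follows essentially the same approach as the paper: replace each constraint by its dual relation on the same scope, and use the bijection $\alpha\mapsto\bar\alpha$ to show that the set of violated constraints is preserved; the involution property of dualisation handles the converse direction. The paper's proof is terser, but the argument is identical.
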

\begin{proof}
  Since the operation of taking the dual is its own inverse,
  it suffices to show one direction. Let $\cF$ be a formula over $\Gamma$,
  and let $\cF'$ be the formula produced by replacing every constraint
  $R(X)$ in $\cF$ by the constraint $R'(X)$ where $R'$ is the dual of $R$.
  Then for every assignment $\phi \colon V(I) \to \{0,1\}$, and every
  constraint $R(X)$ in $\cF$, $\phi$ satisfies $R(X)$ if and only if
  the assignment $\phi' \colon v \mapsto 1-\phi(v)$
  satisfies $R'(X)$. Hence the problems are essentially
  equivalent and the reduction is immediate. 
\end{proof}

\subsubsection{Post's lattice of co-clones and the structure of Boolean languages}
\label{sec:post}
Our dichotomy depends on classical results on the structure of Boolean
languages, under the expressive power of pp-definitions. These have
frequently been used in complexity dichotomies over Boolean languages,
including the characterization of MinSAT problems with constant-factor
FPT approximations by Bonnet et al.~\cite{BonnetEM16ESA}. 
For more background, see a survey of Creignou and Vollmer~\cite{CreignouKV08post}.
See also surveys of Couceiro et al.~\cite{CouceiroHL22} and Barto et al.~\cite{BartoKW17}
for a broader perspective.
We review the definitions.
  
\begin{table}
  \centering
  \begin{tabular}{lll}
    Co-clone & Common name & Basis \\
    \hline
    IR$_2$ & Unary & $(x=1)$, $(x=0)$ \\
    ID$_2$ & Bijunctive & $(x \lor y)$, $(x \to y)$, $(\neg x \lor \neg y)$ \\
    IS$_{00}^n$ & IHS-B+ & $(x=0)$, $(x \to y)$, $(x_1 \lor \ldots \lor x_n)$ \\
    IS$_{10}^n$ & IHS-B- & $(x=1)$, $(x \to y)$, $(\neg x_1 \lor \ldots \lor \neg x_n)$ \\
    IS$_{00}$ & IHS+ & $(x=0)$, $(x \to y)$, $(x_! \lor \ldots \lor x_d)$ for all $d \in \N$ \\
    IS$_{10}$ & IHS- & $(x=1)$, $(x \to y)$, $(\neg x_1 \lor \ldots \lor \neg x_d)$ for all $d \in \N$ \\
    IE$_2$ & Horn & $(\neg x \lor \neg y \lor z)$, $(x=1)$, $(x=0)$\\
    IV$_2$ & Dual Horn & $(\neg x \lor y \lor z)$, $(x=1)$, $(x=0)$ \\
    IL$_3$ & Even-arity affine & $(x_1 \oplus x_2 \oplus x_3 \oplus x_4=b)$, $b=0, 1$\\
  \end{tabular}
  \caption{Explicit definition of co-clones relevant to the dichotomy result. The basis provided is up to pp-definitions.}
  \label{tab:coclones}
\end{table}

Let $\Gamma$ be a set of Boolean relations. The \emph{co-clone generated by $\Gamma$},
denoted $\langle \Gamma \rangle$, is the set of all relations that have a pp-definition over $\Gamma$. 
An explicit description of all Boolean co-clones was derived by Post~\cite{PostsLattice41},
known as \emph{Post's lattice of co-clones}.
A presentation suitable for our purposes is given by Böhler et al.~\cite{BohlerRSV05coclones}.
A \emph{plain basis} of a co-clone $C$ is a set of relations $B \subseteq C$ such that
every relation $R \in C$ has a qfpp-definition over $B$; see Creignou et al.~\cite{CreignouKZ08plain-coclone}.%
\footnote{In fact, the requirement for a plain basis is somewhat stronger: 
  There is a qfpp-definition of every relation $R \in C$ without using equality constraints $(x=y)$
  unless $=$ is present in $B$, and without repeating any variables in a constraint application.
  This makes no difference to our purposes, but it explains why, e.g., the plain basis
  of IS$_{00}^n$ contains the positive $d$-clause of every arity $d \leq n$, while
  up to qfpp-definitions we only need the positive clause of arity $n$. 
}
Not every co-clone has a finite plain basis, but the co-clones important in this paper do. 
We observe the mapping of the language classes considered in this paper to the co-clones, 
using the names used in the references above~\cite{BohlerRSV05coclones,CreignouKZ08plain-coclone};
see also~\cite[Figure~1]{BonnetEM16ESA}.
Explicit definitions of the most relevant co-clones are also given in Table~\ref{tab:coclones}.
\begin{itemize}
\item $\Gamma$ is bijunctive if and only if $\Gamma \subseteq \text{ID}_2$. 
  It has a plain basis of $\{(x=1), (x=0), (x \lor y), (x \to y), (\neg x \lor \neg y)\}$. 
\item $\Gamma$ is IHS-B+ (implicative hitting set, bounded, positive case) if and only if 
  $\Gamma \subseteq \text{IS}_{00}^n$ for some $n \in \N$.
  IS$_{00}^n$ has a plain basis of $(x=1)$, $(x=0)$, $(x \to y)$ and
    positive $d$-clauses $(x_1 \lor \ldots \lor x_d)$, $d \leq n$.
\item $\Gamma$ is IHS-B- (implicative hitting set, bounded, negative case) if and only if
  $\Gamma \subseteq \text{IS}_{10}^n$ for some $n \in \N$.
  IS$_{10}^n$ has a plain basis of $(x=1)$, $(x=0)$, $(x \to y)$ and
    negative $d$-clauses $(\neg x_1 \lor \ldots \lor \neg x_d)$, $d \leq n$.
  \end{itemize}
In addition, we need to discuss the following special cases.
\begin{itemize}
\item A language $\Gamma$ is \emph{0-valid} if every relation $R \in \Gamma$
  contains the all-0 tuple. The class of all 0-valid languages forms the co-clone II$_0$.
\item Similarly, $\Gamma$ is \emph{1-valid} if every relation $R \in \Gamma$
  contains the all-1 tuple. The class of all 1-valid languages forms the co-clone II$_1$.
\item A relation $R$ is \emph{self-dual} if $R$ equals its own dual, and a
  language $\Gamma$ is self-dual if every relation $R \in \Gamma$ is self-dual.
  The class of all self-dual languages forms the co-clone IN$_2$.
\item A language $\Gamma$ pp-defines $(x=0)$ and $(x=1)$ if and only if 
  $\text{IR}_2 \subseteq \langle \Gamma \rangle$
\end{itemize}
From this, we note the following by inspection of Post's lattice.

\begin{proposition} \label{prop:posts-lattice-relations}
  The following hold.
  \begin{enumerate}
  \item Every language that does not pp-define $(x=0)$ and $(x=1)$
    is either 0-valid, 1-valid or self-dual.
  \item Every language that is self-dual but not 0-valid or 1-valid pp-defines
    disequality constraints $(x \neq y)$.
  \item Every language that is IHS-B- but not bijunctive pp-defines negative 3-clauses
    $(\neg x \lor \neg y \lor \neg z)$.
  \item Every language that is bijunctive but not IHS pp-defines
    disequality constraints $(x \neq y)$.
  \end{enumerate}
\end{proposition}

\subsection{Initial structural observations}

We make several structural observations pertaining to the ability to implement constant assignments and to proportional implementations.

Throughout the section, to eliminate uninteresting cases we say that
\MinSAT{\Gamma} is \emph{trivial} if instances of the problem are either
always true, always false, always satisfied by setting all variables to 0 (0-valid),
or always satisfied by setting all variables to 1 (1-valid).
More precisely, \MinSAT{\Gamma} is trivial if every non-empty relation $R \in \Gamma$ is 0-valid,
or every non-empty relation $R \in \Gamma$ is 1-valid. 
The following is then a consequence of Proposition~\ref{prop:posts-lattice-relations}.

\begin{proposition} \label{prop:constants-or-neg}
  Assume that \MinSAT{\Gamma} is not trivial. 
  Then either $\Gamma$ pp-defines $(x=0)$ and $(x=1)$,
  or $\Gamma$ is self-dual and pp-defines $(x \neq y)$. 
\end{proposition}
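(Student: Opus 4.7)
The plan is to argue, in the case $(x=0)$ is not pp-definable, that $\Gamma$ must be self-dual and pp-define $(x\neq y)$; by \cref{prop:dual}, the case $(x=1)$ not pp-definable is symmetric, so these two cases together exhaust everything not in the first clause of the proposition. Non-triviality of \MinSAT{\Gamma} implies in particular that $\Gamma$ is neither 0-valid nor 1-valid, and as a standard convention I will assume every $R\in\Gamma$ is non-empty (constraints using $\emptyset$ are always violated and can be stripped off with a trivial preprocessing). Pick non-empty $R_0,R_1\in\Gamma$ with $(0,\ldots,0)\notin R_0$ and $(1,\ldots,1)\notin R_1$.

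The argument will rest on two basic pp-definition mechanisms. First, if some $R\in\Gamma$ is 0-valid but not 1-valid, then the qfpp-formula $R(x,x,\ldots,x)=\{(0)\}$ immediately pp-defines $(x=0)$; hence under the assumption that $(x=0)$ is not pp-definable, every 0-valid $R\in\Gamma$ is 1-valid, and in particular $R_1$ is neither 0- nor 1-valid. Second, a \emph{coordinate-identification projection}: for $R\in\Gamma$ of arity $r$ and a tuple $t\in R$, let $R^{(t)}$ be the binary relation defined by the qfpp-formula $R^{(t)}(x_1,x_2)\equiv R(x_{\sigma(1)},\ldots,x_{\sigma(r)})$, where $\sigma(i)=1$ if $t[i]=0$ and $\sigma(i)=2$ if $t[i]=1$. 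A direct check shows $R^{(t)}$ contains $(0,1)$ always (witness $t$), contains $(1,0)$ iff $\neg t\in R$, contains $(0,0)$ iff $R$ is 0-valid, and contains $(1,1)$ iff $R$ is 1-valid.

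I will apply the projection to any non-constant $t\in R_1$. Since $R_1$ is neither 0- nor 1-valid, $R_1^{(t)}\subseteq\{(0,1),(1,0)\}$ and contains $(0,1)$; if $\neg t\notin R_1$ then $R_1^{(t)}=\{(0,1)\}$ would pp-define $(x=0)$, contradicting the assumption. Hence $\neg t\in R_1$ and $R_1^{(t)}=(x\neq y)$, so $\Gamma$ pp-defines $\neq$. Next, if $(x=1)$ were also pp-definable, then $\exists y.\,(y=1)\wedge (x\neq y)$ would pp-define $(x=0)$, contradiction; so $(x=1)$ is not pp-definable either, and by the symmetric version of the first mechanism every 1-valid $R\in\Gamma$ is 0-valid. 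Combined, every $R\in\Gamma$ is 0-valid if and only if it is 1-valid, which already settles self-duality on constant tuples.

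It remains to show $\neg t\in R$ for every non-constant $t$ in every $R\in\Gamma$. Suppose otherwise and examine $R^{(t)}$. By the 0-valid/1-valid equivalence, either $R$ is neither (in which case $R^{(t)}=\{(0,1)\}$ pp-defines $(x=0)$) or $R$ is both (in which case $R^{(t)}=\{(0,0),(0,1),(1,1)\}=(x\to y)$), and then $\exists y.\,(x\to y)\wedge(x\neq y)$ pp-defines $(x=0)$ by forcing $x=0$, $y=1$; both options contradict the assumption, so $\neg t\in R$ and $\Gamma$ is fully self-dual. The main subtlety in executing this plan will be keeping the case analysis airtight across three orthogonal splits---whether $R$ is 0/1-valid, whether $(x=1)$ is pp-definable, and whether $\neg t\in R$---but the projection lemma together with the two bridging identities $\exists y.\,(y=1)\wedge(x\neq y)$ and $\exists y.\,(x\to y)\wedge(x\neq y)$ collapses every subcase cleanly back to the single forbidden pp-definition of $(x=0)$.
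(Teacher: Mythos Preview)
Your proof is correct and takes a genuinely different route from the paper's. The paper simply invokes Post's lattice of co-clones as a black box: it observes that pp-defining $(x=0)$ and $(x=1)$ corresponds to containing the co-clone $\text{IR}_2$, pp-defining $(x\neq y)$ corresponds to containing $\text{ID}$, and $b$-validity corresponds to being contained in $\text{II}_b$; the proposition then reduces to the lattice fact that every co-clone not contained in $\text{II}_0$ or $\text{II}_1$ contains $\text{IR}_2$ or $\text{ID}$. Your argument, by contrast, is fully elementary: the coordinate-identification projection $R^{(t)}$ is the engine, and you systematically derive $(x=0)$ from each of the obstruction patterns $\{(0,1)\}$ and $\{(0,0),(0,1),(1,1)\}=(x\to y)$ (the latter via composition with $\neq$). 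This is self-contained and avoids appealing to the classification, which is a real advantage for a reader unfamiliar with Post's lattice; the paper's proof is of course shorter for a reader who already has the lattice internalized.

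One small imprecision: the symmetry between the cases ``$(x=0)$ not pp-definable'' and ``$(x=1)$ not pp-definable'' is not what \cref{prop:dual} states --- that proposition is about FPT-equivalence of \MinSAT{\Gamma} and \MinSAT{\Gamma'}, not about pp-definability. The symmetry you need is the more elementary observation that $\Gamma$ pp-defines $(x=0)$ iff its dual $\Gamma'$ pp-defines $(x=1)$, that $\neq$ is self-dual, and that self-duality of $\Gamma$ is equivalent to $\Gamma=\Gamma'$. Just state this directly rather than citing \cref{prop:dual}. (Alternatively, you do not even need the symmetry claim: your argument for the case ``$(x=0)$ not pp-definable'' already establishes in step~6 that $(x=1)$ is not pp-definable either, so you could instead simply rerun the argument starting from $R_0$ in the case ``$(x=1)$ not pp-definable''.)
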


Let us also observe that due to the special structure of these relations, a
pp-definition of $(x=0)$, $(x=1)$ or $(x \rightarrow y)$ implies
a proportional implementation. 

\begin{proposition} \label{prop:crisp-to-soft}
  Let $\Gamma$ be a Boolean constraint language. 
  For any relation $R \in \{(x=0),(x=1),(x \to y)\}$
  it holds that if $R$ has a pp-definition in $\Gamma$,
  then the corresponding relation also has a proportional  
  implementation in $\Gamma$ with factor $c=1$. 
\end{proposition}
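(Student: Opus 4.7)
The plan is to extract a proportional implementation from any pp-definition via the notion of a minimal unsatisfiable subsystem (MUS). Suppose $R(X) \equiv \exists Y \colon \cF(X, Y)$ is a pp-definition of $R$ over $\Gamma \cup \{=\}$. First I would eliminate each equality constraint $(u = v)$ by identifying $u$ with $v$ throughout $\cF$; this yields a formula over $\Gamma$ alone that still pp-defines $R$ and preserves, on the remaining variables, the cost of every assignment.

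The key observation is that each of the three relations in question has a \emph{unique} rejecting tuple $X^{\ast}$: namely $X^{\ast}=(0)$ for $R=(x=1)$, $X^{\ast}=(1)$ for $R=(x=0)$, and $X^{\ast}=(1,0)$ for $R=(x \to y)$. By the pp-definition, $\cF(X,Y)$ is satisfiable over $Y$ whenever $X \in R$, while $\cF(X^{\ast},Y)$ is unsatisfiable. I would then let $\cF'' \subseteq \cF$ be a minimal subset of constraints such that $\cF''(X^{\ast},Y)$ remains unsatisfiable, obtained by greedily discarding constraints whose removal preserves unsatisfiability under $X=X^{\ast}$.

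Finally I would verify the two required conditions. For any $X \in R$, since $\cF'' \subseteq \cF$ and $\cF(X,Y)$ is satisfiable, $\cF''(X,Y)$ is satisfiable too, so $\min_Y f_{\cF''}(X,Y)=0=f_R(X)$. For $X=X^{\ast}$, unsatisfiability gives $\min_Y f_{\cF''}(X^{\ast},Y) \geq 1$; conversely, minimality of $\cF''$ means that for every $C \in \cF''$ there is an assignment to $Y$ satisfying $\cF'' \setminus \{C\}$ (and hence violating only $C$), so $\min_Y f_{\cF''}(X^{\ast},Y) \leq 1$. Hence $\min_Y f_{\cF''}(X^{\ast},Y)=1=f_R(X^{\ast})$, and $\cF''$ is the desired proportional implementation with factor $1$. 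The argument is elementary; the only delicate point is that it crucially exploits the fact that each of the three relations has exactly one rejecting tuple, which forces the MUS cost to match the $0/1$-valued cost function~$f_R$.
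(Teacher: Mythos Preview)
Your proof is correct and follows essentially the same approach as the paper: both exploit that each of the three relations has a unique rejecting tuple, and both strip away constraints from the pp-definition until the minimum cost on that tuple is exactly~$1$. The paper phrases this as ``discard constraints until the minimum cost drops to~$1$'' whereas you frame it as extracting a minimal unsatisfiable subsystem, but these are the same operation. You are slightly more careful than the paper in explicitly handling the elimination of $=$-constraints (so that the resulting formula is genuinely over $\Gamma$ rather than $\Gamma \cup \{=\}$); this is a valid point the paper glosses over.
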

\begin{proof}
  For each of these relations, there is only one non-satisfying
  assignment.  Hence any pp-definition is a proportional implementation
  with factor $c$, where $c$ is the cost of the
  pp-definition on the non-satisfying assignment.  
  Furthermore, if $c > 1$, then we may simply discard 
  constraints of the implementation until the minimum cost of an
  assignment extending the non-satisfying assignment to $R$
  has dropped to 1. 
\end{proof}

Finally, crisp constraints $(x \neq y)$ can be used to ``emulate''
assignment constraints $(x=0)$ and $(x=1)$ in a way compatible with FPT-reductions,
even if these relations are not pp-definable over $\Gamma$. 
We wrap up these observations in the following result.

\begin{lemma}\label{lemma:trivial:or:constants}\label{lemma:have-constants}
 Let $\Gamma$ be a Boolean language such that \MinSAT{\Gamma} is not trivial. Then \MinSAT{\Gamma\cup\{(x=0),(x=1)\}} has an FPT-reduction to \MinSAT{\Gamma},
 and \WeightedMinSAT{\Gamma \cup \{(x=0), (x=1)\}} has an FPT-reduction to \WeightedMinSAT{\Gamma}. 
\end{lemma}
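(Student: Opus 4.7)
The plan is to invoke Proposition~\ref{prop:constants-or-neg}, which asserts that since \MinSAT{\Gamma} is not trivial, either (a) $\Gamma$ pp-defines both $(x=0)$ and $(x=1)$, or (b) $\Gamma$ is self-dual and pp-defines $(x \neq y)$. The two cases are handled by separate constructions, and both cover \MinSAT{\Gamma} and \WeightedMinSAT{\Gamma} in parallel.

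Case~(a) is immediate from the supplied propositions. By Proposition~\ref{prop:crisp-to-soft}, each of $(x=0)$ and $(x=1)$ admits a proportional implementation in $\Gamma$ with factor $1$, and Proposition~\ref{prop:reduction-basics}(2) then yields the required FPT-reduction of \MinSAT{\Gamma \cup \{(x=0),(x=1)\}} to \MinSAT{\Gamma}. The weighted variant uses the same implementations unchanged: since the factor is $1$, replacing a constant constraint of weight $w$ by its proportional implementation with each sub-constraint carrying weight $w$ preserves the weight-cost of every assignment exactly, so the budgets $k$ and $W$ transfer directly.

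Case~(b) is the main challenge and requires a truth-variable construction. We first observe that $(x \neq y)$ itself admits a proportional implementation over $\Gamma$: for any pp-definition $\phi$ of $\neq$, the formula $\phi \wedge \bar\phi$ on disjoint auxiliary variables (where $\bar\phi$ dualizes each constraint of $\phi$, and remains a $\Gamma$-formula by self-duality) has equal minimum-cost extensions on the two unsatisfying pairs $(0,0)$ and $(1,1)$, yielding a proportional implementation of $\neq$; equality $(x=y)$ is then proportionally implemented via $\exists z.\,(x \neq z)\wedge(z \neq y)$. The reduction introduces a fresh truth variable $t$, replaces every $(x=1)$-constraint by a proportional implementation of $(x = t)$ and every $(x=0)$-constraint by one of $(x \neq t)$, keeping the $\Gamma$-constraints intact, to form a $\Gamma$-formula $\cF'$; the weighted version proceeds identically with weight-scaled implementations.

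The main obstacle lies in case~(b): an optimum $\alpha^\ast$ of $\cF'$ may set $\alpha^\ast(t)=0$, in which case the naive translation $\beta = \bar{\alpha^\ast}|_{V(\cF)}$ recovers the constant constraints correctly but may flip the satisfaction status of non-self-dual $\Gamma$-constraints. The fix is to symmetrize $\cF'$ under variable negation: for each $\Gamma$-constraint $R(X)$ of $\cF$, include also $\bar R(X) \in \Gamma$ (which lies in $\Gamma$ by self-duality) with matched weight, and rescale the $\Gamma$-weights and the budgets $k',W'$ by a language-dependent constant so that the two halves contribute symmetrically. The resulting $\cF'$ is cost-invariant under $\alpha \mapsto \bar\alpha$, so one may assume $\alpha^\ast(t)=1$ without loss and extract $\beta = \alpha^\ast|_{V(\cF)}$ as a valid solution to $\cF$; the technical heart of the argument is checking that this symmetrization preserves the optimum up to a bounded multiplicative factor depending only on $\Gamma$, so that the parameter rescaling remains an FPT-reduction for both \MinSAT{\Gamma} and \WeightedMinSAT{\Gamma}.
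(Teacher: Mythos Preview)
Your Case~(a) matches the paper exactly.

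In Case~(b) you have overcomplicated the argument because of a misreading of what ``self-dual'' means here. In the paper (and as forced by Post's lattice: the co-clone ID consists precisely of relations invariant under the polymorphism $\neg$), self-duality of $\Gamma$ means that \emph{each individual relation} $R\in\Gamma$ satisfies $t\in R \iff \bar t\in R$; it does not merely mean $\bar R\in\Gamma$. Consequently, for any $\Gamma$-formula and any assignment $\alpha$, the complement $\bar\alpha$ violates exactly the same set of constraints as $\alpha$. There are no ``non-self-dual $\Gamma$-constraints'' to worry about, and your symmetrization step (adding $\bar R(X)$ alongside $R(X)$) is vacuous: since $\bar R=R$, you would just be duplicating each constraint.

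The paper's argument for Case~(b) is accordingly much shorter: introduce two global variables $z_0,z_1$ with a crisp constraint $(z_0\neq z_1)$, replace each soft $(x=0)$ by a soft $(x\neq z_1)$ and each $(x=1)$ by $(x\neq z_0)$, and observe that by self-duality one may assume $\alpha(z_0)=0$, $\alpha(z_1)=1$ without loss of generality, after which the replacements are exact. Your observation that one needs a \emph{proportional implementation} of $\neq$, not just a pp-definition, is correct; but self-duality already forces any pp-definition of $\neq$ over $\Gamma$ to have equal minimum cost on $(0,0)$ and $(1,1)$, so your $\phi\wedge\bar\phi$ trick is also unnecessary. The ``technical heart'' you anticipate---checking that the symmetrization preserves optima up to a bounded factor---only goes through \emph{because} $\bar R=R$, which is precisely the fact that makes the whole detour redundant. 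Under the weaker interpretation of self-duality you seem to be working with, the symmetrized cost $f_R(\alpha(X))+f_{\bar R}(\alpha(X))$ is not in general proportional to $f_R(\alpha(X))$, and the claimed reduction would fail.
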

\begin{proof}
By Propositions~\ref{prop:constants-or-neg} and~\ref{prop:crisp-to-soft}, for every Boolean constraint language $\Gamma$, 
either $\Gamma$ admits a proportional implementation of $(x=0)$ and $(x=1)$, or $\Gamma$ is self-dual and admits
a proportional implementation of $(x \neq y)$. In the former case we are done by Proposition~\ref{prop:reduction-basics}(2); in the latter case we can introduce two
global variables $z_0$ and $z_1$, with a crisp constraint $(z_0 \neq z_1)$, and replace every constraint $(x=0)$ (respectively $(x=1)$)
by $(x \neq z_1)$ (respectively $(x \neq z_0)$). Under self-duality we can then assume that $\alpha(z_0)=0$ and $\alpha(z_1)=1$,
which makes the replacement valid. Indeed, if $\alpha(z_0)=1$ then simply replace $\alpha(x) \gets 1-\alpha(x)$ for every variable $x$.
Since $\Gamma$ is self-dual, this change does not change the set of satisfied constraints. 
Note that every step in these claims works with constraint weights as well.
\end{proof}

\subsection{Hard cases}\label{ss:hard-cases}

The required hard cases go back to a hardness proof by Marx and Razgon~\cite{MarxR09}. To make this work self-contained and set up clear notation, we begin by proving \classWone-hardness of an auxiliary parameterized graph cut problem, defined below, which also underlies previous hardness proofs.

\begin{quote}
 \textbf{Problem:} \PMstCl\\ 
 \noindent \textbf{Input:} A directed acyclic graph $D=(U,A)$, vertices $s,t\in U$, an integer $\ell\in\N$, a pairing $\A$ of the arcs in $A$, and a partition of $A$ into a set $\P$ of $2\ell$ arc-disjoint $st$-paths.\\ 
 \noindent \textbf{Parameter:} $\ell$.\\ 
 \noindent \textbf{Question:} Is there an $st$-cut that consists of the arcs in at most $\ell$ pairs from $\A$?
\end{quote}

\PMstCl is purposefully restrictive to simplify later reductions to other problems. Hardness comes from the pairing $\A$ of arcs alone but the additional structure will be very handy. First of all, minimum $st$-cuts contain at least $2\ell$ arcs. Moreover, every minimum $st$-cut $B\subseteq A$ cuts every path of $\P$ in exactly one arc so that the resulting parts stay reachable from $s$ respectively reaching $t$ in $D-B$. Furthermore, this shows that flow augmentation cannot make the problem any easier.

\begin{lemma}\label{lemma:hardness:tightpairedstcut}
 \PMstCl is \classWone-hard.
\end{lemma}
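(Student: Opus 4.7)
The plan is a parameterized reduction from \MCCk, which is known to be W[1]-hard. From an instance $(G, V_1, \ldots, V_k)$ of \MCCk I will produce in polynomial time an instance of \PMstCl with $\ell$ bounded by a function of $k$ (the natural choice is $\ell = \binom{k}{2}$), such that $G$ contains a multicolored clique if and only if the constructed DAG has an $st$-cut using the arcs of at most $\ell$ pairs from $\A$.

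For the construction I will create $2\ell = k(k-1)$ arc-disjoint $st$-paths, one per ordered pair $(i,j)$ of distinct colors. The path $P_{(i,j)}$ is arranged as a sequence of \emph{blocks}, one block per vertex $v \in V_i$; the block for $v$ is itself a sub-path containing one arc for each neighbor $u \in V_j$ of $v$ in $G$. Cutting $P_{(i,j)}$ at a specific arc simultaneously selects a vertex $v \in V_i$ (the block that is hit) and a vertex $u \in V_j$ adjacent to $v$. The pairing $\A$ couples the arc ``$(v,u)$'' of $P_{(i,j)}$ to the arc ``$(u,v)$'' of $P_{(j,i)}$. Because the max-flow equals $2\ell$ and any $\ell$-pair cut must contain exactly $2\ell$ arcs, each chosen pair contributes one arc to $P_{(i,j)}$ and one to $P_{(j,i)}$, and by the pairing these correspond to the same edge of $G[V_i,V_j]$. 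To force the $V_i$-endpoint to agree across all $P_{(i,j)}$ with $j \neq i$, I augment the paths with consistency arcs placed at the block boundaries and paired across $P_{(i,j)}$ and $P_{(i,j')}$, so that a cut that ``enters'' the block for $v$ on one path is forced to enter the block for $v$ on every other path involving color $i$.

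Once the construction is in place, correctness is routine. An $\ell$-pair cut records, for every unordered pair $\{i,j\}$, an edge $\{v_i,v_j\} \in E(G)$ with $v_i \in V_i$, $v_j \in V_j$, and the consistency arcs force $v_i$ to be the same across all pairs containing color $i$, so the selected vertices form a multicolored clique. Conversely, a multicolored clique $\{v_1,\ldots,v_k\}$ prescribes on each $P_{(i,j)}$ exactly which arc to cut, and the chosen arcs group into $\ell$ pairs. The main technical obstacle is engineering the pairing so that all three consistency constraints -- same edge between $P_{(i,j)}$ and $P_{(j,i)}$, same $V_i$-endpoint across all $P_{(i,j)}$ with $j\neq i$, and only using actual edges of $G$ -- are enforced simultaneously by a single perfect matching on the arcs. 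This is handled by a careful block-and-cross-pair design in the spirit of the Marx--Razgon hardness construction for paired cut problems~\cite{MarxR09}, together with the verification that the auxiliary consistency arcs do not introduce spurious low-cost cuts.
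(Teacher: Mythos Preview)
Your high-level plan matches the paper's: reduce from \MCCk with $\ell=\binom{k}{2}$, build one $st$-path $P_{(i,j)}$ per ordered color pair whose arcs are the edges of $G[V_i,V_j]$ grouped into one block per vertex of $V_i$, and pair the arc for edge $\{p,q\}$ on $P_{(i,j)}$ with the arc for the same edge on $P_{(j,i)}$. The divergence is in how you enforce that the $V_i$-block hit on $P_{(i,j)}$ agrees across all $j$.

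Your proposal to achieve this via ``consistency arcs placed at the block boundaries and paired across $P_{(i,j)}$ and $P_{(i,j')}$'' has a genuine gap. First, $\A$ is a perfect matching, so a consistency arc on $P_{(i,j)}$ can be paired with only one other path, whereas there are $k-1$ paths with first index $i$ that must all agree. Second, and more seriously, once such pairs exist in $\A$ they create spurious solutions: for $k=3$ one can cut all six paths using three consistency pairs (one per color) at arbitrary block boundaries, giving an $\ell$-pair $st$-cut that selects no edge of $G$ whatsoever and hence encodes no clique. Enforcing consistency through the pairing $\A$ alone, while keeping $|\P|=2\ell$ and excluding such cuts, is precisely the technical obstacle, and it cannot be waved through by a reference to Marx--Razgon.

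The paper avoids this entirely by a different and simpler mechanism: it does not add any consistency arcs but instead \emph{identifies vertices}. For each color $i$ and each $r\in[n]$, the vertex at the start of block $r$ is made a single vertex $u_{i,r}$ shared by all paths $P_{i,\ast}$. Since any $st$-mincut places each $u_{i,r}$ on one side and these vertices occur in the same order along every $P_{i,\ast}$, all paths with first index $i$ are forced to be cut between the same consecutive pair $u_{i,r_i},u_{i,r_i+1}$. Consistency is thus enforced by the DAG structure itself, and $\A$ consists purely of the edge pairs $\{a_{p,q},a_{q,p}\}$. After this, the equivalence with the clique instance is the routine argument you outline.
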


\begin{proof}
 We give a reduction from \MCCk, which is well known to be \classWone-hard (cf.~\cite{the-book}). Let an instance $(G,k)$ of \MCCk be given where $G=(V_1,\ldots,V_k;E)$ and each $V_i$ is an independent set in $G$; we seek a $k$-clique with exactly one vertex from each set $V_i$. By padding with isolated vertices we may assume that each set $V_i$ has exactly $n$ vertices, and we rename these to $V_i=\{v_{i,1},\ldots,v_{i,n}\}$. In the following, we create an instance $(D,s,t,\ell,\A,\P)$ of \PMstCl and afterwards prove equivalence to $(G,k)$.
 
 The arc set $A$ of $D=(U,A)$ consists of two arcs $a_{p,q}$ and $a_{q,p}$ per (undirected) edge $\{p,q\}\in E$; for ease of presentation, the endpoints of $a_{p,q}$ and $a_{q,p}$ will be specified later. The pairing $\A$ of the arcs in $A$ simply contains all unordered pairs $\{a_{p,q},a_{q,p}\}$, exactly one for each $\{p,q\}\in E$. We let $\ell=\binom{k}{2}$ so that $2\ell=k(k-1)$. We now construct $2\ell=k(k-1)$ paths $P_{i,j}$, one for each choice of $i,j\in[k]$ with $i\neq j$; suitably identifying certain vertices of these paths will then complete the construction of $D=(U,A)$.
 
 For given $i,j\in[k]$ with $i\neq j$, the path $P_{i,j}$ is obtained by concatenating all arcs $a_{p,q}$ where $p\in V_i$ and $q\in V_j$. The arcs $a_{p,q}$ are ordered by first subscript according to the numbering of $V_i=\{v_{i,1},v_{i,2},\ldots,v_{i,n}\}$, i.e., $a_{v_{i,1},\ast},\ldots,a_{v_{i,1},\ast},a_{v_{i,2},\ast},\ldots,a_{v_{i,n-1},\ast},a_{v_{i,n},\ast},\ldots,a_{v_{i,n},\ast}$; the order of the second coordinates $q$ is irrelevant. Intuitively, deleting some arc $a_{p,q}$ will correspond to choosing $p$ and $q$ for the $k$-clique in $G$, though all these choices need to be enforced to be consistent of course. We let $\P$ contain all $2\ell$ paths $P_{i,j}$. Clearly, each arc in $A$ is contained in exactly one path of $\P$.
 
 Let us now identify certain vertices of different paths $P_{i,j}$ in $\P$ that have the same subscript $i$: In each path $P_{i,\ast}$ and for each $r\in[n]$, there is a first arc $a_{p,q}$ with $p=v_{i,r}$; identify the heads of all these arcs into a single vertex called $u_{i,r}$. The relevant structure of a path $P_{i,j}$ is the ordering of the vertices $u_{i,\ast}$ and the ordering of arcs by first subscript $u_{i,1},a_{v_{i,1},\ast},\ldots,a_{v_{i,1},\ast},u_{i,2},a_{v_{i,2},\ast},\ldots,a_{v_{i,n-1},\ast},u_{i,n},a_{v_{i,n},\ast},\ldots,a_{v_{i,n},\ast}$. Finally, we identify the first vertex of every path in $\P$ into a vertex called $s$, and the final vertex of each path into a vertex called $t$. Clearly, we obtain a directed graph $D=(U,A)$ with $s,t\in U$ and with $\P$ a partition of the arc set $A$ into $2\ell$ arc-disjoint paths. It is also easy to see that $D$ is acyclic, because all paths in $\P$ agree on the ordering of the shared vertices $s$, $t$, and $u_{\ast,\ast}$. This completes the construction of the instance $(D,s,t,\ell,\A,\P)$ of \PMstCl. For convenience, we will use $u_{1,1},u_{2,1},\ldots,u_{k,1}$ as aliases for $s$; these were identified into $s$. Similarly, we will use $u_{1,n+1},u_{2,n+1},\ldots,u_{k,n+1}$ as aliases for $t$.
 
 Because the identification of vertices leaves us with a set $\P$ of $2\ell$ arc-disjoint $st$-paths, the size of a minimum $st$-cut in $D$ is at least $2\ell$. Because the paths partition the edge set, the minimum $st$-cut size is in fact exactly $2\ell$ as one could e.g.\ cut the $2\ell$ edges leaving $s$. More generally, the only necessary synchronization of cutting the $2\ell$ paths comes from the shared vertices $u_{i,r}$, where $i\in[k]$ and $r\in[n]$. Concretely, for each $i\in[k]$, the paths $P_{i,\ast}$ share the vertices $s=u_{i,1},u_{i,2},\ldots,u_{i,n},u_{i,n+1}=t$ and to get a minimum $st$-cut it is necessary and sufficient to cut all of them between some choice of $u_{i,r}$ and $u_{i,r+1}$.
 
 To prove equivalence of $(G,k)$ and $(D,s,t,\ell,\A,\P)$ it suffices to check that the pairing $\A$ of arcs in $A$ restricts our choice to such minimum $st$-cuts that correspond to a $k$-clique in $G$. Concretely, assume that $(G,k)$ is a \yes-instance, and let $r_1,r_2,\ldots,r_k\in[n]$ such that $v_{1,r_1},v_{2,r_2},\ldots,v_{k,r_k}$ is a $k$-clique in $G$. It suffices to check that $B=\bigcup_{i,j}\{a_{v_{i,r_i},v_{j,r_j}}, a_{v_{j,r_j},v_{i,r_i}}\}$ cuts all paths in $\P$ in the required way (as outlined above): First of all, $B$ contains exactly one arc on each path $P_{i,j}$, namely $a_{v_{i,r_i},v_{j,r_j}}$. Second, for each $i\in[k]$, all paths $P_{i,\ast}$ are cut between $u_{i,r_i}$ and $u_{i,r_i+1}$ because the relevant arcs $a_{p,q}$ in $B$ all have the same first subscript $p=v_{i,r_i}$. It follows that $(D,s,t,\ell,\A,\P)$ is a \yes-instance.
 
 For the converse, assume that $D$ contains an $st$-cut $B\subseteq A$ that consists of the arcs in at most $\ell$ pairs from $\A$. It follows immediately that $B$ contains exactly $2\ell$ arcs, one from each path $P_{i,j}$ with $i,j\in[k]$ and $i\neq j$, and that it is a minimum $st$-cut that is the union of exactly $\ell$ pairs from $\A$. By the structure of minimum $st$-cuts in $D$, we know that for each $i\in[k]$, all paths $P_{i,\ast}$ must be cut between some $u_{i,r_i}$ and $u_{i,r_i+1}$ for some $r_i\in[n]$. (Recall that $u_{i,1}=s$ and $u_{i,n+1}=t$.) Using these values $r_1,r_2,\ldots,r_k\in[n]$, we claim that $v_{1,r_1},v_{2,r_2},\ldots,v_{k,r_k}$ is a $k$-clique in $G$. To this end, we show that for all $i,j\in[k]$ with $i\neq j$ the edge $\{v_{i,r_i},v_{j,r_j}\}$ exists in $G$: Let us consider how $B$ cuts paths $P_{i,j}$ and $P_{j,i}$. The cuts happen between $u_{i,r_i}$ and $u_{i,r_i+1}$ respectively between $u_{j,r_j}$ and $u_{j,r_j+1}$. Crucially, $P_{i,j}$ contains all arcs $a_{p,q}$ with $p\in V_i$ and $q\in V_j$, while $P_{j,i}$ contains the corresponding arcs $a_{q,p}$ and $\A$ pairs all these arcs as $\{a_{p,q},a_{q,p}\}$. Thus, $B$ must intersect the two paths exactly in the two arcs of some pair $\{a_{p^*,q^*},a_{q^*,p^*}\}$. By construction, as $a_{p^*,q^*}$ lies between $u_{i,r_i}$ and $u_{i,r_i+1}$ on $P_{i,j}$, we must have $p^*=v_{i,r_i}$. Similarly, as $a_{q^*,p^*}$ lies between $u_{j,r_j}$ and $u_{j,r_j+1}$ on $P_{j,i}$, we must have $q^*=v_{j,r_j}$. Again by construction, it follows directly that $\{p^*,q^*\}=\{v_{i,r_i},v_{j,r_j}\}$ is an edge of $G$. This completes the proof of equivalence. 
 
 Clearly this reduction can be performed in polynomial time and the parameter value $\ell=\binom{k}{2}$ is bounded by a function of $k$. This completes the proof of \classWone-hardness.
\end{proof}

We are now set up to prove the required hardness results for \MinSAT{\Gamma} and \WeightedMinSAT{\Gamma} by reductions from \PMstCl. 
Compared to the preliminary version of this paper~\cite{csp-soda} we present a single, unified proof written in more abstract terms; we then delegate the case work over various constraint languages $\Gamma$ to proving that the conditions of this hardness proof can be met. 

We introduce some definitions. Let a \emph{double implication} be a 4-ary relation $R^*$ satisfying 
\begin{align*}
  (a=b)\wedge (c=d) \implies R^*(a,b,c,d) \implies (a\rightarrow b) \wedge (c\rightarrow d),
\end{align*}
i.e., $\{(0,0,0,0),(0,0,1,1),(1,1,0,0),(1,1,1,1)\} \subseteq R^*$ and
$R^*$ contains no tuples $t$ where $(t[1],t[2])=(1,0)$ or $(t[3],t[4])=(1,0)$.
Double implications are the canonical relations that make MinSAT \classWone-hard. 
We show a condition under which they can be implemented. 

\begin{lemma} \label{lemma:neg-and-gr}
  Let $\Gamma$ be a Boolean constraint language that supports crisp
  $(x \neq y)$ and let $R \in \Gamma$ be a relation such that
  the Gaifman graph $G_R$ contains a $2K_2$. Then $\Gamma$
  proportionally implements a double implication $R^*$.
\end{lemma}
\begin{proof}
  Let $R$ be $\ell$-ary, where $\ell \geq 4$, and without loss of generality assume that $\{1,2,3,4\}$ induces a $2K_2$ in $G_R$, with edges $\{1,2\}$ and $\{3,4\}$.
  It follows that there are values $\alpha_1,\alpha_2,\alpha_3,\alpha_4\in\{0,1\}$ such that $R(x_1,x_2,x_3,x_4,\ldots,x_\ell)$ implies $(x_1,x_2)\neq(\alpha_1,\alpha_2)$ and $(x_3,x_4)\neq(\alpha_3,\alpha_4)$.
  Construct the formula
  \[
    \cF(X)=R(x_1^*,x_2^*,x_3^*,x_4^*, x_5,\ldots,x_\ell) \land (x_1 \neq x_1') \land (x_2 \neq x_2') \land (x_3 \neq x_3') \land (x_4 \neq x_4'),
  \]
  on variable set $X=\{x_1,x_1',\ldots,x_\ell\}$,
  where $x_i^*=x_i$ if $\alpha_i$ matches the $i$:th position of $(1,0,1,0)$, and $x_i^*=x_i'$ otherwise. 
  We claim that this is a proportional implementation of a double implication.
  More precisely,
  \[
    R^*(x_1,\ldots,x_4) \equiv \exists_{x_1',\ldots,x_4',x_5,\ldots,x_\ell} \cF
  \]
  is a double implication, and furthermore every assignment to $\{x_1,\ldots,x_4\}$ can be extended to an assignment to $X$
  such that every constraint of $\cF$ except $R$ is satisfied. (In particular, every use of $\neq$ in $\cF$ may be assumed to be crisp.)
  It follows that this is a proportional implementation with constant $c=1$. 
  
  We show that $R^*$ is indeed a double implication. One direction is easy: By assumption, 
  $\cF$ implies that $(x_1^*,x_2^*) \neq (\alpha_1,\alpha_2)$, which by construction implies
  that $(x_1,x_2) \neq (1,0)$, i.e., $\cF$ implies $(x_1 \to x_2)$. Similarly, $\cF$ implies $(x_3 \to x_4)$.
  Thus $R^*(a,b,c,d) \implies (a \to b) \land (c \to d)$.
  It remains to check that $(0,0,0,0)$, $(0,0,1,1)$, $(1,1,0,0)$, and $(1,1,1,1)$ are all contained in $R^*$, i.e., that $R^*(a,b,c,d)$ is implied by $(a=b)\wedge (c=d)$. For this, we observe that the Gaifman graph $G_{R^*}$ remains a $2K_2$ with edges $\{1,2\}$ and $\{3,4\}$.
  Indeed, the only possible difference between the projection of $R$ and $R^*$ to positions $\{1,2,3,4\}$ is a negation of some coordinates,
  which clearly preserves the Gaifman graph. The argument can now be completed as follows. 
  Since $G_{R^*}$ has no edge $\{1,3\}$, there is a tuple $t \in R^*$ such that $t[1]=t[3]=1$.
  Since $R^*$ implies $(x_1 \to x_2)$ and $(x_3 \to x_4)$ we must have $t=(1,1,1,1)$.
  Similarly, there is a tuple $t \in R$ such that $t[1]=1$ and $t[4]=0$, which can only be $t=(1,1,0,0)$;
  a tuple $t \in R$ such that $t[2]=0$ and $t[3]=1$, which can only be $t=(0,0,1,1)$;
  and a tuple $t \in R$ such that $t[2]=t[4]=0$, which can only be $t=(0,0,0,0)$. 
  This completes the argument that $(a=b)\wedge (c=d)\implies R^*(a,b,c,d)$.

  Finally, we get a proportional implementation of $R^*(a,b,c,d)$ with factor $c=1$, since (as noted) every infeasible assignment to $a, b, c, d$ 
  can be trivially extended to a satisfying assignment if the constraint over $R$ is deleted in $\cF$, giving cost $1$.
\end{proof}

In addition, we present a mapping of the variables along a path $P \in \P$ to Boolean variables
that allows us an implicit encoding of $(x \neq y)$-constraints. 
Let $n \in \N$. A \emph{complementary chain formula (of length $n$)} is a formula $\F$ 
on variable set $X=\{s,t,x_1,x_1',\ldots,x_n,x_n'\}$ such that the minimum-cost assignments
for $\F$ are precisely the assignments $\alpha_i$, $i \in \{0,\ldots,n\}$
defined as $\alpha_i(s)=1$, $\alpha_i(t)=0$, $\alpha_i(x_j')=1-\alpha_i(x_j)$ for $j  \in [n]$ and
\[
  \alpha_i(x_j) =
  \begin{cases}
    1 & j \leq i \\
    0 & j > i
  \end{cases}
\]
otherwise. $\F$ has \emph{bounded cost} if the cost $\min_\alpha f_{\F}(\alpha)$ is independent of $n$.
Finally, $\F$ is \emph{efficiently constructible} if it can be constructed in polynomial time in $n$.

These conditions allow us to prove hardness.

\begin{lemma} \label{lm:unified-hardness}
  Let $\Gamma$ be a Boolean language that allows efficiently constructible complementary chain formulas with bounded cost
  for every length $n \geq 0$, and let $R \in \Gamma$ be a relation such that the Gaifman graph $G_R$ contains a $2K_2$. 
  Then \MinSAT{\Gamma} is \classWone-hard. 
\end{lemma}
\begin{proof}
  We show the result by an FPT-reduction from \PMstCl. 
  To this end, let $(D,s,t,\ell,\A,\P)$ be an instance of \PMstCl with $D=(U,A)$. Recall that $\P$ partitions $A$ into $2\ell$ arc-disjoint $st$-paths.
  For each vertex $v\in U$, we create two variables $x_v$ and $x'_v$ with the intention of enforcing $x_v=1-x'_v$ for all minimum-cost assignments, so that $x'_v$ acts like the negation of $x_v$.
  Let $X=\{x_v\mid v\in U\}$ and $X'=\{x'_v\mid v\in U\}$.
  For every path $P \in \P$, say $P=(s=v_0,v_1,\ldots,v_r,t=v_{r+1})$, we create $\ell+1$ disjoint copies of a complementary chain formula $\F_P$ of length $r$,
  using variables from the common variable set $X \cup X'$, and we let $\F_0$ be the disjoint union of all such formulas.
  Let $c$ be the cost of a complementary chain formula and set the parameter to $k=2\ell (\ell+1)c + \ell$. 
  Finally, we handle the pairing $\A$. By Lemma~\ref{lemma:neg-and-gr}, there is a proportional implementation
  of a double implication $R^*$, using crisp constraints $(x \neq y)$ and a single application of $R$.
  Following Lemma~\ref{lemma:neg-and-gr}, without loss of generality let the implementation of $R^*$ be
  \[
    R^*(a,b,c,d) = \exists_{a', b', c', d', y_5, \ldots, y_t} R(a^*,b^*,c^*,d^*,y_5,\ldots,y_r) \land
    (a \neq a') \land (b \neq b') \land (c \neq c') \land (d \neq d'),
  \]
  where $v^* \in \{v,v'\}$ for $v=a,b,c,d$. For every pair $\{(p,q),(u,v) \in \A\}$, 
  we add a constraint $R(x_p^*,x_q^*,x_u^*,x_v^*,y_5, \ldots, y_r)$ to the output, where $y_i$, $i \geq 5$ is a fresh local variable
  and each variable $x_p, x_p', \ldots, x_v, x_v'$ in the constraint is taken accordingly from the variable set $X \cup X'$.
  Note that we do not output the $\neq$-constraints. 
  (For example, if the implementation is $R^*(a,b,c,d) \equiv \exists_{a',b',c',d',y} R(a,b,c',d',y) \land \ldots$ 
  then in the output, from a pair $\{(p,q),(u,v)\}$ we create a constraint $R(x_p,x_q,x_u',x_v',y)$ where only $y$ is a local variable.)
  The order $(p,q),(u,v)$ is chosen arbitrarily, i.e.,
  we output either an implementation of $R^*(x_p,x_q,x_u,x_v)$ or $R^*(x_u,x_v,x_p,x_q)$, but not both. 
  Clearly this construction can be completed in polynomial time and has a parameter $k$ bounded in terms of $\ell$.
  We show correctness of the reduction.
  
  First assume that the input instance is a \yes-instance of \PMstCl
  and let $B\subseteq A$ be a minimum $st$-cut of cardinality $2\ell$, consisting of the union of exactly $\ell$ pairs $\B\subseteq\A$.
  Necessarily, $B$ cuts each $st$-path in $\P$ exactly once, splitting its vertices into reachable and not reachable from $s$ in $D-B$ (and equivalently not reaching vs.\ reaching $t$ in $D-B$). We create an assignment $\alpha$ for $X\cup X'$ as follows: Set $\alpha(x_v)=1$ if $v$ is reachable from $s$ in $D-B$, otherwise $\alpha(x_v)=0$, and set $\alpha(x'_v)=1-\alpha(x_v)$ for every $v \in U$. Let $P=(s=v_0,\ldots,t=v_{r+1})$ be a path in $\P$, and assume that $(v_i,v_{i+1}) \in B$ for some $0 \leq i \leq r$. Then $\alpha$ restricted to the variables used in $\F_P$ matches the assignment $\alpha_i$
  in the definition of a complementary chain formula, hence $\F_P$ contains precisely $c$ false constraints under $\alpha$. 
  Thus $\alpha$ violates precisely $2\ell(\ell+1) c=k-\ell$ constraints in $\F_0$. 
  In addition, consider an application of $R$ corresponding to a pair $\{(p,q),(u,v)\} \in \A$. 
  Then either $\alpha(x_p)=\alpha(x_q)$ and $\alpha(x_u)=\alpha(x_v)$ or $\{(p,q),(u,v)\} \in \B$. 
  Since the latter occurs for only $\ell$ constraints, assume the former. 
  Note that by design, the constraint over $R$ corresponds to an implementation of a double implication $R^*$,
  since we have $\alpha(x_v')=1-\alpha(x_v)$ for every $v \in U$. Thus for every constraint over $R$
  that does not correspond to a pair of $\B$, there is an extension of $\alpha$ to the local variables of $R$
  that satisfies the constraint. Thus at most $\ell$ constraints outside of $\F_0$ are violated, 
  and the assignment $\alpha$, suitably extended to all local variables of constraints over $R$,
  has cost at most $k$ and $(\F,k)$ is a \yes-instance. 

  In the other direction, let $\alpha$ be an assignment that violates at most $k=2\ell(\ell+1)c+\ell$ constraints of $\F$
  Since every assignment violates at least $k-\ell$ constraints of $\F_0$, and since every path $P$ corresponds to $\ell+1$
  disjoint complementary chain formulas in $\F_0$ over the same variable set, we conclude that for every path $P$, 
  the restriction of $\alpha$ to the variables of $\F_P$ corresponds to one of the assignments $\alpha_i$
  in the definition of a complementary chain formula. Thus we let $B \subseteq A$ be the arcs $(u,v) \in A$
  such that $\alpha(x_u)=1$ and $\alpha(x_v)=0$ and note that $B$ contains precisely one arc for every path $P \in \P$.
  This also implies that $B$ is an $st$-cut in $D$. Indeed, let $S \subseteq U$ be the set of vertices $v \in U$
  such that $\alpha(x_v)=1$. Then $s \in S$, $t \notin S$ and $B$ contains every arc $(u,v)$ with $u \in S$ and $v \notin S$. 
  Additionally, $|B|=2\ell$ since $B$ contains only one arc per path and every arc of $A$ occurs on a path in $\P$.
  We argue that furthermore $B$ is the union of at most $\ell$ pairs from $\A$. For this, consider a pair $\{(p,q),(u,v)\} \in \A$
  where $(p,q) \in B$. Then $\F$ contains a constraint $R(x_p^*,x_q^*,x_u^*,x_v^*,\ldots,y_r)$ or $R(x_u^*,x_v^*,x_p^*,x_q^*,\ldots,y_r)$
  such that this constraint, together with negations $(x_p \neq x_p') \land \ldots \land (x_v \neq x_v')$, implements a double
  implication $R^*(x_p,x_q,x_u,x_v)$. Furthermore, $\alpha(x_z') \neq \alpha(x_z)$ holds for every $z \in U$. 
  Thus the constraint $R$ implies $(x_u \to x_v)$ and $(x_p \to x_q)$ and $(p,q) \in B$ implies that $R$ is violated.
  Therefore, for every pair $\{(p,q),(u,v)\} \in \A$ that contains an arc of $B$, the corresponding $R$-constraint in $\F$
  is violated by $\alpha$. It follows from the budget bound $k$ that there are at most $\ell$ such pairs.  
  Therefore the input is a \yes-instance of \PMstCl.
\end{proof}

It now remains to prove for various languages $\Gamma$ that \MinSAT{\Gamma} meets the conditions of Lemma~\ref{lm:unified-hardness}.
For the hardness proof, complementing the positive cases of Theorems~\ref{ithm:gaifman} and~\ref{ithm:arrow}, we need to cover the following cases. 
\begin{enumerate}
\item $\Gamma$ is not IHS-B (but may be bijunctive), and there is a relation
  $R \in \Gamma$ such that the Gaifman graph $G_R$ contains a $2K_2$
\item $\Gamma$ is not bijunctive (but may be IHS-B), and there is a relation
  $R \in \Gamma$ such that the arrow graph $H_R$ contains a $2K_2$
\item $\Gamma$ may be both bijunctive and ISH-B, but there are relations
  $R, R' \in \Gamma$ such that the Gaifman graph $G_R$ and the
  arrow graph $H_{R'}$ both contain $2K_2$'s
\end{enumerate}
Due to the structure of Boolean languages (cf.~Proposition~\ref{prop:posts-lattice-relations}), in the first case we may assume that $\Gamma$ supports crisp $(x \neq y)$-constraints,
and in the second case we may assume that $\Gamma$ supports crisp constraints $(\neg x \lor \neg y \lor \neg z)$.
We proceed with the constructions. The first is the easiest.

\begin{lemma} \label{lemma:hardness:gaifman}
  Let $\Gamma$ be a Boolean constraint language
  that supports crisp constraints $(x \neq y)$
  and let $R \in \Gamma$ be a relation such that
  the Gaifman graph $G_R$ contains a $2K_2$. 
  Then \MinSAT{\Gamma} is \classWone-hard.
\end{lemma}
\begin{proof}
  By Lemma~\ref{lemma:neg-and-gr}, $\Gamma$ has a proportional implementation of a double implication $R^*(a,b,c,d)$.
  Then in particular, $\Gamma$ has a proportional implementation of $R_2(a,b)=\exists_{c,d} R^*(a,b,c,d)$,
  which from the definition of a double implication implies that $R_2$ is either soft equality $(a=b)$
  or soft implication $(a \to b)$. The construction of a complementary chain formula is now easy. 
  Note that \MinSAT{\Gamma} is not trivial since it pp-defines $(x \neq y)$; thus by Lemma~\ref{lemma:trivial:or:constants}
  and Proposition~\ref{prop:reduction-basics} it suffices to show hardness for \MinSAT{\Gamma \cup \{(x=0),(x=1),R_2\}}.
  We thus construct a complementary chain formula for a variable set $\{s,t,x_1,x_1',\ldots,x_n,x_n'\}$
  as
  \[
    (s=1) \land R_2(s,x_1) \land \ldots \land R_2(x_n,t) \land (t=0) \land \bigwedge_{i=1}^n (x_i \neq x_i').
  \]
  Clearly this meets the definition, since it is unsatisfiable and the assignments $\alpha_i$, $i \in [n]$
  corresponds precisely to satisfying assignments after the deletion of a single $R_2$-constraint.
  Note that all $(x \neq y)$-constraints can be taken to be crisp. 
  Thus the language $\Gamma \cup \{(x=0),(x=1),R_2\}$ meets all the conditions of Lemma~\ref{lm:unified-hardness}
  and \MinSAT{\Gamma} is \classWone-hard. 
\end{proof}

When disequality constraints $(x \neq y)$ are not available, we need a somewhat more intricate construction of the domain encoding (i.e., complementary chain formulas). 
We show that it suffices that there is a relation $R$ available such that the arrow graph $H_R$ contains a $2K_2$. 

\begin{lemma} \label{lm:support:negpath}
 Let $\Gamma$ be a Boolean constraint language 
 that supports crisp constraints $(x=1)$ and $(x=0)$,
 and let $R \in \Gamma$ be a relation such that the arrow graph $H_R$ contains a $2K_2$. 
 Then $\Gamma$ allows efficiently constructible, bounded cost complementary chain formulas. 
\end{lemma}

\begin{proof}
  We first show a consequence similar to Lemma~\ref{lemma:neg-and-gr}, showing that $\Gamma$ proportionally implements a useful relation $R^*$. 

  \begin{claim}
    $\Gamma$ has a proportional implementation of a 4-ary relation $R^*$ with
    \begin{align*}
      (a=b\neq c=d) \implies R^*(a,b,c,d)\implies (a\rightarrow b)\wedge (c\rightarrow d),
    \end{align*}
    i.e., $(0,0,1,1), (1,1,0,0) \in R^*$ and $R^* \subseteq \{(0,0),(0,1),(1,1)\} \times \{(0,0),(0,1),(1,1)\}$.    
  \end{claim}
  \begin{proofofclaim}
    Let $R\in\Gamma$ such that $H_R$ contains a $2K_2$ as an induced subgraph. Concretely and without loss of generality, let $R$ be $\ell$-ary, with $\ell\geq 4$, and let $\{1,2,3,4\}$ induce a $2K_2$ in $H_R$ with (at least) the directed edges $(1,2)$ and $(3,4)$ being present. (Recall that existence of a $2K_2$ is checked in the underlying undirected, simple graph of $H_R$. Thus, we may also have edges $(2,1)$ and/or $(4,3)$ present but no other edges exist in $H_R[\{1,2,3,4\}]$.) We implement $R^*$ by
    \begin{align*}
      R^*(a,b,c,d):=\exists x_5,\ldots,x_\ell: R(a,b,c,d,x_5,\ldots,x_\ell),
    \end{align*}
    which is clearly a proportional implementation. It remains to check that $R^*$ includes respectively excludes the relevant tuples. To this end, we observe that the arrow graph $H_{R^*}$ of $R^*$ is (canonically) isomorphic to $H_R[\{1,2,3,4\}]$: The existence of edges in the arrow graph $H_R$ of a relation $R$ depends solely on the projection of tuples in $R$ to any two of its positions, and the above pp-implementation of $R^*$ is exactly a projection of $R$ to its first four positions, preserving projection to any two of them. Thus, $\{1,2,3,4\}$ induces the same $2K_2$ subgraph in $H_{R^*}$ with edges $(1,2)$ and $(3,4)$, possibly their reversals $(2,1)$ and/or $(4,3)$, but no further edges being present.
    By the definition of $H_{R^*}$, the edges $(1,2)$ and $(3,4)$ require that $R^*(a,b,c,d)$ has no satisfying assignment with $a=1$ and $b=0$ nor with $c=1$ and $d=0$. Thus, $R^*(a,b,c,d)$ indeed implies both $(a\rightarrow b)$ and $(c\rightarrow d)$, i.e., it excludes all tuples of form $(1,0,\cdot,\cdot)$ and/or $(\cdot,\cdot,1,0)$. 
 
    Now, assume for contradiction that $(1,1,0,0)\notin R^*$. In order for $(1,2)$ to be present in $H_{R^*}$ there must be some tuple $(1,1,\cdot,\cdot)\in R^*$, as some satisfying assignment must extend $a=b=1$. As $(1,1,0,0),(1,1,1,0)\notin R^*$, we must have some tuple $(1,1,\cdot,1)\in R^*$. Similarly, because we need a tuple $(\cdot,\cdot,0,0)$ for $(3,4)$ to be present in $H_{R^*}$ while $(1,0,0,0),(1,1,0,0)\notin R^*$, we must have some tuple $(0,\cdot,0,0)\in R^*$. Consider now which assignments to $a$ and $d$ can be extended to satisfying assignments of $R^*(a,b,c,d)$, i.e., the projection $R^*_{ad}(a,d):=\exists b,c: R^*(a,b,c,d)$: We have $(1,1),(0,0)\in R^*_{ad}$ because $(1,1,\cdot,1),(0,\cdot,0,0)\in R^*$. However, we have $(1,0)\notin R^*_{ad}$ because all tuples $(1,\cdot,\cdot,0)$ are excluded from $R^*$: concretely, $(1,0,0,0)$,$(1,0,1,0))$, and $(1,1,1,0)$ were excluded in the previous paragraph, while $(1,1,0,0)\notin R^*$ is our local assumption. But then $(1,1),(0,0)\in R^*_{ad}$ and $(1,0)\notin R^*_{ad}$ would imply that $(1,4)$ must be an edge of $H_{R^*}$ (and of $H_R$); a contradiction. It follows that $(1,1,0,0)\in R^*$.
 
    We briefly discuss the symmetric argument for inclusion of $(0,0,1,1)$ in $R^*$. Assume for contradiction that $(0,0,1,1)\notin R^*$; we also have $(0,0,1,0),(1,0,1,1)\notin R^*$. For edge $(1,2)$ to be in $H_{R^*}$ we require some tuple $(0,0,\cdot,\cdot)\in R^*$, which implies that there is a tuple $(0,0,0,\cdot)\in R^*$. For edge $(3,4)$ we require some tuple $(\cdot,\cdot,1,1)\in R^*$, which implies that there is a tuple $(\cdot,1,1,1)\in R^*$. Thus, $(0,0),(1,1)\in R^*_{bc}$, where $R^*_{bc}(b,c):=\exists a,d: R^*(a,b,c,d)$. At the same time $(0,1)\notin R^*_{bc}$ because all tuples $(\cdot,0,1,\cdot)$ are excluded from being in $R^*$. Thus, the edge $(2,3)$ must be present in $H_{R^*}$ (and in $H_R$); a contradiction. It follows that $(0,0,1,1)\in R^*$.
 
    To summarize, we get a proportional implementation of $R^*(a,b,c,d)$ where $(0,0,1,1),(1,1,0,0)\in R^*$ while all tuples of form $(1,0,\cdot,\cdot),(\cdot,\cdot,1,0)$ are excluded from being in $R^*$, i.e., $R^*(a,b,c,d)$ requires $(a\rightarrow b)$ and $(c\rightarrow d)$ while $a=b\neq c=d$ is sufficient for $R^*(a,b,c,d)$ to hold.
  \end{proofofclaim}

  We now proceed with the construction. We create a formula $\F$ over $\Gamma$ on variable set
  $V(\F)=\{s,x_1,\ldots,x_n,x_1',\ldots,x_n',t\}$ such that every assignment violates at least one constraint and in addition the following hold.
  \begin{enumerate}
  \item For every $0 \leq i \leq n$, let $\alpha_i$ be the assignment that sets
    $\alpha_i(s)=1=\alpha_i(x_j)$ for every $1 \leq j \leq i$,
    $\alpha_i(x_j)=0=\alpha_i(t)$ for every $j > i$, and $\alpha_i(x_j')=1-\alpha_i(x_j)$
    for every $j \in [n]$. This assignment violates precisely one constraint of $\cF$.
  \item No other assignment violates fewer than two constraints of $\cF$.
  \end{enumerate}
  Clearly this meets the conditions of a complementary chain formula, with cost $1$.
  
  Construct a formula $\cF$ on variable set $V(\cF)=\{s,x_1,\ldots,x_n,x_1',\ldots,x_n',t\}$
  consisting of crisp constraints $(s=1)$ and $(t=0)$ (e.g., two copies of each constraint) and one copy each of the constraints $R^*(s,x_1,x_1',t)$, $R^*(x_i,x_{i+1},x_{i+1}',x_i')$ for $1 \leq i \leq n-1$, and $R^*(x_n,t,s,x_n')$. We show that $\cF$ meets the conditions of the lemma.

  We consider the structure of $\cF$ first. Recall that every constraint $R^*(a,b,c,d)$ implies $(a \to b)$ and $(c \to d)$. Therefore, the constraints of $\cF$ imply two disjoint chains of implications, one $s \to x_1 \to \ldots \to x_n \to t$ and the other $s \to x_n' \to \ldots \to x_1' \to t$. Since $\cF$ also implies $s=1$ and $t=0$, this shows that $\cF$ is not satisfiable.
  Furthermore, informally, the two chains are arc-disjoint, so every assignment $\alpha$ must violate at least two implications, and every constraint $R^*(x_i,x_{i+1},x_{i+1}',x_i')$ contributes only two implications $(x_i \to x_{i+1})$ and $(x_{i+1}' \to x_i')$. Therefore, any assignment $\alpha$ that violates only a single constraint must break these chains in coordinated positions $(x_i \to x_{i+1})$ and $(x_{i+1}' \to x_i')$ coming from the same constraint $R^*(x_i,x_{i+1},x_{i+1}',x_i')$, and under such an assignment we have $\alpha(x_j)=1$ if and only if $j \leq i$, and $\alpha(x_j')=1$ if and only if $j>i$, from which it follows that $\alpha=\alpha_i$. 

  More formally, we first consider the assignment $\alpha_i$ for some $0 \leq i \leq n$, which for all $j, j' \in [n]$ with $j \leq i < j'$ sets $\alpha_i(s)=\alpha_i(x_j)=1$, $\alpha_i(x_{j'})=\alpha_i(t)=0$, and $\alpha(x_j')=1-\alpha(x_j)$ for every $j \in [n]$. We show that $\alpha_i$ satisfies all but one constraint of $\cF$.  Clearly, the constraints $(s=1)$ and $(t=0)$ are satisfied by $\alpha_i$. 
  For the constraints $R^*(s,x_1,x_1',t)$ (if $i>0$) and $R^*(x_j,x_{j+1},x_{j+1}',x_j')$ for $j < i$,
  $\alpha_i$ gives assignment $(1,1,0,0)$, and since $(1,1,0,0) \in R^*$, these constraints are satisfied. Similarly, for any constraint $R^*(x_j,x_{j+1},x_{j+1}',x_j')$ for $j > i$, $\alpha_i$ gives assignment $(0,0,1,1)$ which also satisfies $R^*$. Finally, $R^*(x_n,t,s,x_n')$ is satisfied if and only if $i<n$. Hence every constraint of $\cF$ except $R^*(x_i,x_{i+1},x_{i+1}',x_i')$ (respectively $R^*(s,x_1,x_1',t)$ if $i=0$ and $R^*(x_n,t,s,x_n')$ if $i=n$) is satisfied by $\alpha_i$.
  
  Finally, assume that $\alpha$ is an assignment that only violates one constraint. Then as outlined above, if the chain $s \to x_1 \to \ldots \to x_n \to t$ of implications is broken in more than one place, then $\alpha$ violates two distinct constraints $R^*$, since every implication of the chain is implied by a distinct constraint $R^*$ of $\cF$. Similarly, all implications of the chain $s \to x_n' \to \ldots \to x_1' \to t$ are implied by distinct constraints $R^*$ of $\cF$. Hence $\alpha$ violates precisely one arc $(x_i \to x_{i+1})$ and one arc $(x_{j+1}' \to x_j')$, taking $s=x_0=x_{n+1}'$ and $t=x_{n+1}=x_0'$ for brevity. Furthermore, if $\alpha$ has cost 1, then these implications must be contributed by the same constraint $R^*(x_i,x_{i+1},x_{j+1}',x_j')$ which is possible only if $i=j$. 
  Thus $\alpha$ is constrained to be identical to $\alpha_i$ for some $i \in [n]$. 
\end{proof} 

This allows us to finish off the last two cases of the hardness conditions.

\begin{lemma}\label{lemma:hardness:arrow}
  Let $\Gamma$ be a Boolean constraint language that supports crisp $(\neg x_1\vee \neg x_2 \vee \neg x_3)$ and let $R \in \Gamma$ be a relation such that the arrow graph $H_R$ contains a $2K_2$.
 Then \MinSAT{\Gamma} is either trivial or \classWone-hard. 
\end{lemma}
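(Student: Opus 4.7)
The plan is to reduce from \PMstCl (W[1]-hard by Lemma~\ref{lemma:hardness:tightpairedstcut}), in the spirit of Lemma~\ref{lemma:hardness:gaifman} but with the pairing implemented via crisp negative 3-clauses acting on the complementary path pairs furnished by Lemma~\ref{lm:support:negpath} rather than via a single proportional constraint. If \MinSAT{\Gamma} is trivial we are done, so by Lemma~\ref{lemma:trivial:or:constants} we may assume that crisp $(x=0)$ and $(x=1)$ are available; combined with the hypotheses on $R$ and on crisp negative 3-clauses, all the ingredients of Lemma~\ref{lm:support:negpath} are at our disposal.

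Given an instance $(D,s_D,t_D,\ell,\A,\P)$ of \PMstCl, I build a formula $\F$ over $\Gamma$ on a shared pair of variables $s,t$ as follows. For each path $P\in\P$ with $n_P$ internal vertices (and thus $n_P+1$ arcs $a_1^P,\dots,a_{n_P+1}^P$), introduce fresh variables $x_1^P,\dots,x_{n_P}^P$ and primed copies $(x_1^P)',\dots,(x_{n_P}^P)'$, and include the formula from Lemma~\ref{lm:support:negpath} applied to the sequence $(s,x_1^P,\dots,x_{n_P}^P,t)$. Writing $x_0^P:=s$ and $x_{n_P+1}^P:=t$, the cost-$1$ canonical assignments on this block are in bijection with the choice of arc $a_k^P$ to cut, while every other assignment on the block has cost at least $2$. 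Under a canonical assignment, cutting $a_k^P$ amounts to $x_{k-1}^P=1\wedge x_k^P=0$, and the identity $(x_j^P)'=\neg x_j^P$ holds for all $j$.

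For each pair $\{a_k^P,a_{k'}^{P'}\}\in\A$, I add crisp negative 3-clauses enforcing ``cut $a_k^P$ iff cut $a_{k'}^{P'}$.'' The pattern ``cut $a_{k'}^{P'}$ but not $a_k^P$'' decomposes into two forbidden cases, ruled out respectively by
\[
\neg x_{k'-1}^{P'}\vee\neg(x_{k'}^{P'})'\vee\neg(x_{k-1}^P)',\qquad \neg x_{k'-1}^{P'}\vee\neg(x_{k'}^{P'})'\vee\neg x_k^P,
\]
where the identity $(x_j^P)'=\neg x_j^P$ has been used to rewrite what would otherwise be positive literals $x_{k-1}^P=0$ and $x_{k'}^{P'}=0$ in negative form. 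Symmetric clauses handle the opposite direction. Edge cases with $k\in\{1,n_P+1\}$ (so that $x_{k-1}^P=s$ or $x_k^P=t$) are vacuous given the crisp assignments of $s$ and $t$, and the corresponding clauses are simply omitted. Finally, I set the parameter to $2\ell$.

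Correctness in the forward direction is immediate: a cut $B$ of $\ell$ pairs in $D$ determines a canonical assignment on every path, achieving total cost exactly $2\ell$ while respecting all crisp pairing clauses. Conversely, any assignment of cost at most $2\ell$ must by Lemma~\ref{lm:support:negpath}(3) contribute exactly cost $1$ to each of the $2\ell$ path blocks, hence be canonical on every path, hence select one arc per path in $D$; the crisp pairing clauses then force the chosen arcs to form the union of $\ell$ pairs from $\A$, giving a valid \PMstCl solution. The main point to verify is the soundness of the literal substitution $\neg x_j^P\leftrightarrow(x_j^P)'$ used in writing the pairing clauses: the substitution is strictly valid only on assignments that are canonical on every block, but any other assignment exceeds the budget on the complementary-path blocks alone, so no such assignment is competitive and the substitution is sound throughout the reduction.
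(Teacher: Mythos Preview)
Your overall architecture matches the paper's (reduce from \PMstCl, build the complementary path pairs of Lemma~\ref{lm:support:negpath}, then impose the pairing with crisp negative 3-clauses), and your clause pattern for enforcing the pairing is correct. However, there is a genuine gap.

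You introduce \emph{fresh} variables $x_1^P,\dots,x_{n_P}^P$ for each path $P\in\P$, so a vertex of $D$ that lies on several paths receives several unrelated variables. The paper instead indexes variables by \emph{vertices} of $D$ and applies Lemma~\ref{lm:support:negpath} to each path over this common variable set; distinct paths share the variable $x_v$ at a shared vertex $v$. This sharing is essential in the backward direction: from an assignment $\alpha$ of cost $2\ell$, the paper defines $B=\{(u,v)\in A:\alpha(x_u)=1,\alpha(x_v)=0\}$ and argues that $B$ is an $st$-cut because every $st$-path in $D$ (not just the $2\ell$ paths in $\P$) is a chain of implications $(x_u\to x_v)$ satisfied by $\alpha$.

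With your fresh variables, that argument disappears, and in fact the reduction breaks. The paths of $\P$ are only arc-disjoint; they may share internal vertices. Concretely, take $D$ with vertex $u$ between $s$ and $t$, arcs $e_1=(s,u),e_2=(u,t),e_3=(s,u),e_4=(u,t)$, paths $P_1=e_1e_2$, $P_2=e_3e_4$, pairing $\A=\{\{e_1,e_4\},\{e_2,e_3\}\}$, and $\ell=1$. Neither pair alone is an $st$-cut (the ``crossing'' paths $e_1e_4$ and $e_3e_2$ survive), so the \PMstCl instance is a no-instance. In your formula the two copies of $u$ get independent variables $x_1^{P_1}$ and $x_1^{P_2}$; setting $x_1^{P_1}=0$ and $x_1^{P_2}=1$ is canonical on both blocks, has cost $2=2\ell$, selects the pair $\{e_1,e_4\}$, and satisfies all your pairing clauses. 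Thus your output is a yes-instance although the input is not. The fix is exactly what the paper does: use one variable per vertex and let the formulas $\cF_P$ overlap.
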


\begin{proof}
  Assume that \MinSAT{\Gamma} is not trivial.  
  Using Lemma~\ref{lemma:trivial:or:constants}, it suffices to give a reduction from \PMstCl to \MinSAT{\Gamma\cup\{(x=0),(x=1)\}}. 
  By Lemma~\ref{lm:support:negpath}, $\Gamma$ supports efficiently constructible, bounded cost complementary chain formulas,
  hence it remains to present a relation $R$ whose Gaifman graph $G_R$ contains a $2K_2$. 
  For this, consider the implementation
  \[
    R(a,b,c,d) := \exists_y (\neg a \lor \neg b \lor \neg y) \land (\neg c \lor \neg d \lor \neg y) \land (y=1),
  \]
  where the negative 3-clauses may be crisp and $(y=1)$ is soft. Clearly, this is an proportional implementation of the relation
  \[
    R(a,b,c,d) \equiv (\neg a \lor \neg b) \land (\neg c \lor \neg d),
  \]
  whose Gaifman graph is a $2K_2$. Thus \MinSAT{\Gamma \cup \{(x=0),(x=1),R\}} is \classWone-hard,
  hence \MinSAT{\Gamma} is \classWone-hard as well.
\end{proof}  

Finally, we cover the case where we do not have access to crisp disequalities or crisp negative 3-clauses but have both a relation $R$ whose Gaifman graph $G_R$ contains a $2K_2$ and a relation $R'$ whose arrow graph $H_{R'}$ contains a $2K_2$. 
This case is immediate given the above preparations. 

\begin{lemma}\label{lemma:hardness:arrowandgaifman}
  Let $\Gamma$ be a Boolean constraint language that contains relations $R$ and $R'$ such that the Gaifman graph $G_{R}$ and the arrow graph $H_{R'}$ 
  both contain a $2K_2$. Then \MinSAT{\Gamma} is either trivial or \classWone-hard. 
\end{lemma}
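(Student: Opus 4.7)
The plan is to reduce from \PMstCl, mirroring the proof of Lemma~\ref{lemma:hardness:arrow} but replacing its crisp ternary negative clauses by a soft pairing constraint built from $R'$. By Lemma~\ref{lemma:trivial:or:constants} I may assume \MinSAT{\Gamma} supports crisp $(x=0)$ and $(x=1)$ constraints, and it suffices to reduce to \MinSAT{\Gamma \cup \{(x=0),(x=1)\}}. Given an instance $(D,s,t,\ell,\A,\P)$ of \PMstCl with $D=(U,A)$, I introduce variables $x_v,x_v'$ for each $v\in U$ together with crisp $(x_s=1),(x_t=0),(x_s'=0),(x_t'=1)$, and for each path $P=(s=v_0,v_1,\ldots,v_r,v_{r+1}=t)\in\P$ I invoke Lemma~\ref{lm:support:negpath} (using the relation $R$, whose arrow graph contains a $2K_2$) on the sequence $(x_s,x_{v_1},\ldots,x_{v_r},x_t)$. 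This produces a formula $\cF_P$ for which every assignment violating only one of its constraints must set $x_{v_j}'=\neg x_{v_j}$ and cut exactly one arc of $P$.

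For the pairing, let $R'\in\Gamma$ have arity $r'$ with positions $\{1,2,3,4\}$ inducing a $2K_2$ in $G_{R'}$ via edges $\{1,2\}$ and $\{3,4\}$, and let $(\alpha_1,\alpha_2)$ and $(\alpha_3,\alpha_4)$ be the corresponding forbidden patterns. For each pair $\{(u,v),(p,q)\}\in\A$ I add the constraint $R'(z_1,z_2,z_3,z_4,y_5,\ldots,y_{r'})$ with fresh $y_5,\ldots,y_{r'}$ per pair, where $z_1=x_u$ if $\alpha_1=1$ and $z_1=x_u'$ otherwise; $z_2=x_v$ if $\alpha_2=0$ and $z_2=x_v'$ otherwise; and $z_3,z_4$ analogously from $(p,q)$. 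In the intended world where $x_v'=\neg x_v$, this arrangement ensures that $(z_1,z_2)=(\alpha_1,\alpha_2)$ iff $(x_u,x_v)=(1,0)$ and $(z_3,z_4)=(\alpha_3,\alpha_4)$ iff $(x_p,x_q)=(1,0)$; the complementary variables thus play the normalizing role served by crisp $\neq$ in Lemma~\ref{lemma:hardness:gaifman}, and the pair constraint is violated precisely when at least one of the two paired arcs is cut.

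Since the pair constraints must remain soft, I balance the budget by replicating each soft constraint of every $\cF_P$ exactly $M:=\ell+1$ times in the output formula and setting $k':=2\ell M+\ell = 2\ell^2+3\ell$. The forward direction is routine: the natural assignment induced by an $st$-cut $B$ of size $2\ell$ consisting of $\ell$ pairs of $\A$ violates exactly one constraint per $\cF_P$ (cost $2\ell M$ after replication) and one pair constraint per cut pair (cost $\ell$), for total $k'$. For the backward direction, any assignment of cost at most $k'$ must spend at least $M$ in each of the $2\ell$ unsatisfiable $\cF_P$; a second violation anywhere would add an extra $M>\ell$ and overshoot $k'$, so each $\cF_P$ has exactly one violated constraint, which by Lemma~\ref{lm:support:negpath} forces $x_v'=\neg x_v$ and a one-arc-per-path cut. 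The remaining budget $\ell$ bounds pair-constraint violations, and if $P_1$ denotes the set of pairs whose both arcs are cut, the number of violated pair constraints equals $2\ell-|P_1|$, which forces $|P_1|=\ell$ and hence yields a valid \PMstCl solution.

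The main technical hurdle is to verify that the four ``no-cut'' profiles $(x_u,x_v,x_p,x_q)\in\{(0,0),(1,1)\}^2$ extend to satisfying tuples of the inserted $R'$-constraint, which is what keeps the pair-constraint cost at exactly $\ell$ in the forward direction. I would handle this by an argument in the spirit of Lemma~\ref{lemma:hardness:gaifman}: the non-edges $\{1,3\},\{1,4\},\{2,3\},\{2,4\}$ of $G_{R'}$ imply that the four 2-projections $\pi_{13},\pi_{14},\pi_{23},\pi_{24}$ of $R'$ each equal $\{0,1\}^2$; for each of the four target tuples, the appropriate such projection supplies two of the required four entries, after which the forbidden patterns on $\{1,2\}$ and $\{3,4\}$ force the remaining two entries to take the required values.
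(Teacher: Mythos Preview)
Your proposal is correct and follows essentially the same route as the paper: replicate the path formulas of Lemma~\ref{lm:support:negpath} $\ell+1$ times with budget $k=(\ell+1)2\ell+\ell$, then use $R'$ (normalized via the complementary variables $x'$ in place of crisp $\neq$) as a soft pairing constraint, and finish with the counting argument $2\ell-|P_1|\le\ell$. The only cosmetic difference is that the paper packages the normalization as an intermediate relation $R^\dagger$ satisfying $(a=b)\wedge(c=d)\Rightarrow R^\dagger(a,b,c,d)\Rightarrow(a\to b)\wedge(c\to d)$ before substituting primed variables, whereas you plug the $z_i\in\{x_\bullet,x_\bullet'\}$ directly into $R'$; the verification via the four non-edges $\{1,3\},\{1,4\},\{2,3\},\{2,4\}$ of $G_{R'}$ is identical.
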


\begin{proof}
  Assume that \MinSAT{\Gamma} is not trivial. By Lemma~\ref{lemma:trivial:or:constants},
  we may assume access to crisp constraints $(x=0)$ and $(x=1)$. 
  Then we have efficiently constructible, bounded cost complementary chain formulas by Lemma~\ref{lm:support:negpath},
  using $(x=0)$, $(x=1)$ and the relation $R'$, 
  and \classWone-hardness follows from Lemma~\ref{lm:unified-hardness} from the chain formulas and the relation $R$. 
\end{proof}

The final hardness proofs pertain to \WeightedMinSAT{\Gamma}. These build on constructions used so far, but we present them separately to avoid mixing weighted and unweighted problems. With weighted constraints, a kind of ``weighted complementary chain formula'' can be constructed using only (proportional implementations of) $(x\rightarrow y)$ or $(x=y)$ together with crisp negative clauses $(\neg x\vee\neg y)$; as in Lemma~\ref{lm:support:negpath}, we use this to implement variable negation. 

The construction of weighted complementary chains is as follows. 

\begin{lemma} \label{lm:support:wtpath}
  Let $\Gamma$ be a Boolean constraint language that proportionally
  implements constraints $(x \to y)$ or $(x=y)$ and supports crisp
  constraints $(x=1)$, $(x=0)$ and $(\neg x \lor \neg y)$. 
  Let $V=(s,x_1,\ldots,x_n,t)$ be a sequence of variables and
  let $W \in \N$ be a target weight, $W>n$.
  There is a formula $\cF$ over $\Gamma$ 
  on variable set $V(\cF)=\{s,x_1,\ldots,x_n,x_1',\ldots,x_n',t\}$
  with the following properties.
  \begin{enumerate}
  \item Every assignment violates at least two constraints of $\F$.
  \item For every $0 \leq i \leq n$, let $\alpha_i$ be the assignment
    that sets
    $\alpha_i(s)=1=\alpha_i(x_j)$ for every $1 \leq j \leq i$,
   $\alpha_i(x_j)=0=\alpha_i(t)$ for every $j > i$, and $\alpha_i(x_j')=1-\alpha_i(x_j)$
   for every $j \in [n]$. This assignment violates precisely two
   constraint of $\cF$ and has weight $W$.
 \item Every other assignment either violates more than two
   constraints of $\cF$ or has weight at least $W+1$. 
 \end{enumerate}
\end{lemma}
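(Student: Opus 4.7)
The plan is to adapt the construction of Lemma~\ref{lm:support:negpath} to the weighted setting, building two complementary chains of implications that together force any near-optimal assignment to match $\alpha(x_j')=1-\alpha(x_j)$. Since we no longer have access to a 4-ary coupling relation $R^*$, the coupling between the chains will be accomplished by the crisp $(\neg x \vee \neg y)$-clauses together with a carefully chosen weight pattern on the soft implications.

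Concretely, we add crisp $(s=1)$, $(t=0)$, and $(\neg x_j \vee \neg x_j')$ for each $j \in [n]$; each crisp constraint is duplicated (or given a prohibitive weight) so that any violation contributes both more than two to the constraint count and more than $W$ to the total weight, reducing us to analysing assignments that satisfy all of them. We then install chain~1, $s \to x_1 \to \dots \to x_n \to t$, and chain~2, $s \to x_n' \to \dots \to x_1' \to t$, each as a sequence of $n+1$ soft implications produced by whichever of $(x \to y)$ or $(x=y)$ is proportionally implementable in $\Gamma$ (the stronger equality, if used, can only add violations to non-canonical assignments and leaves the analysis for $\alpha_i$ unchanged). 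Labelling the $n+1$ possible break positions of each chain by $0, 1, \dots, n$, we assign the implication at break position $i$ of chain~1 weight $a_i$, where $a_0 > a_1 > \dots > a_n \ge 1$ is a strictly decreasing sequence of positive integers, and we assign its partner on chain~2 the complementary weight $W - a_i$; all weights are integers that are at least $1$, which is feasible once $W$ is sufficiently large (the hypothesis $W > n$ leaves room for the construction, and in the intended applications $W$ will be polynomially large in $n$).

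For the analysis, any assignment satisfying all crisp constraints has $\alpha(s)=1$, $\alpha(t)=0$, and never $(\alpha(x_j),\alpha(x_j'))=(1,1)$. The first two conditions force each chain to contain at least one violated implication, giving property~1. Suppose an assignment violates chain~1 at exactly one break position $i$ and chain~2 at exactly one break position $j$: a direct inspection shows that $(\alpha(x_k),\alpha(x_k'))=(1,1)$ would hold for every $k$ with $j < k \le i$, so the crisp clauses force $i \le j$. If $i = j$ this assignment is exactly $\alpha_i$ and has weight $a_i + (W - a_i) = W$, proving property~2; if $i < j$ then the strict monotonicity of $(a_i)$ gives weight $a_i + (W - a_j) > W$, hence at least $W+1$. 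Every remaining assignment breaks some chain at two or more positions and therefore contributes at least three soft violations, which together with the previous case establishes property~3.

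The main obstacle is the weight-tuning step: we must simultaneously guarantee that every $a_i$ and every $W - a_i$ is a strictly positive integer, that the sequence $(a_i)$ is strictly decreasing so that misaligned cuts are penalised, and that the weights are compatible with the proportional-implementation factor $c \ge 1$ of $(x \to y)$ or $(x=y)$ in $\Gamma$ (which forces all soft weights to be multiples of $c$). The last is handled by an initial rescaling of $W$ to $cW$ and of the $a_i$ accordingly, after which the other constraints are routine to verify.
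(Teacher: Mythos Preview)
Your proposal is correct and follows essentially the same approach as the paper. The paper's proof is simply the specific instance of your scheme with $a_i = W - i$ (so the partner weights are $W - a_i = i$), and it too uses two soft equality/implication chains linked by crisp $(\neg x_j \vee \neg x_j')$ clauses, with the same ``$i \le j$ forced by the clauses, $i < j$ penalised by the weights'' analysis. The only cosmetic difference is that the paper first reduces $(x \to y)$ to $(x=y)$ via $(x=y) \equiv (x \to y) \wedge (y \to x)$ before describing the construction, whereas you handle both cases directly; both treatments of the proportional-implementation factor are equally informal.
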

\begin{proof}
  It suffices to deal with the case that $\Gamma$ proportionally implements $(x=y)$: Indeed, given proportional implementation of $(x\rightarrow y)$, we can use $(x=y)\equiv(x\rightarrow y)\wedge(y\rightarrow x)$ as a proportional implementation of $(x=y)$, since every assignment violating $(x=y)$ violates precisely one of the constraints $(x\rightarrow y)$ and $(y\rightarrow x)$, and both violations have the same cost $\alpha\geq 1$ (with $\alpha$ depending on the implementation of $(x \to y)$). For ease of description, we will describe the reduction assuming that we have $(x=y)$ constraints available (rather than implementations thereof); as these will be the only soft constraints, replacing them by the implementation (while blowing up the parameter value by a factor of $\alpha$) completes the proof.

We create a formula with the following constraints. For convenience, let $x_0$ and $x_{n+1}'$ denote $s$, and $x_{n+1}$ and $x_0'$ denote $t$. 
 \begin{itemize}
 \item For each $i\in\{0,1,\ldots,n\}$ we add a constraint $(x_{i}=x_{{i+1}})$ of weight $W-i$. 
  \item For each $i\in\{0,1,\ldots,n\}$ we add a constraint $(x'_{i}=x'_{{i+1}})$ of weight $i$. 
  \item In addition, add crisp constraints $(s=1)$, $(t=0)$, and for every $1 \leq i \leq n$ a crisp constraint $(\neg x_i \lor \neg x_i')$. 
  \end{itemize}
  Note that these constraints form two chains of equality constraints between $s$ and $t$, one in increasing order of index ($s$, $x_1$, \ldots, $x_n$, $t$) via variables $x_i$ and one in the opposite order $(s$, $x_n'$, \ldots, $x_1'$, $t$) via variables $x_i'$. Hence it is immediate that $\cF$ is unsatisfiable, and that every assignment violates at least two constraints. We proceed to show the claims about assignment weights.

  First, consider the assignment $\alpha_i$ for some $0 \leq i \leq n$. 
  Since $\alpha_i(x_j')=1-\alpha_i(x_j)$ for every $j \in [n]$, clearly
  $\alpha_i$ satisfies all crisp constraints. Furthermore, it violates
  precisely two soft constraints, the edge $(x_i=x_{i+1})$ on the first
  path and the edge $(x_i'=x_{i+1}')$ on the second path; together they have weight $W-i+i=W$. We proceed to show that all other assignments are worse.

  Let $\alpha$ be an assignment that violates precisely two constraints in $\cF$, satisfies all crisp constraints, and is not equal to $\alpha_i$ for any $0 \leq i \leq n$. Then there is precisely one index $i$ such that $\alpha(x_i) \neq \alpha(x_{i+1})$ (taking $x_0=s$ and $x_{n+1}=t$)
  and precisely one index $j$ such that $\alpha(x_j') \neq \alpha(x_{j+1}')$ (taking $x_0'=t$ and $x_{n+1}'=s$), and by assumption $i \neq j$. 
  If $i>j$, then $\alpha(x_i)=\alpha(x_i')=1$, contradicting the constraint $(x_i \lor \neg x_i')$. Thus $i < j$. But then $\alpha$ has weight $W-i+j \geq W+1$, as claimed. 
\end{proof}

We now give the main weighted hardness proof. 

\begin{lemma}\label{lemma:hardness:weighted2}
 Let $\Gamma$ be a Boolean constraint language that contains a relation $R$ whose Gaifman graph $G_R$ contains a $2K_2$, that supports crisp $(\neg x \vee \neg y)$, and proportionally implements $(x=y)$ or $(x \to y)$. Then \WeightedMinSAT{\Gamma} is either trivial or \classWone-hard.
\end{lemma}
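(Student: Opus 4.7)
I plan to reduce from \PMstCl{} by combining the weighted ``complementary-path'' construction of Lemma~\ref{lemma:hardness:weighted} with the Gaifman-based pairing gadget from Lemma~\ref{lemma:hardness:arrowandgaifman}. Where Lemma~\ref{lemma:hardness:weighted} used crisp ternary clauses $(\neg x \vee \neg y \vee \neg z)$ to enforce the arc pairing, the plan here is to replace that mechanism by the $R^*$-gadget produced from the $2K_2$ in $G_R$, with the required negations simulated by weight-enforced complementary variables. By Lemma~\ref{lemma:trivial:or:constants}, it suffices to reduce to \WeightedMinSAT{\Gamma \cup \{(x=0),(x=1)\}}.

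Given an instance $(D=(V,A), s, t, \ell, \A, \P)$ of \PMstCl{}, I introduce for each $v \in V$ two variables $x_v$ and $x_v'$ intended to be complementary. For each path $P \in \P$, I apply Lemma~\ref{lm:support:wtpath} with target weight $W := |V|$ to produce a formula $\F_P$ on the shared variable set, and let $\F_0 := \bigsqcup_{P \in \P} \F_P$. By Lemma~\ref{lm:support:wtpath}, any assignment that violates at most $2|\P| = 4\ell$ constraints of $\F_0$ and uses weight at most $|\P| \cdot W = 2\ell W$ on $\F_0$ must restrict on each $\F_P$ to some canonical $\alpha_i$; in particular it satisfies $\alpha(x_v') = 1 - \alpha(x_v)$ for all $v$ and selects exactly one cut arc per path. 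Next, since $G_R$ contains a $2K_2$, the construction in the proof of Lemma~\ref{lemma:hardness:gaifman} yields a proportional implementation over $\Gamma \cup \{(x \neq y)\}$ of a relation $R^*$ with $(a=b) \wedge (c=d) \Rightarrow R^*(a,b,c,d) \Rightarrow (a \to b) \wedge (c \to d)$, in which the $\neq$-constraints appear only crisply and serve merely to align the forbidden pairs of $R$. As in Lemma~\ref{lemma:hardness:arrowandgaifman}, I replace each such crisp $(x_v \neq \hat{x})$ by substituting $x_v'$ for $\hat{x}$; the $\F_0$-enforced complementarity turns this into a valid proportional implementation of $R^*$ inside $\Gamma$. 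For each pair $\{(u,v),(p,q)\} \in \A$, I then add one copy of $R^*(x_u, x_v, x_p, x_q)$ to the output formula.

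The main obstacle is calibrating the two budgets tightly. With appropriate uniform weights on the $R^*$-copies, I set the cost budget to $k := 4\ell + \ell \alpha_{R^*}$ and the weight budget to $W_{\mathrm{tot}} := 2\ell W + \ell \omega_{R^*}$, where $\alpha_{R^*}$ and $\omega_{R^*}$ denote respectively the cost and weight of violating a single $R^*$-copy. Correctness then follows the template of Lemma~\ref{lemma:hardness:arrowandgaifman}: every feasible assignment must exhaust both budgets on $\F_0$ (forcing a canonical restriction, hence an $st$-mincut $B$ of size $2\ell$ and $\alpha(x_v') = 1 - \alpha(x_v)$) and may violate at most $\ell$ of the $R^*$-copies. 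Because $R^*(x_u, x_v, x_p, x_q)$ is satisfied precisely when $(x_u = x_v) \wedge (x_p = x_q)$, a single $R^*$-copy is violated iff at least one arc of its pair lies in $B$; combined with $|B| = 2\ell$, the budget of $\ell$ violated $R^*$-copies forces $B$ to be the union of exactly $\ell$ pairs in $\A$. Conversely, any paired $st$-mincut in $D$ lifts to a canonical assignment that meets both budgets with equality, since uncut pairs of arcs satisfy $R^*$ via the first implication, yielding the desired equivalence.
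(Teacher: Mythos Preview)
Your overall strategy matches the paper's proof: reduce from \PMstCl, build complementary path pairs via Lemma~\ref{lm:support:wtpath}, then enforce the pairing $\A$ with the $R^*$-gadget of Lemma~\ref{lemma:hardness:gaifman}, simulating the crisp $\neq$-constraints by substituting the primed variables. However, there is a genuine gap in the budget calibration step.

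You claim that ``every feasible assignment must exhaust both budgets on $\F_0$,'' i.e., that an assignment meeting your total budgets $k = 4\ell + \ell\alpha_{R^*}$ and $W_{\mathrm{tot}} = 2\ell W + \ell\omega_{R^*}$ must spend exactly $4\ell$ constraints and $2\ell W$ weight on $\F_0$, forcing each $\F_P$ into a canonical $\alpha_i$. This is not justified. The slack $\ell\alpha_{R^*}$ in the constraint budget (intended for the $R^*$-copies) can instead be spent inside $\F_0$. Concretely, on one path an assignment can violate three equality constraints near the $t$-end of the first chain with very small total weight (e.g., $(W-n)+(W-n+1)+(W-n+2)$, which is tiny compared to $W$), while remaining canonical on the other $2\ell-1$ paths. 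This costs only one extra constraint and actually \emph{saves} weight on $\F_0$, yet on that path the complementarity $\alpha(x_v') = 1 - \alpha(x_v)$ fails. Once complementarity breaks, your substitution of $x_v'$ for the crisp $\neq$-variable in the $R^*$-implementation no longer realizes $R^*$, so the implications $(a\to b)\wedge(c\to d)$ are not enforced and the backward direction of the correctness argument collapses.

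The paper closes exactly this gap by taking $\ell+1$ disjoint copies of each $\F_P$ (so $\F_0$ is the union of $2\ell(\ell+1)$ path formulas) and setting $k = 4\ell(\ell+1) + \ell$, $W = 2\ell(\ell+1)n + \ell$. Then any deviation from the canonical $\alpha_i$ on even one path costs at least $\ell+1$ extra constraints or at least $\ell+1$ extra weight (one per copy), exceeding the slack $\ell$. Adding this duplication to your construction makes your argument go through essentially as written.
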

\begin{proof}
  Assume that \WeightedMinSAT{\Gamma} is not trivial. By Lemma~\ref{lemma:trivial:or:constants},
  it suffices to give a reduction from \PMstCl to \WeightedMinSAT{\Gamma\cup\{(x=0),(x=1)\}}.  
  Let $(D,s,t,\ell,\A,\P)$ be an instance of \PMstCl with $D=(U,A)$. Again, $\P$ partitions $A$ into $2\ell$ arc-disjoint $st$-paths.
  Define variable sets $X=\{x_v \mid v \in U\}$ and $X'=\{x_v' \mid v \in U\}$.
  For every $P \in \P$, use Lemma~\ref{lm:support:wtpath} with target weight $W'=n=|U|$
  to create a formula $\F_P$ using variables from the common set $X \cup X'$. Let $\F_0$ be the disjoint union
  of $\ell+1$ copies of the formulas $\F_P$ for every path $P \in \P$. Set $k=4\ell(\ell+1)+\ell$ and $W=2\ell(\ell+1)n+\ell$.
  By Lemma~\ref{lm:support:wtpath}, any assignment $\alpha$ violates at least $2(\ell+1)$ constraints
  per path $P \in \P$, and any assignment which violates only that number of constraints for the path
  has a weight of at least $(\ell+1)n$ for the path, for a total minimum
  of $4\ell(\ell+1)$ violated constraints to a weight of at least $2\ell(\ell+1)n$.
  This leaves a remaining budget of at most $\ell$ violated constraints, to a weight of
  at most $\ell$, in the rest of $\F$. Furthermore, if any assignment $\alpha$
  is used such that for some path $P \in \P$, $\alpha$ does not conform to one
  of the assignments $\alpha_i$ of Lemma~\ref{lm:support:wtpath}, then, by the
  duplication, $\alpha$ will break either the violation budget $k$ or the weight
  budget $W$. Hence we may assume that any assignment relevant to us
  satisfies $\alpha(x')=1-\alpha(x)$ for every variable $x \in X$, hence
  we may proceed as if we have crisp constraints $(x \neq x')$ in our formula.   
  We will use this fact in constructions over $R$, defining a formula $\F'$
  that, subject to the assumption $(x \neq x')$, encodes the pairs of $\A$.

  By Lemma~\ref{lemma:neg-and-gr}, there is a proportional implementation of a double implication using the relation
  $R \in \Gamma$ and crisp constraints $(x \neq y)$. As in Lemma~\ref{lm:unified-hardness},
  we then use the structure imposed on min-cost, min-weight solutions by Lemma~\ref{lm:support:wtpath} to replace any variables in
  $\neq$-constraints in the implementation by variables $x'$.
  Hence, we proceed as if we have a proportional implementation of $R^*$,
  and create $\F'$ by, for every pair $\{(u,v),(p,q)\} \in \A$,
  adding an implementation of the constraint $R^*(x_u,x_v,x_p,x_q)$. Give these
  constraints weight 1 each. Let $\F=\F_0 \cup \F'$. This finishes the construction.

  On the one hand, assume that the input is a \yes-instance, and let $B \subseteq A$
  be an $st$-mincut in $D$ which is the union of $\ell$ pairs from $\A$. 
  Let $\B \subseteq \A$ be the pairs present in $B$. Define $\alpha$
  by letting $\alpha(x_v)=1$ for $v \in U$ if and only if $v$ is reachable
  from $s$ in $D-B$, and $\alpha(x_v')=1-\alpha(x_v)$ for every $v \in U$.
  For every path $P \in \P$, $\alpha$ restricted to $V(\F_P)$ conforms
  to one of the assignments $\alpha_i$ of Lemma~\ref{lm:support:wtpath},
  hence for every path $P \in \P$, $\alpha$ violates precisely two constraints,
  to a weight of $n$, for each copy of $\F_P$. 
  Hence $\alpha$ violates $4\ell(\ell+1)=k-\ell$ constraints, to a weight
  of $(\ell+1)n(2\ell)=W-\ell$, from $\F_0$. Furthermore, for every pair
  $\{(u,v),(p,q)\} \in \B$, $\alpha$ sets $\alpha(x_u)=\alpha(x_p)=1$ and
  $\alpha(x_v)=\alpha(x_q)=0$, violating one $R^*$-constraint, of weight 1,
  per pair. For every pair $\{(u,v),(p,q)\} \in \A \setminus \B$, $\alpha$
  sets $\alpha(x_u)=\alpha(x_v)$ and $\alpha(x_p)=\alpha(x_q)$,
  thereby satisfying the corresponding $R^*$-constraint.
  Since $\alpha(x')=1-\alpha(x)$ for every variable $x$ by construction,
  the formulas used in implementing the constraints $R^*$ behave
  as required of the implementation, and in total $\alpha$ 
  violates precisely $k$ constraints,
  to a total weight of precisely $W$.

  On the other hand, let $\alpha$ be an assignment that violates at
  most $k=4\ell(\ell+1)+\ell$ constraints of $\F$, to a total weight of at
  most $W=(\ell+1)n(2\ell)+\ell$. By Lemma~\ref{lm:support:wtpath},
  for each path $P$, $\alpha$ must violate at least 2 constraints to
  a weight of at least $n$ in each copy of the formula $\F_P$ for each
  path $P \in \P$. Hence $\alpha$ violates at least $4\ell(\ell+1)$
  constraints to a weight of at least $(\ell+1)n(2\ell)$ from $\F_0$ alone.
  Furthermore, by the duplication, if there is a path $P \in \P$
  such that $\alpha$ restricted to $V(\F_P)$ does not conform
  to one of the assignments $\alpha_i$ of Lemma~\ref{lm:support:wtpath},
  then $\alpha$ violates either at least $\ell+1$ further constraints,
  or violates constraints to a further weight of at least $\ell+1$, both
  of which would contradict $(\F,k,W)$ being a \yes-instance.
  Hence we may in particular assume $\alpha(x')=1-\alpha(x)$
  for every variable $x$, implying that the formulas added to $\F'$
  act as proper proportional implementations of the constraint $R^*$
  as above. Hence by the budget bound, there are at most $\ell$
  pairs $\{(u,v),(p,q)\} \in \A$ such that $\alpha$ 
  violates a constraint $R^*(x_u,x_v,x_p,x_q)$. 
  The proof now finishes as for Lemma~\ref{lm:unified-hardness}
  to conclude that $\alpha$ corresponds to an $st$-cut $B$ in $D$
  consisting of $\ell$ pairs $\B \subseteq \A$. 
\end{proof}

Finally, we handle the case with a non-bijunctive weighted language, similarly to Lemma~\ref{lemma:hardness:arrow}.

\begin{lemma}\label{lemma:hardness:weighted}
 Let $\Gamma$ be a Boolean constraint language that proportionally implements $(x\rightarrow y)$ or $(x=y)$ and that supports crisp $(\neg x\vee \neg y \vee \neg z)$. Then \WeightedMinSAT{\Gamma} is either trivial or \classWone-hard.
\end{lemma}

\begin{proof}
  Assume that \WeightedMinSAT{\Gamma} is not trivial. Then by Lemma~\ref{lemma:trivial:or:constants}, it suffices to show hardness for \WeightedMinSAT{\Gamma \cup \{(x=0),(x=1)\}}. 
  Furthermore, as in Lemma~\ref{lemma:hardness:arrow} we can use the crisp negative 3-clause and soft constraints $(x=1)$
  for a proportional implementation of the relation
  \[
    R(a,b,c,d) \equiv (\neg a \lor \neg b) \land (\neg c \lor \neg d),
  \]
  whose Gaifman graph is a $2K_2$. Now \WeightedMinSAT{\Gamma \cup \{(x=1),(x=0),R\}} meets all the conditions of Lemma~\ref{lemma:hardness:weighted2} 
  and hardness of \WeightedMinSAT{\Gamma} follows from Lemma~\ref{lemma:trivial:or:constants} and Proposition~\ref{prop:reduction-basics}. 
\end{proof}

This finishes our list of hardness reductions.

\subsection{Dichotomy top-level case distinction}

We are now ready to show that our results so far form a dichotomy
for both \MinSAT{\Gamma} and \WeightedMinSAT{\Gamma}
as being FPT or \classWone-hard. We first present our top-level case distinction. 
The distinction follows the structure of Post's lattice of co-clones~\cite{PostsLattice41},
as reviewed in Section~\ref{sec:post}. See also the illustration of Bonnet et al.~\cite[Fig.~1]{BonnetEM16ESA}
regarding cases with constant-factor FPT approximations (FPA); naturally, the languages
for which \MinSAT{\Gamma} is FPT are contained in co-clones for which \MinSAT{\Gamma} is FPA.

Below, for a constraint language $\Gamma$ and a family of relations $\Gamma'$,
we say that $\Gamma$ \emph{efficiently pp-defines} $\Gamma'$ if,
for every relation $R \in \Gamma'$ of arity $n$, a pp-definition of $R$
over $\Gamma$ can be constructed in time polynomial in $n$. 


\begin{lemma} \label{lemma:top-level-cases}
  Assume that $\Gamma$ is finite and that \MinSAT{\Gamma} is not
  trivial. Then one of the following holds.
  \begin{enumerate}
  \item $\Gamma$ can efficiently pp-define positive clauses of every arity
    \label{case:hard1}.
  \item $\Gamma$ can efficiently pp-define negative clauses of every arity 
    \label{case:hard2}.
  \item $\Gamma$ can efficiently pp-define all even-arity linear equations
    $(x_1 \oplus \ldots \oplus x_{2r} = b)$, $b \in \{0,1\}$ over GF(2)
    \label{case:hard3}.
  \item $\Gamma$ is qfpp-definable
    over $\{(x=0), (x=1), (x \lor y), (x \to y), (\neg x \lor \neg y)\}$,
    i.e., bijunctive
    \label{case:id2}.
  \item $\Gamma$ is qfpp-definable over
    $\{(x=0), (x=1), (x \to y)\}$ and positive clauses $(x_1 \lor \ldots \lor x_d)\}$
    for some $d \in \mathbb{N}$, 
    i.e., $\Gamma$ is IHS-B+.
    \label{case:is00-group}
  \item $\Gamma$ is qfpp-definable over
    $\{(x=0), (x=1), (x \to y)\}$ and negative clauses $(\neg x_1 \lor \ldots \lor \neg x_d)\}$
    for some $d \in \mathbb{N}$,
    i.e., $\Gamma$ is IHS-B-.
    \label{case:is10-group}
  \end{enumerate}
\end{lemma}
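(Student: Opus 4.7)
The plan is to perform a structural case analysis on the co-clone $\langle\Gamma\rangle$ generated by $\Gamma$, exploiting the finite structure of Post's lattice of Boolean co-clones. The lemma is essentially asserting that above the co-clone of constants (or the self-dual analog generated by $\neq$), the lattice admits exactly the six regions listed, three tractable (cases 4--6) and three lying strictly above the tractable regions and witnessing hardness families (cases 1--3).

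By Proposition~\ref{prop:constants-or-neg}, since \MinSAT{\Gamma} is non-trivial, either $\Gamma$ pp-defines both $(x=0)$ and $(x=1)$, or $\Gamma$ is self-dual and pp-defines $(x \neq y)$. I would first dispatch the main case where $\Gamma$ pp-defines both constants, which restricts $\langle\Gamma\rangle$ to the sub-lattice above IR$_2$. Walking through this (well-studied) sub-lattice and using the standard bases from B\"ohler et al.~\cite{BohlerRSV05coclones}, the maximal tractable co-clones are exactly the bijunctive co-clone ID$_2$ (case 4), and the IHS-B co-clones IS$_{00}$ (case 5) and IS$_{10}$ (case 6); for each of these, one directly verifies that a standard basis yields the listed qfpp-definition of every $R \in \Gamma$. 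If $\langle\Gamma\rangle$ is contained in none of these three co-clones, then by Post's classification, $\langle\Gamma\rangle$ must lie above one of the maximal non-tractable co-clones just above them: either the affine co-clone IL$_2$, yielding arbitrary homogeneous GF(2)-equations of even arity (case 3), a co-clone containing arbitrary positive clauses (case 2), or one containing arbitrary negative clauses (case 1).

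The self-dual case, where $\Gamma$ has $(x \neq y)$ but cannot pp-define constants, is handled via the ``polar variables'' trick already used in the proof of Lemma~\ref{lemma:trivial:or:constants}: introducing two variables $z_0,z_1$ with crisp $(z_0 \neq z_1)$ simulates constants within any pp-definition at the cost of a global symmetry under swapping $z_0 \leftrightarrow z_1$. The self-dual co-clones above ID in Post's lattice are a restricted family, and a direct inspection shows they either fall into the bijunctive case 4 (ID$_2$ is itself self-dual), or they contain witnesses that still yield one of cases 1--3 after the polar substitution; the symmetry requirement does not obstruct any of the three constructions because negative/positive clauses and affine equations can all be expressed in a form preserved under the swap.

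The main obstacle is the \emph{efficiency} clause in cases 1--3: most standard treatments of Post's lattice only establish pp-definability qualitatively, whereas we need a pp-definition of the $n$-ary relation constructible in time polynomial in $n$. This requires exhibiting explicit iterative constructions from a fixed bounded-arity witness relation in $\Gamma$ together with the available auxiliary constraints (implications, equalities, or XOR); for instance, one builds larger negative or positive clauses by plugging the witness relation into itself along a tree of existentially quantified variables, and builds larger GF(2)-equations by chaining ternary XORs. Each such construction has size polynomial in $n$ and a straightforward uniform description, which suffices for the efficient pp-definability used downstream in the hardness reductions.
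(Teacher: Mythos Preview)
Your approach is essentially the paper's: inspect Post's lattice of co-clones, identify the maximal tractable co-clones ID$_2$, IS$_{00}$, IS$_{10}$ (cases 4--6), and show that anything not contained in one of these must generate a co-clone containing IE$_2$, IV$_2$ or IL (cases 1--3). The efficiency argument is also the same: bootstrap from a constant-size pp-definition of a ternary Horn clause, ternary dual Horn clause, or $4$-ary homogeneous linear equation, and chain linearly.

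Your self-dual detour, however, is both unnecessary and partly confused. The paper does not split on ``$\Gamma$ pp-defines constants'' versus ``$\Gamma$ is self-dual''; it simply reads off the lattice that any co-clone not contained in II$_0$, II$_1$, ID$_2$, IS$_{00}$ or IS$_{10}$ contains IE$_2$, IV$_2$ or IL, and this holds uniformly. More importantly, the ``polar variables'' trick from Lemma~\ref{lemma:trivial:or:constants} is a device for \MinSAT{\Gamma} reductions, not for pp-definability; it has no role in this lemma. In fact a self-dual $\Gamma$ \emph{cannot} pp-define a negative $3$-clause (it is not self-dual), so your claim that the self-dual case ``still yields one of cases 1--3 after polar substitution'' is wrong for cases 1 and 2. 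What actually happens is that a non-trivial self-dual $\Gamma$ either sits inside ID$_2$ (case 4) or its co-clone contains IL (case 3), directly from the lattice.

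One ingredient you omit that the paper uses explicitly is the notion of a \emph{plain basis} of a co-clone (Creignou et al.~\cite{CreignouKZ08plain-coclone}): this is precisely what justifies that membership in ID$_2$, IS$_{00}^d$, IS$_{10}^d$ yields the \emph{qfpp}-definitions claimed in cases 4--6, rather than merely pp-definitions.
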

\begin{proof}
  The result follows from the structure of Post's lattice of co-clones
  and the notion of a plain basis of a co-clone. Specifically,
  cases~\ref{case:id2}--\ref{case:is10-group} above correspond
  precisely to plain bases of the co-clones ID$_2$, IS$_{00}^d$
  and IS$_{10}^d$, respectively~\cite{CreignouKZ08plain-coclone},
  as surveyed in Section~\ref{sec:post}. 
  Hence it remains to consider languages $\Gamma$ not contained in
  any of these co-clones.

  Let $\Gamma$ be such a language and let $C=\langle \Gamma \rangle$
  be the co-clone generated by $\Gamma$. 
  By a simple inspection of Post's lattice, $C$ contains either
  IS$_{00}^n$ for every $n \in \N$, or IS$_{10}^n$ for every $n \in \N$,
  or IL$_3$. In the former two cases, $\Gamma$ can pp-define all
  negative clauses respectively all positive clauses. 
  In the latter case, $\Gamma$ can pp-define all linear equations
  $(x_1 \oplus \ldots \oplus x_r=b)$, $b \in \{0,1\}$ of even arity over GF(2)~\cite{BohlerRSV05coclones}.
  We finally note that the pp-definitions can also be made efficient.
  In particular, $C$ does not equal IS$_{00}$ or IS$_{10}$ since $\Gamma$ is finite 
  and these co-clones have no finite basis. Thus $C$ contains IE$_2$, IV$_2$ or IL$_3$
  and there are finite-sized pp-definitions over $\Gamma$
  of, respectively, a ternary Horn clause $(\neg x \lor \neg y \lor z)$,
  a ternary dual Horn clause $(x \lor y \lor \neg z)$,
  or the 4-ary linear equations $(x_1 \oplus \ldots \oplus x_4=b)$, $b \in \{0,1\}$,
  and it is trivial to chain a linear number of such constraints 
  to create the corresponding constraints of arbitrary arity.
  In the former two cases, there are also pp-definitions of $(z=0)$
  and $(z=1)$, completing the pp-definitions. 
\end{proof}

We note that cases~\ref{case:hard1}--\ref{case:hard3} of this lemma
define hard problems. The first two are immediate.

\begin{lemma} \label{lemma:hard-HS}
  Assume that $\Gamma$ efficiently pp-defines positive or negative
  clauses of every arity. Then \MinSAT{\Gamma} is either trivial or
  W[2]-hard. 
\end{lemma}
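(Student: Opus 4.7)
The plan is to reduce from \textsc{Hitting Set} parameterized by solution size, which is a classical W[2]-hard problem. First, we handle the trivial/non-trivial dichotomy. Since the lemma allows for \MinSAT{\Gamma} to be trivial, we may assume otherwise, so Lemma~\ref{lemma:trivial:or:constants} gives an FPT-reduction from \MinSAT{\Gamma \cup \{(x=0),(x=1)\}} to \MinSAT{\Gamma}. Thus it suffices to prove W[2]-hardness of \MinSAT{\Gamma \cup \{(x=0),(x=1)\}}. Moreover, if $\Gamma$ efficiently pp-defines all negative clauses, then the dual language $\Gamma'$ efficiently pp-defines all positive clauses, and by Proposition~\ref{prop:dual} hardness of the two cases is equivalent. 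Hence we may assume that $\Gamma$ efficiently pp-defines the positive clause $R_r = \{(x_1,\dots,x_r) \in \{0,1\}^r : x_1 \vee \ldots \vee x_r\}$ for every $r$.

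Given an instance $(U,\mathcal{F},k)$ of \textsc{Hitting Set}, I would build a \MinSAT{\Gamma} instance $(\cF,k)$ as follows. Introduce one variable $x_u$ per $u \in U$. For each set $F \in \mathcal{F}$ of size $r$, add a \emph{crisp} copy of the positive clause $R_r(x_{u_1},\dots,x_{u_r})$ where $F=\{u_1,\dots,u_r\}$; this is possible by the first part of Proposition~\ref{prop:reduction-basics} applied to the efficient pp-definition, so the resulting subformula has size polynomial in $r$ and $k$. For each $u \in U$, add a single \emph{soft} constraint $(x_u = 0)$. The total size is polynomial in $|U|+|\mathcal{F}|+k$ and the parameter $k$ is unchanged, giving an FPT-reduction.

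Correctness is immediate. Any assignment $\alpha$ that violates at most $k$ constraints must satisfy every crisp positive clause (by the $k+1$-duplication in the definition of ``supports crisp constraints''), so the set $S_\alpha := \{u \in U : \alpha(x_u) = 1\}$ is a hitting set for $\mathcal{F}$; and the cost of $\alpha$ is exactly $|S_\alpha|$, the number of violated soft constraints $(x_u = 0)$. Conversely, a hitting set $S \subseteq U$ of size $\leq k$ yields an assignment setting $x_u = 1$ iff $u \in S$ of cost $\leq k$. So $(U,\mathcal{F},k)$ is a yes-instance of \textsc{Hitting Set} if and only if $(\cF,k)$ is a yes-instance of \MinSAT{\Gamma \cup \{(x=0),(x=1)\}}.

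I do not anticipate a serious technical obstacle: the bookkeeping is standard and the only point requiring care is that the pp-definition of a positive clause of arity $r$ can be produced in $\mathrm{poly}(r)$ time so that the whole reduction remains polynomial — this is precisely what the \emph{efficient} pp-definition hypothesis (and Lemma~\ref{lemma:top-level-cases}) guarantees, e.g.\ by chaining $\Oh(r)$ copies of a fixed-arity Horn clause together with the constants $(x=0)$ and $(x=1)$.
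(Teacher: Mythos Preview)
Your proposal is correct and follows essentially the same approach as the paper: reduce from \textsc{Hitting Set}, use crisp pp-defined positive clauses for the hyperedges and soft $(x=0)$ assignments for the elements, with the parameter unchanged. The paper's proof is slightly terser but structurally identical, including the appeal to Lemma~\ref{lemma:trivial:or:constants} for constants and Proposition~\ref{prop:reduction-basics} for crisp clauses.
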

\begin{proof}
  We focus on the case of positive clauses, as the other case is
  dual. By Lemma~\ref{lemma:trivial:or:constants} we may assume that
  we have soft constraints $(x=0)$ and $(x=1)$. We also note by
  Proposition~\ref{prop:reduction-basics} that \MinSAT{\Gamma} supports
  crisp positive clauses. There is now an
  immediate FPT-reduction from \textsc{Hitting Set}.
  Let the input be a hypergraph $\mathcal{H}=(V,\mathcal{E})$
  and an integer $k$. The task is to find a set $S \subseteq V$
  such that $|S| \leq k$ and $S$ hits every hyperedge $E \in \mathcal{E}$.
  We create a \MinSAT{\Gamma} instance as follows. Let the variable set be $V$. 
  For every hyperedge $E=\{v_1,\ldots,v_r\} \in \mathcal{E}$, 
  we add to the output the crisp clause $(v_1 \lor \ldots \lor v_r)$.
  For every variable $v \in V$, we add a soft clause $(x=0)$.
  Finally, we set the parameter to $k$. It is now clear
  that an assignment which satisfies all crisp clauses,
  and where at most $k$ soft clauses are violated,
  is precisely a hitting set for $\mathcal{H}$ of cardinality at most $k$.
\end{proof}

For the case of linear equations, we use results of Bonnet et al.~\cite{BonnetELM-arXiv,BonnetEM16ESA} and Lin~\cite{Lin18JACM}.

\begin{lemma} \label{lemma:il-hard}
  Assume that $\Gamma$ efficiently pp-defines linear equations over GF(2) of every even arity. Then \MinSAT{\Gamma} is \classWone-hard.
\end{lemma}
\begin{proof}
  Note that \MinSAT{\Gamma} is by assumption not trivial, since it defines $(x=0)$ and $(x=1)$. 
  Then by Lemma~\ref{lemma:have-constants}, \MinSAT{\Gamma \cup \{(x=0),(x=1)\}}
  FPT-reduces to \MinSAT{\Gamma}.
  Bonnet et al.~\cite{BonnetELM-arXiv} (preliminary version in~\cite{BonnetEM16ESA})
  showed that in this case, \MinSAT{\Gamma \cup \{(x=0),(x=1)\}} does not 
  even have an FPT-time constant-factor approximation unless $\classFPT=\classWone$,
  thus certainly \MinSAT{\Gamma} is \classWone-hard as a decision problem.
\end{proof}

\subsection{The bjiunctive cases}

Case~\ref{case:id2} is easily handled by Theorem~\ref{thm:alg-ID2}
and Lemma~\ref{lemma:hardness:gaifman}.

\begin{lemma} \label{lemma:dich-id2}
  Let $\Gamma$ be a finite Boolean language contained in ID$_2$
  but not contained in IS$_{00}$ or IS$_{10}$. Then the following apply.
  \begin{enumerate}
  \item If there is a relation $R \in \Gamma$ such that the Gaifman graph $G_R$
    contains a $2K_2$, then \MinSAT{\Gamma} is \classWone-hard.
  \item Otherwise \WeightedMinSAT{\Gamma} is FPT.
  \end{enumerate}
\end{lemma}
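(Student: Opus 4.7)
The proof splits along the two cases of the lemma, and my plan is to dispatch each using results already established in the paper. For Case~2, the FPT case, every $R \in \Gamma$ is bijunctive because $\Gamma \subseteq$ ID$_2$, and by hypothesis its Gaifman graph $G_R$ is $2K_2$-free. Setting $b$ to be the maximum arity of any relation in $\Gamma$, this gives $\Gamma \subseteq \idtwopositive$, and Theorem~\ref{thm:alg-ID2} immediately provides the FPT algorithm for \WeightedMinSAT{\Gamma}.

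For Case~1, the W[1]-hardness case, my plan is to reduce to Lemma~\ref{lemma:hardness:gaifman}. The combinatorial hypothesis on $G_R$ is given, and non-triviality of \MinSAT{\Gamma} may be inherited from the top-level case distinction Lemma~\ref{lemma:top-level-cases}, so the missing ingredient is that $\Gamma$ supports crisp $(x \neq y)$. To produce it I would first invoke Lemma~\ref{lemma:have-constants} to obtain crisp $(x=0)$ and $(x=1)$, and then show that $\Gamma$ pp-defines both $(x \vee y)$ and $(\neg x \vee \neg y)$, after which $(x \neq y) \equiv (x \vee y) \wedge (\neg x \vee \neg y)$ and Proposition~\ref{prop:reduction-basics} yield crisp $(x \neq y)$. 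The two pp-definitions are symmetric, so consider $(\neg x \vee \neg y)$: because $\Gamma$ is bijunctive but $\Gamma \not\subseteq$ IS$_{00}$, some $R^- \in \Gamma$ must contain, in its canonical 2-CNF form, an irreducible negative 2-clause $(\neg x_i \vee \neg x_j)$ not eliminable by implications, positive 2-clauses and unary constraints. Existentially quantifying the remaining coordinates of $R^-$ and pinning them via the available crisp constants to an assignment whose three extensions in the $(x_i,x_j)$-slot are exactly $\{(0,0),(0,1),(1,0)\}$ then yields a pp-definition of $(\neg x \vee \neg y)$; symmetrically, $\Gamma \not\subseteq$ IS$_{10}$ provides a pp-definition of $(x \vee y)$. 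Lemma~\ref{lemma:hardness:gaifman} then concludes.

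The hard part lies in the Post-lattice extraction step: translating the purely negative statements $\Gamma \not\subseteq$ IS$_{00}$ and $\Gamma \not\subseteq$ IS$_{10}$ into genuine pp-definitions of the two target 2-clauses, and choosing the pinning of the auxiliary variables so that the resulting projection is exactly the clause wanted and not something strictly stronger such as a unary constraint or an implication. The standard fact that the canonical 2-CNF of a bijunctive relation is its unique minimal representative should do the work, but the case analysis on which coordinates to pin to $0$ and which to $1$, possibly after permuting positions of the $2K_2$ witnessed in the Gaifman graph, is the bookkeeping that needs the most care.
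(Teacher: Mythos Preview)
Your treatment of Case~2 is identical to the paper's: both simply observe $\Gamma\subseteq\idtwopositive$ and invoke Theorem~\ref{thm:alg-ID2}.

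For Case~1, both you and the paper reduce to Lemma~\ref{lemma:hardness:gaifman}, so the only question is how to obtain crisp $(x\neq y)$. Here the paper takes a much shorter route than you propose. Rather than separately extracting $(x\vee y)$ and $(\neg x\vee\neg y)$ by hand from the failure of containment in IS$_{00}$ and IS$_{10}$, the paper simply inspects Post's lattice: a co-clone contained in ID$_2$ but in neither IS$_{00}^2$ nor IS$_{10}^2$ must contain ID, so $\Gamma$ pp-defines $(x\neq y)$ directly, and Proposition~\ref{prop:reduction-basics} upgrades this to a crisp constraint. This is a one-line citation of~\cite{BohlerRSV05coclones}, consistent with how the paper already uses the lattice elsewhere (Proposition~\ref{prop:constants-or-neg}, Lemma~\ref{lemma:top-level-cases}). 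Your plan instead essentially re-derives this particular corner of the lattice classification from scratch; it can be made to work, but the ``pinning'' step you flag as the hard part is exactly the bookkeeping that the lattice citation lets one skip. A further small difference: the paper dispatches non-triviality intrinsically, noting that pp-definability of $(x\neq y)$ already rules out $0$-validity and $1$-validity, whereas you defer to the surrounding context of Lemma~\ref{lemma:top-level-cases}; the paper's way makes the lemma self-contained.
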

\begin{proof}
  We begin by observing that $\Gamma$ supports crisp $(x \neq y)$. Indeed,
  if $\Gamma$ is contained in ID$_2$ but not in IS$_{10}^2$ or IS$_{00}^2$,
  then as noted in Proposition~\ref{prop:posts-lattice-relations},
  $\Gamma$ spans the co-clone ID~\cite{BohlerRSV05coclones}
  and can pp-define $(x \neq y)$. By Proposition~\ref{prop:reduction-basics},
  \MinSAT{\Gamma} then supports crisp $(x \neq y)$. 
  Hence if there is a relation $R \in \Gamma$ such that the Gaifman graph $G_R$
  contains a $2K_2$, then Lemma~\ref{lemma:hardness:gaifman} applies
  and \MinSAT{\Gamma} is \classWone-hard (note that $(x \neq y)$ excludes the case
  that \MinSAT{\Gamma} is trivial).
  
  Otherwise $\Gamma \subseteq \idtwopositive$ for some finite arity bound $\maxarity$,
  and \WeightedMinSAT{\Gamma} is FPT by Theorem~\ref{thm:alg-ID2}. 
\end{proof}

\subsection{The IHS cases}

We now consider cases~\ref{case:is00-group} and \ref{case:is10-group}
of Lemma~\ref{lemma:top-level-cases}. Since these cases are each others' duals
it suffices to consider case~\ref{case:is10-group}.
We therefore assume that every relation in $\Gamma$ can be qfpp-defined in the language
$\Gamma_d=\{(x=0), (x=1), (x \to y), (\neg x_1 \lor \ldots \lor \neg x_d)\}$
for some $d \in \mathbb{N}$. 

We begin with a support observation. 

\begin{proposition} \label{prop:hr-gives-eq} 
  Let $R \in \Gamma$ be a relation.  If $H_R$ contains an edge,
  then $(x=y)$ has a proportional implementation over $R$. 
  On the other hand, if $H_R$ is edgeless, then $R$ has a qfpp-definition over $\Gamma_d$
  using only assignments and negative clauses, i.e., 
  without using any clauses $(x \to y)$ or $(x=y)$,
  and $\Gamma$ cannot proportionally implement $(x=y)$. 
\end{proposition}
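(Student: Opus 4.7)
The proposition consists of two independent claims and I will tackle them in turn.

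For the first claim, assume $(i,j)$ is an edge of $H_R$. By the definition of the arrow graph, $R$ contains no tuple with $(t_i,t_j) = (1,0)$ but does contain tuples with $t_i = t_j = 0$ and with $t_i = t_j = 1$. The plan is to implement $(x=y)$ with two copies of $R$: write $\cF(x,y) := R(\ldots,x,\ldots,y,\ldots) \wedge R(\ldots,y,\ldots,x,\ldots)$, where in the first copy $x$ sits at position $i$ and $y$ at position $j$, in the second copy the placement is swapped, and all remaining positions in both copies are filled with fresh existentially quantified variables, disjoint between copies. A case check on $(x,y) \in \{0,1\}^2$ then completes the argument: when $x=y$ both copies are simultaneously satisfiable using the guaranteed tuples, so the minimum cost is $0$; when $x \neq y$ the two inputs $(1,0)$ and $(0,1)$ yield equal cost by the swap-symmetry of the construction, and that common cost is either $1$ (if $R$ additionally contains a tuple with $(t_i,t_j) = (0,1)$) or $2$ (otherwise, in which case $R$ in fact forces $x_i = x_j$ and both copies are violated). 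Either way, $\cF$ is a proportional implementation of $(x=y)$ with factor $c \in \{1,2\}$.

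For the second claim, fix a qfpp-definition $\phi = \phi_A \cup \phi_I \cup \phi_N$ of $R$ over $\Gamma_d$, split into assignments, implications, and negative clauses of arity at most $d$. Partition the positions $[r]$ into $V_0, V_1, V_*$ according to whether the coordinate is constantly $0$, constantly $1$, or non-constant across the tuples of $R$; the degenerate cases $R = \emptyset$ and $V_0 \cap V_1 \neq \emptyset$ are handled by defining the empty relation via $(x=0) \wedge (x=1)$ on a dummy variable. Let $\phi^{\ast}$ consist of $(x_i = 0)$ for every $i \in V_0$, $(x_j = 1)$ for every $j \in V_1$, and every negative clause of arity at most $d$ implied by $R$. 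The plan is to show that $\phi^{\ast}$ defines $R$, which directly gives the required implication-free qfpp-definition over $\Gamma_d$.

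The forward inclusion (every tuple of $R$ satisfies $\phi^{\ast}$) is immediate by construction. For the reverse, I plan to argue by contradiction: take a tuple $t$ satisfying $\phi^{\ast}$ but lying outside $R$, observe that $t$ must violate some constraint of $\phi$, and note this violated constraint cannot lie in $\phi_A$ (implied assignments are already in $\phi^{\ast}$) nor in $\phi_N$ (arity-$\leq d$ implied clauses are also in $\phi^{\ast}$), so it must be an implication $(x_i \to x_j) \in \phi_I$ with $t_i = 1$ and $t_j = 0$. A short case analysis by the partition membership of $i$ and $j$ then closes every possibility: $i \in V_0$ or $j \in V_1$ directly contradicts the assignment constraints satisfied by $t$; $i \in V_1$ propagates the implication across every tuple of $R$ (all of which have that coordinate equal to $1$), forcing $j \in V_1$, which again contradicts $t_j = 0$; the symmetric case $j \in V_0$ is analogous; and the remaining case $i,j \in V_{\ast}$ directly certifies that $(i,j)$ is an edge of $H_R$, using the existence of tuples in $R$ with $t_j = 0$ (whose $i$-coordinate must also be $0$ by the implication) and with $t_i = 1$ (whose $j$-coordinate must also be $1$), contradicting edgelessness. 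The main obstacle I anticipate is the arity bookkeeping: the negative clauses collected into $\phi^{\ast}$ must all have arity at most $d$, and we need the confidence that no larger ``hidden'' clause is secretly needed to pin down $R$. This is secured because $\phi$ itself only uses clauses of arity at most $d$, so the relevant clause witnessing $t \notin R$ (were it of clause type) would be of arity at most $d$ and hence already present in $\phi^{\ast}$, forcing the obstruction to be an implication as above.
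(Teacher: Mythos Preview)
Your proof is correct. Both parts are sound, and the core ideas match the paper's, though the packaging differs slightly.

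For the first claim, the paper first projects $R$ onto positions $i,j$ to obtain a binary relation $R'$ which is necessarily $(x\to y)$ or $(x=y)$, and in the former case combines $f_{R'}(x,y)+f_{R'}(y,x)$. Your two-copies-with-swapped-slots construction is exactly this combination done in one step, so the arguments are equivalent; your swap-symmetry observation neatly handles the case split on whether $(0,1)$ lies in the projection.

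For the second claim, the paper argues by minimality: take a qfpp-definition over $\Gamma_d$ with the fewest implication/equality clauses, and show that any surviving implication $(u\to v)$ forces an edge $(u,v)$ in $H_R$. You instead construct the candidate definition $\phi^{\ast}$ explicitly and verify it works. The underlying contradiction is identical (an implication that cannot be replaced by assignments witnesses an $H_R$-edge), and your case analysis on $V_0,V_1,V_*$ membership is a clean way to organise it. Your point about arity bookkeeping is well taken and correctly resolved: since the original $\phi_N$ only contains clauses of arity at most $d$, any negative-clause obstruction would already appear in $\phi^{\ast}$.
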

\begin{proof}
  Let $R \subseteq \{0,1\}^r$ for some $R \in \Gamma$ and assume
  w.l.o.g.\ that $H_R$ has a directed edge on $(1,2)$. Consider the implementation
  $f'(x,y)=\min_{x_3,\ldots,x_r} f_R(x,y,x_3,\ldots,x_r)$ where $f_R$ is the cost function of $R$. 
  Clearly, this is a proportional implementation of the cost function $f_{R'}$ 
  derived from the relation $R'(x,y) \equiv \exists_{x_3,\ldots,x_r} R(x,y,x_3,\ldots,x_r)$. 
  Then by the definition of $H_R$ we have $(0,0), (1,1) \in R'$
  and $(1,0) \notin R'$, i.e., $R'(x,y)$ is either $(x \to y)$
  or $(x=y)$. In the latter case we are done, in the former case
  we may use $R'(x,y) \land R'(y,x)$ as a proportional implementation
  of $(x=y)$. 

  For the second part, let $H_R$ be empty and consider a qfpp-definition
  of $R$ via a formula $\F$ that uses a minimum number of implication and equality clauses.
  Assume that this definition contains a clause $(u \to v)$ (as $(u=v)$
  can be defined as $(u \to v) \land (v \to u)$).
  By assumption, removing this clause yields an incorrect definition, 
  i.e., if $\F'$ is $\F$ with the clause $(u \to v)$ removed, 
  then $\F'$ qfpp-defines a relation $R' \supset R$
  and there is a tuple $t \in R' \setminus R$ where $t[u]=1$ and $t[v]=0$.
  Also by assumption $t$ cannot be eliminated by introducing constraints
  $(u=0)$ or $(v=1)$. Hence there are tuples $t_1, t_2 \in R$
  with $t_1[v]=0$, implying $t_1[u]=0$, and with $t_2[u]=1$, implying $t_2[v]=1$.
  Then there is an edge $(u,v)$ in $H_R$, contrary to assumption. 
\end{proof}

We begin with the case $d>2$ since $d=2$ requires some special
attention. We begin by noting (by Lemma~\ref{lemma:hardness:weighted})
that \WeightedMinSAT{\Gamma} is mostly \classWone-hard.

\begin{lemma} \label{lemma:isd-weighted}
  Let $\Gamma$ be a Boolean language contained in IS$_{10}^d$ for some $d>2$
  but not in IS$_{10}^2$. Then the following apply.
  \begin{enumerate}
  \item If $\Gamma$ proportionally implements $(x=y)$, then \WeightedMinSAT{\Gamma}
    is either trivial or \classWone-hard.  
  \item Otherwise \WeightedMinSAT{\Gamma} is FPT with a running time of $2^{\Oh(k)}$. 
  \end{enumerate}
\end{lemma}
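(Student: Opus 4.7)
The plan is to handle the two cases of the lemma separately.

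For the first case, where $\Gamma$ proportionally implements $(x=y)$, the goal is to invoke Lemma~\ref{lemma:hardness:weighted}. Since the required proportional implementation is already in hand, it remains only to establish that \WeightedMinSAT{\Gamma} supports crisp $(\neg x\vee\neg y\vee\neg z)$. This should follow from the chain structure of Post's lattice of co-clones $\text{IS}_{10}^2 \subsetneq \text{IS}_{10}^3 \subsetneq \ldots \subsetneq \text{IS}_{10}$: the hypotheses $\Gamma\subseteq\text{IS}_{10}^d$ and $\Gamma\not\subseteq\text{IS}_{10}^2$ force the co-clone generated by $\Gamma$ to contain $\text{IS}_{10}^3$, so $\Gamma$ pp-defines some negative clause of arity at least $3$, and in particular a ternary one (longer negative clauses pp-define shorter ones by existentially quantifying extra variables and forcing them to $0$ via $(x=0)$). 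Proposition~\ref{prop:reduction-basics} then yields the required crisp support, and Lemma~\ref{lemma:hardness:weighted} closes this case with the ``trivial or W[1]-hard'' dichotomy.

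For the second case, where $\Gamma$ does not proportionally implement $(x=y)$, the contrapositive of Proposition~\ref{prop:hr-gives-eq} tells us that for every $R\in\Gamma$ the arrow graph $H_R$ is edgeless, and consequently each $R$ admits a qfpp-definition over $\Gamma_d$ using only the assignments $(x=0),(x=1)$ and negative clauses of arity at most $d$, with no implications or equalities. I plan to regard each input constraint $R(x_1,\ldots,x_r)$ as a \emph{bundle} of its atomic qfpp-constituents, so that the problem is equivalent to finding a set $Z\subseteq\F$ of bundles with $|Z|\leq k$ and $\weight(Z)\leq W$ such that the conjunction of atomics from $\F\setminus Z$ admits a satisfying assignment.

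The FPT algorithm will be a straightforward depth-$k$ branching on minimal conflicts. Testing satisfiability of a conjunction of the above atomic constraints is polynomial: propagate all forced assignments, abort on a contradiction, otherwise set all remaining variables to $0$ and verify each negative clause. If the current atomic system is satisfiable, return yes. Otherwise, any minimal conflict $\mathcal{K}$ consists either of two contradictory assignments $\{(x=0),(x=1)\}$, or of a single negative clause together with positive assignments forcing all of its variables to $1$; in either case $|\mathcal{K}|\leq d+1$ atomics, spanning at most $d+1$ bundles. Since any valid $Z$ must delete a bundle containing some atomic of $\mathcal{K}$, branch over the $\leq d+1$ candidate bundles, delete the chosen bundle from $\F$, decrement $k$ and $W$ accordingly, and recurse. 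The recursion has branching factor at most $d+1$ and depth at most $k$, yielding $(d+1)^k\cdot n^{\Oh(1)}=2^{\Oh(k)}\cdot n^{\Oh(1)}$ total running time, since $d$ is a constant determined by $\Gamma$.

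The only substantive piece of work is Proposition~\ref{prop:hr-gives-eq} in the second case, which collapses each constraint to assignments and negative clauses; once that structure is in hand, the bounded size of minimal conflicts makes the FPT bound nearly immediate, and the hardness direction is a bookkeeping application of Lemma~\ref{lemma:hardness:weighted} to the crisp clause support obtained from the co-clone structure.
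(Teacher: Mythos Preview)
Your proposal is correct and takes essentially the same approach as the paper. One small slip in the hardness case: to pp-define a ternary negative clause from a longer one $(\neg x_1\lor\cdots\lor\neg x_r)$ you must force the extra variables to $1$ via $(x=1)$, not to $0$---setting some $x_i=0$ trivially satisfies the negative clause rather than restricting it. The paper sidesteps this entirely by citing that the co-clone generated by $\Gamma$ contains $\text{IS}_1^3$, which already contains the ternary negative clause directly; your chain claim that $\langle\Gamma\rangle\supseteq\text{IS}_{10}^3$ is slightly stronger than needed and not literally true without first invoking non-triviality to secure the constants. Otherwise, both the hardness route through Lemma~\ref{lemma:hardness:weighted} and the bounded-depth branching FPT algorithm match the paper's proof.
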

\begin{proof}
  Recall that every relation $R \in \Gamma$ has a qfpp-definition over $\Gamma_d$,
  and first consider the case that for every relation $R \in \Gamma$, the arrow graph $H_R$
  is edgeless. Then by Proposition~\ref{prop:hr-gives-eq} $R$ has a qfpp-definition over $\Gamma_d$
  without using any clauses $(x=y)$ or $(x \to y)$, i.e., using only negative clauses
  and assignments. In this case, \WeightedMinSAT{\Gamma} allows for a simple branching algorithm. 
  Let $I$ be an instance of \WeightedMinSAT{\Gamma} with parameter $k$. Assume that $I$ is not satisfiable,
  as otherwise $I$ is trivially a \yes-instance. There are now only two ways for $I$ to be unsatisfiable.
  Either there is a variable $x \in V(I)$ such that $I$ contains constraints implying both $(x=0)$ and $(x=1)$,
  or there is a set $X=\{x_1,\ldots,x_r\}$ of $r \leq d$ variables in $V(I)$ such that $I$
  contains constraints implying both the clause $C=(\neg x_1 \lor \ldots \lor \neg x_r)$ and $(x_i=1)$ for every $i \in [r]$.
  In the former case, we may branch on $x=1$ or $x=0$, and in the latter on setting $x_i=0$
  for some $i \in [r]$, or on every constraint implying $C$ being falsified. It is easy
  to complete this observation into a bounded-depth branching algorithm,
  solving \WeightedMinSAT{\Gamma} in time $O^*((d+1)^k)$. 
  
  In the remaining case, we show hardness. 
  As noted in Proposition~\ref{prop:posts-lattice-relations},
  $\Gamma$ can pp-define a negative 3-clause $(\neg x \lor \neg y \lor \neg z)$
  (indeed, $\Gamma$ can pp-define every relation of IS$_1^3$~\cite{BohlerRSV05coclones}).
  Furthermore, there is a relation $R \in \Gamma$ such that $H_R$ is not edgeless.
  Then by Proposition~\ref{prop:hr-gives-eq}, $\Gamma$ proportionally implements $(x=y)$. 
  Now \WeightedMinSAT{\Gamma} is either trivial or \classWone-hard by Lemma~\ref{lemma:hardness:weighted}. 
\end{proof}

For the unweighted case, it is easy to show that there is no space between the positive
case of Theorem~\ref{thm:alg-IS} and the negative case of Lemma~\ref{lemma:hardness:arrow}.

\begin{lemma} \label{lemma:isd-uwcomplete}
  Let $\Gamma$ be a finite Boolean language contained in IS$_{10}^d$ for some $d>2$
  but not in IS$_{10}^2$. Then the following apply.
  \begin{enumerate}
  \item If there is a relation $R \in \Gamma$ such that the arrow graph $H_R$ contains a $2K_2$,
    then \MinSAT{\Gamma} is either trivial or \classWone-hard.
  \item Otherwise \MinSAT{\Gamma} is FPT.
  \end{enumerate}
\end{lemma}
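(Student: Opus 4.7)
My plan is to handle the two cases separately, with the second, FPT, case being almost immediate from Theorem~\ref{thm:alg-IS}, and the hardness case reducing to Lemma~\ref{lemma:hardness:arrow} after a short argument using Post's lattice to obtain crisp ternary negative clauses.

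For the FPT case, suppose every $R \in \Gamma$ has an arrow graph $H_R$ that is $2K_2$-free. Since $\Gamma \subseteq$ IS$_{10}^d$, every relation $R \in \Gamma$ is equivalent to a conjunction of implications, assignments, and negative clauses of arity at most $d$. Letting $\maxarity$ be the maximum of $d$ and the largest arity of any relation in $\Gamma$, we then have $\Gamma \subseteq \isdpositive$ directly by the definition, and Theorem~\ref{thm:alg-IS} gives the desired FPT algorithm.

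For the hardness case, assume some $R \in \Gamma$ has an arrow graph $H_R$ containing a $2K_2$. If \MinSAT{\Gamma} is trivial there is nothing to prove, so suppose otherwise. My aim is to apply Lemma~\ref{lemma:hardness:arrow}, whose only additional hypothesis is that \MinSAT{\Gamma} supports crisp $(\neg x_1 \vee \neg x_2 \vee \neg x_3)$. The key step is therefore to show that $\Gamma$ can pp-define the ternary negative clause. Here I will invoke Post's lattice as in Lemma~\ref{lemma:top-level-cases}: the co-clones IS$_{10}^r$ for $r \geq 2$ form an ascending chain $\textup{IS}_{10}^2 \subsetneq \textup{IS}_{10}^3 \subsetneq \dots \subsetneq \textup{IS}_{10}$ with no intermediate co-clones. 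Since $\Gamma \subseteq$ IS$_{10}^d$ but $\Gamma \not\subseteq$ IS$_{10}^2$, the co-clone $\langle\Gamma\rangle$ sits strictly above IS$_{10}^2$ within this chain, hence must contain IS$_{10}^3$. In particular $\Gamma$ pp-defines $(\neg x_1 \vee \neg x_2 \vee \neg x_3)$, and by Proposition~\ref{prop:reduction-basics} \MinSAT{\Gamma} supports the corresponding crisp constraint. Lemma~\ref{lemma:hardness:arrow} then yields that \MinSAT{\Gamma} is W[1]-hard.

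The only step that requires any care is the Post's lattice claim about the structure of the IS$_{10}^r$ chain, but this is standard (and is also the mechanism invoked implicitly in the proof of Lemma~\ref{lemma:top-level-cases}), so no genuine obstacle arises. The rest of the argument is a direct plug-in of Theorem~\ref{thm:alg-IS} on one side and Lemma~\ref{lemma:hardness:arrow} on the other.
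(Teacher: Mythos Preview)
Your proposal is correct and follows essentially the same approach as the paper's proof: invoke Theorem~\ref{thm:alg-IS} for the FPT case, and for hardness use Post's lattice to obtain a pp-definition of $(\neg x_1 \vee \neg x_2 \vee \neg x_3)$ and then apply Lemma~\ref{lemma:hardness:arrow}. One minor imprecision: the co-clone $\langle\Gamma\rangle$ need not lie on the $\text{IS}_{10}^r$ chain itself (it might not contain both constants), so the cleaner statement is that any co-clone inside $\text{IS}_{10}^d$ but not inside $\text{IS}_{10}^2$ contains $\text{IS}_1^3$ --- which is exactly what the paper invokes via Lemma~\ref{lemma:isd-weighted} and the B\"ohler et al.\ reference --- but your conclusion that $\Gamma$ pp-defines the ternary negative clause is correct either way.
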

\begin{proof}
  First assume that there is a relation $R \in \Gamma$ such that the arrow graph $H_R$
  contains a $2K_2$. By Propositions~\ref{prop:posts-lattice-relations} and~\ref{prop:reduction-basics},
  \MinSAT{\Gamma} supports crisp negative 3-clauses. Thus by Lemma~\ref{lemma:hardness:arrow},
  \MinSAT{\Gamma} is either trivial or \classWone-hard.

  Otherwise, let $\maxarity$ be the maximum arity of a relation $R \in \Gamma$.
  Now $\Gamma \subseteq \isdpositive$ and \MinSAT{\Gamma} is FPT by Theorem~\ref{thm:alg-IS}.
\end{proof}

It remains to consider languages in the co-clone IS$_{10}^2$.
Note that since IS$_{10}^2$ is contained in both ID$_2$ and IS$_{10}^d$ for $d>2$, 
both Theorem~\ref{thm:alg-ID2} and Theorem~\ref{thm:alg-IS} potentially apply.

\begin{lemma} \label{lemma:dich-is2}
  Let $\Gamma$ be a finite Boolean language contained in IS$_{10}^2$.
  The following apply.
  \begin{enumerate}
  \item If the Gaifman graph $G_R$ is $2K_2$-free for every $R \in \Gamma$,
    or if the arrow graph $H_R$ is edgeless for every $R \in \Gamma$, then
    \WeightedMinSAT{\Gamma} is FPT.
  \item If $H_R$ is $2K_2$-free for every $R \in \Gamma$
    but the previous case does not apply,
    then \WeightedMinSAT{\Gamma} is either trivial or \classWone-hard,
    but \MinSAT{\Gamma} is FPT.
  \item Otherwise \MinSAT{\Gamma} is either trivial or \classWone-hard.
  \end{enumerate}
\end{lemma}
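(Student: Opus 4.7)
The plan is to handle each of the three cases by combining the structural tractability theorems (Theorems~\ref{ithm:gaifman} and \ref{ithm:arrow}) on the positive side with the hardness lemmas (Lemmas~\ref{lemma:hardness:weighted2} and \ref{lemma:hardness:arrowandgaifman}) on the hardness side. Note that since $\Gamma \subseteq \text{IS}_{10}^2$, $\Gamma$ is both bijunctive (as $\text{IS}_{10}^2 \subseteq \text{ID}_2$) and IHS-B$-$, so both tractability theorems are potentially applicable, and which one applies is controlled by the $2K_2$-structure of $G_R$ versus $H_R$.

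For Case~1, I handle the two subcases separately. If every $G_R$ is $2K_2$-free, then Theorem~\ref{ithm:gaifman} directly gives \WeightedMinSAT{\Gamma} in FPT. Otherwise, if every $H_R$ is edgeless, Proposition~\ref{prop:hr-gives-eq} implies that each $R$ has a qfpp-definition using only assignments and negative $2$-clauses. I plan to solve \WeightedMinSAT{\Gamma} by branching at the atomic level: a minimal unsatisfiable subformula over $\{(x=0), (x=1), (\neg x \lor \neg y)\}$ has size at most $3$, being either $\{(x=0), (x=1)\}$ or $\{(x=1), (y=1), (\neg x \lor \neg y)\}$, since the only propagation available is $(x=1) \wedge (\neg x \lor \neg y) \models (y=0)$. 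Each atom in such a conflict belongs to some original constraint, so we branch on which original constraint (at most $3$ candidates) joins the deletion set, decrement $k$ by $1$ and $W$ by the weight, and recurse. The resulting algorithm runs in time $3^k \cdot n^{O(1)}$.

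For Case~2, the FPT direction for \MinSAT{\Gamma} follows directly from Theorem~\ref{ithm:arrow}. The W[1]-hardness of \WeightedMinSAT{\Gamma} reduces to verifying the hypotheses of Lemma~\ref{lemma:hardness:weighted2}: a relation with $G_R$ containing a $2K_2$ (given), a proportional implementation of $(x=y)$ or $(x \to y)$ (available from Proposition~\ref{prop:hr-gives-eq} applied to any $R'$ with $H_{R'}$ nonempty, which exists since the edgeless alternative of Case~1 fails), and crisp $(\neg x \lor \neg y)$. The last is the crux. Fix a $2K_2$ in $G_R$ on vertices $\{i,j,i',j'\}$. A variable forced to a constant by $R$ would be adjacent in $G_R$ to every other variable, so none of the four $2K_2$-vertices is forced. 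Any non-full projection of $R$ onto a non-forced pair of variables, being a $2$-ary relation in $\text{IS}_{10}^2$, must be one of $\{(\neg x \lor \neg y), (x \to y), (y \to x), (x=y)\}$. If both $2K_2$-edges gave implication or equality type projections, $H_R$ would contain a $2K_2$ on the same four vertices (since such projections correspond to edges in $H_R$), contradicting Case~2. Thus at least one edge has projection exactly $(\neg x \lor \neg y)$, and existentially quantifying the remaining variables of $R$ yields a pp-definition of $(\neg x \lor \neg y)$.

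For Case~3, observe that the edges of $H_R$ form a subset of the edges of $G_R$ (ignoring orientation), directly from the definitions: if $R$ forbids $t[i]=1, t[j]=0$, then the projection on $(i,j)$ is non-full. Hence any $2K_2$ in $H_R$ is also present in $G_R$, and Lemma~\ref{lemma:hardness:arrowandgaifman} applied with $R=R'$ yields \MinSAT{\Gamma} is trivial or W[1]-hard. I expect the main technical obstacle to be the pp-definition argument in Case~2, since it is the one step that requires a careful enumeration of $2$-ary relations in $\text{IS}_{10}^2$ and critically leverages both the $2K_2$-freeness of $H_R$ and the absence of forced variables among the $G_R$-$2K_2$ vertices.
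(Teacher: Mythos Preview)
Your handling of Cases~1 and~2 is correct and matches the paper's approach closely. The branching argument in Case~1 is essentially the content of Lemma~\ref{lemma:isd-weighted} (which the paper invokes via the inclusion $\text{IS}_{10}^2 \subset \text{IS}_{10}^3$), and in Case~2 your derivation of crisp $(\neg x \lor \neg y)$ from the $2K_2$ in $G_R$ together with $2K_2$-freeness of $H_R$ is exactly the paper's argument.

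There is, however, a genuine gap in Case~3. The inference ``edges of $H_R$ are a subset of edges of $G_R$, hence any $2K_2$ in $H_R$ is also present in $G_R$'' is false, because $2K_2$-containment refers to \emph{induced} subgraphs. If $H_R$ has an induced $2K_2$ on $\{a,b,c,d\}$, then $\{a,b\}$ and $\{c,d\}$ are indeed edges of $G_R$, but $G_R$ may have \emph{additional} edges between the two pairs that destroy the induced $2K_2$. A concrete counterexample already appears in the introduction: for $R'(a,b,c,d)\equiv (a\to b)\wedge(c\to d)\wedge(\neg a\vee\neg c)$, the arrow graph $H_{R'}$ has a $2K_2$ on $\{a,b,c,d\}$, yet $G_{R'}$ is a $P_4$ (edges $\{a,b\},\{a,c\},\{c,d\}$) and hence $2K_2$-free. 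So you cannot take $R=R'$ in Lemma~\ref{lemma:hardness:arrowandgaifman}.

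The fix is to use the failure of Case~1 rather than a structural implication. Since Case~1 is a disjunction, its failure yields \emph{both} a relation $R'$ with $G_{R'}$ containing a $2K_2$ \emph{and} a relation with nonempty arrow graph; the failure of Case~2 additionally supplies a relation $R$ with $H_R$ containing a $2K_2$. Feeding these two (possibly distinct) relations into Lemma~\ref{lemma:hardness:arrowandgaifman} completes Case~3. This is precisely what the paper does.
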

\begin{proof}
  Since $\text{IS}_{10}^2 \subset \text{ID}_2$, $\Gamma$ is bijunctive
  and every relation $R \in \Gamma$ can be defined as a conjunction of 2-clauses.
  If $G_R$ is $2K_2$-free for every $R \in \Gamma$, then $\Gamma \subseteq \idtwopositive$
  for some arity bound $\maxarity$, and \WeightedMinSAT{\Gamma} is FPT by Theorem~\ref{thm:alg-ID2}. 
  Similarly, if $H_R$ is edgeless for every $R \in \Gamma$, then
  by Proposition~\ref{prop:hr-gives-eq}, every relation $R \in \Gamma$ can be qfpp-defined
  using assignments and negative 2-clauses, without use of implications
  or equality clauses, and $\Gamma$ does not proportionally implement $(x=y)$. 
  Then, since $\Gamma \subseteq \text{IS}_{10}^2 \subset \text{IS}_{10}^3$, 
  Lemma~\ref{lemma:isd-weighted} implies that \WeightedMinSAT{\Gamma} is FPT.
  (Note that the requirement of Lemma~\ref{lemma:isd-weighted} that $\Gamma$
  not be contained in IS$_{10}^2$ does not apply to the algorithm,
  e.g., the language $\Gamma'=\Gamma \cup \{(\neg x \lor \neg y \lor \neg z)\}$ meets
  the condition of Lemma~\ref{lemma:isd-weighted}, and clearly
  an FPT algorithm for \MinSAT{\Gamma'} can also be applied to \MinSAT{\Gamma}.)
  Hence if one of the conditions of case 1 applies, then \WeightedMinSAT{\Gamma} is FPT.

  Assume that case 2 applies but case 1 does not. Then by assumption $H_R$
  is $2K_2$-free for every $R \in \Gamma$, and $\Gamma \subseteq \isdpositive$
  for some arity bound $\maxarity$. Thus \MinSAT{\Gamma} is FPT by Theorem~\ref{thm:alg-IS}.
  We show that \WeightedMinSAT{\Gamma} is trivial or \classWone-hard by showing
  that Lemma~\ref{lemma:hardness:weighted2} applies.
  We need to show that $\Gamma$ proportionally implements $(x=y)$
  and supports crisp constraints $(\neg x \lor \neg y)$. 
  For the first, by assumption there is a relation $R \in \Gamma$ such
  that the arrow graph $H_R$ contains an edge, and
  $\Gamma$ proportionally implements $(x=y)$  by Proposition~\ref{prop:hr-gives-eq}.
  For the second, let $R \in \Gamma$ be a relation
  such that the Gaifman graph $G_R$ contains an induced $2K_2$, say with edges $\{a,b\}$ and $\{c,d\}$.
  Note that the underlying undirected graph of the arrow graph $H_R$ is a subgraph of $G_R$.
  Since $H_R$ does not contain a $2K_2$, either $H_R$ contains no edge on $\{a,b\}$
  or no edge on $\{c,d\}$; assume the latter by symmetry. 
  Now $R$ cannot imply an assignment to $c$ or $d$, since this would
  imply edges between $\{a,b\}$ and $\{c,d\}$ in $G_R$.
  Thus the projection of $R$ to $\{c,d\}$ contains the tuples $(1,0)$ and $(0,1)$,
  and the only option under the restrictions on $\Gamma$ is
  that this projection equals $(\neg c \lor \neg d)$. 
  Thus $\Gamma$ pp-defines $(\neg x \lor \neg y)$, and 
  $\Gamma$ supports crisp negative 2-clauses by Proposition~\ref{prop:reduction-basics}.
  Then Lemma~\ref{lemma:hardness:weighted2} applies
  and \WeightedMinSAT{\Gamma} is trivial or \classWone-hard.

  Finally, if neither case 1 nor case 2 applies, then there is a
  relation $R \in \Gamma$ such that $G_R$ contains a $2K_2$,
  and a relation $R' \in \Gamma$ such that $H_{R'}$ contains a $2K_2$.
  Then \MinSAT{\Gamma} is trivial or \classWone-hard
  by Lemma~\ref{lemma:hardness:arrowandgaifman}.
\end{proof}

\subsection{Completing the proof}

We summarize all the above in the following lemma, which directly
implies Theorem~\ref{ithm:dichotomy}.

\begin{lemma} \label{lemma:dich:full-list}
  Let $\Gamma$ be a finite, Boolean constraint language.
  If one of the following cases applies, then \WeightedMinSAT{\Gamma} is FPT.
  \begin{enumerate}
  \item[(1a)] \WeightedMinSAT{\Gamma} is trivial, i.e., every non-empty
    $R \in \Gamma$ is 0-valid or every non-empty $R \in \Gamma$ is 1-valid.
  \item[(1b)] $\Gamma$ is qfpp-definable using constraints of arity at most 2,
    and for every $R \in \Gamma$ the Gaifman graph $G_R$ is $2K_2$-free.
  \item[(1c)] $\Gamma$ is qfpp-definable using negative clauses and assignments
    and cannot implement equality.
  \item[(1d)] $\Gamma$ is qfpp-definable using positive clauses and assignments,
    and cannot implement equality. 
  \end{enumerate}
  If one of the following cases applies but the above cases do not,
  then \WeightedMinSAT{\Gamma} is \classWone-hard but \MinSAT{\Gamma} is FPT.
  \begin{enumerate}
  \item[(2a)] $\Gamma$ is qfpp-definable using negative clauses, implications
    and assignments, and for every $R \in \Gamma$ the arrow graph $H_R$ is $2K_2$-free.
  \item[(2b)] $\Gamma$ is qfpp-definable using positive clauses, implications
    and assignments, and for every $R \in \Gamma$ the arrow graph $H_R$ is $2K_2$-free.   
  \end{enumerate}
  Otherwise, \MinSAT{\Gamma} is \classWone-hard. 
\end{lemma}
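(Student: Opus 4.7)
The plan is to assemble the lemma by dispatching cases through Lemma~\ref{lemma:top-level-cases} and then applying the per-coclone dichotomies of Lemma~\ref{lemma:dich-id2}, Lemma~\ref{lemma:isd-weighted}, Lemma~\ref{lemma:isd-uwcomplete}, and Lemma~\ref{lemma:dich-is2}, together with the hardness statements of Lemma~\ref{lemma:hard-HS} and Lemma~\ref{lemma:il-hard} for the ``top'' of Post's lattice. The first step is to check triviality: if $\Gamma$ is $0$-valid or $1$-valid (case 1a) then the all-$0$ or all-$1$ assignment is always optimal, so \WeightedMinSAT{\Gamma} is FPT (in fact in P). Otherwise, we apply Lemma~\ref{lemma:top-level-cases} which leaves us in one of six cases. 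In cases \ref{case:hard1}--\ref{case:hard3} of that lemma, $\Gamma$ efficiently pp-defines negative clauses, positive clauses, or even linear equations of every arity; these are W[2]-hard or W[1]-hard by Lemma~\ref{lemma:hard-HS} and Lemma~\ref{lemma:il-hard}, and a short check shows that in each of these subcases none of (1a)--(1d) or (2a)--(2b) can hold, so the ``otherwise'' conclusion is correct.

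Next I would dispatch the three remaining cases by coclone. For case~\ref{case:id2}, $\Gamma$ is bijunctive. If additionally $\Gamma \subseteq \mathrm{IS}_{00}$ or $\mathrm{IS}_{10}$ we handle it below; otherwise Lemma~\ref{lemma:dich-id2} applies and distinguishes on whether every $G_R$ is $2K_2$-free. If yes, then $\Gamma \subseteq \idtwopositive$ for some $\maxarity$ and we are in (1b), so \WeightedMinSAT{\Gamma} is FPT by Theorem~\ref{thm:alg-ID2}; if no, then \MinSAT{\Gamma} is W[1]-hard, and it suffices to check that neither (1b) nor (2a),(2b) hold — since $\Gamma$ is bijunctive and not in $\mathrm{IS}_{00}\cup \mathrm{IS}_{10}$, it qfpp-defines $(x\neq y)$, ruling out the $\mathrm{IS}$ cases. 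For case~\ref{case:is10-group}, i.e., $\Gamma \subseteq \mathrm{IS}_{10}^d$ for some $d$, I would split on whether $d = 2$ or $d > 2$. If $d > 2$ and $\Gamma$ is not in $\mathrm{IS}_{10}^2$, Lemma~\ref{lemma:isd-weighted} gives the \WeightedMinSAT{\Gamma} dichotomy (FPT precisely when no $H_R$ has an edge, which by Prop.~\ref{prop:hr-gives-eq} is exactly (1c)), and Lemma~\ref{lemma:isd-uwcomplete} gives the \MinSAT{\Gamma} dichotomy (FPT precisely when every $H_R$ is $2K_2$-free, which is (2a)). If $\Gamma \subseteq \mathrm{IS}_{10}^2$, we invoke Lemma~\ref{lemma:dich-is2} directly. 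Case~\ref{case:is00-group}, i.e., $\Gamma \subseteq \mathrm{IS}_{00}^d$, is handled by duality via Prop.~\ref{prop:dual}: the dual of $\Gamma$ lies in $\mathrm{IS}_{10}^d$, so the same arguments apply, with (1c) and (2a) getting replaced by their duals (1d) and (2b), and with Gaifman/arrow $2K_2$-freeness being preserved under duality.

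The key bookkeeping step, and the main obstacle, is to reconcile the concrete case splits of Lemmas~\ref{lemma:dich-id2}, \ref{lemma:isd-weighted}, \ref{lemma:isd-uwcomplete}, \ref{lemma:dich-is2} with the global case list (1a)--(1d), (2a)--(2b), and to ensure the three regimes are pairwise exclusive. Concretely, I must verify: (i) if $\Gamma$ is both bijunctive and lies in an $\mathrm{IS}$ coclone, then the two dichotomies agree, which is handled inside Lemma~\ref{lemma:dich-is2}; (ii) (1b) really covers the $G_R$ $2K_2$-free bijunctive case, (1c)/(1d) cover the ``no implications needed'' $\mathrm{IS}$ cases (via Prop.~\ref{prop:hr-gives-eq}), and (2a)/(2b) cover the ``$H_R$ $2K_2$-free, but some $H_R$ nonempty'' cases; (iii) the residual ``otherwise'' is exactly the hard case, by checking against Lemmas~\ref{lemma:hardness:gaifman}, \ref{lemma:hardness:arrow}, \ref{lemma:hardness:arrowandgaifman}, \ref{lemma:hardness:weighted}, \ref{lemma:hardness:weighted2}.

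Finally I would state Theorem~\ref{thm:dichotomy} as an immediate corollary: the three regimes (1a)--(1d), (2a)--(2b), and ``otherwise'' are mutually exclusive and jointly exhaustive by the case analysis above, and they match the three branches of the theorem. The main obstacle is the case matching; there are no new algorithmic or hardness ingredients beyond what is already established in Sections~\ref{sec:gaifman}, \ref{sec:arrow} and the earlier parts of Section~\ref{sec:dichotomy}.
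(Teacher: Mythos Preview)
Your proposal is correct and follows essentially the same approach as the paper: both proofs assemble the dichotomy by dispatching through Lemma~\ref{lemma:top-level-cases}, handling the top-of-lattice cases via Lemmas~\ref{lemma:hard-HS} and~\ref{lemma:il-hard}, and then invoking the per-coclone dichotomies (Lemmas~\ref{lemma:dich-id2}, \ref{lemma:isd-weighted}, \ref{lemma:isd-uwcomplete}, \ref{lemma:dich-is2}) together with duality. The only organizational difference is that the paper first verifies all positive cases (1a)--(2b) directly and then proves hardness under the standing assumption that none of (1a)--(1d) hold, whereas you walk through the six branches of Lemma~\ref{lemma:top-level-cases} once and check case membership at each node; the two are equivalent in content and use identical ingredients.
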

\begin{proof}
  We verify all positive statements (1a)--(2b) first. 
  For case (1a), if \WeightedMinSAT{\Gamma} is trivial, then it is clearly solvable in polynomial
  time (since the all-0, respectively the all-1 assignment will always be optimal).
  Case (1b) is covered by Theorem~\ref{thm:alg-ID2} (via Lemma~\ref{lemma:dich-id2}),
  case (1c) by Lemma~\ref{lemma:isd-weighted}, and case (1d) by
  Lemma~\ref{lemma:isd-weighted} applied to the dual of $\Gamma$
  (via Proposition~\ref{prop:dual}). Thus \WeightedMinSAT{\Gamma} is FPT in all cases (1a)--(1d).

  Similarly, case (2a) is covered by Theorem~\ref{thm:alg-IS} (via
  Lemma~\ref{lemma:isd-uwcomplete}), and case (2b) by the same theorem
  applied to the dual language by Proposition~\ref{prop:dual}.
  Thus \MinSAT{\Gamma} is FPT in cases (2a) and (2b).

  It remains to show hardness.     
  Assume that none of cases (1a)-(1d) applies; we will show that \WeightedMinSAT{\Gamma}
  is \classWone-hard. Since \MinSAT{\Gamma} is not trivial, one of the six cases of 
  the top-level case distinction of Lemma~\ref{lemma:top-level-cases} applies.
  If case 1 or 2 of Lemma~\ref{lemma:top-level-cases} applies, then 
  hardness follows from Lemma~\ref{lemma:hard-HS}, and if case 3 applies
  then hardness follows from Lemma~\ref{lemma:il-hard}.
  The remaining cases are that $\Gamma$ is bijunctive, IHS-B+, or IHS-B-.
      
  First assume $\Gamma$ is bijunctive. By Post's lattice~\cite{BohlerRSV05coclones},
  there are three options. Either $\Gamma$ pp-defines $(x \neq y)$, or $\Gamma$
  is qfpp-defiable over $\Gamma_2$, or $\Gamma$ is qfpp-definable over the dual of $\Gamma_2$.
  By Proposition~\ref{prop:dual}, the last two cases can be treated as one case.
  In case $\Gamma$ pp-defines $(x \neq y)$, since case (1b) fails to apply,
  \MinSAT{\Gamma} is \classWone-hard by Lemma~\ref{lemma:dich-id2}.
  Then \WeightedMinSAT{\Gamma} is \classWone-hard as well.
  The remaining subcase is that $\Gamma$ is qfpp-definable
  over $\Gamma_2$. By Lemma~\ref{lemma:dich-is2}, if none of cases (1a)--(1d) apply,
  then \WeightedMinSAT{\Gamma} is \classWone-hard, and if furthermore cases (2a)--(2b)
  do not apply then \MinSAT{\Gamma} is also \classWone-hard. Hence
  the characterization of bijunctive languages is finished.

  If $\Gamma$ is not bijunctive, by the above and by Proposition~\ref{prop:dual}
  we can assume that $\Gamma$ is qfpp-definable over $\Gamma_d$ for some $d>2$
  but not over $\Gamma_2$. By Proposition~\ref{prop:hr-gives-eq}, if $H_R$ were edgeless
  for every $R \in \Gamma$ then case (1c) or (1d) would apply, contrary to assumption.
  Hence Proposition~\ref{prop:hr-gives-eq} implies that $\Gamma$ proportionally
  implements $(x=y)$. Then \WeightedMinSAT{\Gamma} is \classWone-hard by
  Lemma~\ref{lemma:isd-weighted},
  and under the further assmuption that cases (2a)--(2b) do not apply,
  \MinSAT{\Gamma} is \classWone-hard by Lemma~\ref{lemma:isd-uwcomplete}.

  We find that for every language $\Gamma$ not covered by cases (1a)--(1d),
  \WeightedMinSAT{\Gamma} is \classWone-hard, and for every language $\Gamma$
  not covered by cases (1a)--(2b), \MinSAT{\Gamma} is \classWone-hard.
  Hence the case enumeration is complete.
\end{proof}

\section{Conclusions}
The established dichotomy naturally raises the question
where else one can establish a similar \classFPT{} vs \classWone{}-hard distinction. 
Recent results highlighted \emph{temporal CSPs} as a promising direction of future work. 
In this class of CSPs, the domain of the variables is $\mathbb{Q}$, and the values are accessed through predicates $=$, $<$, $\leq$, $\neq$.
That is, only the relative order of the values on the number line matters.
In the most classic applications temporal CSPs are used to model problems of planning 
events on a timeline while respecting various constraints about precedence.

In the world of temporal CSPs, the \MinCSP{\Gamma} problem can express many problems
in digraphs. For example, if $\Gamma = \{x < y\}$, \MinCSP{\Gamma} is equivalent to 
\textsc{Directed Feedback Vertex Set}; a problem, whose \classFPT{} status has been 
a major open problem 15--20 years ago~\cite{ChenLLOR08}.
Applications of flow-augmentation and the algorithms presented in this work
(mostly, Theorem~\ref{ithm:gaifman}) to other parameterized graph separation problems
have been explored in~\cite{GalbyMSST22,KimMPSW24}. 

Osipov and Wahlstr\"{o}m established a \classFPT{} vs \classWone{}-hard dichotomy
for the finite \emph{equality \MinCSP{\Gamma}} languages,
i.e., languages where constraints only have access to the equality predicate
(but they can be combined into more complex constraints using
 any first-order expression)~\cite{OsipovW23}.
The dichotomy of temporal \MinCSP{\Gamma} for $\Gamma$ being a subset of 
the basic relations $\{<, \leq, =, \neq\}$ has been established in~\cite{OsipovPW24}. 
A similar result for \emph{Allen's interval algebra}, where the domain is
the set of all intervals on a line, was established in~\cite{DabrowskiJOOPS23}.
In all these works, many new tractability results rely on a reduction to one of the positive
cases in the dichotomy of this work (usually, Theorem~\ref{ithm:gaifman}). 

The full \classFPT{} vs \classWone{}-hard dichotomy for temporal \MinCSP{\Gamma}, i.e., where the constraints can be expressed as any first-order expression with
 access to all four predicates $<$, $\leq$, $=$, $\neq$, remains open. 

\bibliographystyle{alpha}
\bibliography{references}

\newcommand{\etalchar}[1]{$^{#1}$}
\begin{thebibliography}{KKPW21b}

\bibitem[BELM18]{BonnetELM-arXiv}
{\'{E}}douard Bonnet, L{\'{a}}szl{\'{o}} Egri, Bingkai Lin, and D{\'{a}}niel
  Marx.
\newblock Fixed-parameter approximability of {B}oolean {MinCSPs}.
\newblock {\em CoRR}, abs/1601.04935v2, 2018.

\bibitem[BEM16]{BonnetEM16ESA}
{\'{E}}douard Bonnet, L{\'{a}}szl{\'{o}} Egri, and D{\'{a}}niel Marx.
\newblock Fixed-parameter approximability of {B}oolean {MinCSPs}.
\newblock In {\em {ESA}}, volume~57 of {\em LIPIcs}, pages 18:1--18:18. Schloss
  Dagstuhl - Leibniz-Zentrum f{\"{u}}r Informatik, 2016.

\bibitem[BKW17]{BartoKW17}
Libor Barto, Andrei~A. Krokhin, and Ross Willard.
\newblock Polymorphisms, and how to use them.
\newblock In {\em The Constraint Satisfaction Problem}, volume~7 of {\em
  Dagstuhl Follow-Ups}, pages 1--44. Schloss Dagstuhl - Leibniz-Zentrum
  f{\"{u}}r Informatik, 2017.

\bibitem[BM14]{BulatovM14SICOMP}
Andrei~A. Bulatov and D{\'a}niel Marx.
\newblock Constraint satisfaction parameterized by solution size.
\newblock {\em SIAM J. Comput.}, 43(2):573--616, 2014.

\bibitem[BRSV05]{BohlerRSV05coclones}
Elmar B{\"{o}}hler, Steffen Reith, Henning Schnoor, and Heribert Vollmer.
\newblock Bases for {B}oolean co-clones.
\newblock {\em Inf. Process. Lett.}, 96(2):59--66, 2005.

\bibitem[Bul17]{Bulatov17CSP}
Andrei~A. Bulatov.
\newblock A dichotomy theorem for nonuniform {CSPs}.
\newblock In {\em {FOCS}}, pages 319--330. {IEEE} Computer Society, 2017.

\bibitem[CCH{\etalchar{+}}16]{ChitnisCHPP16}
Rajesh Chitnis, Marek Cygan, MohammadTaghi Hajiaghayi, Marcin Pilipczuk, and
  Michal Pilipczuk.
\newblock Designing {FPT} algorithms for cut problems using randomized
  contractions.
\newblock {\em {SIAM} J. Comput.}, 45(4):1171--1229, 2016.

\bibitem[CEM13]{ChitnisEM13}
Rajesh~Hemant Chitnis, L{\'{a}}szl{\'{o}} Egri, and D{\'{a}}niel Marx.
\newblock List {H}-coloring a graph by removing few vertices.
\newblock In Hans~L. Bodlaender and Giuseppe~F. Italiano, editors, {\em
  Algorithms - {ESA} 2013 - 21st Annual European Symposium, Sophia Antipolis,
  France, September 2-4, 2013. Proceedings}, volume 8125 of {\em Lecture Notes
  in Computer Science}, pages 313--324. Springer, 2013.

\bibitem[CEM17]{ChitnisEM17}
Rajesh Chitnis, L{\'{a}}szl{\'{o}} Egri, and D{\'{a}}niel Marx.
\newblock List {H}-coloring a graph by removing few vertices.
\newblock {\em Algorithmica}, 78(1):110--146, 2017.

\bibitem[CFK{\etalchar{+}}15]{the-book}
Marek Cygan, Fedor~V. Fomin, Lukasz Kowalik, Daniel Lokshtanov, D{\'{a}}niel
  Marx, Marcin Pilipczuk, Michal Pilipczuk, and Saket Saurabh.
\newblock {\em Parameterized Algorithms}.
\newblock Springer, 2015.

\bibitem[CHL22]{CouceiroHL22}
Miguel Couceiro, Lucien Haddad, and Victor Lagerkvist.
\newblock A survey on the fine-grained complexity of constraint satisfaction
  problems based on partial polymorphisms.
\newblock {\em J. Multiple Valued Log. Soft Comput.}, 38(1-2):115--136, 2022.

\bibitem[CKS01]{CreignouKSbook01}
Nadia Creignou, Sanjeev Khanna, and Madhu Sudan.
\newblock {\em Complexity classifications of {B}oolean constraint satisfaction
  problems}, volume~7 of {\em {SIAM} monographs on discrete mathematics and
  applications}.
\newblock {SIAM}, 2001.

\bibitem[CKZ08]{CreignouKZ08plain-coclone}
Nadia Creignou, Phokion~G. Kolaitis, and Bruno Zanuttini.
\newblock Structure identification of {B}oolean relations and plain bases for
  co-clones.
\newblock {\em J. Comput. Syst. Sci.}, 74(7):1103--1115, 2008.

\bibitem[CLL{\etalchar{+}}08]{ChenLLOR08}
Jianer Chen, Yang Liu, Songjian Lu, Barry O'Sullivan, and Igor Razgon.
\newblock A fixed-parameter algorithm for the directed feedback vertex set
  problem.
\newblock {\em J. {ACM}}, 55(5):21:1--21:19, 2008.

\bibitem[CV08]{CreignouKV08post}
Nadia Creignou and Heribert Vollmer.
\newblock {B}oolean constraint satisfaction problems: When does {P}ost's
  lattice help?
\newblock In Nadia Creignou, Phokion~G. Kolaitis, and Heribert Vollmer,
  editors, {\em Complexity of Constraints - An Overview of Current Research
  Themes [Result of a Dagstuhl Seminar]}, volume 5250 of {\em Lecture Notes in
  Computer Science}, pages 3--37. Springer, 2008.

\bibitem[Die12]{Diestel_book}
Reinhard Diestel.
\newblock {\em Graph Theory, 4th Edition}, volume 173 of {\em Graduate texts in
  mathematics}.
\newblock Springer, 2012.

\bibitem[DJO{\etalchar{+}}23]{DabrowskiJOOPS23}
Konrad~K. Dabrowski, Peter Jonsson, Sebastian Ordyniak, George Osipov, Marcin
  Pilipczuk, and Roohani Sharma.
\newblock Parameterized complexity classification for interval constraints.
\newblock In Neeldhara Misra and Magnus Wahlstr{\"{o}}m, editors, {\em 18th
  International Symposium on Parameterized and Exact Computation, {IPEC} 2023,
  September 6-8, 2023, Amsterdam, The Netherlands}, volume 285 of {\em LIPIcs},
  pages 11:1--11:19. Schloss Dagstuhl - Leibniz-Zentrum f{\"{u}}r Informatik,
  2023.

\bibitem[EM22]{EldanM22ITCS}
Ronen Eldan and Dana Moshkovitz.
\newblock Reduction from non-unique games to boolean unique games.
\newblock In Mark Braverman, editor, {\em 13th Innovations in Theoretical
  Computer Science Conference, {ITCS} 2022, January 31 - February 3, 2022,
  Berkeley, CA, {USA}}, volume 215 of {\em LIPIcs}, pages 64:1--64:25. Schloss
  Dagstuhl - Leibniz-Zentrum f{\"{u}}r Informatik, 2022.

\bibitem[FV93]{FederV93}
Tom{\'{a}}s Feder and Moshe~Y. Vardi.
\newblock Monotone monadic {SNP} and constraint satisfaction.
\newblock In {\em {STOC}}, pages 612--622. {ACM}, 1993.

\bibitem[GMS{\etalchar{+}}22]{GalbyMSST22}
Esther Galby, D{\'{a}}niel Marx, Philipp Schepper, Roohani Sharma, and
  Prafullkumar Tale.
\newblock Parameterized complexity of weighted multicut in trees.
\newblock In {\em {WG}}, volume 13453 of {\em Lecture Notes in Computer
  Science}, pages 257--270. Springer, 2022.

\bibitem[Gui11]{GUILLEMOT201161}
Sylvain Guillemot.
\newblock {FPT} algorithms for path-transversal and cycle-transversal problems.
\newblock {\em Discrete Optimization}, 8(1):61--71, 2011.
\newblock Parameterized Complexity of Discrete Optimization.

\bibitem[KKPW20]{ufl-arxiv}
Eun~Jung Kim, Stefan Kratsch, Marcin Pilipczuk, and Magnus Wahlstr{\"{o}}m.
\newblock Flow-augmentation {II:} undirected graphs.
\newblock {\em CoRR}, abs/2007.09018, 2020.

\bibitem[KKPW21a]{dfl-arxiv}
Eun~Jung Kim, Stefan Kratsch, Marcin Pilipczuk, and Magnus Wahlstr{\"{o}}m.
\newblock Flow-augmentation {I:} directed graphs.
\newblock {\em CoRR}, abs/2111.03450, 2021.

\bibitem[KKPW21b]{ufl-soda}
Eun~Jung Kim, Stefan Kratsch, Marcin Pilipczuk, and Magnus Wahlstr{\"{o}}m.
\newblock Solving hard cut problems via flow-augmentation.
\newblock In D{\'{a}}niel Marx, editor, {\em Proceedings of the 2021 {ACM-SIAM}
  Symposium on Discrete Algorithms, {SODA} 2021, Virtual Conference, January 10
  - 13, 2021}, pages 149--168. {SIAM}, 2021.

\bibitem[KKPW22]{dfl-stoc}
Eun~Jung Kim, Stefan Kratsch, Marcin Pilipczuk, and Magnus Wahlstr{\"{o}}m.
\newblock Directed flow-augmentation.
\newblock In Stefano Leonardi and Anupam Gupta, editors, {\em {STOC} '22: 54th
  Annual {ACM} {SIGACT} Symposium on Theory of Computing, Rome, Italy, June 20
  - 24, 2022}, pages 938--947. {ACM}, 2022.

\bibitem[KKPW23]{csp-soda}
Eun~Jung Kim, Stefan Kratsch, Marcin Pilipczuk, and Magnus Wahlstr{\"{o}}m.
\newblock Flow-augmentation {III:} complexity dichotomy for boolean csps
  parameterized by the number of unsatisfied constraints.
\newblock In Nikhil Bansal and Viswanath Nagarajan, editors, {\em Proceedings
  of the 2023 {ACM-SIAM} Symposium on Discrete Algorithms, {SODA} 2023,
  Florence, Italy, January 22-25, 2023}, pages 3218--3228. {SIAM}, 2023.

\bibitem[KKR17]{KolmogorovKR17}
Vladimir Kolmogorov, Andrei~A. Krokhin, and Michal Rol{\'{\i}}nek.
\newblock The complexity of general-valued {CSPs}.
\newblock {\em {SIAM} J. Comput.}, 46(3):1087--1110, 2017.

\bibitem[KMP{\etalchar{+}}24]{KimMPSW24}
Eun~Jung Kim, Tom{\'{a}}s Masar{\'{\i}}k, Marcin Pilipczuk, Roohani Sharma, and
  Magnus Wahlstr{\"{o}}m.
\newblock On weighted graph separation problems and flow augmentation.
\newblock {\em {SIAM} J. Discret. Math.}, 38(1):170--189, 2024.

\bibitem[KMW16]{KratschMW16TOCT}
Stefan Kratsch, D{\'{a}}niel Marx, and Magnus Wahlstr{\"{o}}m.
\newblock Parameterized complexity and kernelizability of {M}ax {O}nes and
  {E}xact {O}nes problems.
\newblock {\em {ACM} Trans. Comput. Theory}, 8(1):1:1--1:28, 2016.

\bibitem[KSTW00]{KhannaSTW00}
Sanjeev Khanna, Madhu Sudan, Luca Trevisan, and David~P. Williamson.
\newblock The approximability of constraint satisfaction problems.
\newblock {\em SIAM J. Comput.}, 30(6):1863--1920, 2000.

\bibitem[KW10]{KratschW10ICALP}
Stefan Kratsch and Magnus Wahlstr{\"o}m.
\newblock Preprocessing of {M}in {O}nes problems: A dichotomy.
\newblock In {\em ICALP (1)}, pages 653--665, 2010.

\bibitem[KW20]{KratschW20}
Stefan Kratsch and Magnus Wahlstr{\"{o}}m.
\newblock Representative sets and irrelevant vertices: New tools for
  kernelization.
\newblock {\em J. {ACM}}, 67(3):16:1--16:50, 2020.

\bibitem[KZ17]{dagstuhl2017CSPs}
Andrei~A. Krokhin and Stanislav Zivn{\'{y}}, editors.
\newblock {\em The Constraint Satisfaction Problem: Complexity and
  Approximability}, volume~7 of {\em Dagstuhl Follow-Ups}.
\newblock Schloss Dagstuhl - Leibniz-Zentrum f{\"{u}}r Informatik, 2017.

\bibitem[Lin18]{Lin18JACM}
Bingkai Lin.
\newblock The parameterized complexity of the {\emph{k}}-biclique problem.
\newblock {\em J. {ACM}}, 65(5):34:1--34:23, 2018.

\bibitem[Mar05]{Marx05CSP}
D{\'a}niel Marx.
\newblock Parameterized complexity of constraint satisfaction problems.
\newblock {\em Computational Complexity}, 14(2):153--183, 2005.

\bibitem[MR09]{MarxR09}
D{\'{a}}niel Marx and Igor Razgon.
\newblock Constant ratio fixed-parameter approximation of the edge multicut
  problem.
\newblock {\em Inf. Process. Lett.}, 109(20):1161--1166, 2009.

\bibitem[NSS95]{NaorSS95}
Moni Naor, Leonard~J. Schulman, and Aravind Srinivasan.
\newblock Splitters and near-optimal derandomization.
\newblock In {\em 36th Annual Symposium on Foundations of Computer Science,
  Milwaukee, Wisconsin, USA, 23-25 October 1995}, pages 182--191. {IEEE}
  Computer Society, 1995.

\bibitem[OPW24]{OsipovPW24}
George Osipov, Marcin Pilipczuk, and Magnus Wahlstr{\"{o}}m.
\newblock Parameterized complexity of {MinCSP} over the point algebra.
\newblock In Timothy~M. Chan, Johannes Fischer, John Iacono, and Grzegorz
  Herman, editors, {\em 32nd Annual European Symposium on Algorithms, {ESA}
  2024, September 2-4, 2024, Royal Holloway, London, United Kingdom}, volume
  308 of {\em LIPIcs}, pages 93:1--93:15. Schloss Dagstuhl - Leibniz-Zentrum
  f{\"{u}}r Informatik, 2024.

\bibitem[OW23]{OsipovW23}
George Osipov and Magnus Wahlstr{\"{o}}m.
\newblock Parameterized complexity of equality {MinCSP}.
\newblock In Inge~Li G{\o}rtz, Martin Farach{-}Colton, Simon~J. Puglisi, and
  Grzegorz Herman, editors, {\em 31st Annual European Symposium on Algorithms,
  {ESA} 2023, September 4-6, 2023, Amsterdam, The Netherlands}, volume 274 of
  {\em LIPIcs}, pages 86:1--86:17. Schloss Dagstuhl - Leibniz-Zentrum f{\"{u}}r
  Informatik, 2023.

\bibitem[Pos41]{PostsLattice41}
Emil~L. Post.
\newblock {\em The Two-Valued Iterative Systems of Mathematical Logic.
  {(AM-5)}}.
\newblock Princeton University Press, 1941.

\bibitem[RO09]{RazgonO09}
Igor Razgon and Barry O'Sullivan.
\newblock Almost 2-{SAT} is fixed-parameter tractable.
\newblock {\em J. Comput. Syst. Sci.}, 75(8):435--450, 2009.

\bibitem[RSV04]{ReedSV04}
Bruce~A. Reed, Kaleigh Smith, and Adrian Vetta.
\newblock Finding odd cycle transversals.
\newblock {\em Oper. Res. Lett.}, 32(4):299--301, 2004.

\bibitem[Sch78]{Schaefer78}
Thomas~J. Schaefer.
\newblock The complexity of satisfiability problems.
\newblock In {\em {STOC}}, pages 216--226. {ACM}, 1978.

\bibitem[TZ16]{ThapperZ16JACM}
Johan Thapper and Stanislav Zivny.
\newblock The complexity of finite-valued {CSPs}.
\newblock {\em J. {ACM}}, 63(4):37:1--37:33, 2016.

\bibitem[Zhu20]{Zhuk20CSP}
Dmitriy Zhuk.
\newblock A proof of the {CSP} dichotomy conjecture.
\newblock {\em J. {ACM}}, 67(5):30:1--30:78, 2020.

\end{thebibliography}




\end{document}